\documentclass[%
superscriptaddress,
twocolumn,
nofootinbib,
]{revtex4-2}
\usepackage{relsize}

\usepackage{tabularray} 
\usepackage{multirow}
\usepackage{rotating}
\usepackage{multirow}

\usepackage{graphicx}%

\usepackage{dcolumn}
\usepackage{bm,bbm}
\usepackage{physics}
\usepackage{amsmath}
\usepackage{amsthm}
\usepackage{amsfonts}
\usepackage[colorlinks=true,citecolor=blue,linkcolor=magenta]{hyperref}
\usepackage[dvipsnames]{xcolor}
\graphicspath{{Images/}}

\usepackage{algorithm}
\usepackage{algorithmicx}
\usepackage{algpseudocode} 
\newtheorem{theorem}{Theorem}
\newtheorem{corollary}{Corollary} 
\newtheorem{lemma}{Lemma} 

\newtheorem{proposition}{Proposition}


\newcommand{\poly}{\operatorname{poly}}

\newcommand{\Ebb}{\mathbb{E}}

\newcommand{\AC}{\mathcal{A}}

\newcommand{\EC}{\mathcal{E}}

\newcommand{\OC}{\mathcal{O}}
\newcommand{\PC}{\mathcal{P}}
\newcommand{\LC}{\mathcal{L}}

\newcommand{\RC}{\mathcal{R}}
\newcommand{\SC}{\mathcal{S}}

\newcommand{\UC}{\mathcal{U}}
\newcommand{\VC}{\mathcal{V}}

\newcommand{\Var}{{\rm Var}}

\renewcommand{\geq}{\geqslant}
\renewcommand{\leq}{\leqslant}

\DeclareMathOperator*{\argmax}{arg\,max}
\DeclareMathOperator*{\argmin}{arg\,min}
\renewcommand{\vec}[1]{\boldsymbol{#1}}  





\renewcommand{\th}{\theta } 

\newcommand{\thv}{\vec{\theta}}
 
\newcommand{\channel}{\mathcal{E}}

\graphicspath{{Images/}}

\newcommand{\be}{\begin{equation}}
\newcommand{\ee}{\end{equation}}


\newcommand{\1}{\mathbbm{1}}


\newcommand{\vtheta}{\Vec{\theta}}

\renewcommand{\vec}[1]{\boldsymbol{#1}}  

\usepackage{amssymb}

\newcommand{\uni}{\boldsymbol{\mathcal{D}}} 
\newcommand{\vol}{\boldsymbol{\mathcal{V}}}


\newcommand{\vphi}{\vec{\phi}}

\newcommand{\nparams}{m}
\newcommand{\mindex}{\mu}
\newcommand{\nHam}{M}
\newcommand{\layerindex}{l}
\newcommand{\paramindex}{p}
\renewcommand{\norm}[1]{\left\| #1 \right\|_{\infty}}


\newcommand{\gap}{\Delta_{\rm gap}}

\usepackage[normalem]{ulem}

\usepackage{cancel}

\usepackage{tikz}

\usepackage{minitoc}
\usepackage{tocloft}

\setcounter{tocdepth}{3} 

\setlength{\arrayrulewidth}{0.3mm}
\setlength{\tabcolsep}{18pt}
\renewcommand{\arraystretch}{1.5}

\makeatletter
\renewcommand\onecolumngrid{
\do@columngrid{one}{\@ne}
\def\set@footnotewidth{\onecolumngrid}
\def\footnoterule{\kern-6pt\hrule width 1.5in\kern6pt}
}
\renewcommand\twocolumngrid{
        \def\footnoterule{
        \dimen@\skip\footins\divide\dimen@\thr@@
        \kern-\dimen@\hrule width.5in\kern\dimen@}
        \do@columngrid{mlt}{\tw@}
}
\makeatother
\begin{document}
\doparttoc 
\faketableofcontents 
\part{}

\title{A unifying account of warm start guarantees for patches of quantum landscapes}

\author{Hela Mhiri}
\thanks{The first two authors contributed equally to this work.}
\affiliation{Institute of Physics, Ecole Polytechnique F\'{e}d\'{e}rale de Lausanne (EPFL), CH-1015 Lausanne, Switzerland}
\affiliation{Laboratoire d’Informatique de Paris 6, CNRS, Sorbonne Universite, 4 Place Jussieu, 75005 Paris, France}

\author{Ricard Puig}
\thanks{The first two authors contributed equally to this work.}
\affiliation{Institute of Physics, Ecole Polytechnique F\'{e}d\'{e}rale de Lausanne (EPFL), CH-1015 Lausanne, Switzerland}

\author{Sacha Lerch}
\affiliation{Institute of Physics, Ecole Polytechnique F\'{e}d\'{e}rale de Lausanne (EPFL), CH-1015 Lausanne, Switzerland}

\author{Manuel S. Rudolph}
\affiliation{Institute of Physics, Ecole Polytechnique F\'{e}d\'{e}rale de Lausanne (EPFL), CH-1015 Lausanne, Switzerland}

\author{Thiparat Chotibut}
\affiliation{Chula Intelligent and Complex Systems, Department of Physics, Faculty of Science, Chulalongkorn University, Bangkok, Thailand, 10330}

\author{Supanut Thanasilp}
\affiliation{Institute of Physics, Ecole Polytechnique F\'{e}d\'{e}rale de Lausanne (EPFL), CH-1015 Lausanne, Switzerland}
\affiliation{Chula Intelligent and Complex Systems, Department of Physics, Faculty of Science, Chulalongkorn University, Bangkok, Thailand, 10330}

\author{Zo\"{e} Holmes}
\affiliation{Institute of Physics, Ecole Polytechnique F\'{e}d\'{e}rale de Lausanne (EPFL), CH-1015 Lausanne, Switzerland}

\begin{abstract}
Barren plateaus are fundamentally a statement about quantum loss landscapes on average but there can, and generally will, exist patches of barren plateau landscapes with substantial gradients. Previous work has studied certain classes of parameterized quantum circuits and found example regions where gradients vanish at worst polynomially in system size. Here we present a general bound that unifies all these previous cases and that can tackle physically-motivated ans\"atze that could not be analyzed previously. Concretely, we analytically prove a lower-bound on the variance of the loss that can be used to show that in a non-exponentially narrow region around a point with curvature the loss variance cannot decay exponentially fast. This result is complemented by numerics and an upper-bound that suggest that any loss function with a barren plateau will have exponentially vanishing gradients in any constant radius subregion. Our work thus suggests that while there are hopes to be able to warm-start variational quantum algorithms, any initialization strategy that cannot get increasingly close to the region of attraction with increasing problem size is likely inadequate. 

\end{abstract}

\maketitle

\section{Introduction}
There is widespread interest in the potential of using warm starts~\cite{dborin2022matrix, truger2024warm,rudolph2022synergy, goh2023lie,sauvage2021flip, verdon2019learning, okada2023classically, ravi2022cafqa,gibbs2024exploiting, mitarai2022quadratic, tate2021classically, niu2023warm, egger2021warm, wurtz2021fixed, mari2020transfer, wilson2019optimizing, liu2023mitigating,zhou2020quantum, akshay2021parameter, grimsley2022adapt, mele2022avoiding},  to sidestep the barren plateau phenomenon~\cite{mcclean2018barren, larocca2024review} in variational quantum computing~\cite{cerezo2020variationalreview}. The hope lies in the fact that the barren plateaus phenomenon, whereby loss gradients vanish exponentially with problem size, is an average case statement. Thus a landscape can have a barren plateau and yet also have special subregions with substantial gradients. Initializing in these regions offers a chance of potentially training across an otherwise barren landscape. 

A growing body of literature aims to provide gradient guarantees for small subregions (``patches'') of quantum loss landscapes. Examples include small-angle regions around zero for tailored circuits~\cite{wang2023trainability, park2023hamiltonian,park2024hardware, zhang2022escaping, chang2024latent,shi2024avoiding,cao2024exploiting}, also known as ``identity initializations’'~\cite{grant2019initialization}, and perturbative approaches which ensure that one initializes close to a solution~\cite{haug2021optimal, puig2024variational}. Concretely, these guarantees are formulated as lower bounds on loss variances which seek to ensure that the loss differences are large enough to have some hope of training. However, these prior works are limited to specialized ans\"atze and specialized regions that are generally far from those considered for warm starting elsewhere~\cite{dborin2022matrix, truger2024warm,rudolph2022synergy, goh2023lie,sauvage2021flip, verdon2019learning, okada2023classically, ravi2022cafqa,gibbs2024exploiting, mitarai2022quadratic, tate2021classically, niu2023warm, egger2021warm, wurtz2021fixed, mari2020transfer, wilson2019optimizing, liu2023mitigating,zhou2020quantum, akshay2021parameter, grimsley2022adapt, mele2022avoiding}. 

In this work, we provide a unifying framework that both captures and extends these prior analyses. Our core observation is rather simple: prior variance lower bounds for patches of loss landscapes boil down to showing that if there are non-vanishing gradients at a single point then there will also be substantial gradients in a small region around that point. Prior work on small-angle initializations can be viewed through this lens~\cite{wang2023trainability, park2023hamiltonian,park2024hardware, zhang2022escaping, shi2024avoiding, grant2019initialization, haug2021optimal, chang2024latent, puig2024variational,cao2024exploiting}. Namely, they rely implicitly on assuming that some point in the landscape has non-vanishing gradients, to then prove that there will be a region with gradients nearby. 

We extend this account with the equally simple observation that the region of attraction around a minimum corresponds to a patch with guaranteed gradients. This can be attributed to the fact that a well-defined minimum necessarily has a non-exponentially vanishing curvature. Prior work on guarantees for iterative update strategies~\cite{puig2024variational}, as well as those using pre-training strategies to find an approximate solution classically~\cite{rudolph2022synergy, goh2023lie, gibbs2024exploiting, sauvage2021flip, truger2024warm, verdon2019learning,okada2023classically,ravi2022cafqa, mitarai2022quadratic} fall broadly under this umbrella. These approaches would fail if exponential precision in each of the initial parameters were required to initialize within a region of attraction. 

Concretely, we prove that the loss variance in patches with a radius that shrinks as $\order{1/\left(\sqrt{m} \poly(n)\right)}$, where $m$ is the number of independent variational parameters and $n$ is the number of qubits, vanishes at worst polynomially with the system size. This guarantee holds as long as there is some (non-exponentially vanishing) curvature to the landscape.

We argue that this is true in all patch guarantees previously considered in the literature and thus explains these prior results~\cite{wang2023trainability, park2023hamiltonian,park2024hardware, zhang2022escaping, shi2024avoiding, grant2019initialization, haug2021optimal, chang2024latent, puig2024variational, cao2024exploiting} and also generalizes to more complex ans\"{a}tze such as the Unitary Coupled Cluster ansatz, which so far have eluded analysis~\cite{zhou2021quantum, chai2022shortcuts, chandarana2022digitized, vizzuso2024convergence}. In particular, we use our bounds to discuss how correlating parameters can affect the magnitude of loss variances and the width of the gorge.

Finally, our numerical results (which are supported by an analytic argument for certain cases) suggest that any landscape that exhibits a barren plateau, will also exhibit a barren plateau on constant width subregions. Thus while special initializations offer hope of finding substantial gradients on quantum loss landscapes, scaling up the problem size looks likely to demand increasingly precise initialization. We therefore conclude that while warm starting strategies offer some hope for sidestepping the barren plateau phenomenon, for this hope to be realized we will likely need increasingly clever initialization strategies.

\section{Preliminaries}

\paragraph*{Variational Quantum Algorithms (VQAs).}
In this work we will consider loss functions of the form
\begin{equation}\label{eq:loss}
    \LC(\thv)=\Tr[U(\thv)\rho U^\dagger(\thv)O]\,,
\end{equation}
where $\rho$ is an $n$-qubit input state, $O$ is a Hermitian measurement operator, and $U(\thv)$ is a quantum circuit parametrized by an $\nparams$ component vector of trainable parameters $\thv$. By varying the choice in $\rho$, $O$ and $U(\thv)$ this expression can be used to capture the losses used in a broad range of variational quantum algorithms~\cite{cerezo2020variationalreview}. One could also consider losses composed of a sum of terms of the form of Eq.~\eqref{eq:loss} with different initial states and observables, or a loss formulated from such terms and classical post processing. This is particularly common in the context of quantum machine learning~\cite{cerezo2020variationalreview, biamonte2017quantum, perez2020data, nguyen2022atheory, caro2022outofdistribution, gibbs2022dynamical}. While we do not explicitly cover such losses here, much of our conclusions carry over to those settings.

\begin{figure}
    \centering
    \includegraphics[width=0.99\linewidth]{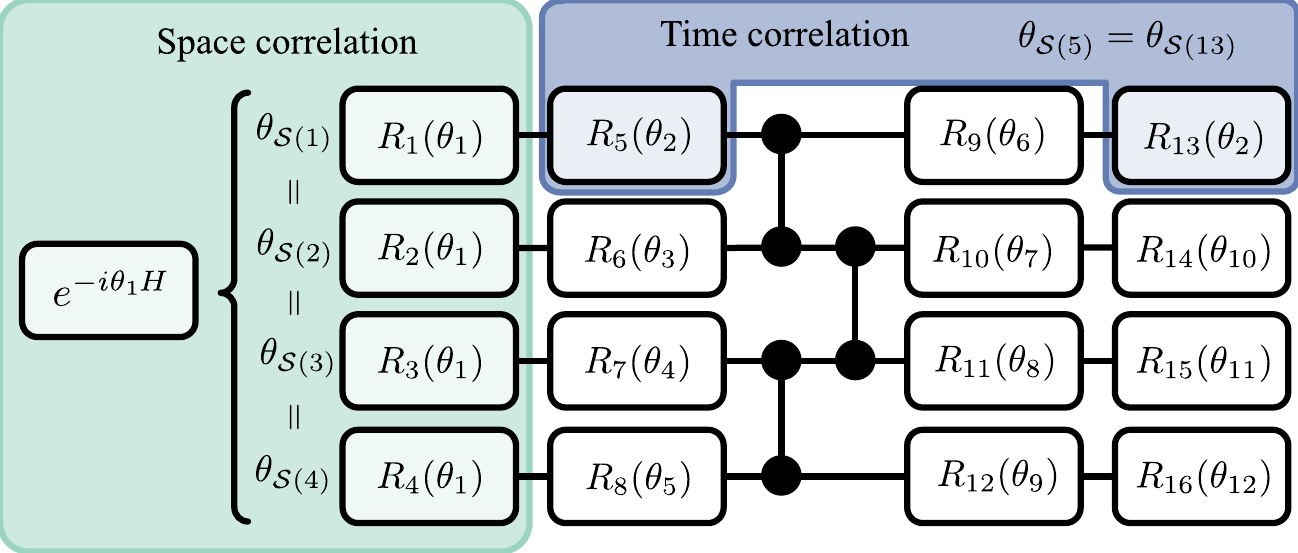}
    \caption{\textbf{Schematic of different types of correlations.} Spatial correlations, where we correlate gates in different layers, are shown in green. These can result from a gate with a generator that acts on multiple qubits with one variational parameter. Time correlations between parameters in different layers are shown in blue. The function $\SC$ maps a gate label to a corresponding parameter label.}
    \label{fig:corr_schematic}
\end{figure}

We consider parameterized quantum circuits of the form
\begin{equation}
\label{eq:circuit}
U(\thv) = \prod_{l=1}^M V_l \, U_l\bigl(\theta_{\SC(l)}\bigr)\,,
\end{equation}
where $\{V_l\}_{l=1}^M$ are fixed unitaries, each $U_l\bigl(\theta_{\SC(l)}\bigr) = e^{-i\theta_{\SC(l)}H_l}$ is a parametrized rotation, $\{H_l\}_{l=1}^M$ are Hermitian generators and $\thv = \{\theta_k\}_{k=1}^m$ is a vector of the \emph{independent} parameters with $m \le M$. Here, $\SC: \{1,\dots,M\} \to \{1,\dots,m\}$ is a map that assigns each generator index $l$ to a parameter index $\SC(l)$, allowing different generators to share the same parameter, i.e. allows for \emph{correlated parameters} as sketched in Fig.~\ref{fig:corr_schematic}. For instance, if all parameters are uncorrelated, we have $m = M$ and $\SC(l)=l$, so each $H_l$ has its own unique angle. If instead all parameters are fully correlated, then $m=1$ and $\SC(l)=1$ for every $l$, so all $H_l$ share the same angle. Throughout this work, ``$m$ independent parameters'' refers to the count after considering all unique indices output by the map $\SC$.

\medskip

\paragraph*{Gradient magnitudes and barren plateaus.}
To successfully train a variational quantum algorithm the loss landscape must exhibit sufficiently large loss gradients or, more generally, loss differences. Chebyshev’s inequality bounds the probability that the
cost value deviates from its average as
\begin{equation}\label{eq:Cheb}
    {\rm Pr}_{\thv\sim\PC}( | \LC(\thv) -  \Ebb_{\thv\sim\PC}[\LC(\thv)] | \geq \delta ) \leq \frac{\text{Var}_{\thv\sim\PC}[\mathcal{L}(\thv)]}{\delta^2} \, ,
\end{equation}
for some $\delta>0$. The variance of the loss is defined as
\begin{equation}\label{eq:vardef}
    \Var_{\vec{\theta}\sim\PC}[\mathcal{L}(\thv)] = \Ebb_{\thv\sim\PC}\left[\LC^2(\thv)\right] -  \left(\Ebb_{\thv\sim\PC}\left[\LC(\thv)\right]\right)^2  \,, 
\end{equation}
where the expectation value is taken over the circuit parameters $\vec{\theta}$ sampled from some distribution $\mathcal{P}$. 
If the loss variance vanishes exponentially, i.e. $  \Var[\mathcal{L}] \in \mathcal{O}(b^{-n})$ with $b> 1$, the probability of observing non-negligible loss differences is exponentially small and the landscape is said to exhibit a \textit{barren plateau}~\cite{mcclean2018barren, larocca2024review,marrero2020entanglement,sharma2020trainability,patti2020entanglement,wang2020noise,arrasmith2021equivalence,larocca2021diagnosing,holmes2021connecting, cerezo2020cost,khatri2019quantum,rudolph2023trainability,kieferova2021quantum,thanaslip2021subtleties,tangpanitanon2020expressibility,holmes2021barren,martin2022barren,fontana2023theadjoint,ragone2023unified, thanasilp2022exponential, letcher2023tight,anschuetz2024unified, chang2024latent,xiong2023fundamental, crognaletti2024estimates,  mao2023barren,deshpande2024dynamic}. On such landscapes, exponentially precise loss evaluations are typically required to navigate towards the global minimum, and hence the resources (shots) required for training are expected to scale exponentially. 

\medskip

\paragraph*{Small \textit{patches} of quantum landscape.}
The variance of the loss in Eq.~\eqref{eq:vardef} depends crucially on the parameter region considered. Much of the analysis on quantum loss landscapes has focused on analyzing the loss over the entire loss landscape. This can be viewed either as providing an ``average case'' trainability analysis for the entire landscape or as quantifying the ability to start training having randomly initialized. This prompts the question of how the properties of the landscape differ in sub-regions corresponding to non-random initializations. 

Motivated by these thoughts, one line of research has focused on small-angle initialization strategies. More concretely, let us define
\begin{equation}\label{eq:hypercube}
	\vol(\vec{\phi}, r) := \{ \vec{\theta} \} \; \; \text{such that} \; \; \theta_i \in [\phi_i -r , \phi_i + r ]  
\end{equation}
for $i = 1, ... \, , m$, 
as the hypercube of parameter space centered around the point $\vec{\phi}$, and define $\uni(\vec{\phi}, r) = \text{Unif}[\vol(\vec{\phi}, r)]$ as the uniform distribution over the hypercube $\vol(\vec{\phi}, r)$. We note that while we focus on uniform distributions over the patch in this work, the extension to other symmetric distributions, such as a Gaussian distribution, is straightforward and similar conclusions are expected to hold.

It was shown in Ref.~\cite{wang2023trainability}, that if the parameters are uniformly sampled in a small hypercube with $r \in \mathcal{O}\left(\frac{1}{\sqrt{L}}\right)$ around $\vec{\phi} = \vec{0}$ for a particular hardware efficient architecture with $L$ being the number of layers, then the variance decays only polynomially with the depth of the circuit,
\begin{equation}
    \Var_{\thv\sim\uni(\vec{0}, r)}[\mathcal{L}(\thv)] \in \Omega\left(\frac{1}{\text{poly}(L)}\right) \, .
\end{equation}
Similar conclusions were reached for the Hamiltonian Variational Ansatz in Refs.~\cite{park2023hamiltonian,cao2024exploiting} and for Gaussian initializations in Refs.~\cite{zhang2022escaping,chang2024latent,shi2024avoiding}. In these cases the small-angle initialization corresponds to initializing close to either identity or a Clifford circuit.

\begin{figure*}
    \centering
    \includegraphics[width=0.95\linewidth]{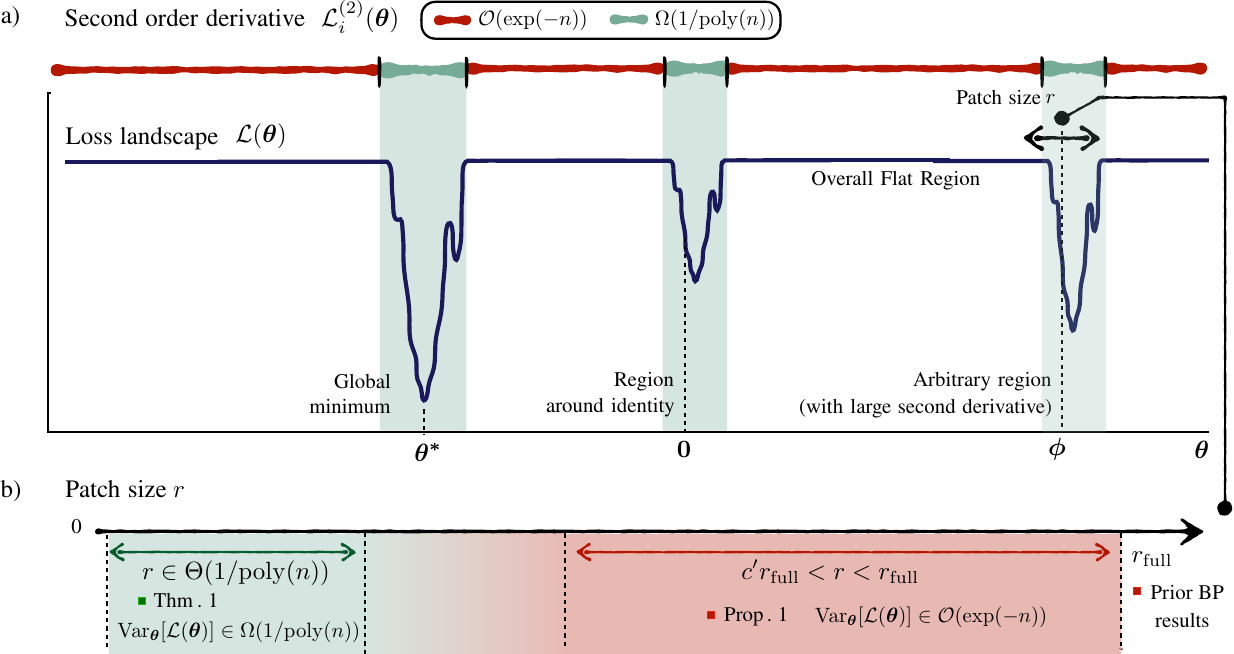}
    \caption{\textbf{Schematic summary of our main results.}     
    In panel (a), the solid blue curve sketches a generic loss landscape $\LC(\thv)$ and the horizontal line above its second order derivative $\LC^{(2)}_i(\thv)$.  The red sections indicate vanishing second derivatives and the green sections represent curvature-rich regions with non-negligible second derivatives which can, for example, occur near a  global minimum $\thv^*$, around identity $\mathbf{0}$ or simply around arbitrary local minima $\vphi$. Panel (b) shows the variance of $\LC(\thv)$ for uniformly sampled parameters $\thv$ in a hypercube of width $2r$ centered around any point with a non-negligible curvature. On the full landscape $(r = r_{\mathrm{full}})$, prior works (e.g., Ref.~\cite{cerezo2020cost}) show that certain families of circuits exhibit a barren plateau.
    Our Proposition~\ref{prop:upperbound}, which strictly only applies to a restricted family of circuits, formalizes that \emph{any} subregion with the patch's size $r \in (c' r_{\rm full}, r_{\rm full})$ with some constant $c' < 1$ still inherits an exponentially vanishing loss variance over that region.
    By contrast, the main Theorem~\ref{th:var} states that for polynomial depth circuits, even if $r$ shrinks \emph{no faster} than $1/\mathrm{poly}(n)$, the corresponding patch (green) still supports non-exponentially vanishing variance.}
    \label{fig:schematic} 
\end{figure*}

However, these small-angle regions considered above are either low-entanglement regions~\cite{park2023hamiltonian,cao2024exploiting}, and therefore can possibly be simulated using tensor network methods~\cite{orus2014practical,orus2019tensor,pan2022simulation}, or low magic regions~\cite{wang2023trainability, park2024hardware,zhang2022escaping,shi2024avoiding,chang2024latent}, and so can be simulated using Clifford perturbation methods~\cite{lerch2024efficient, nemkov2023fourier, beguvsic2024fast, beguvsic2023simulating}. Moreover, these usually require that the circuit has non-vanishing gradient at $\thv=\vec{0}$ (which is not always the case) and a good solution may be far from these rather arbitrarily chosen regions. Thus these methods can (in full generality) only work on a vanishing fraction of problem instances~\cite{nietner2023unifying}.

Recently, Ref.~\cite{puig2024variational} explored warm starts for a family of variational quantum algorithms for quantum evolution in real and imaginary time~\cite{otten2019noise, benedetti2020hardware, barison2021efficient, lin2021real, berthusen2022quantum, haug2021optimal, gentinetta2023overhead}. In line with previous work on small-angle initializations, Ref.~\cite{puig2024variational} proves that the variance of the loss vanishes at worst polynomially in a hypercube with $r_{\rm patch} \in \Theta\left(\frac{1}{\sqrt{\nparams}} \right)$. However, in contrast to the small-angle guarantees discussed above, the patches considered here are not necessarily in a low-entanglement or low-magic region of the landscape. Rather, the iterative approach of these algorithms implies that one initializes close to a good minimum with guaranteed gradients but this region can be arbitrary and therefore could correspond to high-entanglement or high-magic circuits. Nonetheless, these guarantees are only valid for a particular VQA protocol and cannot be applied directly to most conventional VQAs.

\section{Main results}

\subsection{Overview of the analysis}
We provide a general framework for computing variance lower bounds for sub-regions of quantum loss landscapes corresponding to different initialization strategies.
The core intuition underlying our results, outlined in Section~\ref{sec:approxbound} and sketched in Fig.~\ref{fig:schematic}, is that to identify a patch with guaranteed gradients it suffices to find a single point with non-vanishing curvature. It then follows from the smoothness of typical quantum loss landscapes, i.e., the fact that their first and second order derivatives are bounded, that there will be non-vanishing gradients in the region around that point.  While not framed in these terms, we argue that all prior lower bound variance guarantees for small-angle initializations implicitly relied on this simple observation~\cite{wang2023trainability, park2023hamiltonian,park2024hardware, zhang2022escaping, shi2024avoiding, grant2019initialization, haug2021optimal, chang2024latent,puig2024variational, cao2024exploiting}. 

Our first main result is Theorem~\ref{th:var} in Section~\ref{sec:MainBound}, which captures all of the prior bounds outlined in the previous section (Refs.~\cite{wang2023trainability, park2023hamiltonian,park2024hardware, zhang2022escaping, shi2024avoiding, grant2019initialization, haug2021optimal, chang2024latent,puig2024variational, cao2024exploiting}) and applies to problem-inspired circuits that have no existing analytical results. In particular, our bound applies to all circuits of the form of Eq.~\eqref{eq:circuit} with or without correlated parameters. Moreover, Theorem~\ref{th:var} can be used to describe the variance of a patch centered around any point of the loss landscape. This generality is achieved by abstractifying away from the underlying circuits and treating the loss as an arbitrary function that can be Taylor expanded.

In Sec.~\ref{sec:minimum} we further discuss how Theorem~\ref{th:var} can be used to characterize i. small-angle initializations and ii. regions of attraction. For the former, we show how we can capture the prior small-angle guarantees proposed in Refs.~\cite{wang2023trainability, park2023hamiltonian,park2024hardware, zhang2022escaping, shi2024avoiding, grant2019initialization, haug2021optimal, chang2024latent,puig2024variational, cao2024exploiting, cao2024exploiting}. We further discuss limitations of these approaches, stressing that they are only successful if gradients can be guaranteed at identity - which for many families of states, circuits and observables will not be the case. For the latter, Corollary~\ref{cor:var_minimum} shows that, under mild assumptions, the width of a region of attraction shrinks at worst polynomially. Thus, crucially, our work lends hope to the potential of ``warm-starting'' variational quantum algorithms as exponential precision in each parameter is not required to initialize in a region of attraction.  

In Sec.~\ref{sec:fourierfreq} we provide an intuitive description of the properties of the loss landscape and the Fourier frequencies of the loss function. In particular, we introduce the notion of \textit{maximal frequencies} and show that  the width of patches with guaranteed gradients is inversely proportional to these frequencies. We then further prove that this bound can be tightened when the parameters are not correlated in time.

In Sec.~\ref{sec:architectures_main} we show how Theorem~\ref{th:var} can be used to re-derive bounds for each of the previous case-by-case analyses~\cite{wang2023trainability, park2023hamiltonian,park2024hardware, zhang2022escaping, shi2024avoiding, grant2019initialization, haug2021optimal, chang2024latent, puig2024variational, cao2024exploiting} and to push beyond these cases to problem-inspired ans\"atze, such as the unitary coupled cluster ansatz~\cite{mao2023barren, zhou2021quantum, arrazola2022universal, chai2022shortcuts, chandarana2022digitized, vizzuso2024convergence}. We further emphasize how our bounds allow us to move away from small-angle initializations that are close to identity or a Clifford circuit to other relevant points in the landscape. In particular, we use our analysis to understand how the region of attraction around  $\thv = \vec{0}$ depends on properties of the ansatz class.  

Finally, our positive findings are counterbalanced by the observation that regions of attraction do shrink with increasing problem size. Namely, in all cases numerically studied, the width of the region with guaranteed gradients decreases inverse polynomially with the number of trainable parameters $m$. This finding is further supported by Proposition~\ref{prop:upperbound} that shows that any landscape that has a barren plateau for a circuit where the number of parameters scales linearly with system size, $m \in \Theta(n)$,  will also have a barren plateau in a wide range of constant-width landscape patches. Thus, initializing close to a region of attraction necessarily becomes harder with increasing problem sizes. Our main results are summarized in Fig.~\ref{fig:schematic}. 

\subsection{Warm up: approximate variance}\label{sec:approxbound}

We start with an approximate expression for the variance to outline the core intuition underlying our analysis. The core idea is that to identify a patch with guaranteed gradients it suffices to find a single point in the landscape with a non-vanishing first or second derivative. 

To make this idea a little (but not much) more formal, let us consider a differentiable function $\LC(\thv)$ that depends on a vector $\thv$ of size $\nparams$. Next we write the Taylor expansion of the loss around a point $\vec{\phi}$ as 
\begin{align}
    \LC(\thv)  =&\; \LC(\vec{\phi}) + \sum_{i=1}^\nparams \LC^{(1)}_i (\theta_i - \phi_i) \nonumber \\
     &+ \sum_{i,j=1}^\nparams \frac{1}{2} \LC^{(2)}_{ij} (\theta_i-\phi_i) (\theta_j - \phi_j) + {\Phi} \;,
\end{align}
where we defined partial derivatives evaluated at $\vec{\phi}$ as
\begin{equation}
    \LC^{(k)}_{i_1i_2...i_k} = \left( \frac{\partial^k  \LC(\thv)}{\partial \theta_{i_1} \partial \theta_{i_2} ... \partial \theta_{i_k}} \right)\bigg|_{\vec{\theta} = \vec{\phi}}\;,
\end{equation}
and $\Phi$ collects higher order terms.
We can use this expression to compute the variance over $\uni(\vec{\phi}, r)$, the uniform distribution over a hypercube of width $2r$ around a point $\vec{\phi}$, up to the fourth order in $r$ (see Appendix~\ref{app:approximate-variance-bound}). On doing so, we find that the variance can be approximated as
\begin{align}\label{eq:approxvar}
    \Var_{\thv\sim\uni(\vec{\phi}, r)}[\LC(\thv)]&=  \; \frac{r^2}{3}\sum_{i=1}^m \LC^{(1) \, 2}_i \nonumber \\ \nonumber
    &+ 
    \frac{r^4}{9}\sum_{i,j=1}^m\LC_i^{(1)}\LC_{ijj}^{(3)}\left( 1-\frac{2\delta_{ij}}{5} \right)\\ 
    &+\frac{r^4}{18}\sum_{i,j=1}^m\LC_{ij}^{(2) \,2} \left( 1-\frac{3\delta_{ij}}{5} \right) + \mathcal{R}\;,
\end{align}
where $\mathcal{R}\in\OC(r^6)$. Thus we see that if there exists some curvature in a region such that either $\LC_i^{(1)}$ or $\LC_{ij}^{(2)}$ is non-vanishing then we will generally have substantial gradients in that region. 

However, as the \textit{approximate} expression in Eq.~\eqref{eq:approxvar} only holds for small deviations from $\vec{\phi}$, it only gives good approximation for small $r$. Thus it cannot be used to rigorously quantify the scaling of regions of attraction. To address this, in the following section we develop an exact general lower-bound for the variance of a general class of quantum circuits. As in practice any non-vanishing $\LC_i^{(1)}$ will be accompanied by a non-vanishing $\LC_{ij}^{(2)}$ at most points on the landscape, in our formal bounds we will work with the assumption that at least some $\LC_{ij}^{(2)}$ are not 
exponentially vanishing. 

\subsection{General variance lower bounds}\label{sec:MainBound}
In this section we present our most general result: a lower bound on the variance of a loss $\LC(\thv)$ for a uniformly sampled hypercube of width $2r$ around an arbitrary point $\vec{\phi}$ i.e., $\vol(\vec{\phi}, r)$. Crucially, gradients are ensured via the requirement that the second derivatives around $\vec{\phi}$ are not exponentially small. This condition will allow us to derive further guarantees around special points in the landscape that we know satisfy this condition. In particular, it will enable us to analyze small angle regions usually studied in the previous works~\cite{wang2023trainability, park2023hamiltonian,park2024hardware, zhang2022escaping, shi2024avoiding, cao2024exploiting} and further study the properties of regions of attraction around minima.  

\begin{theorem}[Lower bound on the loss variance, Informal]\label{th:var}
Consider a generic loss $\LC(\thv)$ of the form in Eq.~\eqref{eq:loss} and a parametrized quantum circuit $U(\thv)$ of the form in Eq.~\eqref{eq:circuit}. We consider uniformly sampling parameters in a hypercube of width $2r$ around any point of the landscape $\vec{\phi}$ as in Eq.~\eqref{eq:hypercube}. Then as long as there exists at least one parameter $\th_p$ such that the second derivative at $\vec{\phi}$ is at worst polynomially vanishing with the number of qubits, 
\begin{align}\label{eq:second_deriv_cond}
\left| \left.\left(\frac{\partial^2\mathcal{L}(\thv)}{\partial\th_{p}^2}\right)\right|_{\thv=\vec{\phi}} \right| \in\Omega\left( \frac{1}{{\rm poly}(n)} \right) \, ,
\end{align}
we can find a region with $r_{\rm patch}$ where,
\begin{align}\label{eq:region_cond}  r_{\rm patch}\in\Theta\left(\frac{1}{\sqrt{\nparams}\cdot{\rm poly}(n)}\right) \, ,
\end{align}
such that $\forall\, r \leq r_{\rm patch}$
\begin{align}
    \Var_{\vtheta \sim \uni(\vec{\phi}, r)} \left[ \LC (\vtheta)\right] \in \Omega\left(r^4\right) \; .
\end{align}
\end{theorem}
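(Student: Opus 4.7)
The plan is to upgrade the heuristic expansion of Eq.~14 to an exact identity (Taylor expansion with explicit remainder), extract from it a guaranteed non-negative $r^4$ contribution coming from the squared second derivatives at $\vec\phi$, and choose $r$ small enough that this contribution dominates all possibly-negative corrections.

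First I would Taylor-expand $\LC(\thv)$ around $\vec\phi$ to third order with Lagrange remainder and substitute into $\Var_{\thv\sim\uni(\vec\phi,r)}[\LC]=\Ebb[\LC^2]-(\Ebb[\LC])^2$. Because $\uni(\vec\phi,r)$ renders the shifts $x_i:=\theta_i-\phi_i$ i.i.d.\ uniform on $[-r,r]$, moments factorize, odd moments vanish, and $\Ebb[x_i^{2k}]=r^{2k}/(2k+1)$. The resulting expression is the right-hand side of Eq.~14 exactly, plus an explicit remainder. The crucial observation is that the entire second-derivative piece
\begin{equation*}
\frac{r^4}{18}\sum_{i,j=1}^{m}\LC_{ij}^{(2)\,2}\!\left(1-\tfrac{3\delta_{ij}}{5}\right) \;\geq\; \frac{r^4}{45}\,\LC_{pp}^{(2)\,2}
\end{equation*}
is non-negative (both the off-diagonal coefficient $1$ and the diagonal coefficient $2/5$ are positive) and, by the hypothesis on $\LC_{pp}^{(2)}$, lies in $\Omega(r^4/\poly(n))$. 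This is the reservoir of positive variance that the rest of the argument must protect.

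Next I would bound, in absolute value, the two potentially-negative pieces: the mixed $r^4$ term $\tfrac{r^4}{9}\sum_{ij}\LC_i^{(1)}\LC_{ijj}^{(3)}(1-2\delta_{ij}/5)$ and the Taylor remainder $\mathcal{R}\in\OC(r^6)$. For losses of the form Eq.~1 with a circuit of the form Eq.~3, repeated differentiation under the trace and the triangle inequality (via Duhamel's formula) give
\begin{equation*}
\bigl|\LC_{p_1\cdots p_k}^{(k)}\bigr|\leq (2B)^k\|O\|_\infty ,\qquad B:=\max_p\sum_{l\in\SC^{-1}(p)}\|H_l\|_\infty ,
\end{equation*}
with $B\in\poly(n)$ for the ans\"atze under consideration (each parameter reused at most polynomially often, generators of bounded operator norm). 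Applying Cauchy--Schwarz and then AM--GM to $|\LC_i^{(1)}\LC_{ijj}^{(3)}|$, and absorbing the $\LC_i^{(1)\,2}$ half into the non-negative $r^2$ contribution $\tfrac{r^2}{3}\sum_i\LC_i^{(1)\,2}$ of Eq.~14, promotes the mixed term to an effective $\OC(r^6)$ quantity; termwise derivative estimates bound the remainder at the same $\OC(r^6)$ order, both with prefactors polynomial in $m$ and $n$.

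Requiring each such $\OC(r^6)$ correction to be at most half of $\tfrac{r^4}{45}\LC_{pp}^{(2)\,2}$ yields an inequality of the form $r^2\cdot\poly(m,n)\leq|\LC_{pp}^{(2)}|^2$, whose solution gives $r\leq r_{\rm patch}\in\Theta(1/(\sqrt{m}\poly(n)))$ and hence $\Var\geq\tfrac{r^4}{90}\LC_{pp}^{(2)\,2}\in\Omega(r^4)$. The $\sqrt{m}$ factor emerges because, after Cauchy--Schwarz, the double sums in the cross terms contribute at most a linear power of $m$ per side, becoming $\sqrt{m}$ in $r_{\rm patch}$ once one takes the square root in the balancing step. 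The main obstacle --- and the step that dictates the precise polynomial in the statement --- is the correlated-parameter derivative bound: without the assumption that each parameter is reused only polynomially often, $B$ (and hence $r_{\rm patch}$) could degrade super-polynomially. A secondary subtlety is that the AM--GM splitting must be calibrated so that the $r^2$ reservoir is only ever a sink for the negative cross term, since at near-stationary $\vec\phi$ that reservoir itself can be negligible and the required positive contribution must come entirely from the $\LC_{pp}^{(2)\,2}$ piece.
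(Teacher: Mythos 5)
Your outline is essentially the paper's ``warm-up'' argument (Appendix~\ref{app:approximate-variance-bound}) promoted to a proof, but the two quantitative claims that the whole balancing step rests on do not hold as stated, and the paper's actual proof uses a structurally different decomposition precisely to avoid them. First, the remainder: if you truncate the multivariate Taylor expansion at third order, the Lagrange remainder is quartic in the displacement and its coefficients are fourth derivatives evaluated at a $\thv$-dependent intermediate point, so the parity argument that kills odd moments does not apply to it. Its covariance with the gradient part therefore enters at $\OC(r^5)$, not $\OC(r^6)$, and its uniform bound carries a prefactor like $(2Br)^4\|O\|_\infty$ summed over four indices, i.e.\ something of order $m^4$ (or $M^4$) in the worst case. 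Second, the $m$-counting: with only uniform derivative bounds $|\LC^{(k)}|\leq(2B)^k\|O\|_\infty$, the mixed term $\sum_{i,j}\LC^{(1)}_i\LC^{(3)}_{ijj}$ after Cauchy--Schwarz and AM--GM (with the weight forced by the requirement that the absorbed piece not exceed $\tfrac{r^2}{3}\sum_i\LC^{(1)\,2}_i$) leaves a residual of order $r^6\sum_i\bigl(\sum_j|\LC^{(3)}_{ijj}|\bigr)^2\sim r^6\,m^3\,\poly(n)$, not $r^6\,m\,\poly(n)$. Balancing these residuals against $\tfrac{r^4}{45}\LC^{(2)\,2}_{pp}$ then yields $r_{\rm patch}\lesssim 1/(m^{3/2}\poly(n))$ at best (and worse once the $\OC(r^5)$ remainder term is included), which does not establish the theorem's claim that the guaranteed region extends to $r_{\rm patch}\in\Theta\bigl(1/(\sqrt{m}\,\poly(n))\bigr)$.

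The paper gets the linear-in-$m$ (hence $\sqrt{m}$ in $r_{\rm patch}$) accumulation by avoiding the global multivariate expansion altogether: it applies the law of total variance (Proposition~\ref{prop:var_decomp}) to reduce to single-parameter variances, bounds each single-variable variance using the \emph{full} (untruncated) Taylor series whose tail is summed in closed form via the $\sinh$ lemmas (Proposition~\ref{prop:variance_LB_bounded}), so that the per-parameter constants $\alpha,\gamma$ contain no factor of $m$ or $M$, and then controls the deviation of the \emph{averaged} second derivative from its value at $\vec{\phi}$ by the channel-averaging bound $\norm{\Ebb_{\theta}[\UC_\theta(D)]-D}\leq\tfrac{r^2}{6}\norm{[H,[H,D]]}$ (Proposition~\ref{prop:expectation_taylor_product}) applied once per remaining parameter, so the error adds one $\poly(n)$-sized term per parameter, giving a denominator linear in $m$. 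If you want to salvage your route you would need, at minimum, to expand to fourth order (so the untracked remainder's covariance with the gradient is $\OC(r^6)$) and to find structure in the third- and fourth-derivative sums beating the generic $m^2$--$m^4$ prefactors; as written, the proposal does not prove the stated scaling of $r_{\rm patch}$.
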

\noindent The formal version of Theorem~\ref{th:var} is presented in Theorem~\ref{th:var_formal} in Appendix~\ref{appendix:warm_up_proof_th1} for the uncorrelated case and Theorem~\ref{th:var_formal_cor} in Appendix~\ref{appendix:lower_bound_correlated} for the correlated case.

Theorem~\ref{th:var} establishes that as long as there exist points on a quantum loss landscape where the second derivatives decrease at worst polynomially in $n$, then there exist regions around these points where gradients vanish at worst polynomially in the number of qubits, $n$, and the number of parameters, $m$.  Consequently, for $m \in \OC(\poly(n))$ the loss variance vanishes at worst polynomially in $n$ in these regions. We denote the width of regions with guaranteed polynomially vanishing variance by $r_{\rm patch}$.

In this crude form, our argument may appear somewhat circular and borderline trivial. In fact, the formal proof, as shown in Appendix~\ref{appendix:lower_bound_correlated}, was rather cumbersome and relied on a careful analysis of the average Taylor remainder together with some non-trivial operator bounds. More importantly, Theorem~\ref{th:var} acts as a skeleton from which we can derive more concrete results for specific families of circuits, observables, and patches by showing that the second derivative condition in Eq.~\eqref{eq:second_deriv_cond} does hold in those cases. 
However, before we get to specific problem classes in Section~\ref{sec:architectures_main}, let us highlight some general characteristics of quantum loss landscapes that are elucidated by our analysis.

\medskip

\paragraph*{Characterizing small-angle initializations.}\label{sec:IdentityInitialization}  
Theorem~\ref{th:var} is very general and can be applied to any point in a quantum loss landscape with guaranteed curvature but some points are more natural to study than others. The first family of points we will consider are so-called small-angle initializations whereby a circuit is initialized in a small region around the all zero parameter vector. In the context of Theorem~\ref{th:var}, the success of these approaches can be attributed to the fact that there are a variety of circuits that exhibit gradients/curvature around zero~\cite{wang2023trainability, park2023hamiltonian,park2024hardware, zhang2022escaping, shi2024avoiding, grant2019initialization, cao2024exploiting}. However, not all possible circuits and loss function combinations will necessarily have gradients around zero. We provide an example of this in Appendix~\ref{appx:iden-initialization}. This means that the characterization of gradients around zero will have to be circuit dependent and, in contrast to suggestions in the literature~\cite{grant2019initialization}, small-angle initializations do not in general guarantee gradients. We will explore small-angle initializations more concretely in Section~\ref{sec:architectures_main} where we apply Theorem~\ref{th:var} to different families of circuits. 

\medskip

\paragraph*{Characterizing the region of attraction.}\label{sec:minimum}
The second set of points we will consider are those around the global minimum. There has been much interest in the potential of ``warm starting'' quantum algorithms such that they are sufficiently close to the solution that it is possible to train. Most of this work so far has been heuristically driven by numerical simulations~\cite{rudolph2022synergy, goh2023lie, gibbs2024exploiting, sauvage2021flip, truger2024warm, verdon2019learning,okada2023classically,ravi2022cafqa, mitarai2022quadratic}. To investigate the potential of such strategies at larger system sizes we apply Theorem~\ref{th:var} to bound the region of attraction surrounding a global minimum for a given circuit $U(\thv)$. This is captured via Corollary~\ref{cor:var_minimum}.

\begin{corollary}[Scaling of regions of attraction, Informal]\label{cor:var_minimum}
Consider a generic loss $\LC(\thv)$ of the form in Eq.~\eqref{eq:loss} and a parametrized quantum circuit $U(\thv)$ of the form in Eq~\eqref{eq:circuit}. Furthermore, denote $\thv^*$ as the parameter corresponding to the global minimum $\thv^* = \argmin_{\thv} \LC(\thv)$. Assume that the  fidelity between the parametrized state and the ground state of $O$ is at  least $1-|\epsilon|^2$, with $|\epsilon|\in\order{1/{\rm poly}(n)}$. Further assume that $O$ has a non-degenerate ground state with a gap $\gap$ to the first excited state that scales as $\gap \in\Omega(1/{\rm poly}(n))$. Then there exists a region around the minimum with
\begin{align}
    r_{\rm patch}\in\Theta\left( \frac{1}{\sqrt{\nparams} \, {\rm poly}(n)} \right)
\end{align}
such that the variance is not exponentially vanishing, as long as $\Tr[\rho H_{\rm 1}^2] - \Tr[\rho H_{\rm 1}]^2\in\Omega\left( 1/{\rm poly}(n)\right)$, where $H_{\rm 1}$ is the generator closest to the state, and its corresponding parameter $\theta_1$ is not correlated with other parameters in the circuit. 
\end{corollary}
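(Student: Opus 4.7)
The plan is to derive this corollary directly from Theorem~\ref{th:var} by verifying its second-derivative condition \eqq{second_deriv_cond} at the point $\vec{\phi}=\thv^*$ along the direction of the uncorrelated parameter $\theta_1$ attached to $H_1$. The key observation is that the non-correlation assumption makes the partial derivative with respect to $\theta_1$ involve only the single gate $e^{-i\theta_1 H_1}$, while the fidelity and spectral-gap assumptions together force the resulting single-parameter curvature to be non-negligible.

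First, I would absorb any fixed unitary sitting between $\rho$ and $H_1$ into the definition of the initial state and rewrite
\begin{equation}
\LC(\thv) = \Tr\bigl[e^{-i\theta_1 H_1}\,\rho\,e^{i\theta_1 H_1}\,\widetilde{O}(\thv_{-1})\bigr]\;,
\end{equation}
where $\widetilde{O}(\thv_{-1}) = W^\dagger(\thv_{-1})\, O\, W(\thv_{-1})$ is the effective observable obtained by back-propagating $O$ through the remaining gates $W(\thv_{-1})$. Because $\theta_1$ is uncorrelated with the other parameters, differentiating twice and using $[e^{-i\theta_1 H_1},H_1]=0$ with cyclicity of the trace yields
\begin{equation}
\frac{\partial^2\LC}{\partial\theta_1^2}\bigg|_{\thv^*} = 2\Tr[H_1\,\rho^*\,H_1\,\widetilde{O}^*] - \Tr[\{H_1^2,\rho^*\}\widetilde{O}^*]\;,
\end{equation}
where $\rho^* = e^{-i\theta_1^* H_1}\rho\,e^{i\theta_1^* H_1}$ and $\widetilde{O}^* = \widetilde{O}(\thv^*_{-1})$.

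Next, I would use the fidelity assumption to lower-bound this expression. Let $|\tilde\psi_0\rangle = W^\dagger(\thv^*_{-1})|\psi_0\rangle$ be the ground state of $\widetilde{O}^*$, which inherits the energy $E_0$ and gap $\gap$ of $O$. The fidelity assumption (taken pure; the mixed case follows by linearity) translates into $\rho^* = (1-|\epsilon|^2)|\tilde\psi_0\rangle\langle\tilde\psi_0|$ plus $\OC(\epsilon)$ contributions involving a unit vector $|\tilde\psi_0^\perp\rangle\perp|\tilde\psi_0\rangle$. Expanding in the eigenbasis $\{|\tilde e_k\rangle\}$ of $\widetilde{O}^*$ with coefficients $c_k=\langle\tilde e_k|H_1|\tilde\psi_0\rangle$, the expression becomes
\begin{equation}
\frac{\partial^2\LC}{\partial\theta_1^2}\bigg|_{\thv^*}
= 2\sum_{k\neq 0}|c_k|^2(E_k-E_0) + \mathcal{R}_\epsilon
\;\geq\; 2\gap\,\mathrm{Var}_{|\tilde\psi_0\rangle}(H_1) + \mathcal{R}_\epsilon\;,
\end{equation}
with $\mathcal{R}_\epsilon = \OC(|\epsilon|\,\|H_1\|^2\|O\|)$ gathering all cross terms. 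Since $e^{-i\theta_1^* H_1}$ commutes with $H_1$, $\mathrm{Var}_{|\tilde\psi_0\rangle}(H_1)$ equals $\Tr[\rho H_1^2]-\Tr[\rho H_1]^2$ up to an $\OC(\epsilon)$ error that can be absorbed into $\mathcal{R}_\epsilon$. Under the assumption that $|\epsilon|$, $1/\gap$ and $1/\bigl(\Tr[\rho H_1^2]-\Tr[\rho H_1]^2\bigr)$ all scale polynomially in $n$ with a sufficiently small $|\epsilon|$, the leading term dominates and $|\partial^2\LC/\partial\theta_1^2|_{\thv^*} \in \Omega(1/\poly(n))$. Invoking Theorem~\ref{th:var} with $p=1$ and $\vec{\phi}=\thv^*$ then yields the advertised polynomially-wide region of attraction.

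The main obstacle I anticipate is the quantitative control of $\mathcal{R}_\epsilon$: one must ensure that $\|H_1\|$ and $\|O\|$ are bounded by a fixed polynomial in $n$ (which is implicit in standard Pauli-string and local-Hamiltonian settings), and verify that the cross terms between the ground-state component and the orthogonal $\epsilon$-component do not secretly cancel the leading contribution. The non-correlation assumption on $\theta_1$ is essential here: if $\theta_1$ were shared with gates deeper in the circuit, additional nested-commutator contributions would appear in the derivative and could have signs that wipe out $2\gap\,\mathrm{Var}_\rho(H_1)$, invalidating the single-parameter bound.
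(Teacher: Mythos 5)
Your proposal is correct and follows essentially the same route as the paper's proof: reduce the problem to the second derivative along the uncorrelated parameter of the gate adjacent to the initial state, write it as a double commutator of $H_1$ against the back-propagated observable, expand the (rotated) initial state in that observable's eigenbasis so the fidelity assumption splits it into a ground-state-aligned term lower-bounded by $2\gap$ times the variance of $H_1$ plus an $\OC(|\epsilon|\,\|H_1\|^2\|O\|)$ remainder, relate that variance back to $\Var_\rho(H_1)$, and then invoke the general variance lower bound (the correlated-parameter formal version, since only $\theta_1$ need be uncorrelated). The only differences are presentational: the paper tracks the cross terms with explicit constants (yielding the precise admissible threshold on $|\epsilon|$ and the explicit $r^*_{\rm patch}$) where you use big-O, and it relates $\Var_{\lambda_1^*}(H_1)$ to $\Var_\rho(H_1)$ by direct expansion rather than via the commutation of $e^{-i\theta_1^*H_1}$ with $H_1$.
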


Corollary~\ref{cor:var_minimum} allows us to characterize the region around the global minimum. In particular, it establishes that there exists a hypercube with a non-exponentially vanishing width around the solution in which the loss landscape has non-exponentially vanishing gradients under mild assumptions. A formal version of Corollary~\ref{cor:var_minimum} can be found in Appendix ~\ref{subsec:corollary-1-formal-app} followed by its proof in Appendix~\ref{subsec:var_minimum_app}.

The assumption on the observable $O$, i.e. $\gap\in\Omega\left(1/{\rm poly}(n)\right)$, comes from the need to be able to distinguish the target space from others. Indeed, trying to discriminate between two exponentially close subspaces is a hard task. 
However, we expect that in practice, having a degenerate ground-state space will not make it harder to reach the minimum because the ``target'' subspace is larger when the ground state is degenerate.

It is worth recalling that not every choice of circuit leads to a polynomial vanishing variance in a region around a global solution. Indeed, an extreme example of such an adversarial circuit is one that completely commutes with the observable or the state. In that case, the cost function is fixed for all $\thv$ values and its variance trivially  becomes zero. Such adversarial circuits are ruled out by the requirement that $H_1$, the gate closest to the observable, has a non-trivial effect on the initial state. In particular, we need the variance of the generator $H_1$ with respect to the initial state to be non exponentially vanishing, i.e. $\Tr[\rho H_1^2] - \Tr[\rho H_1]^2\in\Omega\left(1/{\rm poly}(n)\right)$. Furthermore, note  that this is a condition on the initial state $\rho$ and the first gate on the circuit. Therefore, in most instances this can be trivially checked in practice. 

Corollary~\ref{cor:var_minimum}, put simply, thus tells us that the width of regions of attraction does not vanish exponentially and thus there is (potentially!) some hope of warm starting variational quantum algorithms using pre-training and smart initialization strategies. Importantly, our results here hold for practically all parameterized circuits currently under consideration by the community, from unstructured circuits on arbitrary topologies~\cite{kandala2017hardware} to problem-inspired circuits where the parameters are highly correlated~\cite{zhou2021quantum, chai2022shortcuts, chandarana2022digitized, vizzuso2024convergence}. 

\subsection{The role of Fourier frequencies}\label{sec:fourierfreq}

\begin{figure}
    \centering
    \includegraphics[width=0.99\linewidth]{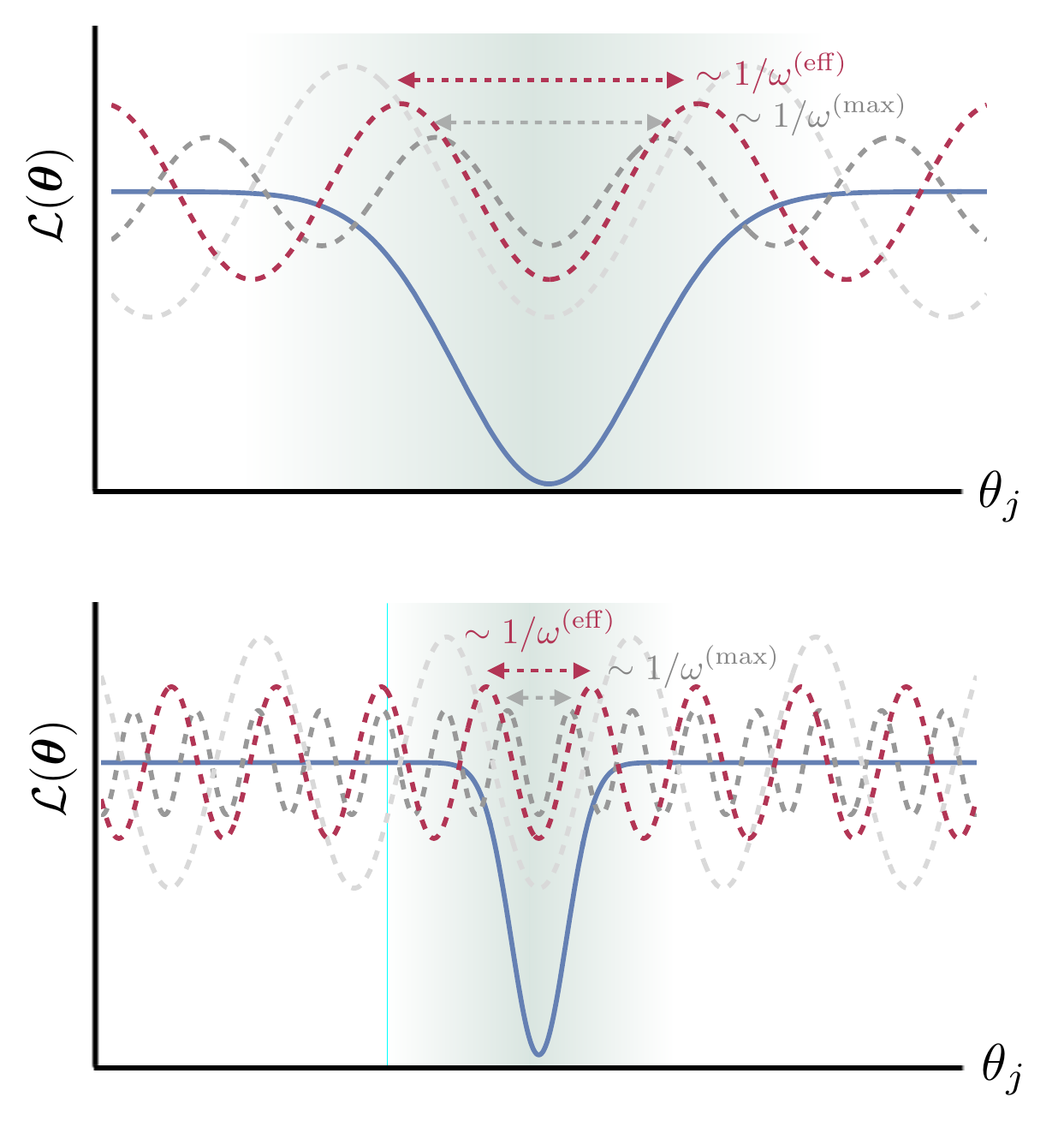}
    \caption{\textbf{Role of Fourier frequencies.} Here we sketch the Fourier decomposition of the loss $\LC(\thv)$ with respect to a parameter $\theta_j$. Intuitively, the width of patches with gradients depend inversely on the magnitude of the frequencies in the Fourier decomposition. 
    In cases where many high frequencies are present both the maximum and effective (dominant) frequencies are high and the minima tend to be narrower compared to when those frequencies are low.
    Note that this figure is merely to be understood as an illustration of the frequencies and the role they play in the loss function. 
    }
    \label{fig:fourierdecomp}
\end{figure}

The scaling of $r_{\rm patch}$, the width of any patch of a quantum landscape with guaranteed polynomial loss variances, depends strongly on the frequencies that appear in the Fourier decomposition of the loss. Concretely, as highlighted originally in Ref.~\cite{schuld2021effect} and shown in full generality in Appendix~\ref{App:Fourier_expansion}, any loss function of the form Eq.~\eqref{eq:loss} with a parameterized quantum circuit of the form Eq.~\eqref{eq:circuit} can be written as
\begin{equation}\label{eq:fourierloss}
    \LC(\thv) = \sum_{\boldsymbol{\omega} \in \Omega_1 \times \dots \times \Omega_\nHam} e^{-i \thv^T \boldsymbol{\omega}} a_{\boldsymbol{\omega}},
\end{equation}
where the components $\omega_l$ of the Fourier frequency vector $\boldsymbol{\omega}$  are given by the differences of the eigenvalues of $H_l$. Precisely, the Fourier frequencies are defined as $\omega_l \in \Omega_l:= \{ \lambda_i^{(l)}-\lambda_j^{(l)} ,\; \forall i,j \in [1,2^n]\}$ where $\{\lambda^{(l)}_i\}_i$ is the set of eigenvalues of $H_l$, and $\Omega_l$ represents the spectrum of distinct frequencies associated to a generator $H_l$. We refer the reader to Appendix~\ref{App:Fourier_expansion} for a formal definition of the Fourier coefficient $a_{\boldsymbol{\omega}}$. Intuitively, as sketched in Fig.~\ref{fig:fourierdecomp}, functions with high frequency terms can have narrower minima whereas functions with only lower frequency terms have broader minima.

Our analysis indicates that this is true also for patches of quantum losses. In particular, we find that $r_{\rm patch}$ is roughly inversely proportional to the sum of the maximal frequencies $\omega^{(\max)}_{j}$ associated with each trainable parameter $\theta_j$. Concretely, for a set of (potentially correlated) gate parameters $\{e^{-i\theta_{j}H_{l}}\}_{l\in \mathcal{S}^{-1}(j)}$, where $\mathcal{S}^{-1}(j)$ is an inverse map of $\SC$ and represents all the generators with the same parameter $\theta_j$, the \textit{maximal frequency} associated with $\theta_j$ is defined as
\begin{equation}\label{eq:max_freqs}
\omega^{(\max)}_{j} := \sum_{l\in\mathcal{S}^{-1}(j)} \omega^{(\max)}(H_l) \, , 
\end{equation}
where $\omega^{(\max)}(H_l) = \max(\Omega_l)$
denotes the maximal Fourier frequency component corresponding to the generator $H_l$. As proven in Appendix~\ref{appendix:lower_bound_correlated}, we then find that 
\begin{equation}\label{eq:r_patch_maxfreq}
    r^2_{\rm patch} \propto \frac{1}{\sum_{j=1}^m  (\omega^{(\max)}_{j})^2} \, ,
\end{equation}
where the sum is taken over the indices of each of the $m$ independent trainable parameters. 

We thus see that there are two key elements that affect the maximal frequencies and thereby the scaling of $r_{\rm patch}$. The first factor is the frequencies, $\omega^{(\max)}(H_l)$, associated with each of the generators $H_l$. The second factor is whether different parameters are correlated.
In particular, as the frequencies of the generators will be greater than $1$, it follows from a simple application of the triangle inequality to the squared sum in Eq.~\eqref{eq:r_patch_maxfreq} that, for a fixed number of generators $M$, correlating parameters will in general increase the sum of the maximal frequencies and so reduce $r_{\rm patch}$.

A tighter estimate on the scaling of $r_{\rm patch}$ can be obtained when we allow spatial correlations, i.e., correlations between parameters of qubits within a layer, but not time correlations, i.e., correlations between layers, as sketched in Fig.~\ref{fig:corr_schematic}.  This is done by introducing the notion of \textit{effective frequencies}. As sketched in Fig.~\ref{fig:fourierdecomp}, these capture the idea that the size of any minimum will depend more closely on the magnitude of the dominant frequencies rather than the largest possible frequency. Concretely, the effective frequencies are defined as 
\begin{equation}\label{eq:ef_freq}
    \omega^{\rm (eff)}_{j}(\vec{\phi}) = \sqrt{\norm{  \left.\frac{\partial^2}{\partial\th_j^2}\left[ U(\thv)OU^\dagger(\thv) \right]\right|_{\thv=\vec{\phi}} }} \, .
\end{equation}
While it may not be obvious from this expression, in Appendix~\ref{App:Fourier_expansion} we show that $\omega^{\rm (eff)}_{j}$ can be thought of as effectively a weighted sum of the dominant frequencies in the Fourier decomposition of the loss given in Eq.~\eqref{eq:fourierloss}. The effective frequencies are upper bounded by the maximal frequencies with $\omega^{\rm (eff)}_{j} \leq 2 \sqrt{\norm{O}}\omega^{(\rm max)}_{j}$.

As with the maximal frequencies, the effective frequencies determine the width of the gorge. In broad terms, as shown explicitly in Appendix~\ref{appendix:warm_up_proof_th1}, we have 
\begin{equation}
    r^2_{\rm patch} \propto \frac{1}{\sum_{j=1}^m (\omega^{\rm (eff)}_{j}(\vphi))^2 }\, ,
\end{equation}
such that the larger the effective frequencies are, the narrower any patch with guaranteed polynomial gradients becomes. 

Crucially, in contrast to the maximal frequencies, it is clear from the form of Eq.~\eqref{eq:ef_freq} that effective frequencies are  dependent both on properties of the observable and the light-cone of the circuit. Thus the notion of effective frequencies can be used, for example, to explain the effect of local versus global observables on the scaling of regions of attraction. In particular, the effective frequency $ \omega^{\rm (eff)}_{j}$, associated with a parameter $\theta_j$, vanishes if the back-propagated observable (i.e., the observable $O$ which is Heisenberg evolved up to the location of the $\theta_j$ gate) commutes with the generator $H_j$. It follows that the effective frequencies of local observables are substantially reduced compared to those of global observables and so $r_{\rm patch}$ is larger. We explore this effect in more detail in the next section. 

\begin{figure}
        \centering
    \includegraphics[width=0.85\linewidth]{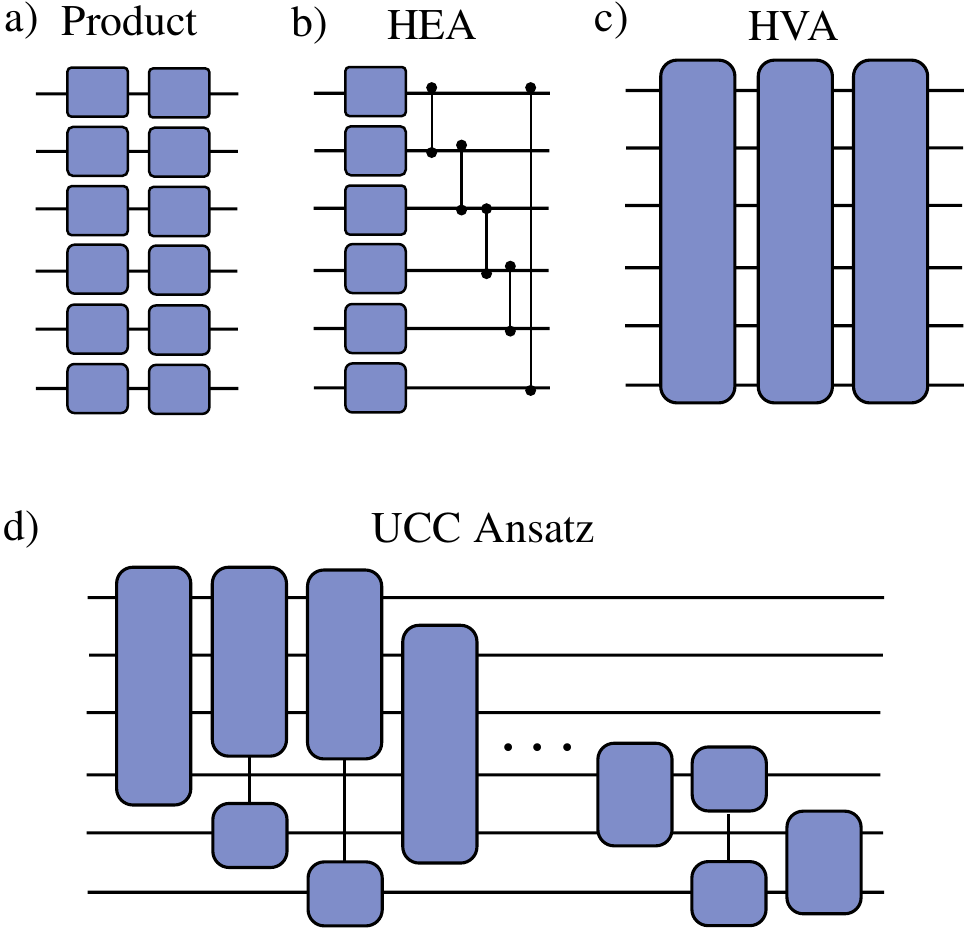} 
    \caption{\textbf{Example circuit architectures schematic.} Here we sketch the structure of the four families of circuits we analyze: a) a tensor product ansatz, b) a hardware efficient ansatz (HEA), c) the Hamiltonian variational ansatz (HVA) and d) the unitary coupled cluster (UCC) ansatz.}
    \label{fig:architectures}
\end{figure}

\subsection{Applications to different architectures}\label{sec:architectures_main}
In this section, we refine Theorem~\ref{th:var} to different circuits to highlight the generality of our theoretical bound and to provide more tailored results for the ans\"atze of interest. As sketched in Fig.~\ref{fig:architectures}, we consider four different circuit families of ans\"atze comprised of i.~a tensor product example, ii.~a Hardware Efficient Ansatz (HEA)~\cite{kandala2017hardware}, iii.~a Hamiltonian Variational Ansatz (HVA)~\cite{wecker2015progress,mele2022avoiding} and the Quantum Approximate Optimization Algorithm (QAOA)~\cite{farhi2014quantum,zhou2020quantum, akshay2021parameter}, and lastly iv.~Unitary Coupled Cluster (UCC) ansatz~\cite{mao2023barren, zhou2021quantum, arrazola2022universal, chai2022shortcuts, chandarana2022digitized, vizzuso2024convergence}. In particular, we apply the formal version of Theorem~\ref{th:var} to analyze the scalings of the patch's width and the loss around $\vphi = \vec{0}$. Our bounds are summarized in Table~\ref{table:scalings}. More details can be found in Appendix~\ref{appendix:architectures}, including a general recipe to compute the scalings for an arbitrary circuit of interest.

To accompany our theoretical analysis, we numerically study the maximal variance,
\begin{equation}\label{eq:varmax}
    \Var_{\rm max} :=   \max_{r} \Var_{\vtheta \sim \uni(\vec{\phi}, r)} \left[ \LC (\vtheta)\right] \, ,
\end{equation}
and the corresponding $r$ that maximizes the variance
\begin{equation}\label{eq:rmax}
    r_{\rm max } = \argmax_{r} \Var_{\vtheta \sim \uni(\vec{\phi}, r)} \left[ \LC (\vtheta)\right] \, 
\end{equation}
as indicated in Fig.~\ref{fig:corr_vs_uncorr}.
We emphasize that despite being closely related, $r_{\rm max}$ and $r_{\rm patch}$ are two different ways of characterizing the size of a patch with substantial gradients. Hence it comes with no surprise below that while we numerically find that $r_{\rm max}$ scales in a similar manner to our analytic estimates of $r_{\rm patch}$, the scalings are not exactly the same.

\begin{figure}
    \centering
    \includegraphics[width=0.99\linewidth]{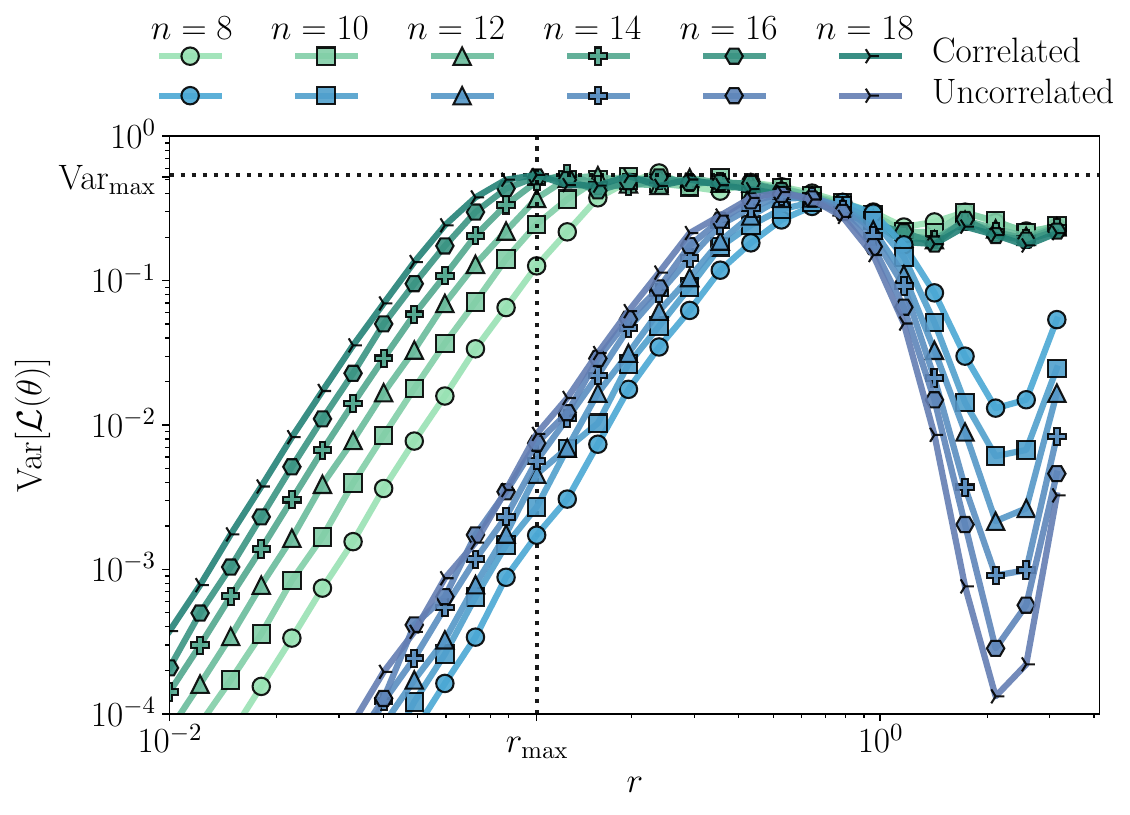} 
    \caption{\textbf{Patch variance for correlated and uncorrelated product ansatz.} Here we study the landscape of a loss function of the form in Eq.~\eqref{eq:loss} with $\rho = |\psi \rangle \langle \psi |$ where $\ket{\psi} = \frac{1}{\sqrt{2}}(\ket{+}^{\otimes n}+ \ket{-}^{\otimes n})$ and $ O = \bigotimes_{i=1}^{n} \sigma_{z}^{(i)}$. We consider a tensor product ansatz composed of $R_X(\th),R_Z(\th), R_X(\th)$ rotations applied on each qubit. We plot the relationship between the variance of $\LC(\thv)$ in a hypercube around $\vec{0}$ as a function of $r$ when the parameters are correlated (green) and uncorrelated (blue). The max variance, $\text{Var}_{\rm max}$, and its location, $r_{\rm max}$, are indicated for $n = 18$ in the correlated case.}
    \label{fig:corr_vs_uncorr}
\end{figure}

\medskip
\paragraph*{Architectures considered.} Here we provide a high-level description of the circuit families we consider, as sketched in Fig.~\ref{fig:architectures}. 

\begin{description}
    \item[Tensor Product] An ansatz composed of layers of single qubit $R_X(\th),R_Z(\th),R_X(\th)$ rotations. For our numerics, we take the observable to be $\sigma_z^{\otimes n}$ and the initial state $\ket{\psi} ~=~\frac{1}{\sqrt{2}}(\ket{+,...,+} + \ket{-,...,-})$. The aim of this example is to study the effect of correlating parameters. We compare the case where the parameters are completely uncorrelated and the case where they are partially correlated such that all the parameters within the layer of $R_X(\th),R_Z(\th),R_X(\th)$ rotations are the same. 

    \item[Hardware Efficient Ansatz (HEA)] Parameterized quantum circuits composed of alternating independent single qubit Pauli rotations and fixed unitary gates~\cite{kandala2017hardware}. That is, ans\"atze of the form of Eq.~\eqref{eq:circuit} where the generators $H_i$ are single qubit Pauli operators with $H_i^2 = \1$ and with uncorrelated parameters such that $ \nparams = \nHam$. In our numerical simulations, we analyze an ansatz constructed via layers of single qubit rotations $R_Y(\th),R_Z(\th)$ and ladders of CZ gates in a 1D topology. We compare the maximum value of the variance for a local observable $\sigma_z^{(1)} \otimes \sigma_z^{(2)}$ applied on the first two qubits, versus a global observable, $\bigotimes_{i=1}^n\sigma_z^{(i)}$. The initial state is $\ket{0}^{\otimes n}$.

    \item[Hamiltonian Variational Ansatz (HVA)] A physically motivated ans\"atze for finding the ground state of a Hamiltonian~\cite{wecker2015progress,mele2022avoiding} and solving combinatorial optimization~\cite{farhi2014quantum}. Suppose that the target Hamiltonian $H$ can be expressed as a sum of non-commuting bounded terms $H = \sum_{k=1}^{K} H_k$ where $K\in\Theta(1)$. The HVA then takes the form
    \begin{equation}
    U(\thv) = \prod_{l=1}^L\left(\prod_{k=1}^K e^{-i\theta_{k,l} H_k}\right) \, ,
    \end{equation}
    where $k$ loops over all the terms in $H$ and then this composite layer is repeated $L$ times. The Quantum Approximate Optimization Algorithm (QAOA)~\cite{farhi2014quantum,zhou2020quantum, akshay2021parameter} is a special case of the HVA.

    Usually one considers a fully \textit{relaxed} version of the HVA and QAOA where the rotation angles $\theta_{k,l}$ are assumed to be uncorrelated such that $\nparams = KL$. To explore the effect of time correlations between different layers of the circuit we will also consider the case where each of the $l$ layers uses the same parameters, i.e., $\theta_{k,l} = \theta_{k, l'}$. This captures the Trotterized evolution under a time-independent parameterized Hamiltonian. 

    In our numerical analysis, we consider a Heisenberg Hamiltonian $O= \sum_{i=1}^n \sigma_z^{(i)}\otimes\sigma_z^{(i+1)} + \sigma_x^{(i)}\otimes\sigma_x^{(i+1)}+\sigma_y^{(i)}\otimes\sigma_y^{(i+1)}$ and the N\'eel state $\ket{\psi} ~= ~\frac{|01\rangle^{\otimes n/2} + |10\rangle^{\otimes n/2}}{\sqrt{2}}$ as our initial state.

    \item[Unitary Coupled Cluster Ansatz (UCC)] A chemically inspired ansatz used for finding the ground state of a Fermionic molecular Hamiltonian $H$. The UCC ansatz is formulated in terms of parameterized excitations applied to some reference state (usually the Hartree-Fock state).

    In this work we consider the widely used standard Unitary Coupled-Cluster
    with Singles and Doubles (UCCSD) ansatz~\cite{mao2023barren}. This is formulated by truncating the excitation operator $T(\thv)$ such that only first and second order excitations remain which are then approximated using a $L$-step Trotter approximation~\cite{mao2023barren} leading to the form
    \begin{align}\label{eq:ucc_first_main}
    U(\thv) = \prod_{l=1}^L\prod_{k=1}^K e^{{\theta_{k,l}}(\tau_k-\tau^\dagger_k)} \;,
    \end{align}
    where $\tau_k - \tau_k^\dagger$ can be further decomposed into Pauli operators for practical implementations~\cite{mao2023barren}. The ansatz consists of all possible gates which excite a single and double particles to different orbitals (see a sketch in Fig.~\ref{fig:architectures}). 
    
    We again consider two versions of this ansatz: a \textit{relaxed} version where all parameters are uncorrelated and a \textit{Trotter} version with time correlations such that different Trotter layers share the same parameters, i.e., $\theta_{k,l} = \theta_{k,l'}$. In our numerics, we consider an observable of the form $O= \sum_{i=1}^{n}\sigma_z^{(i)}\otimes\sigma_z^{(i+1)}$
    with periodic boundary conditions and an initial state $\ket{\psi} = \ket{1}^{\otimes \frac{n}{2}} \otimes \ket{0}^{\otimes \frac{n}{2}}$.
\end{description}

\begin{table*}[]
{
\setlength{\tabcolsep}{6pt}
\renewcommand{\arraystretch}{1.8}
\begin{tabular}{c|cc|cc|cc|cc|}
\cline{2-9}
                                                                                                                               & \multicolumn{2}{c|}{Tensor product }                                                                                                                                       & \multicolumn{2}{c|}{HEA}                                                                               & \multicolumn{2}{c|}{HVA}                                               & \multicolumn{2}{c|}{UCC}                                                                                                                \\ \cline{2-9} 
                                                                                                                               & \multicolumn{1}{c|}{Correlated}                             & Uncorrelated                                                                                             & \multicolumn{1}{c|}{Local}                                   & Global                                  & \multicolumn{1}{c|}{Trotter} & Relaxed                                 & \multicolumn{1}{c|}{Trotter}                                                 & Relaxed                                                  \\ \hline
\multicolumn{1}{|c|}{$r_{\rm patch}$}                                                                                          & \multicolumn{1}{c|}{$\Theta\left(\frac{1}{n}\right)$} & $\Theta\left(\frac{1}{\sqrt{n}}\right)$                                                                  & \multicolumn{1}{c|}{$\Theta\left(\frac{1}{(nL)^{1/4}}\right)$}  & $\Theta\left(\frac{1}{\sqrt{n L}}\right)$ & \multicolumn{1}{c|}{$\Theta\left(\frac{1}{L n^3}\right)$}        & $\Theta\left(\frac{1}{\sqrt{L}}\right)$ & \multicolumn{1}{c|}{$\Theta\left(\frac{1}{L\sqrt{n(K + n)}}\right)$} & $\Theta\left(\frac{1}{\sqrt{K L + n^2}} \right)$ \\ \hline
\multicolumn{1}{|c|}{${\rm Var}_{\rm patch}$} & \multicolumn{1}{c|}{$\Omega\left(1\right)$}       & $\Omega\left( \frac{1}{n}\right)$ & \multicolumn{1}{c|}{$\Omega\left(\frac{1}{\sqrt{nL}}\right)$} & $\Omega\left(\frac{1}{nL}\right)$        & \multicolumn{1}{c|}{$\Omega\left(\frac{1}{n^{10}}\right)$}        & $\Omega\left(\frac{n^2}{L^2}\right)$      & \multicolumn{1}{c|}{$\Omega\left(\frac{1}{n^2(K + n)^2}\right)$} & $\Omega\left(\frac{L n^2}{(K L+n^2)^2}\right)$                      \\ \hline
\end{tabular}
}
    \caption{\textbf{Analytic scalings of the  patch variance for example architectures.} Here we provide the analytic scalings of $r_{\rm patch}$ and $\Var_{\rm patch} := \Var_{\vtheta \sim \uni(\vec{\phi}, r_{\rm patch})}$ with $n$ being the number of qubits, $L$ the number of circuit layers and $K$ the number of generators in one layer for the UCC ansatz (as described in Eq.~\eqref{eq:ucc_first_main}) for different example ans\"atze.}
    \label{table:scalings}
\end{table*}

\begin{figure*}
    \centering
    \begin{tikzpicture}
    \pgftext{\includegraphics[width=.99\textwidth]{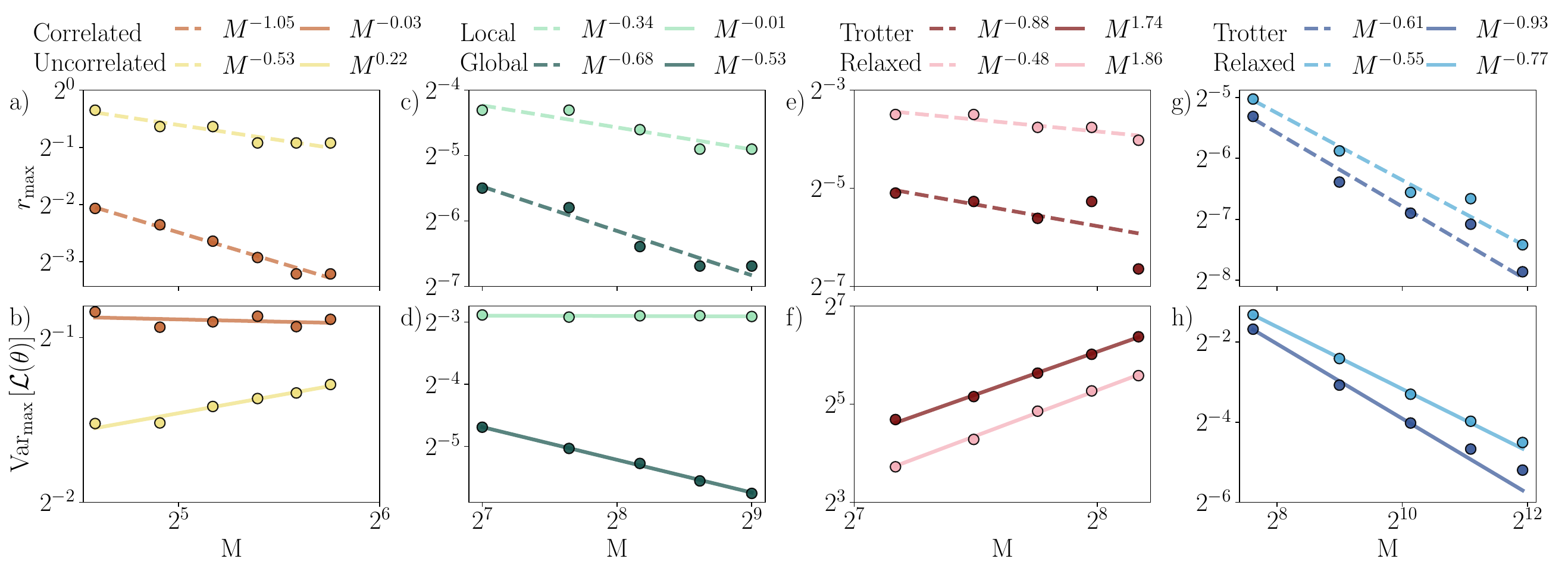}} 

    \node at (-6.3,3.5) {Tensor product};
    
    \node at (-1.7,3.5) {HEA};
    \node at (2.6,3.5) {HVA};
    \node at (6.7,3.5) {UCC};
    \end{tikzpicture}
    \caption{\textbf{Numerical scalings of the patch variance for example architectures.} Here we study the variance of the loss landscape $\LC(\thv)$ for different ans\"{a}tze with parameters drawn from a uniformly sampled hypercube centered around zero, $\uni(\vec{0}, r)$. Particularly, in the upper panels we show the value of $r$ that maximizes the variance, $r_{\max} = \argmax_{r}\Var_{\thv\sim\uni(\vec{0},r)}[\LC(\thv)]$, and in the lower panels we show the maximum value of the variance, $\Var_{\max} = \max_r \Var_{\thv\sim\uni(\vec{0},r)}[\LC(\thv)]$. We plot these values with respect to $M$, the number of generators. We calculate the variance by sampling $1000$ times inside hypercubes with different $r$. All the plots except for the tensor product ansatz are done for the following number of qubits $n\in\{8,10,12,14,16\}$. \textbf{a)} and \textbf{b)} here we consider a single layer of the tensor product ansatz, we plot it for $n\in\{8,10,12,14,16, 18\}$ qubits. \textbf{c)} and \textbf{d)} we study the Hardware Efficient Ansatz, with the number of layers scaling linearly with the number of qubits, i.e, $L = n$. \textbf{e)} and \textbf{f)} we analyze the Hamiltonian Variational Ansatz with the number of layers growing linearly with the number of qubits $L = 8n$. \textbf{g)} and \textbf{h)} we consider the Unitary Coupled Cluster ansatz with a constant number of layers $L = 2$.
    }
    \label{fig:scalings}
\end{figure*}

\paragraph*{The role of correlations.} The tensor product example nicely demonstrates the role of \textit{spatial} correlations, i.e., correlations between different qubits in the same layer as sketched in Fig.~\ref{fig:corr_schematic}, whereas the role of time correlations, i.e., correlations between parameters in different layers of the ansatz, is explored by our comparison between the \textit{relaxed} and \textit{Trotter} versions of the HVA and UCC. 
By computing the maximal frequencies for the correlated and uncorrelated cases, we find that, in line with our informal argument in the previous section, correlations do decrease the size of the region with gradients. This behavior is theoretically captured by our analytical predictions of the width of patches with guaranteed gradients, where we see that $r_{\rm patch}$ is smaller in the correlated case compared to the uncorrelated one, as summarized in the Table~\ref{table:scalings}.
This is further confirmed numerically in Fig.~\ref{fig:scalings}. We see that  $r_{\max}$ is indeed smaller in the case of the correlated ansatz compared to the uncorrelated ansatz as predicted.

However, the relative scaling of the variance within the patches with guarantees between the correlated and uncorrelated settings is not fully clear. Although prior works ~\cite{holmes2021connecting, volkoff2021large, fontana2023theadjoint, ragone2022representation} indicate that correlating parameters tends to increase the variance when looking at the full parameter range, this may not always be the case when restricting the analysis to small regions of the loss landscape. Or in other words, even if the variance across the whole landscape is bigger in the correlated case, $\Var_{\max }$ can still be larger in the  uncorrelated parameter setting. For example, in the tensor product ansatz, both $\Var_{\rm max}$ and $\Var_{\rm patch}$, as shown in Table \ref{table:scalings}, are larger in the correlated setting compared to the uncorrelated one. Conversely, the Relaxed UCC circuit exhibits greater loss variance at $r_{\rm max}$ than its Trotter counterpart both numerically and analytically. 

\medskip

\paragraph*{Locality versus globality.}
The role of observable locality is illustrated in our study of the HEA. In line with our informal discussion on the role of the measured observable locality   through the notion of effective frequencies, here we find that a circuit with a local observable exhibits a larger $r_{\rm patch}$ around zero compared to the one with a global observable.  In particular, both our analytic bounds and numerics indicate that  $r_{\rm patch}^{(G)} \sim (r_{\rm patch}^{(L)})^2$ and $r_{\max}^{(G)} \sim (r_{\max}^{(L)})^2$ respectively. Moreover, in line with prior works, we find that the variance of the local loss is large than that of the global loss~\cite{cerezo2020cost} for this unstructured ans\"atze. We expect that the story would prove more complex for ans\"atze with symmetries~\cite{ragone2023unified, fontana2023theadjoint}; however, we leave this investigation for future work.

\subsection{Fundamental limitations on warm start strategies}\label{sec:upperbound}
All prior work on loss variances in patches of quantum landscapes, including our results presented here so far, provide only \textit{lower} bounds on the variance. That is, it has been shown that in certain small-angle regions, that shrink \textit{at worst} polynomially with the number of trainable parameters, the variance is guaranteed to decrease \textit{at worse} polynomially. However, the variance could in theory be significant in larger regions, and indeed one might hope that is the case. Our numerics above strongly indicate that this optimism is misplaced. Indeed, in Fig.~\ref{fig:scalings} we find that the width of patches with gradients shrinks polynomially with the number of trainable parameters for all cases we have studied in line with our bounds. We stress that our analysis covers a wide range of ansatz classes including physically motivated circuits with correlated parameters. Thus the polynomial shrinking of patches with guaranteed gradients would seem to be a common phenomenon. 

In this section, we proceed to discuss analytic progress deriving upper bounds to pin down the scaling of patches with guaranteed gradients.  
In general, computing a non-trivial upper-bound on the loss variance for patches of loss landscapes is a hard task. However, we conjecture that any circuit that suffers from the barren plateau phenomenon will also do so in any patch with a constant width, 
i.e., in any uniformly sampled hypercube with $r\in\Theta(1)$. 
To support this claim we present Proposition~\ref{prop:upperbound}, with the proof in Appendix~\ref{app:upperbound}.  

\begin{proposition}
[Upper bound on the variance]\label{prop:upperbound}
Consider a generic loss $\LC(\thv)$ of the form in Eq.~\eqref{eq:loss}. Assume that the average of this loss over the parameter hyper-space $\vol(\vec{\phi},r_{\rm full})$ is zero, that is $\Ebb_{\thv \sim \uni(\vec{\phi}, r_{\rm full})}[\LC(\thv)] = 0$. Assume that the number of independent parameters is proportional to the number of qubits $m = c n$ ($c>1$ a constant), and that the variance over the full landscape exponentially vanishes
\begin{align}
    \Var_{\vtheta \sim \uni(\vphi, r_{\rm full})} \left[ \LC (\vtheta)\right] \in \OC\left( \frac{1}{b^n}\right) \;,
\end{align}
with $b > 1$. Then the variance on any uniformly sampled hypercube $\vol(\vec{\phi}, r)$ with $r>\frac{r_{\rm full}}{b^{1/c}}$, will also be exponentially vanishing
\begin{align}
    \Var_{\vtheta \sim \uni(\vec{\phi}, r)} \left[ \LC (\vtheta)\right]\in\order{\frac{1}{\beta^n}} \;,
\end{align}
with $\beta>1$.
\end{proposition}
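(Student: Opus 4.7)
The plan is to use a simple volume-comparison argument. Since $\VB(\vphi, r) \subseteq \VB(\vphi, r_{\rm full})$ for any $r \leq r_{\rm full}$ and $\LC^2(\thv) \geq 0$ pointwise, the integral of $\LC^2$ over the smaller hypercube is bounded above by the integral over the larger one. Dividing by the respective volumes $(2r)^m$ and $(2r_{\rm full})^m$ and rearranging gives the elementary inequality
\begin{equation}
    \Ebb_{\thv \sim \uni(\vphi, r)}[\LC^2(\thv)] \;\leq\; \left(\frac{r_{\rm full}}{r}\right)^{\!m} \Ebb_{\thv \sim \uni(\vphi, r_{\rm full})}[\LC^2(\thv)] \, .
\end{equation}
This is the workhorse of the proof: it lets us transfer a bound on the second moment from the full landscape down to the patch at the cost of an explicit volume prefactor.

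Next I would exploit the zero-mean hypothesis on the full landscape to identify $\Ebb_{\uni(\vphi, r_{\rm full})}[\LC^2] = \Var_{\uni(\vphi,r_{\rm full})}[\LC] \in \OC(1/b^n)$. Combining this with the inequality above and substituting $m = cn$ and $r > r_{\rm full}/b^{1/c}$, the prefactor is controlled by
\begin{equation}
    \left(\frac{r_{\rm full}}{r}\right)^{\!cn} \;<\; \left(b^{1/c}\right)^{cn} = b^n \, ,
\end{equation}
so the $b^{-n}$ factor from the full-landscape variance is preserved, up to a gap that produces genuine exponential decay. Concretely, writing $r = (r_{\rm full}/b^{1/c})(1+\delta)$ for some $\delta > 0$, we obtain
\begin{equation}
    \Ebb_{\thv \sim \uni(\vphi, r)}[\LC^2(\thv)] \;\in\; \OC\!\left(\frac{1}{(1+\delta)^{cn}}\right) = \OC\!\left(\frac{1}{\beta^n}\right)
\end{equation}
with $\beta = (1+\delta)^c > 1$.

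To conclude, I would use the trivial inequality $\Var[\LC] \leq \Ebb[\LC^2]$, valid on any distribution without any assumption on the mean, to transfer the second-moment bound to a variance bound on the patch. Crucially, this sidesteps having to assume (or prove) that $\LC$ also has mean zero on the smaller hypercube $\VB(\vphi, r)$, which generally will not be true; the zero-mean hypothesis is only used once, and only on the full landscape.

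There is no real conceptual obstacle here; the statement amounts to the observation that a barren plateau cannot hide inside a sufficiently large subvolume, because restricting to a subset can amplify the mean square value by at most the reciprocal volume ratio. The only subtlety is keeping track of how the strict inequality $r > r_{\rm full}/b^{1/c}$ converts into the strict inequality $\beta > 1$ in the conclusion, which is where the $c > 1$ assumption plays a role: it ensures the exponent $cn$ in the prefactor grows fast enough with $n$ that any fixed slack $\delta$ in $r$ yields genuine exponential suppression.
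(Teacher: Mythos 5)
Your proof is correct and follows essentially the same route as the paper: the same $(r_{\rm full}/r)^m$ second-moment comparison (the paper derives it coordinate-by-coordinate via iterated expectations, you get it directly by comparing integrals over nested hypercubes), the zero-mean hypothesis used once to identify the full-landscape variance with its second moment, and $\Var[\LC]\leq \Ebb[\LC^2]$ to conclude on the patch without assuming a zero mean there. One minor remark: the argument only needs $c>0$ constant rather than $c>1$, since any fixed positive $c$ makes $(1+\delta)^{cn}$ grow exponentially in $n$, so your closing attribution of the role of the $c>1$ hypothesis is slightly off but harmless.
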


Proposition~\ref{prop:upperbound} shows that any loss function and ansatz with a linearly scaling number of trainable parameters that suffers from a barren plateau over the entire parameter space, i.e., a variance that vanishes as $1/b^n$ in a hypercube\footnote{We note that in our framing here we have implicitly assumed that $r_{\rm full}$ is the same for all the variational parameters. This does not need to be the case, however, this theorem can be trivially extended to any hyper-rectangle.} with $r_{\rm full}$, will do so in any patch with $r>\frac{r_{\rm full}}{b^{1/c}}$.
Thus, as sketched in the inset to Fig.~\ref{fig:schematic}, this result allows us to extend any prior result which has proven the presence of a barren plateau over the full landscape (for an ansatz with a linearly scaling number of trainable parameters) to a wide range of patches of constant width.

The conditions of Proposition~\ref{prop:upperbound} are firstly met in the context of globality-induced barren plateaus~\cite{cerezo2020cost} which arise for constant-depth circuits. To take a simple example, if the initial state is pure and the final measurement is a projector then an ansatz made up of a single layer of single qubit Pauli rotation gates has $r_{\rm full} = \pi$ and $b = 8/3$ so we can extend the barren plateau result of Ref.~\cite{cerezo2020cost} from precisely $r = \pi$ to any $r>\frac{3\pi}{8}$. Perhaps more interestingly, the conditions of Proposition~\ref{prop:upperbound} can also be met for the HVA or QAOA at linear depths.

However, Proposition~\ref{prop:upperbound} does not generally apply to expressivity induced barren plateaus for unstructured circuits~\cite{mcclean2018barren} because there the number of parameters required for a barren plateau will scale faster than linear~\cite{harrow2009random}. Furthermore, even in cases where Proposition~\ref{prop:upperbound} does apply, there remains a gap between our upper and lower bounds (sketched in Fig.~\ref{fig:schematic}) where it is as of yet unknown whether or not loss landscapes exhibit exponentially vanishing variances. 
We leave the question of whether these gaps can be closed to future work.

\medskip

\section{Discussion}

Variance lower bounds have previously been presented on a case-by-case basis for different parameter regions of certain families of quantum circuits. Here we provided a unifying and generalizing framework to understand these prior results. Core to our results is the observation that around any point with a substantial second derivative (be this close to a minimum or another special point of the landscape) there must be a patch with non-vanishing loss variances. This observation is borderline trivial - and yet we would argue that this unifying perspective was lacking in prior case-by-case analyses. Moreover, the unifying perspective in turn allowed us to derive general variance lower bounds for patches of landscapes for physically motivated ans\"atze that had previously eluded analysis. 

The region of attraction with gradients around a solution has previously been dubbed a \textit{narrow gorge}~\cite{cerezo2020cost, arrasmith2021equivalence}. This terminology was motivated by the idea that any patch of a barren plateau landscape with substantial gradients must necessarily have an exponentially small volume relative to the total volume of the parameter space. However, the name is potentially slightly misleading as this does not mean that regions of attraction are narrow in the sense of having exponentially vanishing widths. In fact, our results show the opposite. The width of regions of attraction vanish at worst polynomially in the number of trainable parameters. Thus exponentially increasing precision in each trainable parameter is not required in order to initialize within a region of attraction.

Our general lower bounds are complemented by a numerical analysis and an upper bound for certain ans\"atze that demonstrates that the radius of regions of attraction necessarily decreases with system size. It follows that the quality of the initialization will seemingly need to increase with problem size. This challenge compounds with the fact that we of course need this  precision in \textit{all} parameters simultaneously to initialize within the region with guarantees, and this becomes exponentially more challenging in $m$ as the number of parameters $m$ increases. 

Variational quantum algorithms were originally proposed, in part, as a means of finding the approximate ground states needed as inputs for more established quantum algorithms for energetic structure calculations (e.g., quantum phase estimation). However, our findings here indicate that good approximate states are themselves needed to initialize variational quantum algorithms and the quality of approximate states will need to increase with increasing problem sizes. This thus pushes the question of how to find approximate initial states further down the line.

It has recently been observed that there is a strong correlation between being able to prove that a (subregion of a) quantum landscape does not have a barren plateau (i.e,  proving that the variance of expectation values vanishes, at worst, polynomially) and the ability to classically simulate or surrogate that landscape~\cite{cerezo2023does, angrisani2024classically, bermejo2024quantum, lerch2024efficient, mele2024noise,martinez2025efficient, angrisani2025simulating}. A classical surrogate is generated using an initial data collection phase on a quantum computer. 
In our companion paper Ref.~\cite{lerch2024efficient} we demonstrate that in all the reduced parameter regimes where it is possible to provide polynomial (in $n$ or $\nparams$) lower bounds on the variance of the expectation value, it is possible to construct a classical surrogate of the expectation value landscape. In the case of uncorrelated small-angle initializations that are close to identity (or a Clifford circuit) the measurements required to generate a surrogate are usually very simple. However, for general warm starts at arbitrary points in the landscape sophisticated circuits will generally be needed to generate this surrogate and thus it remains open whether it is preferable to surrogate or just run the variational quantum algorithm as normal. 

Finally, it is important to emphasize that our analysis, both here and in Ref.~\cite{lerch2024efficient}, focuses on uniform hypercube patches of quantum landscapes. In practice, for a successful optimization we do not need gradients in a uniform hypercube around an initialization point, nor in a uniform hypercube around a solution, but rather along a trajectory from an initialization point to a solution. Put another way, ``all'' we need to train on a barren plateau is the existence of a \textit{fertile valley} with gradients from an initialization to a solution and this fertile valley is not typically going to resemble or sit within a hypercube. Thus, while we have pushed beyond  average case analyses of the full quantum loss landscapes much more remains to be done. However, for the truly beyond-average-case analyses that are required to study fertile valleys we will need to find new theoretical tools.

\section{Acknowledgments}
HM, MSR and ZH acknowledge support of the NCCR MARVEL, a National Centre of Competence in Research, funded by the Swiss National Science Foundation (grant number 205602). 
RP acknowledges the support of the SNF Quantum Flagship Replacement Scheme (grant No. 215933). MSR acknowledges funding from the 2024 Google PhD Fellowship and the Swiss National Science Foundation [grant number 200021-219329].
TC and ST acknowledge funding support from the NSRF via the Program Management Unit for Human Resources \& Institutional Development, Research and Innovation [grant number B39G680007]. ST and ZH acknowledge support from the Sandoz Family Foundation-Monique de Meuron program for Academic Promotion.

\bibliography{quantum.bib,extra.bib}

\onecolumngrid

\newpage

\appendix

\part{Appendix}
\parttoc 


\section{Notation}

    \noindent\begin{tabular}{ |p{1.2cm}|p{14.3cm}|  }
        \hline
        \multicolumn{2}{|c|}{\textbf{General notation table}} \\
        \hline
        Symbol & Definition  \\
        \hline
        $n$ & Number of qubits. \\
        $m$ & Number of independent variational parameters. \\
        $M$ & Number of generators in the circuit. \\
         $\SC$ & Mapping a generator $H_l$ index  $l \in \{1,\dots,M\}$ to its corresponding parameter index $\SC(l) \in \{1,\dots,m\}$.
        \\
        $\thv$ & Variational parameters. \\
        $\vphi$ & Fixed parameters.\\
        $\vol(\vec{\phi},r)$ & Hypercube in parameters space centered at $\vec{\phi}$ of width $2r$. \\
        $\uni(\vec{\phi},r)$ & Uniform distribution over $\vol(\vec{\phi},r)$. \\
        $U(\thv)$ & Parametrized quantum circuit. \\
        $H_k$ & $k$-th hermitian generator in the circuit. \\
        $\Omega_k$ & Fourier spectrum associated to the generator $H_k$ defined as the pairwise difference between eigenvalues $\lambda_i^{(k)}$ of $H_k$, i.e. $\Omega_k:= \{ \lambda_i^{(k)}-\lambda_j^{(k)} ,\; \forall i,j \in [1,2^n]\}$.\\
        $V_k$ & $k$-th non-parametrized unitary in the circuit.\\
    $\UC_\th$ & Hamiltonian evolution superoperator in the Heisenberg picture for time $\th$.\\
    $\UC_{\th,k}$ & Hamiltonian evolution superoperator in the Heisenberg picture for time $\th$ and Hamiltonian $H_k$.\\
    $\EC_{\vec{\theta}}$ &Parametrized unitary channel of the full circuit in Heisenberg picture.  \\
    $\UC_{\phi,k}^{(p)}$, $\EC^{(p)}_{\phi}$& $p$-th order derivative of $\UC_{\theta,k}$ and $\EC_{\theta}$ with respect to $\th$ evaluated at $\th=\phi$.\\
    $r_{\rm patch}$ & Characterize the width of the region in which the variance scales at least inversely polynomial in $n$ and $m$.\\
    $c_p(\vec{\phi})$ & Partial second derivative of the loss function with respect to the parameter $\th_p$ evaluated at $\vec{\phi}$.\\
    $\omega^{(\rm max)}(H) $& Maximal frequency of the generator $H$, i.e. $|\lambda_{\rm max}(H) - \lambda_{\rm min}(H)|$ where $\lambda_{\rm max}(H)$ and $\lambda_{\rm min}(H)$ are the max/min eigenvalues of $H$ respectively.\\
    $\omega^{(\rm max)}_{p}$ & Maximal frequency corresponding to the parameter $\th_p$ defined in Eq.~\eqref{eq:omega-max-corr-def}. \\
    $\omega^{\rm (eff)}_{p}$ & 
    Effective frequency corresponding to the parameter $\th_p$, defined in Eq.~\eqref{eq:omega-eff-1-def}.\\
    $\widetilde{\omega}_{p,q}^{\rm (eff)}$& Effective frequency corresponding to the parameters $\th_p$ and $\th_q$, defined in Eq.~\eqref{eq:omega-eff-2-def}.\\
    $\gap$ & Energy difference between the ground state and the first excited state (spectral gap). \\
    
        \hline
    \end{tabular}

\section{Preliminaries}
In this section we present preliminary results and theorems that will be used throughout the different calculations. We divide these into two sections. General preliminaries (mostly includes well known theorems and non-quantum related results) and Preliminaries for the main proofs (mostly containing results that will be used to prove Theorem~\ref{th:var}).

\subsection{General preliminaries}

In this section, we present some theorems and elementary analytical tools and results that will be used in the next sections.

\subsubsection{Multinomial expansion and the general Leibniz rule}
Here, we recall two fundamental statements, the multinomial expansion and the general Leibniz rule, that will be used repeatedly in subsequent derivations.

\paragraph{Multinomial expansion.} 
For any nonnegative integer $p$ and real (or complex) variables $x_1,\dots,x_K$, 
\begin{equation}\label{eq:multinomial-theorem}
(x_1 + \cdots + x_K)^p \;=\; \sum_{k_1+\cdots +k_K = p} \binom{p}{k_1, \dots, k_K}\, \prod_{i=1}^K x_i^{\,k_i},
\end{equation}
where the sum is over all nonnegative integer $k_i$ such that $k_1 + \cdots + k_K = p$, and the multinomial coefficient is given by
\begin{equation}\label{eq:multinomial}
    \binom{p}{\mathbf{k}}=\binom{p}{k_1,\dots,k_K} \;=\; \frac{p!}{k_1! \cdots k_K!}\, .
\end{equation}

\paragraph{General Leibniz rule.} 
Consider $K$ functions $f_1,\dots,f_K$ each at least $p$-times differentiable with respect to some variable.  Let $f_i^{(k)}$ denote the $k$-th derivative of $f_i$.  Then, the $p$-th derivative of the product of these $K$ functions is given by
\begin{equation}\label{eq:general-Leibniz-rule}
\Bigl(\prod_{i=1}^K f_i\Bigr)^{(p)} 
\;=\; 
\sum_{k_1 + \cdots + k_K = p} 
\binom{p}{k_1,\dots,k_K}
\;\prod_{i=1}^K f_i^{(k_i)}.
\end{equation}
This rule can be seen as a direct extension of the product rule to higher-order derivatives, taking the form similar to the multinomial expansion but with derivatives instead of exponents.

\subsubsection{Taylor remainder theorem}

We recall the Taylor remainder theorem which expresses  a single variable differentiable function as a series expansion.
\begin{theorem}[Taylor remainder theorem for a single variable real function]\label{th:taylor_remainder}
    Consider a single variable  function  $f(x)$ such that $f:\mathbb{R} \rightarrow \mathbb{R}$ is differentiable up to order $N+1$ for some positive integer $N$. The function $f(x)$  can be expanded around some fixed point $a \in \mathbb{R}$ as
    \begin{equation}
        f(x)= \sum_{k=0}^N \frac{1}{k!} \left.\frac{d^kf(x)}{dx^k}\right|_{x=a} (x-a)^k + R_{N,a}(x)
    \end{equation}
    where the remainder is of the form
    \begin{equation}
        R_{N,a}(x) = \frac{1}{(N+1)!} \left.\frac{d^{N+1}f(x)}{dx^{N+1}}\right|_{x=\nu} (x-a)^{N+1}
    \end{equation}
    with $\nu = c x + (1-c) a $ for some $c \in (0,1)$.
\end{theorem}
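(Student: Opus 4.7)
The plan is to prove this classical Taylor--Lagrange remainder theorem by constructing an auxiliary function and applying Rolle's theorem iteratively. First, I would define the degree-$N$ Taylor polynomial $P_{N,a}(x) := \sum_{k=0}^N \frac{1}{k!} f^{(k)}(a) (x-a)^k$ and the remainder $R_{N,a}(x) := f(x) - P_{N,a}(x)$, so that the claim reduces to showing that $R_{N,a}(x) = \frac{f^{(N+1)}(\nu)}{(N+1)!}(x-a)^{N+1}$ for some $\nu$ strictly between $a$ and $x$. Without loss of generality I may fix $x\neq a$ (the case $x=a$ is trivial), and without loss of generality assume $x>a$ (the other case is symmetric, and $\nu = c x + (1-c) a$ with $c\in(0,1)$ accommodates both).

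The key step is to introduce the auxiliary function
\begin{equation}
g(t) \;:=\; R_{N,a}(t) \;-\; \frac{R_{N,a}(x)}{(x-a)^{N+1}}\,(t-a)^{N+1},
\end{equation}
viewed as a function of $t\in[a,x]$. By construction $g(x)=0$, and since the Taylor polynomial is chosen to match $f$ and its first $N$ derivatives at $a$, a direct differentiation of $R_{N,a}$ shows that $R_{N,a}(a) = R_{N,a}'(a) = \cdots = R_{N,a}^{(N)}(a) = 0$. Together with the fact that the monomial $(t-a)^{N+1}$ and its first $N$ derivatives also vanish at $t=a$, this gives $g(a) = g'(a) = \cdots = g^{(N)}(a) = 0$ as well as $g(a) = g(x) = 0$.

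Next I would iterate Rolle's theorem. By Rolle's theorem applied to $g$ on $[a,x]$ (using $g(a)=g(x)=0$), there exists $\nu_1 \in (a,x)$ with $g'(\nu_1) = 0$. Applying Rolle again to $g'$ on $[a,\nu_1]$ (using $g'(a)=0$), there is $\nu_2 \in (a,\nu_1)$ with $g''(\nu_2)=0$. Continuing this way, after $N+1$ iterations I obtain a point $\nu \in (a,x)$ with $g^{(N+1)}(\nu) = 0$. Finally, differentiating $g$ exactly $N+1$ times eliminates the polynomial $P_{N,a}$ from $R_{N,a}^{(N+1)}$ (since $P_{N,a}$ has degree $N$) and gives $g^{(N+1)}(t) = f^{(N+1)}(t) - \frac{R_{N,a}(x)}{(x-a)^{N+1}}(N+1)!$; setting this to zero at $t=\nu$ and solving for $R_{N,a}(x)$ yields the desired Lagrange remainder formula, with $\nu = cx+(1-c)a$ for the appropriate $c\in(0,1)$.

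The argument is almost entirely mechanical; the only point that requires a little care is verifying that $g$ and its first $N$ derivatives all vanish at $a$, which relies on $f$ being differentiable up to order $N+1$ (so that $f^{(N)}$ is continuous and $R_{N,a}^{(N)}$ makes sense as a bona fide derivative, not just a one-sided limit). I do not anticipate any genuinely hard obstacle here, as this is standard undergraduate real analysis; the main subtlety is the careful bookkeeping of vanishing derivatives at the left endpoint required to justify each successive application of Rolle's theorem.
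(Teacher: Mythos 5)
Your proposal is correct: it is the standard textbook proof of the Lagrange form of the Taylor remainder via the auxiliary function $g(t) = R_{N,a}(t) - \frac{R_{N,a}(x)}{(x-a)^{N+1}}(t-a)^{N+1}$ and $N+1$ successive applications of Rolle's theorem, and all the steps (the vanishing of $g$ and its first $N$ derivatives at $a$, the nested Rolle points, and the final identity $g^{(N+1)}(\nu) = f^{(N+1)}(\nu) - \frac{R_{N,a}(x)}{(x-a)^{N+1}}(N+1)! = 0$) go through exactly as you describe under the stated hypothesis that $f$ is differentiable up to order $N+1$. Note, however, that the paper itself offers no proof of this statement: it is recorded as Theorem~\ref{th:taylor_remainder} purely as a recalled classical fact, to be invoked later (e.g.\ in Proposition~\ref{prop:expectation_taylor_product} and Proposition~\ref{prop:variance_LB_bounded}), so there is no in-paper argument to compare against. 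Your write-up would serve as a complete, self-contained justification should one wish to include it.
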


Next we use the Taylor expansion to find a compact expression for a series. 

\begin{lemma}\label{lemma:sacha-magic1}
A series $S(x)$ which is of the form
\begin{align}
    S(x) =  \sum_{\substack{k,k'\geq 1 \\ k+k'\geq 3}} \frac{x^{2(k+k')}}{(2k)!(2k')!}\left(\frac{1}{2(k+k')+1} - \frac{1}{(2k+1)(2k'+1)}\right) \;,
\end{align}
can be expressed in the closed form as
\begin{align}
    S(x) = \frac{1}{2}\left(1 + \frac{\sinh(2x)}{2x} - 2\left(\frac{\sinh(x)}{x}\right)^2-\frac{2x^4}{45}\right) \;.
\end{align}
\end{lemma}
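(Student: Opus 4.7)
The plan is to split $S(x)=A(x)-B(x)$ with
\begin{equation*}
A(x):=\!\!\sum_{\substack{k,k'\geq 1\\ k+k'\geq 3}}\frac{x^{2(k+k')}}{(2k)!(2k')!\,(2(k+k')+1)}, \qquad
B(x):=\!\!\sum_{\substack{k,k'\geq 1\\ k+k'\geq 3}}\frac{x^{2(k+k')}}{(2k)!(2k')!\,(2k+1)(2k'+1)},
\end{equation*}
and to evaluate each piece by recognizing it as a truncated series expansion of an elementary function built from $\sinh$. The key observation is the identity $\frac{1}{(2k)!(2k+1)} = \frac{1}{(2k+1)!}$ together with the series $\frac{\sinh(x)}{x}=\sum_{k\geq 0}\frac{x^{2k}}{(2k+1)!}$.

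For $B(x)$, I would factor the summand as $\frac{x^{2k}}{(2k+1)!}\cdot\frac{x^{2k'}}{(2k'+1)!}$ using the identity above. The unrestricted double sum over $k,k'\geq 0$ is exactly $\bigl(\sinh(x)/x\bigr)^2$. The three excluded terms are the $k=0$ row, the $k'=0$ column, and the diagonal $(1,1)$ term (with $(0,0)$ added back to avoid double-counting the $k=0=k'$ term), giving
\begin{equation*}
B(x)=\left(\frac{\sinh(x)}{x}-1\right)^2-\frac{x^{4}}{36}.
\end{equation*}

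For $A(x)$, I would group terms by $n:=k+k'$. The inner sum becomes $\frac{1}{(2n)!}\sum_{k=1}^{n-1}\binom{2n}{2k}$, and the standard identity $\sum_{k=0}^{n}\binom{2n}{2k}=2^{2n-1}$ (with the two boundary terms $k=0$ and $k=n$ each contributing $1$) gives $\frac{2^{2n-1}-2}{(2n)!}$. Pulling out the factor $\frac{1}{2n+1}$ converts $(2n)!(2n+1)$ into $(2n+1)!$, yielding
\begin{equation*}
A(x)=\sum_{n\geq 3}\frac{2^{2n-1}-2}{(2n+1)!}\,x^{2n}
=\frac{1}{2}\sum_{n\geq 3}\frac{(2x)^{2n}}{(2n+1)!}\;-\;2\sum_{n\geq 3}\frac{x^{2n}}{(2n+1)!}.
\end{equation*}
Each of these two tail sums is $\sinh(y)/y$ minus its first three terms ($n=0,1,2$), with $y=2x$ in the first and $y=x$ in the second. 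Writing these out produces $A(x)=\frac{\sinh(2x)}{4x}-\frac{2\sinh(x)}{x}+\frac{3}{2}-\frac{x^{4}}{20}$ after collecting the finite polynomial corrections.

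Finally I would combine: $S(x)=A(x)-B(x)$ cancels the $\pm 2\sinh(x)/x$ cross-term, the constants combine to $\tfrac{1}{2}$, and the $x^{4}$ corrections give $-\tfrac{x^{4}}{20}+\tfrac{x^{4}}{36}=-\tfrac{x^{4}}{45}$. The resulting expression $\frac{\sinh(2x)}{4x}-\bigl(\sinh(x)/x\bigr)^{2}+\frac12-\tfrac{x^{4}}{45}$ equals the claimed closed form after factoring out $\tfrac12$. There is no real conceptual obstacle here—the proof is essentially bookkeeping—but the one place where care is needed is the exclusion/inclusion of boundary indices ($k=0$, $k'=0$, and $(k,k')=(1,1)$ in $B$; $n=0,1,2$ in $A$), since miscounting any of these finitely many terms shifts the coefficient of the $x^{4}/45$ correction, which is precisely the distinctive signature of the closed form.
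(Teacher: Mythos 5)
Your proposal is correct, and it verifies numerically at every step: $B(x)=\bigl(\tfrac{\sinh x}{x}-1\bigr)^2-\tfrac{x^4}{36}$, $A(x)=\tfrac{\sinh(2x)}{4x}-\tfrac{2\sinh x}{x}+\tfrac{3}{2}-\tfrac{x^4}{20}$ (the $x^2$ corrections cancel as they must), and $-\tfrac{1}{20}+\tfrac{1}{36}=-\tfrac{1}{45}$ reproduces the closed form exactly. Your route, however, is genuinely different from the paper's. The paper first adds back the single excluded term $(k,k')=(1,1)$, whose value is $\tfrac{x^4}{45}$, so that both pieces run over $k,k'\geq 1$ unrestricted; it then evaluates the positive piece with the integral representation $\tfrac{x^{2l}}{2l+1}=\tfrac{1}{x}\int_0^x y^{2l}\,dy$, recognizing $\tfrac{1}{x}\int_0^x(\cosh y-1)^2\,dy$, and the negative piece directly as $\bigl(\tfrac{\sinh x}{x}-1\bigr)^2$. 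You instead keep the constraint $k+k'\geq 3$ throughout and handle it by inclusion--exclusion of the boundary rows/columns and the $(1,1)$ term in $B$, and by a diagonal regrouping in $A$ over $n=k+k'$ using $\sum_{k=0}^{n}\binom{2n}{2k}=2^{2n-1}$ together with $(2n)!\,(2n+1)=(2n+1)!$. What each approach buys: the paper's integral trick collapses the $\tfrac{1}{2(k+k')+1}$ factor with essentially no index bookkeeping (only one excluded term ever appears), while your argument is purely algebraic — no integration — at the cost of tracking several finite boundary corrections, which you do correctly; amusingly, your binomial-resummation device is close in spirit to the identity the paper deploys later in the proof of its companion bound (Lemma~\ref{lemma:sacha-magic2}), where $\bigl(\tfrac{\sinh x}{x}\bigr)^2$ is expanded via a sum over odd binomial coefficients.
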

\begin{proof}
Let us consider the series 
\begin{equation}
    \sum_{\substack{k,k'\geq 1 \\ k+k'\geq 3}} \frac{x^{2(k+k')}}{(2k)!(2k')!}\left(\frac{1}{2(k+k')+1} - \frac{1}{(2k+1)(2k'+1)}\right)\;.
\end{equation}
    The condition $k+k'\geq 3$ only discard the case $k=k'=1$, so we will add this term to the sum in order to compute it and remove it afterward 
    \begin{equation}
   \sum_{\substack{k,k'\geq 1 \\ k+k'\geq 3}} \frac{x^{2(k+k')}}{(2k)!(2k')!}\left(\frac{1}{2(k+k')+1} - \frac{1}{(2k+1)(2k'+1)}\right) = -\frac{x^4}{45}+\sum_{k,k'\geq 1} \frac{x^{2(k+k')}}{(2k)!(2k')!}\left(\frac{1}{2(k+k')+1} - \frac{1}{(2k+1)(2k'+1)}\right)\;.   
    \end{equation}
    Now, we will compute the sum on the RHS of this expression. First, we compute the sum over the positive terms as follows using $\frac{x^{2l}}{2l+1}=\frac{1}{x}\int_{0}^x y^{2l} \dd y$ (for any real $x>0$ and integer $l$). 
    \begin{align}
        \sum_{k,k'\geq 1} \frac{1}{(2k)!(2k')!}\frac{x^{2(k+k')}}{2(k+k')+1} & = \sum_{k,k'\geq 1} \frac{1}{(2k)!(2k')!}\frac{1}{x}\int_{0}^x  y^{2(k+k')}\dd y\\
        &=\frac{1}{x}\int_{0}^x\left(\sum_{k\geq 1} \frac{y^{2k}}{(2k)!}\right)^2 \dd y\\
        &= \frac{1}{x}\int_{0}^x (\cosh(y)-1)^2 \dd y\\
        &=\frac{1}{x}\int_{0}^x \left(\frac{3}{2}-2\cosh(y)+\frac{\cosh(2y)}{2}\right) \dd y\;,
    \end{align}
    where we first recognise the hyperbolic cosine taylor series i.e. $\cosh(y)=\sum_{k\geq 0} \frac{y^{2k}}{(2k)!}$, and then we used the property $\cosh^2(y)=\frac{1+\cosh(2y)}{2}$. Now, we can compute the integral in previous expression using $\frac{1}{x}\int_0^{x} \cosh(ay)\dd y=\frac{\sinh(ax)}{ax}$ (for any real $a$) to get
    \begin{equation}
    \label{eq:bound-remainder-single-layer-positive-terms-proof}
        \sum_{k,k'\geq 1} \frac{1}{(2k)!(2k')!}\frac{x^{2(k+k')}}{2(k+k')+1}=\frac{1}{2}\left(3-4\frac{\sinh(x)}{x} +\frac{\sinh(2x)}{2x}\right)\;.
    \end{equation}
    Secondly, we compute the sum over the negative terms as follows
    \begin{align}
        \sum_{k,k'\geq 1} \frac{x^{2(k+k')}}{(2k+1)!(2k'+1)!} &= \left(\sum_{k\geq 1} \frac{x^{2k}}{(2k+1)!}\right)^2 \\
        &=\left(-1+\sum_{k\geq 0} \frac{x^{2k}}{(2k+1)!}\right)^2\\
        &=\left(-1+\frac{1}{x}\sum_{k\geq 0} \frac{x^{2k+1}}{(2k+1)!}\right)^2 \\
        &=\left(-1+\frac{\sinh(x)}{x}\right)^2\\
        &= 1-2\frac{\sinh(x)}{x}+\left(\frac{\sinh(x)}{x}\right)^2\;.
    \end{align}
    Finally, combining it with Eq.~\eqref{eq:bound-remainder-single-layer-positive-terms-proof} leads to the desired result, i.e.
    \begin{equation}
     \sum_{\substack{k,k'\geq 1 \\ k+k'\geq 3}} \frac{x^{2(k+k')}}{(2k)!(2k')!}\left(\frac{1}{2(k+k')+1} - \frac{1}{(2k+1)(2k'+1)}\right)=\frac{1}{2}\left(1 + \frac{\sinh(2x)}{2x} - 2\left(\frac{\sinh(x)}{x}\right)^2-\frac{2x^4}{45}\right)\;.
    \end{equation}
\end{proof}

And we also use the Taylor series to find an upper-bound of the following function for a constrained regime of $x$. 

\begin{lemma}\label{lemma:sacha-magic2}
    Consider a real function $S(x)$ of the form
    \begin{align}
        S(x) = \frac{1}{2}\left(1 + \frac{\sinh(2x)}{2x} - 2\left(\frac{\sinh(x)}{x}\right)^2 - \frac{2x^4}{45}\right) \;.
    \end{align}
    Provided that $x\leq \frac{3}{2}$, we have the following lower bound
    \begin{align}
        S(x) \leq  \frac{x^6}{270} \;.
    \end{align}
\end{lemma}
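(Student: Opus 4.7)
My plan is to exploit the fact that every Taylor coefficient of $S$ at the origin is nonnegative, which reduces the continuous inequality on $[0, 3/2]$ to a single scalar inequality at the endpoint.

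First I would return to the series representation already established in Lemma~\ref{lemma:sacha-magic1}:
\begin{equation*}
S(x) = \sum_{\substack{k,k'\geq 1 \\ k+k'\geq 3}} c_{k,k'}\, x^{2(k+k')}, \qquad c_{k,k'} := \frac{1}{(2k)!(2k')!}\left(\frac{1}{2(k+k')+1} - \frac{1}{(2k+1)(2k'+1)}\right).
\end{equation*}
A one-line algebraic identity, $(2k+1)(2k'+1) - (2(k+k')+1) = 4kk'$, shows that $c_{k,k'} > 0$ whenever $k, k' \geq 1$.

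Second, I would divide through by $x^6$. Because every index in the sum satisfies $k+k' \geq 3$, the function
\begin{equation*}
x \longmapsto \frac{S(x)}{x^6} = \sum_{\substack{k,k'\geq 1 \\ k+k'\geq 3}} c_{k,k'}\, x^{2(k+k'-3)}
\end{equation*}
is a power series with nonnegative coefficients in $x^2$, and is therefore monotonically non-decreasing on $[0, \infty)$. Hence its supremum on $[0, 3/2]$ is attained at $x = 3/2$, and the desired inequality $S(x) \leq x^6/270$ on this interval is equivalent to the single scalar bound $S(3/2) \leq (3/2)^6/270 = 729/17280$.

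Third, I would plug $x = 3/2$ into the closed form for $S$ and reduce the statement to an explicit inequality for $e^{\pm 3}$. Using $\sinh^2(3/2) = (\cosh(3) - 1)/2$ together with $\sinh(3) = (e^3 - e^{-3})/2$ and $\cosh(3) = (e^3 + e^{-3})/2$, the target rearranges to an inequality of the form $\alpha\, e^3 + \beta\, e^{-3} \geq \gamma$ with explicit rational $\alpha, \beta, \gamma > 0$.

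The main obstacle is this last verification, because the inequality is numerically very tight (saturated to order $10^{-4}$ at $x = 3/2$): a crude bound such as $e^3 < 21$ is not sharp enough. I would close the gap by bounding $e^3$ from below via a sufficiently long truncation of its Maclaurin series (positivity of the remaining terms gives a one-sided rational bound) and handling $e^{-3}$ via positivity, so that only a handful of rational arithmetic terms are needed. Note that the tightness at $x = 3/2$ is an essential feature, not an artifact of the proof strategy: one can check that the leading-order discrepancy between $S(x)$ and $x^6/270$ comes from the $x^8$ term of $S$, whose coefficient is positive, so any constant smaller than $1/270$ would fail for $x$ sufficiently close to $3/2$.
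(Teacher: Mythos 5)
Your reduction is correct and is essentially the paper's own argument: both proofs observe that $S(x)/x^6$ is a power series in $x^2$ with nonnegative coefficients (the paper from the single sum $S(x)=\sum_{k\ge 3}\tfrac{(2x)^{2k}}{(2k+1)!}\bigl(\tfrac12-\tfrac1{k+1}\bigr)$, you from the double sum of Lemma~\ref{lemma:sacha-magic1} via the identity $(2k+1)(2k'+1)-(2(k+k')+1)=4kk'$), hence non-decreasing on $[0,\infty)$, so the inequality reduces to the single endpoint $x=3/2$. The only real difference is that the paper settles the endpoint by a numerical evaluation, whereas you aim for a rational certificate.

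The gap is in that final verification step. Using $2\bigl(\sinh(3/2)/(3/2)\bigr)^2=\tfrac49(\cosh 3-1)$ and $2(3/2)^4/45=9/40$, the endpoint inequality $S(3/2)\le 729/17280$ is equivalent to $e^{3}+7e^{-3}\ge 9807/480=20.43125$. Since $e^{3}\approx 20.0855$, ``handling $e^{-3}$ via positivity'' (i.e., simply discarding the $7e^{-3}$ term) leaves a false inequality: the contribution $7e^{-3}\approx 0.3485$ is indispensable, and even with it the margin is only about $2.8\times 10^{-3}$. So you also need a rational \emph{lower} bound on $e^{-3}$ — e.g.\ a suitable truncation of $\sum_k(-3)^k/k!$, or $e^{-3}\ge 1/U$ with $U$ a rational upper bound on $e^3$ — combined with a partial sum of $e^3$ carried through roughly $3^{11}/11!$; with that repair your argument closes. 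Two smaller remarks: your closing tightness claim is off — $\sup_{[0,3/2]}S(x)/x^6\approx 0.0036968<1/270\approx 0.0037037$, so constants slightly below $1/270$ still work (the binding constraint is the endpoint value, not an $x^8$ obstruction); and, incidentally, the paper's quoted endpoint value $0.00739377$ appears to be twice the true value, though its conclusion survives because the correct value is still below $1/270$.
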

\begin{proof}
Let us write $S(x)$ as a series using the series expansion of $\sinh(x)$ i.e. $\sinh(x)=\sum_{k=0}^{\infty} \frac{x^{2k+1}}{(2k+1)!}$. First, the expansion of the positive term in $S(x)$ is given by
\begin{align}
    \frac{1}{2}\left(1+\frac{\sinh(2x)}{2x}\right) &= \frac{1}{2}\left(1+\sum_{k=0}^{\infty}\frac{(2x)^{2k}}{(2k+1)!}\right) \\
    &= 1 +\frac{x^2}{3}+ \frac{1}{2}\sum_{k=2}^{\infty}\frac{(2x)^{2k}}{(2k+1)!}\;,
    \label{eq:S-function-remainder-expansion-positive-part}
\end{align}
where we explicitly separated the terms $k=0$ and $k=1$ from the sum in the last equality. 
 Now, for the negative term, we have 
 \begin{align}
     \left(\frac{\sinh(x)}{x}\right)^2 &= \sum_{k,k'=0}^\infty \frac{x^{2(k+k')}}{(2k+1)!(2k'+1)!} \\
     &= \sum_{l=0}^{\infty}\sum_{k=0}^l \frac{x^{2l}}{(2k+1)!(2(l-k)+1)!} \\
     &= \sum_{l=0}^{\infty}\frac{x^{2l}}{(2l+2)!}\sum_{k=0}^l \binom{2l+2}{2k+1} \\
     &= \sum_{l=0}^{\infty}\frac{x^{2l}}{(2l+2)!} 2^{2l+1} \\ 
     &= 1+\frac{x^2}{3} + \sum_{l=2}^\infty \frac{(2x)^{2l}}{(2l+1)!}\frac{1}{l+1}\;,
     \label{eq:S-function-remainder-expansion-negative-part}
 \end{align}
where the second equality is obtained by introducing the variable $l=k+k'$ and rewrite the sum as a sum over $l$. The third equality is obtained by introducing the binomial coefficient. The binomial sum is computed in the fourth equality as follows 
\begin{align}
  \sum_{k=0}^l \binom{2l+2}{2k+1} &= \sum_{k=0}^{2l+2} \binom{2l+2}{k} \frac{1-(-1)^k}{2} \\
  &= \frac{2^{2l+2}-(1-1)^{2l+2}}{2} \\
  &=2^{2l+1}\;,
\end{align}
which is the sum over odd binomial coefficients (equal to the sum over even ones). 
Finally, the last equality is obtained by rearranging the term in the sum and separating the terms $k=0$ and $k=1$ as done in Eq.~\eqref{eq:S-function-remainder-expansion-positive-part}. Therefore, combining Eq.~\eqref{eq:S-function-remainder-expansion-positive-part} with Eq.~\eqref{eq:S-function-remainder-expansion-negative-part} leads to
\begin{equation}
\label{eq:S-function-remainder-expansion-full-series}
    S(x) = \sum_{k=2}^\infty \frac{(2x)^{2k}}{(2k+1)!}\left(\frac{1}{2}-\frac{1}{k+1}\right) - \frac{x^4}{45}\;.
\end{equation}
This is a polynomial of even powers of  $x$ (starting at $x^4$) with positive coefficients. Now, let us bound the series at $\OC(x^6)$ by rewriting it as follows:
\begin{align}
    S(x) &= \frac{x^4}{45} +\sum_{k=3}^\infty \frac{(2x)^{2k}}{(2k+1)!}\left(\frac{1}{2}-\frac{1}{k+1}\right)- \frac{x^4}{45} \\
    &= \sum_{k=0}^\infty \frac{(2x)^{2k+6}}{(2k+7)!}\left(\frac{1}{2}-\frac{1}{k+4}\right)\\
    &= x^6\sum_{k=0}^\infty x^{2k}\frac{2^{2k+6}}{(2k+7)!}\left(\frac{1}{2}-\frac{1}{k+4}\right)\;,
\end{align}
where we first separated the term $k=2$ from the sum, then we changed variable $k\to k+3$ and finally factorize $x^6$ outside the sum. The series multiplying $x^6$ is a polynomial of even order of $x$ with positive coefficients and therefore it increases with $x\geq 0$. So, assuming $0< x\leq x_0$ allows to bound this series by evaluating it at $x=x_0$ (i.e. its largest possible value). Notice that this series corresponds to the function
\begin{equation}
    S(x)x^{-6}\;.
\end{equation}
If we assume $x\leq 1.5$, we can bound $S(x)$ by evaluating previous expression at $x=1.5$ which numerically gives $\approx 0.00739377 < 1/270$. Therefore, $S(x)\leq x^6/270$ for $x\leq 1.5$. This completes the proof. One final remark is that one could potentially keep higher order terms by following the same approach.
\end{proof}

\subsubsection{Lower-bound on the variance of a single variable function}\label{sec:lower_bound_variance_single_variable}

In this section we use the Taylor Reminder Theorem (Theorem~\ref{th:taylor_remainder}) to find a lower-bound on the variance of a single variable function with bounded derivatives. To prove the bound, first we need the following proposition.

\begin{proposition}\label{prop:expectation_taylor_product}
    Consider the parameterized superoperator $\Lambda_{\th}(A)$ for some real parameter $\th \in \mathbb{R}$ with bounded second derivative with respect to $\th$,i.e. $\norm{\Lambda^{(2)}_{\th}(A)} \leq \gamma$.
    We also consider $\uni(0, r)$ as the uniform distribution over the interval $[-r,r]$. Then,  we have
    \begin{equation}
        \norm{\Ebb_{\theta  \sim \uni(0, r)}[\Lambda_{\th}(A)]-\Lambda_{0}(A)} \leq \frac{1}{6} \gamma r^2
    \end{equation}
\end{proposition}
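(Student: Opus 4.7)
The plan is to use the Taylor remainder theorem (Theorem~\ref{th:taylor_remainder}) applied entry-wise to the superoperator $\Lambda_\theta(A)$, then exploit the symmetry of the uniform distribution on $[-r,r]$ to kill the first-order term, and finally bound the second-order remainder using the hypothesis $\norm{\Lambda^{(2)}_\theta(A)} \leq \gamma$.

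Concretely, I would first expand $\Lambda_\theta(A)$ around $\theta = 0$ to first order with a Lagrange-type second-order remainder,
\begin{equation}
\Lambda_\theta(A) \;=\; \Lambda_0(A) \;+\; \theta\,\Lambda^{(1)}_0(A) \;+\; \tfrac{1}{2}\,\theta^2\,\Lambda^{(2)}_{\nu(\theta)}(A) \,,
\end{equation}
where $\nu(\theta) = c\,\theta$ for some $c \in (0,1)$ depending on $\theta$. I would justify that Theorem~\ref{th:taylor_remainder} applies component-wise (matrix element by matrix element of the operator $\Lambda_\theta(A)$), and then reassemble the equation at the operator level.

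Next I would take the expectation over $\theta \sim \uni(0,r)$ of both sides. The zeroth-order term is constant, so it just gives $\Lambda_0(A)$. The first-order term vanishes since $\Ebb_{\theta \sim \uni(0,r)}[\theta] = 0$ by symmetry of the uniform distribution on the interval $[-r,r]$. Rearranging, this leaves
\begin{equation}
\Ebb_{\theta \sim \uni(0, r)}[\Lambda_\theta(A)] - \Lambda_0(A) \;=\; \tfrac{1}{2}\,\Ebb_{\theta \sim \uni(0,r)}\!\left[\theta^2\,\Lambda^{(2)}_{\nu(\theta)}(A)\right].
\end{equation}

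Finally, I would take the operator norm and push it inside the expectation via the triangle inequality (Jensen for norms), so that
\begin{equation}
\norm{\Ebb_{\theta \sim \uni(0, r)}[\Lambda_\theta(A)] - \Lambda_0(A)} \;\leq\; \tfrac{1}{2}\,\Ebb_{\theta \sim \uni(0,r)}\!\left[\theta^2\,\norm{\Lambda^{(2)}_{\nu(\theta)}(A)}\right] \;\leq\; \tfrac{\gamma}{2}\,\Ebb_{\theta \sim \uni(0,r)}[\theta^2] \;=\; \tfrac{\gamma}{2}\cdot\tfrac{r^2}{3} \;=\; \tfrac{\gamma r^2}{6}\,,
\end{equation}
using the assumed uniform bound $\norm{\Lambda^{(2)}_\theta(A)} \leq \gamma$ (valid in particular at $\theta = \nu(\theta)$) and the elementary second moment $\Ebb[\theta^2] = r^2/3$ of $\uni(0,r)$.

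The proof is essentially routine; the only mild subtlety is the bookkeeping for applying the scalar Taylor remainder theorem to an operator-valued function, which is handled by noting that differentiation commutes with taking matrix elements and that the operator norm is sub-additive under integration. No step appears to be a genuine obstacle.
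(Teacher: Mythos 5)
Your overall strategy is the same as the paper's: Taylor-expand around $\theta=0$, use the symmetry of $\uni(0,r)$ to kill the first-order term, and bound the second-order remainder by $\gamma$ together with $\Ebb[\theta^2]=r^2/3$. However, the displayed operator-level identity
\begin{equation}
\Lambda_\theta(A) \;=\; \Lambda_0(A) + \theta\,\Lambda^{(1)}_0(A) + \tfrac{1}{2}\theta^2\,\Lambda^{(2)}_{\nu(\theta)}(A)
\end{equation}
with a \emph{single} intermediate point $\nu(\theta)=c\,\theta$ is not valid in general. The Lagrange form of the Taylor remainder rests on the mean value theorem, which fails for vector- and operator-valued functions: applying the scalar theorem ``matrix element by matrix element'' is fine, but each matrix element gets its \emph{own} intermediate point $\nu_{ij}(\theta)$, and these cannot be reassembled into one operator identity with a common $\nu(\theta)$. (The standard counterexample is $f(t)=(\cos t,\sin t)$, for which $f(2\pi)-f(0)=0$ yet $f'(c)\neq 0$ for every $c$.) So the step ``reassemble the equation at the operator level'' is exactly where the argument breaks as written.

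The gap is easily repaired, and in two ways. The paper's route is to reduce to scalars before invoking Taylor: write $\norm{X}=\max_{\ket{\psi}}|\bra{\psi}X\ket{\psi}|$, apply the scalar remainder theorem to $f_{\ket{\psi}}(\theta)=\bra{\psi}\Lambda_\theta(A)\ket{\psi}$ with a state-dependent intermediate point $\nu_{\theta,\ket{\psi}}$, and only then take the expectation and maximize over $\ket{\psi}$ (and over $\nu$), which yields $\frac{r^2}{6}\max_\nu\norm{\Lambda^{(2)}_\nu(A)}\leq\frac{\gamma r^2}{6}$. Alternatively, and more in the spirit of your closing remark about sub-additivity under integration, use the integral form of the remainder, $R(\theta)=\int_0^\theta(\theta-s)\,\Lambda^{(2)}_s(A)\,ds$, which \emph{is} a valid operator identity; then $\norm{R(\theta)}\leq\frac{\gamma\theta^2}{2}$, and your remaining steps (vanishing first moment, Jensen/triangle inequality for the norm, $\Ebb[\theta^2]=r^2/3$) go through verbatim to give $\frac{\gamma r^2}{6}$. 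With either fix your proof is correct and essentially coincides with the paper's.
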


\begin{proof}

The infinite norm of an operator can be rewritten as its maximum expectation value magnitude over any quantum state as follows
\begin{equation}
    \label{eq:infinite-norm-definition-maximum}
\norm{O}=\max_{\ket{\psi}}|\bra{\psi} O \ket{\psi}|\;.
\end{equation}
Let us consider the function $f_{\ket{\psi}}(\th)=\bra{\psi}\Lambda_\th(A)\ket{\psi}$ and let $f_{\ket{\psi}}^{(p)}(\phi)=\left.\frac{d^pf_{\ket{\psi}}(\th)}{d\theta^p}\right|_{\theta=\phi}$. The Taylor Reminder Theorem (Theorem~\ref{th:taylor_remainder}) states that there exists a parameter $\nu_{\th,\ket{\psi}} = c_{\ket{\psi}}\th$ for some $c_{\ket{\psi}} \in (0,1)$ such that
\begin{equation}
  f_{\ket{\psi}}(\th)= f_{\ket{\psi}}(0)+\th f_{\ket{\psi}}^{(1)}(0)+\frac{\th^2}{2}f_{\ket{\psi}}^{(2)}(\nu_{\th,\ket{\psi}})\;.
\end{equation}
Therefore, we have 
\begin{align}
   \norm{\Ebb_{\th \sim \uni(0, r)}[\Lambda_\th(A)] -\Lambda_0(A)} &= \max_{\ket{\psi}}\left|\Ebb_{\th \sim \uni(0, r)}[f_{\ket{\psi}}(\th)] -f_{\ket{\psi}}(0)\right| \\ 
   &=\max_{\ket{\psi}}\left|\Ebb_{\th \sim \uni(0, r)}\left[\frac{\th^2}{2}f_{\ket{\psi}}^{(2)}(\nu_{\th,\ket{\psi}})\right]\right|\;,
\end{align}
where we used the fact that the first moment vanishes in the second equality.
Now, we can further upperbound the previous equation as follows

\begin{align}
\max_{\ket{\psi}}\left|\Ebb_{\th \sim \uni(0, r)}\left[\frac{\th^2}{2}f_{\ket{\psi}}^{(2)}(\nu_{\th,\ket{\psi}})\right]\right|
& \leq \Ebb_{\th \sim \uni(0, r)}\left[\max_{\ket{\psi}}\left|\frac{\th^2}{2}f_{\ket{\psi}}^{(2)}(\nu_{\th,\ket{\psi}})\right|\right]\\
& \leq \Ebb_{\th \sim \uni(0, r)}\left[\frac{\th^2}{2}\max_\nu\max_{\ket{\psi}}\left|f_{\ket{\psi}}^{(2)}(\nu_{\th,\ket{\psi}})\right|\right]\\
&= \Ebb_{\th \sim \uni(0, r)}\left[\frac{\th^2}{2}\right] \max_{\nu}\max_{\ket{\psi}}\left|f_{\ket{\psi}}^{(2)}(\nu)\right|\\ \label{eq:distance-expectation-vs-center-channel-proof}
&= \frac{r^2}{6}\max_{\nu} \norm{\Lambda_{\nu}^{(2)} (A)} \\
&\leq \frac{r^2}{6} \gamma\;,
\end{align}
where we first used Jensen's inequality (for the convex function $\max_{\ket{\psi}}|\cdot|$) and then removed the dependence of $\max_{\ket{\psi}}\left|f_{\ket{\psi}}^{(2)}(\nu)\right|$ on $\th$ by maximizing over $\nu$. Moreover,  the last equality is obtained by explicitly evaluating the second moment and the final inequality is obtained using the bounded derivative assumption. 

Therefore, we indeed get 
\begin{align}
    \norm{\Ebb_{\th \sim \uni(0, r)}[\Lambda_{\th}(A)] -\Lambda_{0}(A)}&\leq\frac{\gamma}{6}  r^2\;.
\end{align}

\end{proof}

Once the necessary preliminary result has been proven, we can use Theorem~\ref{th:taylor_remainder} (Taylor Reminder Theorem) to find the lower-bound on the variance for a single variable function mentioned above.

\begin{proposition}[Variance lower bound of a function with bounded derivatives]\label{prop:variance_LB_bounded} 
    Consider a single variable differentiable function $f(\th)$ such that $f:\mathbb{R} \rightarrow \mathbb{R}$. We assume that the function $f$ has bounded even derivatives at zero in the sense that
    \begin{align} \label{eq:assumption-bounded-grad}
        \left|f^{(2p)}(0) \right| := \left| \left.\frac{d^{2p}}{d\th^{2p}} f(\th)\right|_{\th=0}\right|  \leq \alpha  \gamma^{2p} \;\;, \;\;\forall p \geq 1 \;,
    \end{align}
    for some constants $\alpha$ and $\gamma$. We also consider the parameter $\th$ to be sampled from the uniform distribution over the interval of length $2r$ centered around zero $\uni(0, r)$. Then, given that the perturbation $r$ satisfies \begin{equation}\label{eq:prop_r_remainder_cond}
        r \leq \frac{3}{2 \gamma} \;,
    \end{equation}
    the variance of the function $f$ can be lower bounded as 
    \begin{equation}\label{eq:var_f_LB}
        \Var_{\th \sim \uni(0,r)}[f(\theta)] \geq \frac{1}{45}\left[f^{(2)}(0)\right]^2  r^4 - \frac{ \alpha^2 \gamma^6 }{270} r^6 \;.
    \end{equation}
\end{proposition}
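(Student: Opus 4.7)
The plan is to Taylor expand $f$ around $\theta=0$ and exploit the symmetry of the uniform distribution on $[-r,r]$ to reduce the variance computation to a sum involving only even-order derivatives, which are the ones controlled by the hypothesis Eq.~\eqref{eq:assumption-bounded-grad}. Concretely, I would split $f(\theta)=f_e(\theta)+f_o(\theta)$ into its even and odd parts,
\begin{equation}
f_e(\theta)=\sum_{k\geq 0}\frac{f^{(2k)}(0)}{(2k)!}\theta^{2k},\qquad f_o(\theta)=\sum_{k\geq 0}\frac{f^{(2k+1)}(0)}{(2k+1)!}\theta^{2k+1}.
\end{equation}
Because $\uni(0,r)$ is symmetric about zero, $\Ebb[f_o]=0$ and $\Ebb[f_e f_o]=0$ (every surviving monomial would have odd total degree), so $\Var[f]=\Var[f_e]+\Ebb[f_o^2]\geq\Var[f_e]$. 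This reduction is the key step: it lets me discard the odd part entirely, which is essential because Eq.~\eqref{eq:assumption-bounded-grad} only bounds even derivatives.

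Next, using the uniform moments $\Ebb_{\theta\sim\uni(0,r)}[\theta^{2k}]=r^{2k}/(2k+1)$, I would expand
\begin{equation}
\Var[f_e]=\sum_{k,k'\geq 0}\frac{f^{(2k)}(0)f^{(2k')}(0)}{(2k)!(2k')!}r^{2(k+k')}\left(\frac{1}{2(k+k')+1}-\frac{1}{(2k+1)(2k'+1)}\right).
\end{equation}
The bracketed factor vanishes identically whenever $k=0$ or $k'=0$, so effectively the sum is restricted to $k,k'\geq 1$. Isolating the leading $k=k'=1$ term produces exactly $\tfrac{1}{45}[f^{(2)}(0)]^2 r^4$, matching the positive term in Eq.~\eqref{eq:var_f_LB}. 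Everything else is collected into a remainder $R$ over indices with $k+k'\geq 3$.

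Finally, I would bound $|R|$ by passing to absolute values. A short check shows that the bracketed factor is always non-negative, since $(2k+1)(2k'+1)=4kk'+2(k+k')+1\geq 2(k+k')+1$. Applying $|f^{(2p)}(0)|\leq\alpha\gamma^{2p}$ factorwise then gives $|R|\leq\alpha^2 S(\gamma r)$, where
\begin{equation}
S(x)=\sum_{\substack{k,k'\geq 1\\ k+k'\geq 3}}\frac{x^{2(k+k')}}{(2k)!(2k')!}\left(\frac{1}{2(k+k')+1}-\frac{1}{(2k+1)(2k'+1)}\right)
\end{equation}
is exactly the series whose closed form is computed in Lemma~\ref{lemma:sacha-magic1}. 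The hypothesis $r\leq 3/(2\gamma)$ is precisely $\gamma r\leq 3/2$, which places us in the regime where Lemma~\ref{lemma:sacha-magic2} yields $S(\gamma r)\leq(\gamma r)^6/270$. Combining, $\Var[f]\geq\Var[f_e]\geq\tfrac{1}{45}[f^{(2)}(0)]^2 r^4-\tfrac{1}{270}\alpha^2\gamma^6 r^6$, which is the claim.

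The main obstacle is the tail $R$: since the Taylor coefficients of $f$ can have arbitrary signs, no cancellation is available and one is forced to work with absolute values, so the quality of the final bound is controlled entirely by how sharp the estimate of the universal series $S(x)$ is. Lemmas~\ref{lemma:sacha-magic1}--\ref{lemma:sacha-magic2} are precisely engineered to carry this load, and their range of validity $\gamma r\leq 3/2$ is what forces the restriction Eq.~\eqref{eq:prop_r_remainder_cond} on the patch width.
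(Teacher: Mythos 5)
Your proposal is correct and follows essentially the same route as the paper's proof: Taylor expansion, dropping the odd part by symmetry of $\uni(0,r)$, extracting the $k=k'=1$ term to get $\tfrac{r^4}{45}[f^{(2)}(0)]^2$, and bounding the tail by $\alpha^2 S(\gamma r)$ via Lemmas~\ref{lemma:sacha-magic1} and~\ref{lemma:sacha-magic2} under $\gamma r\leq 3/2$. The only (harmless) differences are cosmetic: you keep $k,k'\geq 0$ and note the bracket vanishes when either index is zero instead of dropping the constant term first, and you make explicit the nonnegativity check $(2k+1)(2k'+1)\geq 2(k+k')+1$ that the paper leaves implicit.
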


\begin{proof}
    First, we can lower bound the variance of the function by using the Taylor expansion around zero as follows
    
    \begin{align}
        \Var_{\th \sim \uni(0,r)}[f(\th)] &= \Var_{\th \sim \uni(0,r)}\left[f(0) + \sum_{k=1}^\infty \frac{\th^k}{k!} f^{(k)}(0)\right]\\
        &= \Var_{\th \sim \uni(0,r)}\left[ \sum_{k=1}^\infty \frac{\th^k}{k!} f^{(k)}(0)\right]\\
        &= \Var_{\th \sim \uni(0,r)}\left[ \sum_{k=0}^\infty \frac{\th^{2k+1}}{(2k+1)!} f^{(2k+1)}(0)\right] + \Var_{\th \sim \uni(0,r)}\left[\sum_{k=1}^\infty \frac{\th^{2k}}{(2k)!} f^{(2k)}(0)\right]\\
        & \geq \Var_{\th \sim \uni(0,r)}\left[\sum_{k=1}^\infty \frac{\th^{2k}}{(2k)!} f^{(2k)}(0)\right]
    \end{align}
where in the second equality, we drop the constant term and in the third equality, we split the terms corresponding to even and odd powers in the parameter $\theta$ and use the fact that their covariance is vanishing for the symmetric uniform distribution $\uni(0,r)$. Indeed, the product of an odd and even function is an odd function, and thus the integral over an even space is zero. In the final inequality, we only keep the even order terms.

Now let us compute the variance explicitly by using the fact that $\Ebb_{\th \sim \uni(0,r)}[\th^{2k}]=\frac{r^{2k}}{2k+1}$, which leads to

\begin{align}    \Var_{\th \sim \uni(0,r)}\left[f(\th)\right]  & \geq \Var_{\th \sim \uni(0,r)}\left[ \sum_{k=1}^\infty \frac{\theta^{2k}}{(2k)!} f^{(2k)}(0)\right]\\
&= \Var_{\th \sim \uni(0,r)}\left[\sum_{k=1}^\infty  \frac{\theta^{2k}}{(2k)!} f^{(2k)}(0)\right]\\
&= \Ebb\left[\left(\sum_{k=1}^\infty  \frac{\theta^{2k}}{(2k)!} f^{(2k)}(0)\right)^2\right] -  \Ebb\left[\sum_{k=1}^\infty  \frac{\theta^{2k}}{(2k)!} f^{(2k)}(0)\right]^2\\
&= \sum_{k,k'=1}^\infty  \left(\frac{\Ebb[\theta^{2k + 2k'}]}{(2k)! (2k')!} - \frac{\Ebb[\theta^{2k}] \Ebb[\theta^{2k'}]}{(2k)!(2k')!} \right) f^{(2k)}(0) f^{(2k')}(0)\\
&=\sum_{k,k'\geq 1}^\infty \frac{r^{2(k+k')}}{(2k)!(2k')!}\left(\frac{1}{2(k+k')+1} - \frac{1}{(2k+1)(2k'+1)}\right) f^{(2k)}(0) f^{(2k')}(0) \\
    &= \frac{r^4}{45}[ f^{(2)}(0)]^2 + \RC \\
    &\geq \frac{r^4}{45}[ f^{(2)}(0)]^2 -|\RC| \;, \label{eq:varf_LB}
\end{align}
where we introduce the remainder $\RC$ which contains all higher order terms in $r$ i.e., $\RC\sim \OC(r^6)$ and is of the form
\begin{equation}
    \RC=\sum_{\substack{k,k'\geq 1 \\ k+k'\geq 3}}^\infty \frac{r^{2(k+k')}}{(2k)!(2k')!}\left(\frac{1}{2(k+k')+1} - \frac{1}{(2k+1)(2k'+1)}\right)f^{(2k)}(0) f^{(2k')}(0) \;. \label{eq:remainder_f}
\end{equation}

To further lower bound in Eq.~\eqref{eq:varf_LB}, we provide the following upper bound on the remainder $\abs{\mathcal{R}}$:
\begin{align}
    \abs{\mathcal{R}} &\leq  \sum_{\substack{k,k'\geq 1 \\ k+k'\geq 3}} \frac{r^{2(k+k')}}{(2k)!(2k')!}\left(\frac{1}{2(k+k')+1} - \frac{1}{(2k+1)(2k'+1)}\right) |f^{(2k)}(0)| |f^{(2k')}(0)| \\
    &\leq \sum_{\substack{k,k'\geq 1 \\ k+k'\geq 3}} \frac{r^{2(k+k')}}{(2k)!(2k')!}\left(\frac{1}{2(k+k')+1} - \frac{1}{(2k+1)(2k'+1)}\right) \alpha^2 \gamma^{2(k+k')} \label{eq:remainder_UB_S0}\\ 
    &= \alpha^2  S( \gamma r)\label{eq:remainder_UB_S}
\end{align}
where the first inequality is from the triangle inequality and the second inequality is by the assumption on the derivative bound in Eq.~\eqref{eq:assumption-bounded-grad}. To reach the final equality, we invoke Lemma~\ref{lemma:sacha-magic1} where $S(x) =\frac{1}{2}\left(1 + \frac{\sinh(2x)}{2x} - 2\left(\frac{\sinh(x)}{x}\right)^2 - \frac{2x^4}{45}\right)$. Next, by further invoking Lemma~\ref{lemma:sacha-magic2}, we can bound $S(x)$ for some region. Particularly, one can see that for
\begin{equation}\label{eq:remainder_r_cond}
    \gamma r\leq \frac{3}{2} \Rightarrow r \leq \frac{3}{2 \gamma}
\end{equation}
we can upper-bound $S(\gamma r)$ as follows 
\begin{equation}
    S(r\gamma)\leq \frac{\gamma^6}{270} r^6
\end{equation}
and thus we can recover Eq.~\eqref{eq:remainder_UB_S} to upper-bound the remainder as
\begin{equation}\label{eq:final_remainder_UB}
    |\mathcal{R}| \leq \alpha^2 \frac{\gamma^6}{270} r^6  \;.
\end{equation}

By plugging the remainder upper bound in Eq.~\eqref{eq:final_remainder_UB} and for the perturbation $r$ satisfying the condition in Eq.~\eqref{eq:remainder_r_cond}, we retrieve the promised variance lower bound in Eq.~\eqref{eq:var_f_LB}.
\end{proof}

\subsubsection{Variance decomposition of a multivariable function}
Here we show how the variance of a multivariate function can be decomposed into a sum of expected values and variances for different variables.

\begin{proposition}\label{prop:var_decomp}
Consider a multivariable function $f(\vec{\th})=f(\th_1,\dots,\th_\nparams)$ depending on $\nparams$ parameters such that $f: \mathbb{R}^\nparams \rightarrow \mathbb{R}$. We assume that each parameter is sampled independently from some distribution $\PC$ i.e., $\thv \sim \PC^{\otimes m}$. Then for any permutation $\pi$, 
the variance of the function $f$ can be expressed as 
\begin{align}
    \Var_{\vec{\th} \sim \mathcal{P}^{\otimes \nparams}}\left[f(\vec{\th})\right] = \sum_{k=1}^\nparams \Ebb_{\pi(\nparams),\dots,\pi(k+1)}[\Var_{\pi(k)}[\Ebb_{\pi(k-1),\dots,\pi(1)}[f(\vec{\th})]]]
\end{align}
\end{proposition}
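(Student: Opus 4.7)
The plan is to prove this via a telescoping / orthogonal decomposition argument based on the law of total variance. Without loss of generality, by relabeling the coordinates I would assume that $\pi$ is the identity permutation; the general case follows by substitution at the end. The main construction is to introduce the ``partial average'' functions
\begin{equation}
    g_k := \Ebb_{\th_1,\dots,\th_k}[f(\vec{\th})], \qquad k=0,1,\dots,\nparams,
\end{equation}
so that $g_0 = f$ (a function of all $\th_i$), each $g_k$ depends only on $\th_{k+1},\dots,\th_{\nparams}$, and $g_{\nparams} = \Ebb[f]$ is a constant.

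First I would write the telescoping identity
\begin{equation}
    f(\vec{\th}) - \Ebb[f] = \sum_{k=1}^{\nparams} (g_{k-1} - g_k),
\end{equation}
square both sides, and take the full expectation. The key step is to show that the cross-terms vanish: for $j<k$, the factor $g_{k-1}-g_k$ does not depend on $\th_1,\dots,\th_{k-1}$, so by independence it pulls out of the inner expectation $\Ebb_{\th_1,\dots,\th_{k-1}}$. What remains, $\Ebb_{\th_1,\dots,\th_{k-1}}[g_{j-1}-g_j]$, can be evaluated by a short bookkeeping calculation that uses $\Ebb_{\th_1,\dots,\th_{k-1}}[g_{j-1}] = g_{k-1}$ and $\Ebb_{\th_1,\dots,\th_{k-1}}[g_j] = g_{k-1}$ (both obtained by iterating the expectation, exploiting independence and the nested definitions of the $g$'s), giving $0$.

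Once the cross-terms are eliminated, I am left with $\Var[f] = \sum_{k=1}^{\nparams} \Ebb[(g_{k-1}-g_k)^2]$. To identify each diagonal term with the claimed summand, I would use $g_k = \Ebb_{\th_k}[g_{k-1}]$ to rewrite $\Ebb_{\th_k}[(g_{k-1}-g_k)^2] = \Var_{\th_k}[g_{k-1}]$, and then note that $g_{k-1}-g_k$ depends only on $\th_k,\dots,\th_{\nparams}$, so that taking the outer expectations produces exactly $\Ebb_{\th_{\nparams},\dots,\th_{k+1}}[\Var_{\th_k}[\Ebb_{\th_{k-1},\dots,\th_1}[f]]]$. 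Undoing the relabeling $i \leftrightarrow \pi(i)$ recovers the stated identity for arbitrary $\pi$.

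The main obstacle I anticipate is not conceptual but organizational: careful bookkeeping of which variables each $g_k$ depends on, and consistent use of independence to reduce conditional expectations to ordinary expectations, is needed to make both the cross-term vanishing and the rewriting of the diagonal terms airtight. A cleaner alternative route is a direct induction on $\nparams$, applying the law of total variance $\Var[f] = \Ebb_{\th_2,\dots,\th_{\nparams}}[\Var_{\th_1}[f]] + \Var_{\th_2,\dots,\th_{\nparams}}[\Ebb_{\th_1}[f]]$ once (the first term being the $k=1$ summand) and invoking the inductive hypothesis on the second summand, which is the variance of a function of $\nparams-1$ independent variables; the two approaches are essentially equivalent.
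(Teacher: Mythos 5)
Your argument is correct. Your primary route is, however, not quite the one the paper takes: you telescope the function itself, writing $f-\Ebb[f]=\sum_{k}(g_{k-1}-g_k)$ with $g_k=\Ebb_{\th_1,\dots,\th_k}[f]$, square, and kill the cross-terms by independence and iterated expectation --- this is the martingale-difference (Doob/ANOVA) decomposition, in which the summands $\Ebb[(g_{k-1}-g_k)^2]=\Ebb_{\th_\nparams,\dots,\th_{k+1}}[\Var_{\th_k}[g_{k-1}]]$ are identified only at the end. The paper instead telescopes at the level of variances: it defines the same partial averages $f_k$ and applies the law of total variance once per variable, $\Var_{\overline{k}}[f_k]=\Ebb_{\overline{k+1}}[\Var_{k+1}[f_k]]+\Var_{\overline{k+1}}[f_{k+1}]$, then sums the resulting telescoping differences; no orthogonality of increments ever needs to be argued, which makes the bookkeeping lighter (exactly the organizational burden you flag as the main risk in your route). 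Your orthogonality argument does go through --- for $j<k$ the factor $g_{k-1}-g_k$ is independent of $\th_1,\dots,\th_{k-1}$ and $\Ebb_{\th_1,\dots,\th_{k-1}}[g_{j-1}]=\Ebb_{\th_1,\dots,\th_{k-1}}[g_j]=g_{k-1}$ --- and it buys a slightly stronger structural statement (an orthogonal decomposition of $f-\Ebb[f]$, useful if one later wants covariances between increments), at the cost of the extra cross-term verification. The ``cleaner alternative'' you mention at the end, induction via a single application of the law of total variance per step, is essentially the paper's proof verbatim, and the reduction of the general permutation $\pi$ to the identity by relabeling is also how the paper handles it.
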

\begin{proof}

     The proof can be obtained by recursion over the number of independent parameters $\nparams$. 

    Precisely, let us introduce the function $f_k$ defined recursively as follows
    \begin{align}
        f_0&= f(\vec{\th})\\
        f_{k+1} &= \Ebb_{k+1}[f_k]
    \end{align}
    Here, one can see that the function $f_k$ only depends on the parameters $\vec{\th}_{\overline{k}}:=(\th_{k+1},\dots,\th_\nparams)$. 
    Moreover, we can show that $\forall \; 0\leq  k \leq m$
    \begin{align}
        \Var_{\overline{k}}[f_k] &= \Ebb_{\overline{k}}[f_k^2] - \Ebb_{\overline{k}}[f_k]^2 \label{eq:recursive_f_start}\\
        &=  \Ebb_{\overline{k+1}}[\Ebb_{k+1}[f_k^2]] - \Ebb_{\overline{k+1}}[\Ebb_{k+1}[f_k]]^2\\
        &= \Ebb_{\overline{k+1}}[\Ebb_{k+1}[f_k^2]]- \Ebb_{\overline{k+1}}[\Ebb_{k+1}[f_k]^2] + \Ebb_{\overline{k+1}}[\Ebb_{k+1}[f_k]^2] - \Ebb_{\overline{k+1}}[\Ebb_{k+1}[f_k]]^2\\
        &= \Ebb_{\overline{k+1}}[\Var_{k+1}[f_k]] + \Var_{\overline{k+1}}[\Ebb_{k+1}[f_k]]\\
        &= \Ebb_{\overline{k+1}}[\Var_{k+1}[f_k]] + \Var_{\overline{k+1}}[f_{k+1}]
    \end{align}

    Hence, we have that
    \begin{align}
    \Var[f(\vec{\th})] &= 
    \sum_{k=0}^{m-1}  \left(\Var_{\overline{k}}[f_k] - \Var_{\overline{k+1}}[f_{k+1}] \right) \\
    &= \sum_{k=0}^{m-1} \Ebb_{\overline{k+1}}[\Var_{k+1}[f_k]]\\
    &= \sum_{k=1}^{m} \Ebb_{\overline{k}}[\Var_{k}[\Ebb_{1,\dots,k-1}[f(\vec{\th})]]] \label{eq:recursive_f_end}
    \end{align}

    Here, we note that the indexing of the parameters is arbitrary. Consequently, for any permutation $\pi:\{1,\dots,\nparams\} \rightarrow \{1,\dots,\nparams\}$, we obtain the same variance decomposition. 
    
    Let us choose any reordering $\pi$ of the parameter labels such that 
    $(\th_{\pi(1)},\dots,\th_{\pi(\nparams)})= (\th_{p_1},\dots,\th_{p_\nparams})$. Then, we can rewrite the recursive definition of $f_k$ as follows
    \begin{align}
        f_0 &= f(\vtheta)\\
        f_{k+1} &=  \Ebb_{p_{k+1}}[f_k]
    \end{align}
    where $f_k$ now depends only on the parameters $\vtheta_{\overline{p_k}} := (\th_{p_{k+1}},\dots,\th_{p_\nparams})$.
    Then using the same recursive proof technique as in Eq.~\eqref{eq:recursive_f_start}-Eq.~\eqref{eq:recursive_f_end}, we obtain 
    \begin{align}
        \Var[f(\vtheta)]= \sum_{k=1}^\nparams \Ebb_{\overline{p_k}}[\Var_{p_k}[\Ebb_{p_1,\dots,p_{k-1}}[f(\vtheta)]]]
    \end{align} 
    which concludes the proof.
\end{proof}

\subsection{Preliminaries for the main proofs}
In this section we derive different results that will be used throughout the main proofs of this paper. Particularly, to prove Theorem~\ref{th:var}. 

\subsubsection{Bounded derivatives of parameterized unitary channels}\label{sec:bounded_derivatives_param_unitary_channels}
We now show that, under mild assumptions on the generators, the derivatives of certain parameterized unitaries (and their induced channels) remain norm-bounded.  These lemmas lead to many of our subsequent arguments about vanishing (or non-vanishing) gradients.

\begin{lemma}[Bound on nested commutators]\label{lemma:bound-nested-commutator}
Consider two bounded operators $H$ and $A$, as well as denote  the $p^{\rm th}$-order nested commutator as
    \begin{align}
        [(H)^p, A]:= \underbrace{[H, [H, \dots, [H, A]]]}_{p \text{ times}} \;.
    \end{align} 
The infinity norm of the nested commutator can be bounded as
\begin{align}
    \left\| [(H)^p,A] \right\|_\infty & 
     \leq (2 \omega^{(\rm max)}(H))^p \left\| A \right\|_\infty \;.
\end{align}
where 
\begin{equation}\label{eq:freq_lemma_commutator}
\omega^{(\max)}(H) := |\lambda_{\rm max}(H) - \lambda_{\rm min}(H)|
\end{equation}
is the maximum frequency of $H$ (i.e., the largest spectral gap among the eigenvalues of $H$.)

\end{lemma}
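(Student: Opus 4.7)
The key observation is that the nested commutator is invariant under shifting $H$ by a scalar multiple of the identity, since $[c\mathbb{1},A]=0$. The plan is to exploit this freedom to replace $H$ with a ``centred'' operator whose norm equals exactly half the spectral width $\omega^{(\max)}(H)$, and then apply the standard submultiplicative commutator bound iteratively.

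Concretely, I would first define $H':= H - c\,\mathbb{1}$ with $c=\tfrac{1}{2}(\lambda_{\max}(H)+\lambda_{\min}(H))$. The eigenvalues of $H'$ are then symmetric about zero and contained in the interval $[-\omega^{(\max)}(H)/2,\,\omega^{(\max)}(H)/2]$, so that
\begin{equation}
\|H'\|_\infty \;=\; \frac{\omega^{(\max)}(H)}{2}\,.
\end{equation}
Because $[H,X]=[H',X]$ for any operator $X$, an immediate induction on $p$ gives $[(H)^p,A]=[(H')^p,A]$.

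Next I would establish the single-commutator bound. By the triangle inequality and submultiplicativity of $\|\cdot\|_\infty$,
\begin{equation}
\|[H',X]\|_\infty \;\leq\; \|H'X\|_\infty + \|XH'\|_\infty \;\leq\; 2\|H'\|_\infty \|X\|_\infty \;=\; \omega^{(\max)}(H)\,\|X\|_\infty\,.
\end{equation}
The bound on the $p$-fold nested commutator then follows by induction: assuming $\|[(H')^{p-1},A]\|_\infty \leq (\omega^{(\max)}(H))^{p-1}\|A\|_\infty$, one applies the single-commutator bound with $X=[(H')^{p-1},A]$ to get $\|[(H')^p,A]\|_\infty \leq (\omega^{(\max)}(H))^p\|A\|_\infty$. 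This is in fact strictly sharper than the stated bound, which is recovered by the trivial estimate $(\omega^{(\max)}(H))^p\leq (2\omega^{(\max)}(H))^p$.

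There is essentially no hard step here; the only conceptual point worth flagging is the use of the shift $H \mapsto H-c\mathbb{1}$, without which one would only be able to derive the weaker estimate $\|[(H)^p,A]\|_\infty \leq (2\|H\|_\infty)^p\|A\|_\infty$. This can be arbitrarily worse than the claimed bound when $H$ has a large overall offset but a small spectral width (e.g.\ $H=E\,\mathbb{1}+H_0$ with $E\gg \|H_0\|$), which is precisely the regime in which the maximal-frequency characterization is needed later in the paper when invoking the Fourier decomposition of Eq.~\eqref{eq:fourierloss}.
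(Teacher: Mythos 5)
Your proof is correct and follows essentially the same route as the paper's: exploit the invariance of the commutator under adding a multiple of the identity, apply the triangle inequality plus submultiplicativity of $\|\cdot\|_\infty$, and induct on $p$. The only difference is that you centre $H$ at the midpoint of its spectrum (so $\|H-c\1\|_\infty=\omega^{(\max)}(H)/2$) rather than shifting by $\lambda_{\rm min}(H)\1$ as the paper does, which gives the slightly sharper bound $(\omega^{(\max)}(H))^p\|A\|_\infty$, from which the stated estimate follows trivially.
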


\medskip

\begin{proof} 

We first notice that for a single commutator ($p=1$) we have
\begin{align}
    \left\| [H,A] \right\|_\infty & = 
     \norm{[H - \lambda_{\rm min}(H)\1,A]}\\
    &=\left\| (H - \lambda_{\rm min}(H)\1)A-A(H - \lambda_{\rm min}(H)\1) \right\|_\infty \\
    &\leq \left\| (H - \lambda_{\rm min}(H)\1)A \right\|_\infty+\left\| A(H - \lambda_{\rm min}(H)\1) \right\|_\infty \\
    &\leq 2\left\| H - \lambda_{\rm min}(H)\1 \right\|_\infty\left\| A \right\|_\infty\\\
    &= 2 \omega^{(\rm max)}(H) \norm{A}\;,
\end{align}
where we used the commutator invariance to a term proportional to the identity operator in the first equality with $\lambda_{\rm min}(H)$ being the lowest eigenvalue of $H$. We then applied triangle inequality in the first inequality and sub-multiplicativity of the norm $ \| H A \|_{\infty} \leq \| H \|_\infty \| A\|_\infty$ in the second inequality. Finally, we used the shorthand $\omega^{(\rm max)}(H)$ (introduced in Eq.~\eqref{eq:freq_lemma_commutator}) which denotes the maximal difference in absolute value of the hermitian operator $H$ eigenvalues. From this result, the following recursive relation can be obtained
\begin{align}
    \left\| [(H)^p,A] \right\|_\infty &= \left\| [H,[(H)^{p-1},A]] \right\|_\infty \\
    &\leq 2\omega^{(\rm max)}(H)\left\| [(H)^{p-1},A] \right\|_\infty \\
    & \leq (2\omega^{(\rm max)}(H))^p\left\| A \right\|_\infty\;.
\end{align}
Hence, the claim follows by induction on $p$.

\end{proof}
\noindent \textit{Remark.} Notice that Lemma~\ref{lemma:bound-nested-commutator} also holds if we replace $\omega^{(\rm max)}(H)$ by $\norm{H}$. If $H$ has negative eigenvalues, the spectral norm is tighter than the maximal frequency, i.e., $\norm{H} \leq \omega^{(\rm max)}(H)$. Moreover, we can further tighten the nested commutator bound in  Lemma~\ref{lemma:bound-nested-commutator} under some locality assumptions, which we detail in the following Lemma.

\begin{lemma}[Bound on nested commutators under locality assumptions]\label{lemma:bound_nested_com_local}
    Consider two bounded operators $H$ and $P$, acting each on a constant number of qubits. Precisely, $H$ can be decomposed into a sum of commuting  Pauli strings $H= \sum_{j=1}^N h_j$ , where each $h_j$  acts on at most a constant number of neighboring qubits.  We also assume that  the operator $P$ is a local Pauli string, acting on a constant number of neighboring qubits. 
    Their $p^{\rm th}$- order nested commutator can be written as
    \begin{equation}\label{eq:comm_local_lemma}
        [(H)^p,P] = [(\widetilde{H}(P))^p,P] 
    \end{equation}
    where $\widetilde{H}(P)$ is the part of $H$ which anti commute with $P$.
    Moreover, the infinity norm of the nested commutator can be upper bounded by a constant, i.e. 
    \begin{align}\label{eq:lemma_bound_com_local}
        \norm{[(H)^p,P]} \leq (2  s(P))^2
    \end{align}
    where $s(P)$ is the number of Pauli terms in $\widetilde{H}(P)$ that is constant in the system size.
    
\end{lemma}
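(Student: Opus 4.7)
The key structural fact to exploit is that two Pauli strings either commute or anticommute. Thus I would split the Hamiltonian as $H = H_+ + \widetilde{H}(P)$, where $H_+ := \sum_{j:\,[h_j,P]=0} h_j$ and $\widetilde{H}(P) := \sum_{j:\,\{h_j,P\}=0} h_j$, so that $[H_+,P]=0$ and $P\,\widetilde{H}(P) = -\widetilde{H}(P)\,P$. Because all the $h_j$ mutually commute by assumption, $[H_+,\widetilde{H}(P)]=0$, which is the property that lets the induction close. The plan is then to prove by induction on $p$ the identity
\begin{equation}
[(H)^p,P] \;=\; 2^p\,\widetilde{H}(P)^p\,P,
\end{equation}
which immediately yields Eq.~\eqref{eq:comm_local_lemma} by applying the same identity to $\widetilde{H}(P)$ in place of $H$ (note $\widetilde{\widetilde{H}(P)}(P) = \widetilde{H}(P)$).

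For the base case $p=1$, a direct expansion gives $[H,P]=[\widetilde{H}(P),P]=2\,\widetilde{H}(P)\,P$. For the inductive step, assuming the claim at order $p$, one computes
\begin{equation}
[(H)^{p+1},P] = 2^p\bigl([H_+,\widetilde{H}(P)^p P] + [\widetilde{H}(P),\widetilde{H}(P)^p P]\bigr).
\end{equation}
The first bracket vanishes because $H_+$ commutes with both $\widetilde{H}(P)$ and $P$. For the second, the crucial observation is that $P\,\widetilde{H}(P) = -\widetilde{H}(P)\,P$ implies $\widetilde{H}(P)^p\,P\,\widetilde{H}(P) = -\widetilde{H}(P)^{p+1}\,P$, so $[\widetilde{H}(P),\widetilde{H}(P)^p P] = 2\,\widetilde{H}(P)^{p+1}\,P$, completing the induction. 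I expect this parity bookkeeping on the sign picked up when sliding $P$ past powers of $\widetilde{H}(P)$ to be the only mildly subtle point; no commutator expansion over the individual $h_j$'s is needed, which is what keeps the argument clean.

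From the identity, the norm bound follows by sub-multiplicativity and the triangle inequality: $\|[(H)^p,P]\|_\infty \leq 2^p\,\|\widetilde{H}(P)\|_\infty^p\,\|P\|_\infty \leq (2\,s(P))^p$, since each $h_j$ in $\widetilde{H}(P)$ is a Pauli of unit norm and there are $s(P)$ of them, while $\|P\|_\infty=1$. The locality hypotheses enter only to guarantee $s(P)\in O(1)$: since both $P$ and each $h_j$ act on a constant number of neighboring qubits, only a constant number of $h_j$'s overlap nontrivially with the support of $P$, and only those can anticommute with $P$. (I read the $(2s(P))^2$ in Eq.~\eqref{eq:lemma_bound_com_local} as a typographical artifact for $(2s(P))^p$; the exponent $p$ is what the induction naturally produces, and for constant $p$ and constant $s(P)$ both expressions are $O(1)$, which is all that is needed downstream.)
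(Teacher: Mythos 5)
Your proof is correct and takes essentially the same route as the paper's: both split $H$ into the part commuting with $P$ and the anticommuting part $\widetilde{H}(P)$, run an induction on $p$ that uses the mutual commutativity of the $h_j$ to drop the commuting part, and finish with the triangle inequality, your closed form $[(H)^p,P]=2^p\,\widetilde{H}(P)^p P$ being only a slight sharpening of the paper's reduction $[(H)^p,P]=[(\widetilde{H}(P))^p,P]$. You are also right that the exponent in Eq.~\eqref{eq:lemma_bound_com_local} should be $p$ rather than $2$; the paper's own derivation (and its later use in Corollary~\ref{cor:var_LB_1param_1layer_local}) gives $(2s(P))^p$.
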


\begin{proof}

We begin by restating the key assumptions of the Lemma and introducing  some new definitions.
First, the Hamiltonian can be written as a sum of commuting Pauli terms, i.e. $[h_i, h_j]=0\, \forall \, i,\, j$, acting each on a constant number of neighboring qubits or what we simply call geometrically local, i.e.
\begin{align}\label{eq:H_decomp}
    H = \sum_{j=1}^N h_j\;.
\end{align}
We also consider the operator $P$ to be a geometrically local Pauli string.

    Let us now introduce the set $\mathcal{C}(P)$ denoting the ensemble of Pauli terms $h_j$ in the decomposition of $H$ in Eq.~\eqref{eq:H_decomp} which anti commute with $P$. Formally, we have
\begin{equation}
    \mathcal{C}(P) = \{ j \in \{1,\dots,N_k\} \;, \{P,h_j\}=0 \}\;.
\end{equation}
Here we can see that the size of this set $s(P):=|\mathcal{C}(P)|$ is a constant in the system size due to the locality of $h_j$ and $P$. 

Hence, in what follows we will prove that we can show recursively that the $p$-th order nested commutator of the Hamiltonian term $H$ can be expressed as 
\begin{align}\label{eq:comm_reduce_lemma}
    [(H)^p,P] &= \left[\left(\sum_{j \in \mathcal{C}(P)} h_j\right)^p,P\right] \;, \forall p\geq1\\
    &= \left[\left(\widetilde{H}(P)\right)^p,P\right] \;, \forall p\geq1\label{eq:only_com_bit_idont_know_howtolabel_this_anymore}
\end{align}

We start by seeing that the property in Eq.~\eqref{eq:comm_reduce_lemma} can be trivially verified for $p=1$. Besides, we show that the following recursive relation holds
\begin{align}
    [(H)^{p+1},P] &= [H,[(H)^{p},P]]\\
    &= \left[H,\left[\left(\sum_{j \in \mathcal{C}(P)} h_j\right)^p,P\right]\right]\\
    &= \left[\sum_{j \in \mathcal{C}(P)} h_j + \sum_{j \notin \mathcal{C}(P)} h_j,\left[\left(\sum_{j \in \mathcal{C}(P)} h_j\right)^p,P\right]\right]\\
    &= \left[\sum_{j \in \mathcal{C}(P)} h_j,\left[\left(\sum_{j \in \mathcal{C}(P)} h_j\right)^p,P\right]\right]\\
    &= \left[\left(\sum_{j \in \mathcal{C}(P)} h_j\right)^{p+1},P\right]
\end{align}
where we used in the fourth equality the fact that all the $h_j$ mutually commute and that the terms $h_j$ for $j\notin \mathcal{C}(P)$ commute with $P$.
Thus, this proves the claim in Eq.~\eqref{eq:comm_local_lemma} by induction on $p$.

In addition, we can upper bound the infinity norm of the nested commutator as
\begin{align}
     \norm{[(H)^p,P]} &\leq \norm{\left(2\sum_{j \in \mathcal{C}(P)} h_j\right)^p} \norm{P}\\
     &\leq \norm{2\sum_{j \in \mathcal{C}(P)} h_j}^p\\
     & \leq  \left(2\sum_{j \in \mathcal{C}(P)} \norm{h_j}\right)^p\\
     & = (2s(P))^p \;.   \label{eq:comm_reduce_norm_lemma}  
\end{align}
where in the first inequality, we used  the result from Lemma \ref{lemma:bound-nested-commutator} and the results we just derived presented in Eq.~\eqref{eq:only_com_bit_idont_know_howtolabel_this_anymore}. In the second inequality, we used the triangular inequality and that the infinity norm of a Pauli string is equal to one.

\end{proof}

This last lemma presented is quite specific, and will just be used a handful of times in certain instances. Thus we go back to Lemma \ref{lemma:bound-nested-commutator} in order to find an upper-bound on the derivatives of a unitary channel.

\begin{lemma}[Single-parameter derivative bounds]\label{lemma:bounded_unitary}

Consider the parametrized unitary channel 
\begin{equation}
\mathcal{U}_\theta(A) \;=\; e^{i\,\theta\,H}\,A\,e^{-i\,\theta\,H},
\end{equation}
where $H$ is a Hermitian operator, $A$ is a bounded operator, and $\theta \in \mathbb{R}$. 

1. The $p$-th derivative of $\mathcal{U}_\theta(A)$ with respect to $\theta$, evaluated at $\theta=\phi$, is 
\begin{equation}\label{eq:p-order-grad-A}
\mathcal{U}_\phi^{(p)}(A) 
\;:=\;\frac{d^p}{d\theta^p}\,\mathcal{U}_\theta(A)\Big|_{\theta=\phi} 
\;=\; i^p\,\mathcal{U}_{\phi}\Bigl(\bigl[(H)^p,\,A\bigr]\Bigr).
\end{equation}

2. For $p\ge 2$, 
\begin{equation}\label{eq:bound-p-order-grad-A}
\bigl\|\mathcal{U}_\phi^{(p)}(A)\bigr\|_\infty 
\;\le\;\bigl(2\,\omega^{\rm (max)}(H)\bigr)^p 
\;\frac{\bigl\|\bigl[H,\,[H,A]\bigr]\bigr\|_\infty}{4\,\omega^{\rm (max)}(H)^2}.
\end{equation}

\end{lemma}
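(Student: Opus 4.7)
The plan is to prove the two claims in order, both by direct computation. For Part 1, I will establish the formula for $\mathcal{U}_\phi^{(p)}(A)$ by induction on $p$. The base case $p=1$ follows from the product rule applied to $e^{i\theta H}Ae^{-i\theta H}$ together with the fact that $H$ commutes with $e^{\pm i\theta H}$, giving $\frac{d}{d\theta}\mathcal{U}_\theta(A) = i\,e^{i\theta H}[H,A]e^{-i\theta H} = i\,\mathcal{U}_\theta([H,A])$. For the inductive step, if $\mathcal{U}_\theta^{(p)}(A) = i^p\mathcal{U}_\theta([(H)^p,A])$, differentiating once more and reusing the base-case identity with $[(H)^p,A]$ in place of $A$ yields $\mathcal{U}_\theta^{(p+1)}(A) = i^{p+1}\mathcal{U}_\theta([H,[(H)^p,A]]) = i^{p+1}\mathcal{U}_\theta([(H)^{p+1},A])$ by the very definition of the nested commutator. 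Evaluating at $\theta = \phi$ produces \eqref{eq:p-order-grad-A}.

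For Part 2, I combine Part 1 with the unitary invariance of the operator norm, so that $\|\mathcal{U}_\phi^{(p)}(A)\|_\infty = \|[(H)^p,A]\|_\infty$ for every $p$. The naive application of Lemma~\ref{lemma:bound-nested-commutator} would give the crude bound $(2\omega^{(\max)}(H))^p\|A\|_\infty$, which is \emph{not} what we want. The key manipulation is instead to peel off the two innermost commutators, writing
\begin{equation}
[(H)^p,A] \;=\; [(H)^{p-2},B], \qquad B := [H,[H,A]],
\end{equation}
which is valid for $p\geq 2$ directly from the definition of iterated commutators. Applying Lemma~\ref{lemma:bound-nested-commutator} now with $B$ in place of $A$ and $p-2$ nestings gives $\|[(H)^{p-2},B]\|_\infty \leq (2\omega^{(\max)}(H))^{p-2}\|B\|_\infty$, and the algebraic identity $(2\omega^{(\max)}(H))^{p-2} = (2\omega^{(\max)}(H))^p/(4\omega^{(\max)}(H)^2)$ reproduces \eqref{eq:bound-p-order-grad-A} exactly.

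The ``main obstacle'' is modest: it is simply spotting the right decomposition in Part 2. The entire content of the sharpened bound lies in the observation that one should apply the nested-commutator lemma to $B=[H,[H,A]]$ rather than to $A$ itself. This refinement is what allows the later Taylor-remainder analysis (e.g.\ Proposition~\ref{prop:expectation_taylor_product}) to exploit how strongly $H$ actually fails to commute with $A$; for instance, when $A$ nearly commutes with the generator, the factor $\|[H,[H,A]]\|_\infty$ can be much smaller than $\|A\|_\infty(2\omega^{(\max)}(H))^2$, and this is precisely the slack needed downstream to keep the even higher-order remainder terms controlled.
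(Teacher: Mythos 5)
Your proposal is correct and follows essentially the same route as the paper: Part 1 by induction on $p$ using the single-derivative identity $\frac{d}{d\theta}\mathcal{U}_\theta(A)=i\,\mathcal{U}_\theta([H,A])$, and Part 2 by unitary invariance of the norm plus applying Lemma~\ref{lemma:bound-nested-commutator} to $B=[H,[H,A]]$ with $p-2$ nestings, i.e.\ peeling off the two innermost commutators exactly as the paper does. No gaps.
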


\medskip

\begin{proof} We prove the two parts of the lemma separately.

1. We will prove the $p^{\rm th}$-order derivative expression in Eq.~\eqref{eq:p-order-grad-A} by induction. First let us consider the single derivative of $\UC_\theta(A)=e^{i\theta H}Ae^{-i\theta H}$ with respect to $\theta$
\begin{align}
    \frac{d}{d\theta}\UC_\theta(A) 
    &= ie^{i\theta H}HAe^{-i\theta H}-ie^{i\theta H}AHe^{-i\theta H} \\
    &=ie^{i\theta H}(HA-AH)e^{-i\theta H} \\
    &=i\UC_\theta([H,A])\;. \label{eq:single-derivative-application-proof} \;,
\end{align}
which shows that the form in Eq.~\eqref{eq:p-order-grad-A} is true for $p=1$.

Now, we assume that the $p$-th order derivative is given by 
\begin{equation}
     \frac{d^p }{d\th^p} \UC_\th(A)  = i^p \UC_{\th}([(H)^p,A])\;.
\end{equation}
Then, we can show that the $p+1$-th order derivative is given by
\begin{align}
    \frac{d^{p+1} }{d\th^{p+1}} \UC_\th(A) &= \frac{d }{d\th} i^p \UC_{\th}([(H)^p,A]) \\
    &=  i^{p+1} \UC_{\th}([H,[(H)^p,A]]) \\
    &=i^{p+1} \UC_{\th}([(H)^{p+1},A])\;,
\end{align}
where the second equality is given by Eq.~\eqref{eq:single-derivative-application-proof} for a single derivative (replacing $A$ by $[(H)^p,A]$), and the last equality uses the above definition of the nested commutator. This completes the proof of the first statement of the lemma. 

\medskip

2. Now we will prove the bound in Eq.~\eqref{eq:bound-p-order-grad-A} as follows

\begin{align}
    \left\| \UC_\phi^{(p)}(A) \right\|_\infty  &= \left\| i^p \UC_{\phi}([(H)^p,A]) \right\|_\infty \\
    &=\left\| [(H)^p,A] \right\|_\infty\; \\
    & \leq( 2\omega^{(\rm max)}(H) )^p \frac{\norm{[H,[H,A]]}}{4 ( \omega^{(\rm max)}(H) )^2} \, ,
\end{align}
where we used unitary invariance of the norm in the second equality and to reach the last inequality we invoke Lemma~\ref{lemma:bound-nested-commutator} with a slight modification. Precisely, in Lemma~\ref{lemma:bound-nested-commutator}, we showed  that $\norm{[(H)^p,A]} \leq (2\omega^{(\rm max)}(H))^p \norm{A}$ based on the recursive relation $\norm{[(H)^p,A]} = \norm{[H,[(H)^{p-1},A]]}\leq (2\omega^{(\rm max)}(H))\norm{[(H)^{p-1},A]}$. Thus, if $p\geq 2$ we can also have $\norm{[(H)^p,A]} \leq (2\omega^{(\rm max)}(H))^{p-2} \norm{[H,[H,A]]}$. This completes the proof of the lemma.

\end{proof}

We can generalize this bound for a composition of multiple unitary channels that depend on the same parameter.

\begin{lemma}[Multi-generator single-parameter channels]\label{lemma:bouded_unitary_product}
    Consider the parameterized unitary channel of the form $\mathcal{E}_\th(A) = U(\th)^\dagger A U(\th)$ with some real parameter $\th \in \mathbb{R}$ where $U(\theta) = \prod_{k=1}^K V_k e^{-i\th H_k}$, $\{H_k\}_{k=1}^K$ is a set of $K$ Hamiltonians and $\{V_k\}_{k=1}^K$ are some unitaries that do not depend on the parameter $\theta$. Then 
    we can use the multinomial coefficient introduced in Eq.~\eqref{eq:multinomial}, to write the $p^{\rm th}$-order derivative of $\mathcal{E}_\th(A)$ with respect to the parameter $\th$ evaluated at a point $\phi \in \mathbb{R}$ can be expressed as
   
    \begin{align}\label{eq:pth_deriv_corr}
        \channel_\phi^{(p)}(A) := \frac{d^p }{d\th^p} \mathcal{E}_\th(A) \Big|_{\th = \phi} = \sum_{\substack{\vec{a}=(a_1,\dots,a_K)\\
        a_1 + \dots + a_K = p}} \binom{p}{\vec{a}}\;\widetilde{\UC}_{\phi,K}^{(a_K)}\circ \widetilde{\UC}_{\phi,K-1}^{(a_{K-1})} \circ \dots \circ \widetilde{\UC}_{\phi,1}^{(a_1)}(A)
    \end{align}
    where we denote the unitary channel $\widetilde{\UC}_{\th,k}(A):= e^{i\th H_k}V_k^\dagger A V_k e^{-i\th H_k}$, and 
    $\widetilde{\UC}_{\phi,k}^{(a_k)}(A)$ as the $a_k^{th}$-order derivative of the channel $\widetilde{\UC}_{\th,k}(A)$ 
    with respect to $\th$  evaluated at $\phi$. 
    
    Moreover, the infinity norm of the $p^{\rm th}$-order derivative $\channel_\phi^{(p)}(A)$ can be bounded as 
    \begin{align}
        \left\| \channel_\phi^{(p)}(A) \right\|_\infty \leq \left(\sum_{k=1}^K 2 \omega^{(\rm max)}(H_k)\right)^p \norm{A} \;.
    \end{align}
\end{lemma}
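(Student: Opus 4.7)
The plan is to recognize $\channel_\theta$ as an ordered composition of the single-generator channels $\widetilde{\UC}_{\theta,k}$, treat each such channel as a $\theta$-dependent superoperator, and then apply the general Leibniz rule (Eq.~\eqref{eq:general-Leibniz-rule}) to the resulting $K$-fold superoperator product. For the norm bound I will combine sub-multiplicativity with the single-generator derivative estimate from Lemma~\ref{lemma:bounded_unitary} and collapse the remaining sum with the multinomial theorem in Eq.~\eqref{eq:multinomial-theorem}.

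\textbf{Step 1 (composition structure).} First I would verify by direct substitution that
\begin{align}
\channel_\theta(A) \;=\; U^\dagger(\theta)\,A\,U(\theta) \;=\; \bigl(\widetilde{\UC}_{\theta,K}\circ\widetilde{\UC}_{\theta,K-1}\circ\cdots\circ\widetilde{\UC}_{\theta,1}\bigr)(A),
\end{align}
since each application of $\widetilde{\UC}_{\theta,k}(B) = e^{i\theta H_k}V_k^\dagger B V_k e^{-i\theta H_k}$ peels off one outer factor of $U(\theta)$ and its dagger, starting from the innermost $V_1 e^{-i\theta H_1}$. This is routine bookkeeping but needs to be done in the correct order.

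\textbf{Step 2 (derivatives via Leibniz).} Viewing each $\widetilde{\UC}_{\theta,k}$ as a (noncommutative but associative) superoperator-valued function of the single scalar $\theta$, the composition above is literally a $K$-fold product in the superoperator algebra. The general Leibniz rule in Eq.~\eqref{eq:general-Leibniz-rule} applies to any such product of scalar-parameter-dependent noncommuting factors because its derivation only uses linearity and the product rule, not commutativity. Differentiating $p$ times and evaluating at $\theta=\phi$ therefore yields
\begin{align}
\channel_\phi^{(p)}(A) = \sum_{\substack{\vec{a}=(a_1,\dots,a_K)\\ a_1+\cdots+a_K=p}} \binom{p}{\vec{a}}\;\widetilde{\UC}_{\phi,K}^{(a_K)}\circ\widetilde{\UC}_{\phi,K-1}^{(a_{K-1})}\circ\cdots\circ\widetilde{\UC}_{\phi,1}^{(a_1)}(A),
\end{align}
which is exactly Eq.~\eqref{eq:pth_deriv_corr}. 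I expect this step (together with Step 1) to be the main obstacle, since one has to be careful that the order of derivative factors matches the order of the original superoperator composition and that no cross-derivatives are dropped.

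\textbf{Step 3 (norm estimate).} Applying the triangle inequality to the multinomial expansion and then sub-multiplicativity of $\norm{\cdot}$ through the composition gives
\begin{align}
\norm{\channel_\phi^{(p)}(A)} \;\leq\; \sum_{a_1+\cdots+a_K=p}\binom{p}{\vec{a}}\prod_{k=1}^K \norm{\widetilde{\UC}_{\phi,k}^{(a_k)}}_{\rm op}\cdot\norm{A},
\end{align}
where the operator norm on each superoperator factor is controlled, via the bound $\norm{\UC_\phi^{(a)}(B)}\leq(2\omega^{(\max)}(H))^a\norm{B}$ that comes out of Lemma~\ref{lemma:bounded_unitary} and Lemma~\ref{lemma:bound-nested-commutator}, by $(2\omega^{(\max)}(H_k))^{a_k}$. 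Inserting this and then invoking the multinomial theorem of Eq.~\eqref{eq:multinomial-theorem} with $x_k = 2\omega^{(\max)}(H_k)$ collapses the $\vec{a}$-sum:
\begin{align}
\sum_{a_1+\cdots+a_K=p}\binom{p}{\vec{a}}\prod_{k=1}^K\bigl(2\omega^{(\max)}(H_k)\bigr)^{a_k} \;=\; \Bigl(\sum_{k=1}^K 2\omega^{(\max)}(H_k)\Bigr)^{p}.
\end{align}
Multiplying by $\norm{A}$ yields the claimed bound. Steps 3 and 4 are essentially mechanical once Steps 1 and 2 are in place.
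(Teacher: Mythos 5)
Your proposal is correct and follows essentially the same route as the paper: rewrite $\channel_\theta$ as the ordered composition $\widetilde{\UC}_{\theta,K}\circ\cdots\circ\widetilde{\UC}_{\theta,1}$, apply the general Leibniz rule (justified by the product rule for compositions of superoperators, which the paper verifies explicitly for two channels), and then bound the norm by the triangle inequality, the single-generator estimate from Lemmas~\ref{lemma:bound-nested-commutator} and~\ref{lemma:bounded_unitary}, and the multinomial theorem. No gaps worth flagging.
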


\begin{proof}

We begin by rewriting the parametrized channel $\EC_{\th}(A)$ as a composition of unitary channels. 
\begin{align}\label{eq:paramchanneldecomp}
  \mathcal{E}_{\th}(A)
  = \widetilde{\UC}_{\th,K} \circ  \widetilde{\UC}_{\th,K-1} \circ \dots \circ \widetilde{\UC}_{\th,1} (A)\;.
\end{align}
Next, we note that the product rule for derivatives also applies for composition of unitary channels as 
\begin{align}
    \frac{d }{d \th} \left(\widetilde{\UC}_{\th,2}\circ \widetilde{\UC}_{\th,1} (A)\right) &= \frac{d }{d \th} \left(e^{iH_2\th}V_2^\dagger e^{iH_1\th}V_1^\dagger A V_1 e^{-iH_1\th}V_2 e^{-iH_2\th}\right) \\
    &= iH_2e^{iH_2\th}V_2^\dagger e^{iH_1\th}V_1^\dagger A V_1 e^{-iH_1\th}V_2 e^{-iH_2\th} \\
    &+e^{iH_2\th}V_2^\dagger iH_1e^{iH_1\th}V_1^\dagger A V_1 e^{-iH_1\th}V_2 e^{-iH_2\th} \\
    &-e^{iH_2\th}V_2^\dagger e^{iH_1\th}V_1^\dagger A V_1 e^{-iH_1\th}iH_1V_2 e^{-iH_2\th} \\
    &-e^{iH_2\th}V_2^\dagger e^{iH_1\th}V_1^\dagger A V_1 e^{-iH_1\th}V_2 e^{-iH_2\th}iH_2 \\
    &=\widetilde{\UC}_{\th,2}^{(1)}\circ \widetilde{\UC}_{\th,1}(A)+\widetilde{\UC}_{\th,2}\circ \widetilde{\UC}_{\th,1}^{(1)}(A)\;,
\end{align}
where we applied the product rule in the second equality and the last equality is obtained by recognising both 
\begin{align}
  \widetilde{\UC}_{\th,2}^{(1)}\circ \widetilde{\UC}_{\th,1}(A)&=iH_2e^{iH_2\th}V_2^\dagger e^{iH_1\th}V_1^\dagger A V_1 e^{-iH_1\th}V_2 e^{-iH_2\th}-e^{iH_2\th}V_2^\dagger e^{iH_1\th}V_1^\dagger A V_1 e^{-iH_1\th}V_2 e^{-iH_2\th}iH_2 \\
  \widetilde{\UC}_{\th,2}\circ \widetilde{\UC}_{\th,1}^{(1)}(A)&=e^{iH_2\th}V_2^\dagger iH_1e^{iH_1\th}V_1^\dagger A V_1 e^{-iH_1\th}V_2 e^{-iH_2\th}-e^{iH_2\th}V_2^\dagger e^{iH_1\th}V_1^\dagger A V_1 e^{-iH_1\th}iH_1V_2 e^{-iH_2\th}\;.
\end{align}
Therefore, this generalises to the $p^{\rm th}$-order derivative with $K$ unitary channels given in Eq.~\eqref{eq:proof-lemma-correlated-1} as it is the case with the conventional general Leibnitz rule in Eq.~\eqref{eq:general-Leibniz-rule}.
Thus the $p^{\rm th}$-order derivative of $\mathcal{E}_\th(A)$ with respect to the parameter $\th$ evaluated at a point $\phi$ can be expressed as: 
    \begin{align} \label{eq:proof-lemma-correlated-1}
        \channel_\phi^{(p)}(A) =  \sum_{\substack{\vec{a}=(a_1,\dots,a_K)\\
        a_1 + \dots + a_K = p}} \binom{p}{\vec{a}}\;\widetilde{\UC}_{\phi,K}^{(a_K)}\circ \widetilde{\UC}_{\phi,K-1}^{(a_{K-1})} \circ \dots \circ \widetilde{\UC}_{\phi,1}^{(a_1)}(A)\;.
    \end{align}

We can further upper bound the infinity norm of the operator $\channel_\phi^{(p)}(A)$ (by iteratively using Lemmas~\ref{lemma:bound-nested-commutator} and~\ref{lemma:bounded_unitary}) as follows:
\begin{align}
    \norm{ \channel_\phi^{(p)}(A)} &\leq \sum_{\substack{\vec{a}=(a_1,\dots,a_K)\\a_1+\dots+a_K = p}} \binom{p}{\vec{a}} \norm{\widetilde{\UC}_{\phi,K}^{(a_K)}\circ\widetilde{\UC}_{\phi,K-1}^{(a_{K-1})}\circ\dots \circ \widetilde{\UC}_{\phi,1}^{(a_1)}(A)}\\
    & \leq \sum_{\substack{\vec{a}=(a_1,\dots,a_K)\\a_1+\dots+a_K = p}}  \binom{p}{\vec{a}} (2 \omega^{(\rm max)}(H_K))^{a_K} 
    \norm{\widetilde{\UC}_{\phi,K-1}^{(a_{K-1})}\circ\dots \circ \widetilde{\UC}_{\phi,1}^{(a_1)}(A)}\\
     &  \leq \sum_{\substack{\vec{a}=(a_1,\dots,a_K)\\a_1+\dots+a_K = p}}  \binom{p}{\vec{a}} \left(\prod_{k=1}^{K} \left(2\omega^{(\rm max)}(H_k)\right)^{a_k}\right) 
     \norm{A}\\
    & =  \left(\sum_{k=1}^K 2\omega^{(\rm max)}(H_k)\right)^p \norm{A}\;,
\end{align} 
where in the first inequality, we use the triangular inequality and in the next ones we apply iteratively Lemmas~\ref{lemma:bound-nested-commutator} and~\ref{lemma:bounded_unitary}. The final equality is obtained using the multinomial Newton formula \ref{eq:multinomial-theorem}.
\end{proof}

Finally, we can further extend this result for any second derivative of a composition of unitary channel with respect to two arbitrary parameters. 

\begin{lemma}[Two-parameter derivative bound]\label{lemma:double_deriv_bound}
     Consider the parameterized unitary channel of the form $\mathcal{E}_{\th_1,\th_2}(A) = U(\th_1,\th_2)^\dagger A U(\th_1,\th_2)$ with two real parameters $\th_1,\th_2 \in \mathbb{R}$ where $U(\theta_1,\th_2) = \prod_{k=1}^K V_k e^{-i\th_{\mathcal{S}(k)} H_k}$, $\{H_k\}_{k=1}^K$ is a set of $K$ Hamiltonians, $\{V_k\}_{k=1}^K$ are some unitaries that do not depend on the parameters $\th_1,\th_2$ and $\mathcal{S}: \{1,\dots,K\} \rightarrow \{1,2\}$ maps a Hamiltonian index $k$ to the associated parameter $\th_{\mathcal{S}(k)}$ . Then, the infinity norm of the  $p_1^{\rm th}$, $p_2^{\rm th}$-order partial derivative of the operator $\channel_{\th_1,\th_2}(A)$ evaluated at a point $\phi_1,\phi_2 \in \mathbb{R}$ can be upper bounded as
   
    \begin{align}
        \norm{\frac{\partial^{p_1+p_2} }{\partial\th_1^{p_1} \partial\th_2^{p_2}} \mathcal{E}_{\th_1,\th_2}(A) \Big|_{\th_1 = \phi_1,\th_2=\phi_2}} \leq \left(\sum_{l \in \mathcal{S}^{-1}(1)}2 \omega^{(\rm max)}(H_l)\right)^{p_1} \left(\sum_{l \in \mathcal{S}^{-1}(2)}2 \omega^{(\rm max)}(H_l)\right)^{p_2} \norm{A} 
    \end{align}
    where we denote the unitary channel $\widetilde{\UC}_{\th,k}(A):= e^{i\th H_k}V_k^\dagger A V_k e^{-i\th H_k}$, and 
    $\widetilde{\UC}_{\phi,k}^{(a_k)}$(A) as the $a_k^{th}$-order derivative of the channel $\widetilde{\UC}_{\th,k}(A)$ 
    with respect to $\th$  evaluated at $\phi$.

\end{lemma}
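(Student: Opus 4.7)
The plan is to generalize the single-parameter argument of Lemma~\ref{lemma:bouded_unitary_product} by exploiting the fact that $\partial_{\th_1}$ and $\partial_{\th_2}$ act on disjoint groups of factors in the ordered product defining $U(\th_1,\th_2)$. First I would rewrite the channel as the composition
\begin{equation*}
\channel_{\th_1,\th_2}(A) = \widetilde{\UC}_{\th_{\mathcal{S}(K)},K} \circ \widetilde{\UC}_{\th_{\mathcal{S}(K-1)},K-1} \circ \dots \circ \widetilde{\UC}_{\th_{\mathcal{S}(1)},1}(A),
\end{equation*}
where $\widetilde{\UC}_{\th,k}(B) := e^{i\th H_k} V_k^\dagger B V_k e^{-i\th H_k}$, exactly as in Lemma~\ref{lemma:bouded_unitary_product}. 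The key structural observation is that the $k$-th factor depends only on $\th_1$ when $\mathcal{S}(k)=1$ and only on $\th_2$ when $\mathcal{S}(k)=2$, so the differential operators $\partial^{p_1}_{\th_1}$ and $\partial^{p_2}_{\th_2}$ act non-trivially on complementary subsets of factors.

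Second, I would apply the generalized Leibniz rule (Eq.~\eqref{eq:general-Leibniz-rule}) independently to each of the two disjoint groups of factors. Denoting by $\channel^{(p_1,p_2)}_{\phi_1,\phi_2}(A)$ the mixed partial derivative evaluated at $(\phi_1,\phi_2)$, this yields
\begin{equation*}
\channel^{(p_1,p_2)}_{\phi_1,\phi_2}(A) = \sum_{\substack{(a_1,\dots,a_K) \\ \sum_{k\in\mathcal{S}^{-1}(1)}a_k=p_1 \\ \sum_{k\in\mathcal{S}^{-1}(2)}a_k=p_2}} \binom{p_1}{\vec{a}_{(1)}}\binom{p_2}{\vec{a}_{(2)}}\,\widetilde{\UC}^{(a_K)}_{\phi_{\mathcal{S}(K)},K} \circ \dots \circ \widetilde{\UC}^{(a_1)}_{\phi_{\mathcal{S}(1)},1}(A),
\end{equation*}
where $\vec{a}_{(j)}$ denotes the restriction of $(a_1,\dots,a_K)$ to the indices in $\mathcal{S}^{-1}(j)$. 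I would then take the infinity norm, push it through the sum by the triangle inequality, and iteratively apply the single-channel bound $\|\widetilde{\UC}^{(a_k)}_{\phi,k}(B)\|_\infty \leq (2\omega^{(\max)}(H_k))^{a_k}\|B\|_\infty$ from Lemma~\ref{lemma:bounded_unitary} (using unitary invariance of $\|\cdot\|_\infty$ to absorb the $V_k$'s) to peel off one channel derivative at a time.

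Finally, the resulting double-constrained sum factorizes into a product of two independent multinomial sums --- one over nonnegative tuples indexed by $\mathcal{S}^{-1}(1)$ summing to $p_1$ and one over $\mathcal{S}^{-1}(2)$ summing to $p_2$ --- each of which collapses by the multinomial theorem (Eq.~\eqref{eq:multinomial-theorem}) to yield precisely the claimed factor $\bigl(\sum_{l\in\mathcal{S}^{-1}(j)}2\omega^{(\max)}(H_l)\bigr)^{p_j}$. The only delicate step is verifying that the Leibniz expansion truly factorizes across the two groups, which follows immediately from the disjoint dependence pattern imposed by $\mathcal{S}$; beyond that, the argument is two-group bookkeeping on top of the proof of Lemma~\ref{lemma:bouded_unitary_product}, and no new analytic ingredient is needed.
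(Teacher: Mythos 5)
Your proof is correct, but it takes a different route from the paper's. You stay entirely at the channel level: you write $\channel_{\th_1,\th_2}$ as the composition of the single-gate channels $\widetilde{\UC}_{\th_{\mathcal{S}(k)},k}$, apply the generalized Leibniz rule with the two disjoint index groups $\mathcal{S}^{-1}(1)$ and $\mathcal{S}^{-1}(2)$, and peel off factors with the commutator bound $\|\widetilde{\UC}^{(a_k)}_{\phi,k}(B)\|_\infty \le (2\omega^{(\max)}(H_k))^{a_k}\|B\|_\infty$ from Lemmas~\ref{lemma:bound-nested-commutator} and~\ref{lemma:bounded_unitary}, so the maximal frequencies appear directly and the two multinomial sums factorize and collapse; in effect you extend the proof of Lemma~\ref{lemma:bouded_unitary_product} to two parameter groups. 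The paper instead works at the unitary level: it first Leibniz-expands the mixed partial derivatives of $U(\th_1,\th_2)$ itself (tracking the assignments with the sets $\mathcal{A}(\alpha_1)$, $\mathcal{B}(\alpha_2)$), bounds those by $(\sum_{k\in\mathcal{S}^{-1}(1)}\|H_k\|)^{\alpha_1}(\sum_{k\in\mathcal{S}^{-1}(2)}\|H_k\|)^{\alpha_2}$ via submultiplicativity, then Leibniz-expands $U^\dagger A\,U$, sums the binomial coefficients to produce the factors of $2^{p_1}2^{p_2}$, and finally replaces $\|H_k\|$ by $\omega^{(\max)}(H_k)$ through the spectral-shift trick $H_k \mapsto H_k - \lambda_{\min}(H_k)\1$. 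Your version buys a cleaner derivation of the frequency constants (no shift argument needed, since the nested-commutator bound is shift-invariant by construction), while the paper's version makes the mixed-parameter bookkeeping very explicit at the price of the extra conversion step at the end; both yield the identical bound, and your factorization claim is indeed immediate because the two derivative operators act on disjoint factors of the ordered composition.
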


\begin{proof} We separate this proof in the different crucial steps.

\textit{Step 1: Partial derivatives of the unitary.} We start by applying the general Leibniz rule (see Eq.~\eqref{eq:general-Leibniz-rule}) to compute the $\alpha_1^{\rm th}$, $\alpha_2^{\rm th}$-order partial derivative of the unitary $U(\th_1,\th_2)$ as done in Eq.~\eqref{eq:proof-lemma-correlated-1}, in the proof of the previous lemma. We also emphasize that we use the notation for the multinomial coefficient introduced in Eq.~\eqref{eq:multinomial}. Compared to the results in Lemma \ref{lemma:bouded_unitary_product}, we have to deal here with the differentiation with respect to not a single but two distinct parameters, requiring a trickier tracking of layer indices associated to each parameter defined by the map $\mathcal{S}$. To do so, we consider the unitary $\alpha_1^{\rm th}$-order partial derivative with respect to $\th_1$  while fixing $\th_2$. We use the short-hand notation of $X^{(a_k)}$ to define the $\partial_{a_k}X$.
    \begin{align}
        \frac{\partial^{\alpha_1}}{\partial\th_1^{\alpha_1}} U(\th_1,\th_2)  &= \sum_{\vec{a}\in\mathcal{A}(\alpha_1)} \binom{\alpha_1}{\vec{a}} \prod_{k=1}^K \left(V_k e^{-i\th_{\mathcal{S}(k)} H_k} \right)^{(a_k)}\;,
    \end{align}
    where we introduced a short hand notation to represent the condition of $\vec{a} = (a_1,... a_K)$ such that $a_1 + ... + a_K=\alpha_1$ used in the previous proof, particularly in Eq.~\eqref{eq:proof-lemma-correlated-1}. Indeed, each element of $\mathcal{A}(\alpha_1)$ is a vector of $K$ positive integers such that the components with indices in $\mathcal{S}^{-1}(1)$ are summing variables and the components with indices in $\mathcal{S}^{-1}(2)$ are set to zero. The additional condition $\sum_{k \in \mathcal{S}^{-1}(1)} a_k = \alpha_1$ leads to multinomial series of $|\mathcal{S}^{-1}(1)|$ integer variables. Formally, the notation is defined as $\mathcal{A}(\alpha_1) := \{\vec{a}\in\mathbb{N}^K | a_k=0\;, \forall k\in \mathcal{S}^{-1}(2)\; {\rm and}\; \sum_{k\in \mathcal{S}^{-1}(1)} a_k = \alpha_1 \}$.

    Next, we add on top the $\alpha_2^{\rm th}$-order partial derivative with respect to $\th_2$ for each term in the sum above and apply the Leibniz rule again. To do so, we define a similar notation as $\mathcal{A}(\alpha_1)$. We define the set $\mathcal{B}(\alpha_2) := \{\vec{b}\in\mathbb{N}^K | b_k=0\;, \forall k\in \mathcal{S}^{-1}(1)\; {\rm and}\; \sum_{k\in \mathcal{S}^{-1}(2)} b_k = \alpha_2 \}$. 
    \begin{align}
        \frac{\partial^{\alpha_1+\alpha_2}}{\partial\th_1^{\alpha_1} \partial\th_2^{\alpha_2}} U(\th_1,\th_2) &= \sum_{\vec{a}\in\mathcal{A}(\alpha_1)} \binom{\alpha_1}{\vec{a}} \frac{\partial^{\alpha_2}}{\partial \th_2^{\alpha_2}}\left(\prod_{k=1}^K \left(V_k e^{-i\th_{\mathcal{S}(k)} H_k} \right)^{(a_k)}\right)\\
        &= \sum_{\vec{a}\in\mathcal{A}(\alpha_1)} \binom{\alpha_1}{\vec{a}} \sum_{\vec{b}\in\mathcal{B}(\alpha_2)} \binom{\alpha_2}{\vec{b}} \prod_{k=1}^K \left(V_k e^{-i\th_{\mathcal{S}(k)} H_k} \right)^{(a_k+b_k)}\\
        &= \sum_{\vec{a} \in \mathcal{A}(\alpha_1)}  \sum_{\vec{b}\in \mathcal{B}(\alpha_2)} \binom{\alpha_1}{\vec{a}} \binom{\alpha_2}{\vec{b}} \prod_{k=1}^K \left(V_k e^{-i\th_{\mathcal{S}(k)} H_k} \right)^{(a_k+b_k)} \label{eq:partial_deriv_unitary}\;.
    \end{align}

\textit{Step 2: Bounding the norm of these derivatives.} Now we consider the norm of the unitary partial derivative, and we upper-bound it as follows
\begin{align}
    \norm{\left.\frac{\partial^{\alpha_1+\alpha_2}}{\partial\th_1^{\alpha_1} \partial\th_2^{\alpha_2}} [U(\th_1,\th_2)]\right|_{\th_1=\phi_1,\th_2=\phi_2}}
        & \leq \sum_{\vec{a} \in \mathcal{A}(\alpha_1)}  \sum_{\vec{b}\in \mathcal{B}(\alpha_2)} \binom{\alpha_1}{\vec{a}} \binom{\alpha_2}{\vec{b}} \norm{\prod_{k=1}^K \left.\left(V_k e^{-i\th_{\mathcal{S}(k)} H_k} \right)^{(a_k+b_k)}\right|_{\th_{\mathcal{S}(k)}=\phi_{\mathcal{S}(k)}} }\\
        & \leq \sum_{\vec{a} \in \mathcal{A}(\alpha_1)}  \sum_{\vec{b}\in \mathcal{B}(\alpha_2)} \binom{\alpha_1}{\vec{a}} \binom{\alpha_2}{\vec{b}} \norm{\prod_{k=1}^K \left(V_k e^{i\phi_{\mathcal{S}(k)} H_k} H_k^{a_k+b_k} \right) }\\
        & \leq \sum_{\vec{a} \in \mathcal{A}(\alpha_1)}  \sum_{\vec{b}\in \mathcal{B}(\alpha_2)} \binom{\alpha_1}{\vec{a}} \binom{\alpha_2}{\vec{b}} \prod_{k=1}^K \norm{\left(V_k e^{i\phi_{\mathcal{S}(k)} H_k} H_k^{a_k+b_k} \right) }\\
        &   \leq \sum_{\vec{a} \in \mathcal{A}(\alpha_1)}  \sum_{\vec{b}\in \mathcal{B}(\alpha_2)} \binom{\alpha_1}{\vec{a}} \binom{\alpha_2}{\vec{b}} \prod_{k=1}^K \left(\norm{ H_k}^{a_k} \norm{ H_k }^{b_k}\right)\\
        &= \left(\sum_{\vec{a} \in \mathcal{A}(\alpha_1)} \binom{\alpha_1}{\vec{a}} \prod_{k \in \mathcal{S}^{-1}(1)} \norm{ H_k}^{a_k} \right) \left(\sum_{\vec{b} \in \mathcal{B}(\alpha_2)} \binom{\alpha_2}{\vec{b}} \prod_{ k \in \mathcal{S}^{-1}(2)} \norm{ H_k}^{b_k} \right)\\
        &= \left(\sum_{k \in \mathcal{S}^{-1}(1)} \norm{H_k}\right)^{\alpha_1} \left(\sum_{k \in \mathcal{S}^{-1}(2)} \norm{H_k}\right)^{\alpha_2} ,
        \label{eq:derivative-double-param-U-norm-proof}
\end{align}
where in the first inequality we used the triangle inequality, in the second one, we explicitly computed the $a_k+b_k^{\rm th}$-order derivative of matrix exponentiation with respect to $\th_{\mathcal{S}(k)}$ evaluated at $\phi_{\mathcal{S}(k)}$. In the last two inequalities, we used the norm sub-multiplicativity and unitary invariance. Finally, we apply the multinomial theorem (see Eq.~\eqref{eq:multinomial-theorem}) to retrieve the last equality.

\textit{Step 3: Bounding the norm of the derivatives of the channel.} Now, we express the partial derivative of the operator $\channel_{\th_1,\th_2}(A)$ as a function of the unitary $U(\th_1,\th_2)$ partial derivatives using again the general Leibniz rule as follows: 
\begin{align}
    \frac{\partial^{p_1}}{\partial\th_1^{p_1}} [\channel_{\th_1,\th_2}(A)] &= \frac{\partial^{p_1}}{\partial\th_1^{p_1}} [U^\dagger(\th_1,\th_2) A U(\th_1,\th_2)] \\
    &= \sum_{\substack{c_1,d_1\\c_1 + d_1 = p_1}} \binom{p_1}{c_1,d_1} \frac{\partial^{c_1}}{\partial\th_1^{c_1}}\left[U^\dagger(\th_1,\th_2)\right] A \frac{\partial^{d_1}}{\partial\th_1^{d_1}}\left[U(\th_1,\th_2)\right]\;, \\
    \frac{\partial^{p_2}}{\partial\th_2^{p_2}}\left[\frac{\partial^{p_1}}{\partial\th_1^{p_1}} [\channel_{\th_1,\th_2}(A)]\right] &= \sum_{\substack{c_1,d_1\\c_1 + d_1 = p_1}} \binom{p_1}{c_1,d_1}\frac{\partial^{p_2}}{\partial\th_2^{p_2}}\left[\frac{\partial^{c_1}}{\partial\th_1^{c_1}}\left[U^\dagger(\th_1,\th_2)\right] A \frac{\partial^{d_1}}{\partial\th_1^{d_1}}\left[U(\th_1,\th_2)\right]\right]\\
    &= \sum_{\substack{c_1,d_1\\c_1 + d_1 = p_1}} \binom{p_1}{c_1,d_1} \sum_{\substack{c_2,d_2\\c_2+d_2 = p_2}} \binom{p_2}{c_2,d_2}  \frac{\partial^{c_1+c_2}}{\partial\th_1^{c_1}\partial\th_2^{c_2}} [U^\dagger(\th_1,\th_2)] A \frac{\partial^{d_1+d_2}}{\partial\th_1^{d_1}\partial\th_2^{d_2}} [U(\th_1,\th_2)]\;.
\end{align}

Hence, the infinity norm of the operator $\channel_{\th_1,\th_2}(A)$ partial derivative can be upper bounded as 
\begin{align}
    \norm{\frac{\partial^{p_2}}{\partial\th_2^{p_2}}\left[\frac{\partial^{p_1}}{\partial\th_1^{p_1}} [\channel_{\th_1,\th_2}(A)]\right]} &= \norm{\sum_{\substack{c_1,d_1\\c_1 + d_1 = p_1}}
    \sum_{\substack{c_2,d_2\\c_2+d_2 = p_2}}\binom{p_1}{c_1,d_1} \binom{p_2}{c_2,d_2}   \frac{\partial^{c_1+c_2}}{\partial\th_1^{c_1}\partial\th_2^{c_2}} [U^\dagger(\th_1,\th_2)] A \frac{\partial^{d_1+d_2}}{\partial\th_1^{d_1}\partial\th_2^{d_2}} [U(\th_1,\th_2)]}\\
    & \leq \sum_{\substack{c_1,d_1\\c_1 + d_1 = p_1}}
    \sum_{\substack{c_2,d_2\\c_2+d_2 = p_2}}\binom{p_1}{c_1,d_1} \binom{p_2}{c_2,d_2}   \norm{\frac{\partial^{c_1+c_2}}{\partial\th_1^{c_1}\partial\th_2^{c_2}} [U^\dagger(\th_1,\th_2)] A \frac{\partial^{d_1+d_2}}{\partial\th_1^{d_1}\partial\th_2^{d_2}} [U(\th_1,\th_2)]}\\
    & \leq \sum_{\substack{c_1,d_1\\c_1 + d_1 = p_1}}
    \sum_{\substack{c_2,d_2\\c_2+d_2 = p_2}}\binom{p_1}{c_1,d_1} \binom{p_2}{c_2,d_2}   \norm{\frac{\partial^{c_1+c_2}}{\partial\th_1^{c_1}\partial\th_2^{c_2}} [U^\dagger(\th_1,\th_2)]} \norm{ \frac{\partial^{d_1+d_2}}{\partial\th_1^{d_1}\partial\th_2^{d_2}} [U(\th_1,\th_2)]} \norm{A}\\
    & \leq \sum_{\substack{c_1,d_1\\c_1 + d_1 = p_1}}
    \sum_{\substack{c_2,d_2\\c_2+d_2 = p_2}}\binom{p_1}{c_1,d_1} \binom{p_2}{c_2,d_2}  \left(\sum_{k \in \mathcal{S}^{-1}(1)} \norm{H_k}\right)^{c_1+d_1} \left(\sum_{k \in \mathcal{S}^{-1}(2)} \norm{H_k}\right)^{c_2+d_2}\\
    &= \left(\sum_{k \in \mathcal{S}^{-1}(1)} \norm{H_k}\right)^{p_1} \left(\sum_{k \in \mathcal{S}^{-1}(2)} \norm{H_k}\right)^{p_2}\sum_{\substack{c_1,d_1\\c_1 + d_1 = p_1}}
    \sum_{\substack{c_2,d_2\\c_2+d_2 = p_2}}\binom{p_1}{c_1,d_1} \binom{p_2}{c_2,d_2}  \\
    &= \left(\sum_{k \in \mathcal{S}^{-1}(1)} 2\norm{H_k}\right)^{p_1} \left(\sum_{k \in \mathcal{S}^{-1}(2)} 2\norm{H_k}\right)^{p_2}\label{eq:double_deriv_norm}\;,
\end{align}
where we first used triangle inequality, the second inequality is due to sub-multiplicativity of the norm ($\|A B\|_{\infty} = \|A\|_{\infty} \|B\|_{\infty}$), the last inequality is obtained by applying Eq.~\eqref{eq:derivative-double-param-U-norm-proof}, and we retrieve the last equality using multinomial theorem introduced in Eq.~\eqref{eq:multinomial-theorem}.
Finally, $\norm{H_k}$ can be simply replaced in Eq.~\eqref{eq:double_deriv_norm} by $\omega^{(\rm max)}(H_k) = |\lambda_{\max}(H_k)-\lambda_{\min}(H_k)|$ by noticing that the operator $\channel_{\th_1,\th_2}(A)$ can be written as 
\begin{align}
    \channel_{\th_1,\th_2}(A) &= \left(\prod_{k=1}^K V_k e ^{-i \th_{\mathcal{S}(k)} H_k}\right)^\dagger A \left(\prod_{k=1}^K V_k e ^{-i \th_{\mathcal{S}(k)} H_k}\right)\\
    &= \left(\prod_{k=1}^K V_k e ^{-i \th_{\mathcal{S}(k)} (H_k - \lambda_{\rm min}(H_k)\1)}\right)^\dagger A \left(\prod_{k=1}^K V_k e ^{-i \th_{\mathcal{S}(k)} (H_k - \lambda_{\rm min}(H_k)\1)}\right)\;,
\end{align}
where we use the simple observation that $(- \lambda_{\rm min}(H_k)\1)$ commutes with all operators, and that the maximal frequency can be written as $\omega^{(\max)}(H_k) = \| H_k -\lambda_{\min}(H_k) \1 \|_{\infty} $.

\end{proof}


\subsubsection{Lower-bounds on the variance for one variable unitary channels}
In this section we apply the results of Section~\ref{sec:lower_bound_variance_single_variable} to the variance of functions with unitary channels. We recall that in the previous Section~\ref{sec:bounded_derivatives_param_unitary_channels} we proved that unitary channels have bounded derivatives, and thus we can use this result and directly apply it to the results of Section~\ref{sec:lower_bound_variance_single_variable} to derive the following lower-bounds. We start lower-bounding a unitary channel with a single generator.

\begin{corollary}[Variance lower bound for a unitary channel with a single generator]\label{cor:var_LB_1param_1layer}
  Consider a loss function of the form $\mathcal{L}(\theta) = \Tr[\rho \; \UC_{\theta} (V^{\dagger}OV)]$  where $\UC_{\theta}(\cdot):= e^{i\th H}(\cdot)e^{-i\th H} $ denotes the unitary superoperator describing the evolution of Hamiltonian $H$ for time $\theta$. Further  consider uniformly sampling the parameter $\theta$ from the interval of length $2r$ centered around $\phi$ i.e., $\th \sim \uni(\phi, r)$. Provided that $r \leq \frac{3}{4 \omega^{(\rm max)}(H)}$,
     the variance of  $\mathcal{L}(\theta)$ can be lower bounded as 

    \begin{equation}
        \Var_{\th \sim \uni(\phi, r)}[\mathcal{L}(\theta)] \geq \frac{r^4}{45} \Tr[\rho \UC_{\phi}^{(2)}(V^{\dagger}O V) ]^2 -  \frac{2 (\omega^{(\rm max)})^2(H) \norm{[H,[H,e^{i\phi H} V^{\dagger}O V e^{-i\phi H}]]}^2}{135} r^6
    \end{equation}
    where $\UC_{\phi}^{(2)}$ denotes the second derivative of the superoperator $\UC_{\theta}$ with respect to $\theta$ evaluated at $\theta=\phi$.
\end{corollary}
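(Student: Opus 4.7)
The plan is to recognise that $\mathcal{L}(\theta)$ is a smooth single-variable function and directly apply Proposition~\ref{prop:variance_LB_bounded} to the shifted function $g(\eta) := \mathcal{L}(\phi+\eta)$, whose variance over $\eta \sim \uni(0,r)$ coincides with $\Var_{\theta \sim \uni(\phi,r)}[\mathcal{L}(\theta)]$. The core task reduces to identifying appropriate constants $\alpha$ and $\gamma$ that control the bounded-even-derivative hypothesis $|g^{(2p)}(0)| \leq \alpha \gamma^{2p}$ of that proposition.

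First I would compute $g^{(k)}(0) = \left.\tfrac{d^k}{d\theta^k}\mathcal{L}(\theta)\right|_{\theta=\phi} = \Tr[\rho\, \UC_\phi^{(k)}(V^\dagger O V)]$ by linearity of the trace and a trivial change of variable. In particular $g^{(2)}(0) = \Tr[\rho\, \UC_\phi^{(2)}(V^\dagger O V)]$, which will produce the leading $r^4/45$ contribution appearing in the claimed bound.

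Next, to control the higher even derivatives, I would combine H\"older's inequality $|\Tr[\rho X]| \leq \|\rho\|_1 \|X\|_\infty = \|X\|_\infty$ with the second part of Lemma~\ref{lemma:bounded_unitary} (Eq.~\eqref{eq:bound-p-order-grad-A}) applied with $A = V^\dagger O V$. This yields, for every $p \geq 1$,
\[
|g^{(2p)}(0)| \;\leq\; \bigl(2\omega^{(\mathrm{max})}(H)\bigr)^{2p}\,\frac{\|[H,[H,V^\dagger O V]]\|_\infty}{4\bigl(\omega^{(\mathrm{max})}(H)\bigr)^2},
\]
which identifies $\gamma = 2\omega^{(\mathrm{max})}(H)$ and $\alpha = \|[H,[H,V^\dagger O V]]\|_\infty/(4(\omega^{(\mathrm{max})}(H))^2)$. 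One small observation is that $\|[H,[H,e^{i\phi H} V^\dagger O V e^{-i\phi H}]]\|_\infty = \|[H,[H,V^\dagger O V]]\|_\infty$, because $H$ commutes with $e^{\pm i\phi H}$ and the operator norm is unitarily invariant, so the two forms appearing in the claimed statement are interchangeable.

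Finally, I would substitute these constants into Proposition~\ref{prop:variance_LB_bounded}. The radius condition $r \leq 3/(2\gamma)$ becomes exactly $r \leq 3/(4\omega^{(\mathrm{max})}(H))$ as stated, and the remainder coefficient evaluates to
\[
\frac{\alpha^2\gamma^6}{270} \;=\; \frac{4(\omega^{(\mathrm{max})}(H))^2 \|[H,[H,V^\dagger O V]]\|_\infty^2}{270} \;=\; \frac{2(\omega^{(\mathrm{max})}(H))^2 \|[H,[H,V^\dagger O V]]\|_\infty^2}{135},
\]
reproducing the $r^6$ correction. The derivation is thus essentially a routine substitution, and I do not anticipate any substantive obstacle beyond carefully matching constants, since the genuinely non-trivial analytic content has already been absorbed into Lemma~\ref{lemma:bounded_unitary} and Proposition~\ref{prop:variance_LB_bounded}.
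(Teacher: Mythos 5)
Your proposal is correct and follows essentially the same route as the paper: both reduce the claim to Proposition~\ref{prop:variance_LB_bounded} with $\gamma = 2\omega^{(\rm max)}(H)$ and $\alpha$ given by the double-commutator norm via Lemma~\ref{lemma:bounded_unitary}, the only cosmetic difference being that the paper absorbs $e^{-i\phi H}$ into the fixed unitary (working with $\tilde V(\phi)=Ve^{-i\phi H}$) whereas you work directly at $\phi$ and note that $\norm{[H,[H,e^{i\phi H}V^\dagger O V e^{-i\phi H}]]}=\norm{[H,[H,V^\dagger O V]]}$ by unitary invariance — an equivalent manipulation. Your constant bookkeeping ($\alpha^2\gamma^6/270 = 2(\omega^{(\rm max)}(H))^2\norm{[H,[H,V^\dagger OV]]}^2/135$) and the radius condition both check out.
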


\begin{proof}

Our strategy is to invoke Proposition~\ref{prop:variance_LB_bounded} by showing that the loss function $\LC(\th) = \Tr[\rho \; \UC_{\theta} (V^{\dagger}OV)]$ satisfies the assumption on the bounded derivatives in Eq.~\eqref{eq:assumption-bounded-grad} as well as identifying the associated $\alpha$ and $\gamma$. In particular, we prove below that in this case $\alpha =\frac{\norm{[H,[H,e^{i\phi H} V^{\dagger}O V e^{-i\phi H}]]}}{4(\omega^{(\rm max)}(H))^2}$ and $\gamma = 2 \omega^{(\rm max)}(H)$.
    
First, we note that our proof can be carried out around $\phi=0$ without any loss of generality. Specifically, one can absorb $e^{-i\phi H}$ within the non parametrized unitary $V$
    \begin{align}
        \mathcal{L}(\theta + \phi) &:= \Tr[\rho \UC_{\theta+ \phi}(V^{\dagger}OV)]\\
        &=  \Tr[\rho \UC_{\theta}(e^{i\phi H}V^{\dagger}OVe^{-i\phi H})]\\
        &= \Tr[\rho \UC_{\theta}(\tilde{V}^{\dagger}(\phi)O \tilde{V}(\phi))]
    \end{align}
where we introduce the unitary $\tilde{V}(\phi) = Ve^{-i\phi H}$ in the final equality. 

Now, we can show that the loss function $\mathcal{L}(\th+\phi)$ has bounded derivatives of order $k\geq 2$ at $\th=0$.
\begin{align}
    \left|\mathcal{L}^{(k)}(\phi)
\right| &= \left|\Tr\left[ \rho \; \UC_0^{(k)}(\tilde{V}(\phi)^{\dagger}O \tilde{V}(\phi))\right]\right|\\
    &\leq \|\rho\|_1 \norm{\UC_0^{(k)}(\tilde{V}^{\dagger}(\phi)O \tilde{V}(\phi))}\\
    &= \norm{\UC_0^{(k)}(\tilde{V}^{\dagger}(\phi)O \tilde{V}(\phi))} \\
    & \leq (2\omega^{(\rm max)}(H))^k \frac{\norm{[H,[H,\tilde{V}^{\dagger}(\phi)O \tilde{V}(\phi)]]}}{4(\omega^{(\rm max)}(H))^2}
\end{align}
where in the first inequality, we use Holder inequality. The second equality is due to $\| \rho \|_1 = 1$ for any state $\rho$. To reach the last inequality, we apply Lemma~\ref{lemma:bounded_unitary}. 
Hence, the bounded derivative assumption in Eq.~\eqref{eq:assumption-bounded-grad} is satisfied with $\alpha =\frac{\norm{[H,[H,\tilde{V}^{\dagger}(\phi)O \tilde{V}(\phi)]]}}{4(\omega^{(\rm max)}(H))^2}$ and $\gamma = 2 \omega^{(\rm max)}(H)$, allowing us to invoke Proposition~\ref{prop:variance_LB_bounded}. Consequently, we obtain the promised bound 

\begin{equation}
     \Var[\mathcal{L}(\theta)] \geq \frac{r^4}{45} \Tr[\rho \; \UC_{\phi}^{(2)}(\tilde{V}^{\dagger}(\phi)O \tilde{V}(\phi))]^2 - \frac{2 (\omega^{(\rm max)})^2(H) \norm{[H,[H,\tilde{V}^{\dagger}(\phi)O \tilde{V}(\phi)]]}^2}{135} r^6
\end{equation} 
provided that the perturbation $r$ obeys the condition in Eq.~\eqref{eq:prop_r_remainder_cond}.

\end{proof}

If we further assume some locality assumptions on the generator and the observable as well as considering that the variance is computed over the interval centered around zero, we show a tighter  variance lower bound obtained in the following Corollary.

\begin{corollary}[Variance lower bound for a unitary channel with a single generator under locality assumptions]\label{cor:var_LB_1param_1layer_local}
  Consider a loss function of the form $\mathcal{L}(\theta) = \Tr[\rho \; \UC_{\theta} (O)]$  where $\UC_{\theta}(\cdot):= e^{i\th H}(\cdot)e^{-i\th H} $ denotes the unitary superoperator describing the evolution of a geometrically local Hamiltonian $H$ for time $\theta$. Precisely, the Hamiltonian $H$ can be written as $H = \sum_{j=1}^N h_j$
  where $h_j$ are local Pauli strings acting on a constant number of neighboring qubits. Let us also assume that the observable $O$ is a sum of $N_O$ geometrically local Pauli terms $P_i$, i.e. $O = \sum_{i=1}^{N_O} P_i$. Further  consider uniformly sampling the parameter $\theta$ from the interval of length $2r$ centered around zero i.e., $\th \sim \uni(0, r)$. Provided that $r \leq \frac{3}{4 s}$ where $s$ is a constant that depends on the generator $H$ defined as
  \begin{equation}
        s(O) = \max_{1 \leq i \leq N_O} |\{1\leq j \leq N \;, \{h_j,P_i\}=0  \}|\;,
    \end{equation}
     the variance of  $\mathcal{L}(\theta)$ can be lower bounded as 

    \begin{equation}
        \Var_{\th \sim \uni(0, r)}[\mathcal{L}(\theta)] \geq \frac{r^4}{45} \Tr[\rho \UC_{0}^{(2)}(O) ]^2 -  \frac{32 N_O^2 s^6(O) }{135} r^6
    \end{equation}
    where $\UC_{0}^{(2)}$ denotes the second derivative of the superoperator $\UC_{\theta}$ with respect to $\theta$ evaluated at $\theta=0$.
\end{corollary}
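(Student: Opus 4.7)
The plan is to mirror the proof of Corollary~\ref{cor:var_LB_1param_1layer} by invoking Proposition~\ref{prop:variance_LB_bounded} on the single-variable loss $\mathcal{L}(\theta)$ centered at $\phi=0$, but with one key refinement: wherever the previous corollary bounded nested commutators via Lemma~\ref{lemma:bound-nested-commutator} (producing the extensive factor $\omega^{(\max)}(H)$ which is generically proportional to the number of terms in $H$), I would instead invoke the locality-aware Lemma~\ref{lemma:bound_nested_com_local}, which yields the system-size-independent bound $(2s(O))^k$. This replacement is precisely what trades the weak generic constant for the $n$-independent constant $s(O)$ appearing in the statement.

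First I would use Lemma~\ref{lemma:bounded_unitary} to write $\UC_0^{(k)}(O) = i^k [(H)^k, O]$, then decompose $O = \sum_{i=1}^{N_O} P_i$ into geometrically local Paulis so that Lemma~\ref{lemma:bound_nested_com_local} applies to each $P_i$ individually. A triangle inequality on the sum over $i$, together with Hölder's inequality to peel off $\rho$, would then give the uniform even-derivative bound
\begin{equation}
    |\mathcal{L}^{(k)}(0)| \;\leq\; \sum_{i=1}^{N_O} \|[(H)^k,P_i]\|_\infty \;\leq\; N_O \, (2 s(O))^k \, ,
\end{equation}
valid for every $k\geq 1$, where I use $s(P_i) \leq s(O)$ by definition of $s(O)$.

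Next I would read off the constants required by Proposition~\ref{prop:variance_LB_bounded}, namely $\alpha = N_O$ and $\gamma = 2s(O)$. With these identifications the regime constraint $r \leq 3/(2\gamma)$ of the proposition collapses to exactly the stated $r \leq 3/(4 s(O))$; the leading term $\tfrac{r^4}{45}[\mathcal{L}^{(2)}(0)]^2 = \tfrac{r^4}{45}\Tr[\rho \UC_0^{(2)}(O)]^2$ is inherited verbatim; and the remainder $\alpha^2 \gamma^6 r^6 / 270$ simplifies, via a routine arithmetic step, to $32 N_O^2 s^6(O) r^6 / 135$, matching the announced bound.

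The only delicate point, and hence the main obstacle to check carefully, is that the hypotheses of Lemma~\ref{lemma:bound_nested_com_local} are satisfied term by term for each $P_i$. This requires that the pieces $\{h_j\}$ in the decomposition of $H$ mutually commute and that each $P_i$ be a geometrically local Pauli string, both of which are assumed in the corollary. Under these assumptions the anticommuting set $\mathcal{C}(P_i)$ from the lemma has $|\mathcal{C}(P_i)| \leq s(O)$ uniformly in $i$ and in the system size, so that $\gamma$ is genuinely $n$-independent and the remainder carries no hidden scaling in $n$. Everything else is the same algebraic bookkeeping as in Corollary~\ref{cor:var_LB_1param_1layer}.
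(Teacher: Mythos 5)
Your proposal is correct and follows essentially the same route as the paper: the paper likewise writes the derivatives as nested commutators via Lemma~\ref{lemma:bounded_unitary}, decomposes $O=\sum_i P_i$, applies the triangle and H\"older inequalities together with Lemma~\ref{lemma:bound_nested_com_local} to obtain $|\mathcal{L}^{(k)}(0)|\leq N_O(2s(O))^k$, and then invokes Proposition~\ref{prop:variance_LB_bounded} with $\alpha=N_O$, $\gamma=2s(O)$, yielding the same condition $r\leq 3/(4s(O))$ and remainder $32N_O^2 s^6(O)r^6/135$.
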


\begin{proof}
    Similarly to the proof of the previous Corollary \ref{cor:var_LB_1param_1layer}, our strategy consists in invoking Proposition \ref{prop:variance_LB_bounded} by showing that the loss function $\mathcal{L}(\thv)$ satisfies the bounded derivatives assumption in Eq.~\eqref{eq:assumption-bounded-grad}. The key difference compared to the previous Corollary \ref{cor:var_LB_1param_1layer}  is that  the bound on the derivatives is significantly smaller under  locality assumptions on the generator and the observable. In particular, we show that $\alpha= N_O $ and $\gamma = 2s(O)$.
    
    To do so, we first upper bounding the loss derivatives with a simple nested commutator and then invoke the upper bound from Lemma \ref{lemma:bound_nested_com_local}. Precisely, we get
    \begin{align}
    \left|\mathcal{L}^{(k)}(0)
    \right| &= \left|\Tr\left[ \rho \; \UC_0^{(k)}(O)\right]\right|\\
    &\leq \|\rho\|_1 \norm{\UC_0^{(k)}(O)}\label{eq:random_eq_imtiredofnames1}\\
    &= \norm{\UC_0^{(k)}(O)}\label{eq:random_eq_imtiredofnames2} \\
    &= \norm{[(H)^k,\sum_{i=1}^{N_O} P_i]}\\
    & \leq \sum_{i=1}^{N_O} \norm{[(H)^k,P_i]}\\
    &\leq N_O \max_{1 \leq i \leq N_O} \norm{[(H)^k,P_i]}\label{eq:random_eq_imtiredofnames3}\\
    & \leq N_O (2s(O))^k \;.
    \end{align}

where we used Holder's inequality in Eq.~\eqref{eq:random_eq_imtiredofnames1}. In Eq.~\eqref{eq:random_eq_imtiredofnames2} we use that the 1-norm of a density matrix is 1. In the next last equality we use Eq.~\eqref{eq:p-order-grad-A} from Lemma \ref{lemma:bounded_unitary}. In the first inequality we use the triangular inequality to say that the norm of a sum of terms is upperbonded by the sum of norms. In Eq.~\eqref{eq:random_eq_imtiredofnames3} we use that $\sum_i^n x_i \leq n \max x_i$ and in the last inequality we use Eq.~\eqref{eq:lemma_bound_com_local} from Lemma \ref{lemma:bound_nested_com_local}.

Hence, the bounded derivative assumption in Eq.~\eqref{eq:assumption-bounded-grad} is satisfied with $\alpha =N_O$ and $\gamma = 2s$, allowing us to invoke Proposition~\ref{prop:variance_LB_bounded}. Consequently, we obtain the promised bound 

      \begin{equation}
        \Var_{\th \sim \uni(0, r)}[\mathcal{L}(\theta)] \geq \frac{r^4}{45} \Tr[\rho \UC_{0}^{(2)}(O) ]^2 -  \frac{32 N_O^2 s^6(O) }{135} r^6
    \end{equation}
 
provided that the perturbation $r$ obeys the condition in Eq.~\eqref{eq:prop_r_remainder_cond}.

\end{proof}

Now, we extend the general variance lower bound for a single generator in Corollary \ref{cor:var_LB_1param_1layer} to multiple generators, still dependent on a single parameter.

\begin{corollary}[Variance lower bound for a unitary channel with many generators]\label{cor:var_LB_1param_layers}
 Consider a loss function of the form $\LC(\th) = \Tr\left[\rho \; \EC_{\th}(O) \right]$ with a state $\rho$, an observable $O$ and the parameterized channel $\mathcal{E}_{\th}(\cdot)=\left(\prod_{k=1}^K V_k e^{-i \th H_k} \right)^\dagger (\cdot) \left(\prod_{k=1}^K V_k e^{-i \th H_k} \right)$  where the Hamiltonians $\{H_k\}_{k=1}^K$ do not necessarily commute. We consider uniformly sampling the parameter $\theta$ from the interval of length $2r$ centered around $\phi$ i.e., $\th \sim \uni(\phi, r)$.

Provided that $r \leq \frac{3}{4 \sum_{i=1}^K \omega^{(\rm max)}(H_i)}$, the variance of  $\mathcal{L}(\theta)$ can be lower bounded as 
    \begin{equation}
        \Var_{\th \sim [-r,r]}[\mathcal{L}(\theta)] \geq \frac{r^4}{45} \Tr[\rho \; \mathcal{E}_{\phi}^{(2)}(O) ]^2 - \frac{32 \left(\sum_{i=1}^K \omega^{(\rm max)}(H_i)\right)^6 \norm{O}^2}{135} r^6
    \end{equation}
    where $\mathcal{E}_{\phi}^{(2)}$ denotes the second derivative of the superoperator $\mathcal{E}_{\theta}$ with respect to $\theta$ evaluated at $\th=\phi$.
\end{corollary}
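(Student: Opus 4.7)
The plan is to mirror the strategy used in the proof of Corollary \ref{cor:var_LB_1param_1layer}, namely to invoke Proposition~\ref{prop:variance_LB_bounded} after verifying that the single-variable function $\mathcal{L}(\theta) = \Tr[\rho\,\mathcal{E}_\theta(O)]$ has polynomially bounded even derivatives of all orders. The only substantive generalization is that the parameterized channel is now a composition of $K$ layers sharing the same parameter, so we will lean on Lemma~\ref{lemma:bouded_unitary_product} (the multi-generator single-parameter derivative bound) in place of Lemma~\ref{lemma:bounded_unitary}.

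First, without loss of generality I would shift the expansion point to zero by absorbing $e^{-i\phi H_k}$ into the fixed unitaries $V_k$: setting $\widetilde{V}_k(\phi) = V_k e^{-i\phi H_k}$ (reading left to right through the product) we can rewrite $\mathcal{L}(\theta+\phi)$ as the same type of loss but evaluated at shift $\phi = 0$. This step is important because Proposition~\ref{prop:variance_LB_bounded} is stated for the uniform distribution around zero and requires derivative bounds at the expansion point; after the shift, those derivatives at $0$ correspond to the original derivatives at $\phi$, which is exactly what appears in the statement.

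Next, I would bound $|\mathcal{L}^{(k)}(\phi)|$ uniformly in $k$ by combining H\"older's inequality with the operator bound from Lemma~\ref{lemma:bouded_unitary_product}:
\begin{equation}
    |\mathcal{L}^{(k)}(\phi)| \;=\; \bigl|\Tr[\rho\,\mathcal{E}_\phi^{(k)}(O)]\bigr|
    \;\leq\; \|\rho\|_1 \, \bigl\|\mathcal{E}_\phi^{(k)}(O)\bigr\|_\infty
    \;\leq\; \Bigl(\sum_{k=1}^K 2\,\omega^{(\mathrm{max})}(H_k)\Bigr)^{k} \|O\|_\infty\,.
\end{equation}
This shows that the bounded-derivative hypothesis of Proposition~\ref{prop:variance_LB_bounded} holds with the identifications $\alpha = \|O\|_\infty$ and $\gamma = 2\sum_{k=1}^K \omega^{(\mathrm{max})}(H_k)$. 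The smallness condition $r \leq 3/(2\gamma)$ required by that proposition then translates exactly to the stated hypothesis $r \leq 3/\bigl(4\sum_i \omega^{(\mathrm{max})}(H_i)\bigr)$.

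Finally, I would plug these values of $\alpha$ and $\gamma$ into the conclusion of Proposition~\ref{prop:variance_LB_bounded}, noting that $\gamma^6 = 64\bigl(\sum_i \omega^{(\mathrm{max})}(H_i)\bigr)^6$, so that $\alpha^2\gamma^6/270 = 32\bigl(\sum_i \omega^{(\mathrm{max})}(H_i)\bigr)^6\|O\|_\infty^2/135$, and the leading term $r^4[\mathcal{L}^{(2)}(\phi)]^2/45$ is precisely $r^4 \Tr[\rho\,\mathcal{E}_\phi^{(2)}(O)]^2/45$. This yields the claimed inequality. The only conceivable subtlety is a clean bookkeeping of the reduction to $\phi=0$, in particular checking that the composed unitary with the absorbed $e^{-i\phi H_k}$ factors still has the product structure that Lemma~\ref{lemma:bouded_unitary_product} requires; this follows because each factor $V_k e^{-i\theta H_k}$ is simply modified to $\widetilde{V}_k(\phi) e^{-i\theta H_k}$, preserving the generators $H_k$ and hence the maximal frequencies $\omega^{(\mathrm{max})}(H_k)$ on which both the $\gamma$ and the condition on $r$ depend.
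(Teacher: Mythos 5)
Your proposal is correct and follows essentially the same route as the paper's proof: absorb $e^{-i\phi H_k}$ into the fixed unitaries to reduce to an expansion around zero, bound all derivatives via H\"older's inequality together with Lemma~\ref{lemma:bouded_unitary_product} (giving $\alpha=\norm{O}$ and $\gamma=2\sum_k\omega^{(\rm max)}(H_k)$), and then invoke Proposition~\ref{prop:variance_LB_bounded}, whose smallness condition and remainder term reproduce exactly the stated hypothesis on $r$ and the $\tfrac{32}{135}$ prefactor.
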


\begin{proof}
Our proof strategy is the same as the proof in Corollary~\ref{cor:var_LB_1param_1layer}, which is to show that with the loss function $\LC(\th) = \Tr\left[ \EC_{\th}(O) \rho \right]$ satisfying the assumption in Eq.~\eqref{eq:assumption-bounded-grad} and consequently invoke Proposition~\ref{prop:variance_LB_bounded}. 

First, note that, identically to the proof step in Corollary~\ref{cor:var_LB_1param_1layer}, without loss of generality we can consider the centered point around $\phi = 0$. This is since we can absorb $e^{-i \phi H_k}$ to a non-parametrized $V_k$ and redefine a non-parametrized unitary. 

Next, we show that the loss function has bounded derivatives as follows
\begin{align}
    \left|\mathcal{L}^{(p)}(\phi)\right|&= \left|\Tr[\rho \; \channel_\phi^{(p)}(O)] \right|\\
    & \leq \|\rho\|_1 \norm{\channel_\phi^{(p)}(O)}\\
    & \leq \left(\sum_{k=1}^K 2\omega^{(\rm max)}(H_k)\right)^p \norm{O} \;, 
\end{align} 
where the first inequality is by Holder inequality and the second inequality is due to $\| \rho \|_1 \leq 1$ for any state $\rho$, and finally we use Lemma~\ref{lemma:bouded_unitary_product} to reach the last inequality. One can see that the assumption in Eq.~\eqref{eq:assumption-bounded-grad} for employing Proposition~\ref{prop:variance_LB_bounded} is satisfied with $\alpha = \norm{O}$ and $\gamma = \sum_{k=1}^K 2\omega^{(\rm max)}(H_k)$. Hence, provided that the perturbation $r$ satisfies
\begin{equation}
    r \leq \frac{3}{4 \sum_{k=1}^K \omega^{(\rm max)}(H_k)} \;,
\end{equation} 
the variance lower bound of the loss can be expressed as
\begin{align}
    \Var_{\th \sim \uni(\phi,r)}[\mathcal{L}(\theta)] 
    &\geq \frac{r^4}{45} \Tr[\rho \; \mathcal{E}_{0}^{(2)}(A) ]^2 - \frac{2 \left(\sum_{i=1}^K \omega^{(\rm max)}(H_i)\right)^2 \norm{O}^2}{135} r^6 \;,
\end{align}
which completes the proof.

\end{proof}

\section{Approximate variance bound}\label{app:approximate-variance-bound}

 In this appendix we derive an approximate expression for the variance of a generic multivariate function $\LC(\thv)$ when its parameters $\thv\in\mathbb{R}^{\nparams}$ are sampled within a small hypercube. Concretely, we expand $\LC$ in a Taylor series around $\vec{0}$ and work up to third order in $\thv$. Our goal is to illustrate how derivatives multiplied by the leading orders in $x$ contribute to the overall variance in a small neighborhood around $\vec{0}$. 

We begin by writing the Taylor series of $\LC(\thv)$ around $\thv=\vec{0}$:
\begin{align}
    \LC(\thv) & = \sum_{k = 0}^{\infty} \sum_{i_1, i_2, ...,i_k}^\nparams \frac{1}{k!} \left( \frac{\partial^k  \LC(\thv)}{\partial \theta_{i_1} \partial \theta_{i_2} ... \partial \theta_{i_k}} \right)\bigg|_{\vec{\theta} = \vec{0}} (\theta_{i_1} \theta_{i_2} ... \theta_{i_k}) \\
    & = \LC(\vec{0}) + \sum_{i=1}^\nparams \LC^{(1)}_i \theta_i + \sum_{i,j=1}^\nparams \frac{1}{2} \LC^{(2)}_{ij} \theta_i \theta_j + \sum_{i,j,k=1}^\nparams \frac{1}{6} \LC^{(3)}_{ijk} \theta_i \theta_j \theta_k+ \OC(\theta_i\theta_j\theta_k\theta_l) \;,
\end{align}
where we defined partial derivatives evaluated at $\vec{0}$ as
\begin{equation}
    \LC^{(k)}_{i_1i_2...i_k} = \left( \frac{\partial^k  \LC(\thv)}{\partial \theta_{i_1} \partial \theta_{i_2} ... \partial \theta_{i_k}} \right)\bigg|_{\vec{\theta} = \vec{0}}\;.
\end{equation}
Note that each $\LC^{(k)}_{i_1i_2...i_k}$ is independent of $\thv$. Moreover, it is invariant under any permutations of its sub-indices i.e. for any permutation $\pi$ of $k$ elements we have $\LC^{(k)}_{i_1i_2...i_k}=\LC^{(k)}_{i_{\pi(1)}i_{\pi(2)}...i_{\pi(k)}}$ (equality of mixed partial derivatives).  
Note that considering the expansion around $\vec{0}$ is simply a way to ease notations and does not lose any generality since if we are interested in a perturbation around a fixed point $\vec{\phi}$ we can always redefine $\thv - \vec{\phi} \rightarrow \thv$.

We are interested in the variance with respect to a small parameter region around the fixed point. In particular, let us denote the small region of interest $\vol(\vec{0},r) = \{ \thv \;|\; \theta_i \in [-r , r] \}$ with $r$ being the perturbation strength and  $\uni(\vec{0}, r)$ denote the uniform distribution where each parameter is sampled from this region. 
Before jumping into the variance calculation of the function itself, we note the following which will be useful:
\begin{align}
\label{eq:moment-single-param-proof}
    \Ebb_{\theta_i \sim \uni(\vec{0}, r)} \left[ \theta_i ^\mindex \right] = \left\{ 
    \begin{array}{ll}
         \frac{r^\mindex}{\mindex+1} \;\;&{\rm if\;}\mindex{\rm\;even} \;, \\
         0 \;\;&{\rm if\;}\mindex{\rm\;odd}\;.
    \end{array}
    \right.
\end{align}

Now, the variance of the function can be \textit{approximately} computed as 
\begin{align}
    \Var_{\thv \sim \uni(\vec{0}, r)} \left[ \LC(\thv) \right] & = \Var_{\thv \sim \uni(\vec{0}, r)} \left[  \LC(\vec{0}) + \sum_{i=1}^\nparams \LC^{(1)}_i \theta_i + \sum_{i,j=1}^\nparams \frac{1}{2} \LC^{(2)}_{ij} \theta_i \theta_j + \sum_{i,j,k=1}^\nparams \frac{1}{6} \LC^{(3)}_{ijk} \theta_i \theta_j \theta_k+ \OC(\theta_i\theta_j\theta_k\theta_l) \right] \\
    &=\Var_{\thv \sim \uni(\vec{0}, r)} \left[   \sum_{i=1}^\nparams \LC^{(1)}_i \theta_i + \sum_{i,j=1}^\nparams \frac{1}{2} \LC^{(2)}_{ij} \theta_i \theta_j + \sum_{i,j,k=1}^\nparams \frac{1}{6} \LC^{(3)}_{ijk} \theta_i \theta_j \theta_k+ \OC(\theta_i\theta_j\theta_k\theta_l) \right]\\
    & = \Var_{\thv}\left[ \sum_i \LC_i^{(1)} \theta_i \right] + \Var_{\thv}\left[ \sum_{i,j} \frac{1}{2} \LC^{(2)}_{ij} \theta_i \theta_j \right] +2\text{Cov}_{\thv}\left[\sum_i \LC_i^{(1)} \theta_i,\, \sum_{i,j,k} \frac{1}{6} \LC^{(3)}_{ijk} \theta_i \theta_j \theta_k \right]+ \OC(r^6) \\
    & = \sum_i \LC_i^{(1) \, 2}\,  \Var_{\thv}\left[ \theta_i \right]+ \frac{1}{4}\sum_{i\neq j}\LC_{ij}^{(2)}(\LC_{ij}^{(2)}+\LC_{ji}^{(2)})\, \Var_{\thv}\left[ \theta_i \theta_j \right]+ \frac{1}{4}\sum_{i}\LC_{ii}^{(2)\, 2}\, \Var_{\thv}\left[ \theta_i^2  \right] \\ 
    &\; + \frac{1}{3}\sum_{i}\LC_i^{(1)}\LC_{iii}^{(3)}\text{Cov}_{\thv}[\theta_i,\theta_i^3]+\frac{1}{3}\sum_{i\neq j} \LC_i^{(1)}\left(\LC_{ijj}^{(3)}+\LC_{jij}^{(3)}+\LC_{jji}^{(3)}\right)\text{Cov}_{\thv}[\theta_i,\theta_i\theta_j^2]+ \OC(r^6)\;,
    \end{align}
where the second equality is obtained by removing the constant term which does not affect the variance. In the third equality, we used the fact that odd order of perturbation terms vanishes and we explicitly kept the second and fourth order terms to get a remainder of order $\OC(r^6)$. In the fourth equality, we removed the terms that cancel due independence between parameters and distribution symmetry e.g. $\text{Cov}_{\thv}[\theta_i,\theta_j]=\delta_{ij}\Var_{\thv}\left[ \theta_i \right]$ or $\text{Cov}_{\thv}[\theta_i\theta_j,\theta_k\theta_l]$ is non-vanishing if either $k=i\neq j=l$ or $k=j\neq i=l$ or $i=j=k=l$.
Now, let us compute expectations and variances involved using Eq.~\eqref{eq:moment-single-param-proof} and simplify the expression using equality of mixed partials mentioned above (e.g. $\LC_{ij}^{(2)}=\LC_{ji}^{(2)}$). Therefore, previous expression becomes
    \begin{align}
   \Var_{\thv \sim \uni(\vec{0}, r)} \left[ \LC(\thv) \right] & = \sum_i \LC^{(1) \, 2}_i \left(\frac{r^2}{3}\right) + \frac{1}{2}\sum_{i\neq j}\LC_{ij}^{(2) \,2} \left( \frac{r^4}{9} \right)+ \frac{1}{4}\sum_{i}\LC_{ii}^{(2) \,2} \left( \frac{4r^4}{45} \right) \\
    &\; +\frac{1}{3}\sum_i \LC_i^{(1)}\LC_{iii}^{(3)}\frac{r^4}{5}+\sum_{i\neq j} \LC_i^{(1)}\LC_{ijj}^{(3)}\frac{r^4}{9}+ \OC(r^6) \\
    &=\left(\frac{r^2}{3}\right)\sum_i \LC^{(1) \, 2}_i  + \left(\frac{r^4}{9}\right)\sum_{i,j}\left(\frac{1}{2}\LC_{ij}^{(2) \,2} \left( 1-\frac{3\delta_{ij}}{5} \right)+  \LC_i^{(1)}\LC_{ijj}^{(3)}\left( 1-\frac{2\delta_{ij}}{5} \right)\right)+ \OC(r^6)\;,
    \label{eq:approximate-variance-order-4}
\end{align}
  where, in the last equality, we have grouped the terms according to their order of perturbation and introduced the Kronecker delta to write it in a more compact form.

This \textit{approximate} variance expression is rather general and quite useful to give some insights about trainability for small perturbation $r$ as long as we have some information about the first and second derivatives around $\vec{0}$. The flexibility of this expression comes in as we can play around what the function $\LC(\thv)$ is (i.e., it does not have to always be a loss) and what the fixed point $\vec{0}$ is (i.e., it does not have to always be setting all parameters in the ansatz to be zeros).

\section{General variance lower bound and theoretical guarantees on the region of attraction}
In this section, we present the formal version of Theorem~\ref{th:var} as well as the formal version of Corollary~\ref{cor:var_minimum} in the main text, together with further discussion and their detailed derivations. 

\medskip

The section is structured as follows:
\begin{itemize}
    \item In Appendix~\ref{appx:main-thm-summay}, we provide the formal statements for Theorem~\ref{th:var} and Corollary~\ref{cor:var_minimum}, as well as further discuss their technical subtleties.
    \item In Appendix~\ref{appx:main-thm-overview},  we go over the proof strategy of the main theorems which hopefully helps preparing the readers for the battle to come. 
    \item In Appendix~\ref{appendix:proof-analytical-results}, the full detailed proofs of analytical results are presented.
\end{itemize}

\subsection{Summary of the key technical results}\label{appx:main-thm-summay}
\subsubsection{General variance lower bound around any point}

The formal version of Theorem~\ref{th:var} is separated into two parts: Theorem~\ref{th:var_formal} which describes circuits with uncorrelated or spatial correlated parameters, and Theorem~\ref{th:var_formal_cor} which applies more generally to circuits with correlated parameters including temporal correlated ones (see Fig.~\ref{fig:corr_schematic}). 

\medskip

We now present them in details and after have some further discussions, including the main technical difference between them.

\medskip

\noindent For the uncorrelated/spatial correlated parameter case, we have
\begin{theorem}\label{th:var_formal} [Lower bound on the loss variance for uncorrelated/spatial correlated parameters, Formal]
Consider a loss function of the form $\LC(\thv) = \Tr\left[U(\thv) \rho U^\dagger(\thv) O \right]$ as defined in Eq.~\eqref{eq:loss} with a state $\rho$, an observable $O$ and a parametrized circuit of the form in Eq.~\eqref{eq:circuit} with $\nparams$ parameters and $\nHam$ generators such that $\nparams=\nHam$. That is, all parameters are uncorrelated. We further consider uniformly sampling parameters in a hypercube of width $2r$ around any arbitrary point on the loss landscape $\vec{\phi}$ i.e., $\thv \sim \uni(\vec{\phi},r)$.

\medskip

\noindent If $r$ satisfies 
\begin{equation}\label{eq:th1_cond_uncor}
    r^2 \leq r_{{\rm patch}}^{2} := \min_{l \in \Lambda} \frac{9c_l(\vec{\phi})^2}{8 c_l(\vec{\phi})\left(4 \left[\omega_{l}^{\rm (max)}\right]^2 \sum_{\mindex=1}^{l-1} \left[\omega_{\mu}^{\rm (eff) }(\vec{\phi})\right]^2  + \sum_{\mindex=l+1}^{\nparams}\left[\widetilde{\omega}_{l, \mu }^{(\rm eff)}(\vec{\phi})\right]^2  \right) + 24\beta_l} \;,
\end{equation}
for  any subset of layers $\Lambda \subset \{1,\dots,\nparams\}$, then the variance of $\LC(\thv)$ over the hypercube $\vol(\vec{\phi}, r)$  can be bounded as 
\begin{equation}\label{eq:var-lower-bound-general}
     \Var_{\thv \sim \uni(\vec{\phi}, r)}[\LC(\thv)]  \geq \sum_{l \in \Lambda} \frac{r^4}{45} \left(\left(c_l(\vec{\phi})- \frac{1}{6} \left(4 \left[\omega_{l}^{\rm (max) }\right]^2 \sum_{\mindex=1}^{l-1} (\omega_{\mu}^{\rm (eff) }(\vec{\phi}))^2  + \sum_{\mindex=l+1}^{\nparams}(\widetilde{\omega}_{l, \mu }^{(\rm eff)}(\vec{\phi}))^2  \right)r^2\right)^2 - \beta_l r^2\right)
\end{equation}
Moreover, over the hypercube $\vol(\vec{\phi}, r_{\rm patch})$, the variance is lower bounded as follows   
\begin{equation}\label{eq:th1_LB_var_patch}
    \Var_{\thv \sim \uni(\vec{\phi}, r)}[\LC(\thv)]  \geq  \frac{1}{72}    \left( \sum_{l \in \Lambda} c_l(\vec{\phi})^2 \right)   r_{\rm patch}^4 
\end{equation}
    with 
    \begin{align}
        c_l(\vec{\phi}) &:= \left|\left.\left(\frac{\partial^2\LC(\thv)}{\partial\th_l^2}\right)\right|_{\thv=\vec{\phi}}\right| \;, \\ \label{eq:omega-max-corr-def}
       \omega_{\mu}^{(\rm max)}&:= \omega^{(\max)}(H_\mu) =\lambda_{\max}(H_\mu) - \lambda_{\min}(H_\mu) \;, \\ \label{eq:omega-eff-1-def}
         \omega_{\mu}^{\rm (eff) }(\vec{\phi})&= \sqrt{\norm{\left.\frac{\partial^2 [U(\thv)^\dagger O U(\thv)]}{\partial \th_\mu^2}\right|_{\thv= \vec{\phi}}}} \;,\\ \label{eq:omega-eff-2-def}
    \widetilde{\omega}_{l, \mu }^{(\rm eff)}(\vec{\phi}) &= \sqrt{\norm{ \left.\frac{\partial^4 [U(\thv)^\dagger O U(\thv)]}{\partial \th_\mu^2 \partial \th_l^2}\right|_{\thv= \vec{\phi}}}} \;,\\
    \beta_l &=\begin{cases}
       \frac{32( \omega_{l}^{\rm (max) })^{6}  \norm{O}^2}{3}\; {\rm if }\; l>1 \;,\\
       \frac{2 \left[\omega^{(\rm max)}_{1}\right]^2  (\omega^{\rm (eff)}_{1}(\phi_1))^4}{3} \; {\rm if }\; l=1\;.
    \end{cases}  \label{eq:beta_l}
\end{align}

\medskip

\noindent Lastly, for 
$\norm{O}, \norm{H_k} \in \mathcal{O}(\poly(n))$ and $c_q(\vec{\phi}) \in \Omega\left(\frac{1}{\poly(n)}\right)$ for some $q \in \Lambda$, if we choose $r$ such that 
\begin{equation}\label{eq:proof-thm1-region-attraction0}
    r \in \Theta\left(\frac{1}{ \poly(n) \cdot \sqrt{m }}\right)
\end{equation}
then the variance of the loss function is lower bounded as 
\begin{align} \label{eq:proof-thm1-region-attraction}
     \Var_{\thv \sim \uni(\vec{\phi}, r)}[\LC(\thv)] \in \Omega\left(\frac{1}{\poly(n) \cdot m^2}\right) \;.
\end{align}    
\end{theorem}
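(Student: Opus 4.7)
The natural starting point is Proposition~\ref{prop:var_decomp}, applied with the identity ordering of parameter indices (from the state outward toward the observable), which gives
\begin{equation}
\Var_{\thv\sim\uni(\vec{\phi},r)}[\LC(\thv)] = \sum_{l=1}^{\nparams} \Ebb_{\theta_{>l}}\!\left[\Var_{\theta_l}\!\left[\Ebb_{\theta_{<l}}[\LC(\thv)]\right]\right].
\end{equation}
Restricting the sum to $l\in\Lambda$ yields a valid lower bound. For each such $l$, define the single-variable function $f_l(\theta_l;\theta_{>l}) := \Ebb_{\theta_{<l}}[\LC(\thv)]$ and apply Proposition~\ref{prop:variance_LB_bounded} on the interval $[\phi_l-r,\phi_l+r]$. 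The required even-derivative bound Eq.~\eqref{eq:assumption-bounded-grad} follows from Lemma~\ref{lemma:bounded_unitary}, with $\gamma=2\omega_l^{(\rm max)}$ and $\alpha$ equal to $\|O\|$ for $l>1$ or to the sharper quantity $(\omega_1^{(\rm eff)}(\phi_1))^2/(2\omega_1^{(\rm max)})^2$ for $l=1$ (since for $l=1$ the double commutator acts directly on the back-propagated observable at site $1$). This split directly produces the two cases of $\beta_l$ in Eq.~\eqref{eq:beta_l}, and gives, per $l$, a lower bound of the form $(r^4/45)[f_l^{(2)}(\phi_l;\theta_{>l})]^2 - (\beta_l/45)\cdot r^6$.

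The quantitative heart of the argument is to replace $f_l^{(2)}(\phi_l;\theta_{>l})$ by $c_l(\vec{\phi})$ up to a controlled error. Writing
\begin{equation}
f_l^{(2)}(\phi_l;\theta_{>l}) = \Ebb_{\theta_{<l}}\!\left[\Tr\!\left(\rho\,\left.\tfrac{\partial^2}{\partial\theta_l^2}\bigl[U(\thv)^{\dagger} O\, U(\thv)\bigr]\right|_{\theta_l=\phi_l}\right)\right](\theta_{>l}),
\end{equation}
one iterates Proposition~\ref{prop:expectation_taylor_product}, successively replacing each $\Ebb_{\theta_\mu}[\,\cdot\,]$ for $\mu<l$ with its value at $\theta_\mu=\phi_\mu$ at a cost of at most $(r^2/6)$ times the operator norm of the $\theta_\mu$-second derivative of the operator being averaged. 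The outer $\theta_l$-differentiation yields a double commutator with $H_l$, contributing a factor $(2\omega_l^{(\rm max)})^2$ by Lemma~\ref{lemma:bound-nested-commutator}, while each inner $\theta_\mu$-step contributes $(\omega_\mu^{(\rm eff)}(\vec{\phi}))^2$ directly by definition Eq.~\eqref{eq:omega-eff-1-def}. Applying Jensen's inequality $\Ebb_{\theta_{>l}}[X^2]\geq(\Ebb_{\theta_{>l}}[X])^2$ and repeating the same Proposition~\ref{prop:expectation_taylor_product} argument on the outer expectation replaces each $\theta_\mu$ with $\mu>l$ by $\phi_\mu$ at a cost of $(r^2/6)(\widetilde{\omega}_{l,\mu}^{(\rm eff)}(\vec{\phi}))^2$, where the mixed fourth derivative in Eq.~\eqref{eq:omega-eff-2-def} arises naturally because the operator being averaged has already been twice-differentiated in $\theta_l$. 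Gathering these corrections via the triangle inequality yields the parenthesized factor inside the square in Eq.~\eqref{eq:var-lower-bound-general}.

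Summing over $l\in\Lambda$ produces Eq.~\eqref{eq:var-lower-bound-general}. The hypothesis Eq.~\eqref{eq:th1_cond_uncor} is exactly the condition that, for every $l\in\Lambda$, the correction inside the square is at most $c_l(\vec{\phi})/2$ (keeping the quadratic term at least $(c_l(\vec{\phi}))^2/4$) and simultaneously that the remainder $\beta_l r^6$ is dominated by this quadratic term by a further factor of $2$; this yields the clean bound $\sum_{l\in\Lambda}c_l(\vec{\phi})^2\,r_{\rm patch}^4/72$ of Eq.~\eqref{eq:th1_LB_var_patch}. The asymptotic claim Eq.~\eqref{eq:proof-thm1-region-attraction} then follows from substituting $\|O\|,\|H_k\|\in\OC(\poly(n))$ into the definitions of $\omega_\mu^{(\rm eff)}$, $\widetilde{\omega}_{l,\mu}^{(\rm eff)}$ and $\beta_l$, giving $r_{\rm patch}^2\in\Omega(1/(\nparams\cdot\poly(n)))$ and hence $r_{\rm patch}^4\in\Omega(1/(\nparams^2\poly(n)))$ when evaluated at any $q\in\Lambda$ with $c_q(\vec{\phi})\in\Omega(1/\poly(n))$.

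The main obstacle is the bookkeeping in the iterated application of Proposition~\ref{prop:expectation_taylor_product}: at each step the operator being averaged changes, and one must verify that its $\theta_\mu$-second-derivative norm is still controlled by $(\omega_\mu^{(\rm eff)})^2$ (or $(\widetilde{\omega}_{l,\mu}^{(\rm eff)})^2$) uniformly in the as-yet-unaveraged parameters, using only the unitary-invariance and sub-multiplicative properties established in Lemmas~\ref{lemma:bounded_unitary}--\ref{lemma:double_deriv_bound}. A secondary subtlety is the sharper handling of $l=1$, where identifying the back-propagated observable at site $1$ is what allows $\beta_1$ to scale with $(\omega_1^{(\rm eff)})^4$ rather than $\|O\|^2$, tightening the remainder for the innermost layer.
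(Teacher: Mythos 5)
Your proposal follows essentially the same route as the paper's proof: the variance decomposition of Proposition~\ref{prop:var_decomp}, the single-parameter variance bound of Proposition~\ref{prop:variance_LB_bounded} (via Lemma~\ref{lemma:bounded_unitary}, with the sharper $l=1$ remainder giving the two cases of $\beta_l$), Jensen's inequality, and an iterated application of Proposition~\ref{prop:expectation_taylor_product} that charges $4[\omega_l^{(\max)}]^2(\omega_\mu^{(\rm eff)})^2$ for the inner parameters and $(\widetilde{\omega}_{l,\mu}^{(\rm eff)})^2$ for the outer ones, exactly as in the paper's telescoping bound on the perturbation operator $T_{l,m}$. The only differences are presentational (whether Jensen is applied before or after the inner corrections, and a slightly loose paraphrase of how the $r$-condition yields the constant $1/72$, which the paper fixes by taking $\Delta=3/8$), so the proposal is correct in approach.
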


\bigskip

\bigskip

\noindent For arbitrary correlated parameter case, we have
\begin{theorem}[Lower bound on the loss variance for correlated parameters, Formal]\label{th:var_formal_cor}
Consider a generic loss $\LC(\thv)$ of the form in Eq.~\eqref{eq:loss} and a parametrized quantum circuit $U(\thv)$ of the form of Eq.~\eqref{eq:circuit} with $\nparams$ parameters. We consider uniformly sampling parameters in a hypercube of width $2r$ around any point of the landscape $\vec{\phi} \in \mathbb{R}^\nparams$.
If $r$ satisfies 
\begin{equation}\label{eq:th1_cond_cor}
 r^2 \leq r^2_{\rm patch} := \frac{9c^2_{p}(\vec{\phi})}{8\left(16 c_{p}(\vec{\phi}) \norm{O} \left[\omega^{(\rm max)}_{p}\right]^2\sum_{q\neq p} \left[\omega^{(\rm max)}_{q}\right]^2 + 32 \norm{O}^2 \left[\omega^{(\rm max)}_{p}\right]^6 \right) }
\end{equation}
for any $\paramindex \in \{1,\dots,\nparams\}$, then the variance of $\LC(\thv)$ over the hypercube $\vol(\vec{\phi}, r)$  can be bounded as 
\begin{equation}\label{eq:thm-correlated-variance}
     \Var_{\theta \sim \uni(\vec{\phi}, r)}[\LC(\thv)]  \geq  \frac{r^4}{45} \left(\left(c_p(\vec{\phi})-\frac{8}{3} \norm{O}  \left[ \omega^{\rm (max)}_{p} \right]^2 \left(\sum_{q \neq p}  \left[ \omega^{\rm (max)}_{q} \right]^2 \right)r^2\right)^2 - \frac{32}{3} \norm{O}^2 \left[ \omega^{\rm (max)}_{p} \right]^6 r^2\right)
\end{equation}

 Moreover, within the hypercube $\vol(\vec{\phi}, r_{\rm patch})$, the variance is lower bounded as follows   
\begin{equation}
    \Var_{\sim \uni(\vec{\phi}, r_{\rm patch})}[\LC(\thv)]  \geq \frac{1}{72}    c_p(\vec{\phi})^2   r_{\rm patch}^4
\end{equation}

    with 
    \begin{align}
        c_p(\vec{\phi}) &:= \left|\left.\left(\frac{\partial^2\LC(\thv)}{\partial\th_p^2}\right)\right|_{\thv=\vec{\phi}}\right| \\
        \omega^{\rm (max)}_{q}  &:=\sum_{l \in \mathcal{S}^{-1}(q)} \omega^{(\rm max)}(H_l)
\end{align}

\medskip

\noindent Lastly, for 
$\norm{O}, \norm{H_k} \in \mathcal{O}(\poly(n))$ and $c_p(\vec{\phi}) \in \Omega\left(\frac{1}{\poly(n)}\right)$. If we choose $r$ such that 
\begin{equation}\label{eq:proof-thm12-region-attraction0}
    r \in \Theta\left(\frac{1}{ \poly(n) \cdot \sqrt{m }}\right)
\end{equation}
then the variance of the loss function is lower bounded as 
\begin{align}\label{eq:proof-thm12-region-attraction}
     \Var_{\theta \sim \uni(\vec{\phi}, r)}[\LC(\thv)] \in \Omega\left(\frac{1}{\poly(n) m^2}\right) \;.
\end{align}

\end{theorem}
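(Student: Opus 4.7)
The plan is to combine three ingredients developed in the preliminaries: the additive decomposition of the variance across independent coordinates from Proposition~\ref{prop:var_decomp}, the single-parameter variance lower bound of Corollary~\ref{cor:var_LB_1param_layers} that handles multiple generators sharing a common angle, and the operator-norm smoothness estimates in Proposition~\ref{prop:expectation_taylor_product} and Lemma~\ref{lemma:double_deriv_bound}. The key observation is that for fixed values of all parameters except $\theta_p$, the loss is exactly of the form appearing in Corollary~\ref{cor:var_LB_1param_layers}, with the gates indexed by $\SC^{-1}(p)$ playing the role of multiple generators sharing the single parameter $\theta_p$.

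First I would permute the parameter ordering so that $\theta_p$ sits at the outermost position in Proposition~\ref{prop:var_decomp}; dropping the other non-negative terms yields $\Var_{\thv\sim\uni(\vphi,r)}[\LC(\thv)]\geq \Ebb_{\thv_{\overline{p}}}\bigl[\Var_{\theta_p\sim\uni(\phi_p,r)}[\LC(\thv)]\bigr]$. The inner variance, taken over $\theta_p$ with all other coordinates frozen, is bounded below by Corollary~\ref{cor:var_LB_1param_layers} as $\tfrac{r^4}{45}\,T(\thv_{\overline{p}})^2 - \tfrac{32}{135}[\omega^{(\rm max)}_p]^6\|O\|^2 r^6$, where $T(\thv_{\overline{p}}):=\partial^2_{\theta_p}\LC|_{(\phi_p,\thv_{\overline{p}})}$ and the unitaries carrying the other parameters are absorbed into an effective observable of the same infinity norm $\|O\|$. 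Taking the outer expectation and applying Jensen's inequality to $x\mapsto x^2$ then reduces the task to lower-bounding $|\Ebb_{\thv_{\overline{p}}}[T(\thv_{\overline{p}})]|$ from below.

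To control this expected second derivative, I would peel off the expectation one remaining parameter $\theta_q$ at a time using Proposition~\ref{prop:expectation_taylor_product}. Each peeling incurs an additive error of at most $\tfrac{r^2}{6}$ times the operator norm of $\partial^2_{\theta_q}\partial^2_{\theta_p}[U^\dagger O U]$, which Lemma~\ref{lemma:double_deriv_bound} bounds by $16[\omega^{(\rm max)}_p]^2[\omega^{(\rm max)}_q]^2\|O\|$; summing over $q\neq p$ produces precisely the mixed-frequency term $\tfrac{8}{3}\|O\|[\omega^{(\rm max)}_p]^2\sum_{q\neq p}[\omega^{(\rm max)}_q]^2\cdot r^2$ that appears in the bracket of Eq.~\eqref{eq:thm-correlated-variance}. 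Combining this with the Jensen step establishes Eq.~\eqref{eq:thm-correlated-variance} directly. The patch radius $r_{\rm patch}$ is then read off by enforcing both $c_p(\vphi) - \alpha r^2 \geq c_p(\vphi)/2$ and $(c_p(\vphi)/2)^2 \geq \beta r^2$, where $\alpha$ and $\beta$ denote the correction coefficients just derived; combining these two thresholds yields Eq.~\eqref{eq:th1_cond_cor}, and evaluation at $r_{\rm patch}$ recovers Eq.~\eqref{eq:th1_LB_var_patch} for the single index $p$. Substituting the polynomial scalings $c_p(\vphi)\in\Omega(1/\poly(n))$, $\|O\|,\|H_k\|\in\OC(\poly(n))$ and noting that $\sum_{q\neq p}[\omega^{(\rm max)}_q]^2$ contributes a factor $m\cdot\poly(n)$ yields $r_{\rm patch}\in\Theta(1/(\sqrt{m}\poly(n)))$ and variance $\in\Omega(1/(m^2\poly(n)))$.

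The main obstacle will be the iterated peeling step: one must verify that the residual superoperator obtained after peeling off each $\theta_q$ still satisfies the bounded-second-derivative hypothesis of Proposition~\ref{prop:expectation_taylor_product} uniformly over the remaining frozen parameters. Lemma~\ref{lemma:double_deriv_bound} is tailored exactly for this, and crucially the errors combine additively in $q$ rather than multiplicatively, which is precisely what yields the favorable $1/\sqrt{m}$ scaling of the patch width instead of an exponential-in-$m$ shrinkage.
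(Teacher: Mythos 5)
Your proposal is correct and follows essentially the same route as the paper's proof: the paper likewise keeps only the $\theta_p$ term of the Proposition~\ref{prop:var_decomp} decomposition, applies Corollary~\ref{cor:var_LB_1param_layers} to the frozen-parameter single-variable loss with the $\SC^{-1}(p)$ gates as shared-parameter generators, uses Jensen, and then peels off the remaining parameters one at a time via Proposition~\ref{prop:expectation_taylor_product} together with Lemma~\ref{lemma:double_deriv_bound}, obtaining exactly the correction term $\tfrac{8}{3}\norm{O}[\omega^{(\rm max)}_{p}]^2\sum_{q\neq p}[\omega^{(\rm max)}_{q}]^2\,r^2$. The only cosmetic difference is in the last step: the paper fixes the patch radius through a single combined inequality (with $\Delta=3/8$, which also absorbs the $r\leq 3/(4\omega^{(\rm max)}_{p})$ requirement of Corollary~\ref{cor:var_LB_1param_layers} using $c_p(\vphi)\leq 4\norm{O}[\omega^{(\rm max)}_{p}]^2$), whereas your two separate thresholds would reproduce the same scaling but not the exact constants of Eq.~\eqref{eq:th1_cond_cor} and the $1/72$ prefactor.
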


\bigskip

\bigskip

\noindent\paragraph*{\underline{Key technical notes.}} After presenting them in details, we discuss a few important remarks. 
\begin{enumerate}
    \item It is worth to emphasize that for each theorem it contains analytical formulae of the variance lower bounds together with the region of patches for \textit{any point of interest} on the quantum landscape. 
    These analytical expressions 
    in turn depend on various different quantities including second order derivatives at that point, maximal frequencies and some forms of effective frequencies in the case of Theorem~\ref{th:var_formal}.
    \item The theoretical guarantee on the region with substantial gradients follows with a further assumption that the point of interest has at least one non-vanishing second derivative together with polynomially bounded infinity norms of an observable and a corresponding generator.
    \item Indeed, Theorem~\ref{th:var_formal_cor} works more generally and also includes the setting in Theorem~\ref{th:var_formal}. However, due to being more specific with the circuit structure in the proof, Theorem~\ref{th:var_formal} is expressed in terms of effective frequencies (as compared to only maximal frequencies in Theorem~\ref{th:var_formal_cor}), leading to a larger size of a patch with guaranteed polynomial gradients. In particular, albeit the same asymptotic polynomial scalings, the prefactor obtained with Theorem~\ref{th:var_formal} is larger in general and could result in polynomially large patches when coming to analyze specific circuit architectures.
\end{enumerate}

\subsubsection{Theoretical guarantees of the region of attraction with the polynomial width around a minimum}
\label{subsec:corollary-1-formal-app}
By applying Theorem~\ref{th:var_formal} and Theorem~\ref{th:var_formal_cor} to a minimum of a quantum landscape together with some mild assumptions, we obtain a theoretical guarantee on a region of attraction with a polynomial width. This is presented formally in the following corollary.

\setcounter{corollary}{0}
\begin{corollary}[Scaling of regions of attraction, Formal]\label{cor:var_minimum_app}
Consider a generic loss $\LC(\thv)$ of the form in Eq.~\eqref{eq:loss} with a parametrized quantum circuit $U(\thv)$ of the form in Eq.~\eqref{eq:circuit} and an observable $O$. Further consider a minimum point of the landscape $\thv^*$ with sufficiently large overlap between an evolved state and the ground state $|\lambda_1\rangle$ of $O$. In particular, we assume that the overlap $\Tr[U(\thv^*)\rho U^\dagger(\thv^*)\ketbra{\lambda_1}] = 1-|\epsilon|^2$ is large enough such that
\begin{align}
    |\epsilon| < \frac{\gap \Var_\rho(H_M)}{(6\|O\|_{\infty} + 11\gap)\|H_M\|_{\infty}^2}\;.
\end{align}

Under the following assumptions:
\begin{itemize}
    \item Assumption~1: The gap $\gap$ between the ground state and first excited state energies of $O$ vanishes at worst polynomially with the number of qubits. The ground state is also assumed to be non-degenerate. In particular, with the eigen decomposition of $O = \sum_{i=1}^{2^n} \lambda_i| \lambda_i \rangle\langle\lambda_i|$ where $\lambda_i$ is an eigen energy with an associated eigenstate $|\lambda_i\rangle$ labeled in increasing order i.e.,  $\lambda_1 < \lambda_2 \leq \dots \leq \lambda_{2^n}$, the gap satisfies
    \begin{align}
        \gap = \lambda_2 - \lambda_1 \in \Omega\left( \frac{1}{\poly(n)}\right) \;.
    \end{align}
    \item Assumption~2: The interaction between the gate closest to the initial state and the initial state itself has to be non-trivial. More specifically, given the initial state $\rho$ and the gate generator (closest to the initial state) $H_M$, we require
    \begin{align}
        \Var_\rho(H_M) = \Tr[\rho H_M^2] - \Tr[\rho H_M]^2\in\Omega(1/{\rm poly}(n)) \;.
    \end{align}
    Additionally, the parameter associated with the gate generator $H_M$ is required to be non-temporally correlated with other parameters. Remark that all other parameters can be arbitrarily correlated among themselves. 
\end{itemize}

Then, for the patch's width $2r^*_{\rm patch}$ around $\thv^*$ such that 
\begin{equation}
        (r_{\rm patch}^*)^2 =  \frac{9 [ \gap \Var_{\rho}(H) - |\epsilon|\|H\|_{\infty}^2 (11\gap + 6\|O\|_{\infty}) ]^2 }{128 \norm{O}^2 \left[ \omega^{\rm (max)}_{m} \right]^4 \sum_{j=1}^{\nparams-1} \left[ \omega^{\rm (max)}_{j} \right]^2 }
    \end{equation}
the loss variance with the uniformly sampled parameters inside the patch scales as
\begin{align}
    \Var_{\thv \sim \uni(\thv^*,r_{\rm patch}^*)}[\mathcal{L}(\thv)] \geq \frac{[ \gap \Var_{\rho}(H) - |\epsilon|\|H\|_{\infty}^2 (11\gap + 6\|O\|_{\infty}) ]^2}{18}  (r_{\rm patch}^*)^4\;.
\end{align}
Crucially, for at most polynomial number of parameters $m \in \OC(\poly(n))$ and a bounded infinity norm of the observable $\| O\|_{\infty} \in \OC(\poly(n))$, this variance around this region is guaranteed to scale at least polynomially with the qubits
\begin{align}
    \Var_{\thv \sim \uni(\thv^*,r_{\rm patch}^*)}[\mathcal{L}(\thv)] \in \Omega\left( \frac{1}{\sqrt{m}\poly( n)}\right) \;.
\end{align}
\end{corollary}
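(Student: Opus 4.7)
The plan is to invoke Theorem~\ref{th:var_formal} at the minimum $\thv^*$, choosing the parameter index to be that of the generator $H_M$ sitting closest to the initial state. By Assumption~2 this parameter is temporally uncorrelated with the remaining ones, so the uncorrelated/spatial-correlated formulation of that theorem applies cleanly. Taking the singleton subset $\Lambda = \{M\}$ in Eqs.~\eqref{eq:th1_cond_uncor}--\eqref{eq:th1_LB_var_patch} then reduces the whole task to producing a polynomially small lower bound on the second derivative $c_M(\thv^*) = \bigl|\partial_{\theta_M}^2 \LC(\thv)|_{\thv^*}\bigr|$. Concretely, I would factor the circuit as $U(\thv) = W(\thv_{\setminus M})\,U_M(\theta_M)$, with $W$ collecting all gates to the left of $U_M$ in the product of Eq.~\eqref{eq:circuit} and independent of $\theta_M$, so that in the Heisenberg picture
\begin{equation*}
    c_M(\thv^*) = \bigl|\Tr\!\left[\rho\,[H_M,[H_M,\tilde{O}^*]]\right]\bigr|, \qquad \tilde{O}^* := U^\dagger(\thv^*)\,O\,U(\thv^*).
\end{equation*}

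Next I would exploit the spectral structure of $O$. Writing $O = \sum_i \lambda_i \ketbra{\lambda_i}$ and $\ket{\tilde{\lambda}_i^*} := U^\dagger(\thv^*)\ket{\lambda_i}$, and subtracting the identity piece $\lambda_1 \1$ (which is annihilated by the double commutator) rewrites $c_M(\thv^*)$ as a gap-weighted sum over matrix elements of $H_M$ between the $\ket{\tilde{\lambda}_i^*}$. In the idealised case $|\epsilon| = 0$ (in which Assumption~1 forces $\rho = \ketbra{\tilde{\lambda}_1^*}$) the expression collapses to $c_M(\thv^*) = 2\sum_{i\geq 2}(\lambda_i-\lambda_1)\bigl|\bra{\tilde{\lambda}_1^*}H_M\ket{\tilde{\lambda}_i^*}\bigr|^2$; combining $\lambda_i - \lambda_1 \geq \gap$ with the completeness identity $\sum_i \bigl|\bra{\tilde{\lambda}_1^*}H_M\ket{\tilde{\lambda}_i^*}\bigr|^2 = \bra{\tilde{\lambda}_1^*}H_M^2\ket{\tilde{\lambda}_1^*}$ then delivers $c_M(\thv^*) \geq 2\gap\,\Var_\rho(H_M)$. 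For the general $\rho$ with fidelity defect $|\epsilon|^2$ I would decompose $\rho = (1-|\epsilon|^2)\ketbra{\tilde{\lambda}_1^*} + \rho_{\rm err}$, where $\rho_{\rm err}$ and the off-diagonal cross-terms between $\ket{\tilde{\lambda}_1^*}$ and its orthogonal complement each have trace-norm controlled by $|\epsilon|$. Propagating $\rho_{\rm err}$ through the double commutator and combining H\"older's inequality with $\|[H_M,[H_M,\tilde{O}^*]]\|_\infty \leq 4\|H_M\|_\infty^2\|O\|_\infty$ (a direct application of Lemma~\ref{lemma:bound-nested-commutator}) introduces a correction of size $|\epsilon|\,\|H_M\|_\infty^2(a\gap + b\|O\|_\infty)$ for absolute constants $a,b$; tracking these constants reproduces the announced bound $c_M(\thv^*) \geq \gap\,\Var_\rho(H_M) - |\epsilon|\,\|H_M\|_\infty^2(11\gap + 6\|O\|_\infty)$, which the hypothesis on $|\epsilon|$ keeps strictly positive and in $\Omega(1/\poly(n))$.

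Plugging this $c_M(\thv^*)$ into Theorem~\ref{th:var_formal} with $\Lambda = \{M\}$ and upper-bounding the effective frequencies by $\omega_\mu^{(\mathrm{eff})}(\thv^*) \leq 2\sqrt{\|O\|_\infty}\,\omega_\mu^{(\max)}$ reproduces the stated denominator $128\,\|O\|_\infty^2\,[\omega_M^{(\max)}]^4\sum_{j=1}^{m-1}[\omega_j^{(\max)}]^2$ of $(r^*_{\rm patch})^2$ (absorbing the $\beta_M$ term via $c_M \leq 4(\omega_M^{(\max)})^2\|O\|_\infty$), while evaluating Eq.~\eqref{eq:th1_LB_var_patch} at $r = r^*_{\rm patch}$ delivers the claimed variance lower bound. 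The polynomial scalings $r^*_{\rm patch} \in \Theta(1/(\sqrt{m}\,\poly(n)))$ and $\Var \in \Omega(1/(\sqrt{m}\,\poly(n)))$ then follow directly from the polynomial hypotheses on $\|O\|_\infty,\|H_k\|_\infty,\gap^{-1}$ and $\Var_\rho(H_M)^{-1}$ together with $m \in \OC(\poly(n))$. The main obstacle will be the careful bookkeeping in the mixed-state step: several cross-terms from the off-diagonal matrix elements of $\rho$ in the $\ket{\tilde{\lambda}_i^*}$ basis have to be bounded by quantities proportional to $|\epsilon|$ and then combined to extract the precise correction with the constants $11$ and $6$. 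This is exactly where the quantitative assumption $|\epsilon| < \gap\,\Var_\rho(H_M)/[(6\|O\|_\infty + 11\gap)\|H_M\|_\infty^2]$ becomes essential, since it is precisely the threshold ensuring the final $c_M$ bound remains strictly positive.
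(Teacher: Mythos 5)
Your core computation of the curvature at the minimum is the same as the paper's: both write $c_M(\thv^*)=\bigl|\Tr[\rho\,[H_M,[H_M,O^*]]]\bigr|$ with $O^*=U^\dagger(\thv^*)OU(\thv^*)$, use the spectral decomposition of $O$ and the gap to get the $2\gap\,\Var(H_M)$ main term, and treat the fidelity defect $|\epsilon|$ perturbatively before feeding the result into a patch theorem evaluated at a suitably shrunk radius. However, there is a genuine gap in your first step: you invoke Theorem~\ref{th:var_formal}, whose formal hypothesis is $m=M$ with $\SC(l)=l$, i.e.\ \emph{all} parameters uncorrelated (at best spatially correlated). Assumption~2 of the corollary only requires the parameter of $H_M$ to be non-temporally correlated and explicitly allows the remaining parameters to be arbitrarily correlated among themselves; the paper therefore applies Theorem~\ref{th:var_formal_cor} with $p_0=m$, whose proof isolates the variance contribution of that single parameter and averages over the rest regardless of their correlation structure. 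Your route proves only the special case of a fully uncorrelated circuit, and it cannot produce the stated radius formula, whose denominator involves the correlated-case maximal frequencies $\omega^{(\max)}_{j}=\sum_{l\in\SC^{-1}(j)}\omega^{(\max)}(H_l)$ summed over the $m-1$ \emph{independent} parameters rather than over individual generators.

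A secondary, smaller issue is where the $\gap$-proportional part of the correction $|\epsilon|\,\|H_M\|_\infty^2(11\gap+6\|O\|_\infty)$ comes from. Your sketch decomposes $\rho=(1-|\epsilon|^2)\ketbra{\tilde\lambda_1^*}+\rho_{\rm err}$ and pushes $\rho_{\rm err}$ through H\"older with $\|[H_M,[H_M,O^*]]\|_\infty\le 4\|H_M\|_\infty^2\|O\|_\infty$; that alone yields only an $\order{|\epsilon|\,\|H_M\|_\infty^2\|O\|_\infty}$ correction, i.e.\ the $6\|O\|_\infty$ piece. The $11\gap$ piece does not arise there: in the paper it comes from a separate conversion step, $|k_1|^2\Var_{\lambda_1^*}(H_M)\geq \Var_\rho(H_M)-11|\epsilon|\|H_M\|_\infty^2$, needed because your main term is naturally expressed with the variance in the state $\ket{\tilde\lambda_1^*}$ while the hypothesis (and the final bound) is phrased in terms of $\Var_\rho(H_M)$, which is the classically checkable quantity. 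You gesture at this as "bookkeeping," but without that explicit conversion the threshold condition on $|\epsilon|$ and the constants $11$ and $6$ cannot be recovered, so this step needs to be spelled out rather than absorbed into the error-propagation bound.
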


\subsection{Proof overview for the main theorems}\label{appx:main-thm-overview}

The proofs of both main theorems consist of a series of long mathematical steps as well as a bunch of hideous (in a cute way) notations, which could be found cumbersome at some point. Nevertheless, they both share some common proof structure. In particular, the proof of Theorem~\ref{th:var_formal_cor} is a generalization of the proof of Theorem~\ref{th:var_formal} in the sense that long-range temporal correlations between parameters are taken into account. This amounts to a more complicated initial setting but once that has been done the rest of the proof largely follows the same structure. In other words, the proof of Theorem~\ref{th:var_formal} can be seen as a \textit{warm-start} for the proof of Theorem~\ref{th:var_formal_cor}.

To aid the keen readers who wish to embark on a long journey in either proof, we use the proof structure for Theorem~\ref{th:var_formal} as an example and summarize the key proof steps below.
\begin{enumerate}
    \item \textbf{Expressing the loss variance as a sum of single parameter variance contributions} with an average over the remaining parameters. In particular, by using 
    Proposition~\ref{prop:var_decomp}, the loss variance can be expressed as
    \begin{equation}
    \Var_{\thv \sim \uni(\vec{0},r)}\left[\mathcal{L}(\boldsymbol{\theta})\right]  =\sum_{l=1}^{\nparams} \Ebb_{\overline{l}}[\Var_l[\Ebb_{1,\dots,l-1}[\mathcal{L}(\thv)]]] \;,
    \end{equation}
    with $\Var_{l} [\cdot] := \Var_{\theta_l \sim \uni(0,r)} [\cdot]$ and $ \Ebb_{l}[\cdot] = \Ebb_{\theta_l \sim \uni(0,r)}[\cdot]$, and $\overline{l}:= l+1,\dots,\nparams$.
    \item \textbf{Lower bounding each individual single variance term} (without the average on later parameters) $\Var_l[\Ebb_{1,\dots,l-1}[\mathcal{L}(\thv)]]$ by using Proposition~\ref{prop:variance_LB_bounded}.
    \item \textbf{Lower bounding $\Ebb_{\overline{l}}[\Var_l[\Ebb_{1,\dots,l-1}[\mathcal{L}(\thv)]]]$} (i.e., taking into account the rest of parameters) which results in averaged double derivative terms  $\Ebb_{\thv}\left[\left.\left(\frac{d^2\mathcal{L}(\thv)}{d\theta_l^2}\right)\right|_{\theta_l=\phi_l}\right]$. 
    \item \textbf{Further lower bounding the averaged double derivative term $\Ebb_{\thv}\left[\left.\left(\frac{d^2\mathcal{L}(\thv)}{d\theta_l^2}\right)\right|_{\theta_l=\phi_l}\right]$} by splitting the term into the constant and the perturbation terms. This leads to a generic lower bound in Eq.~\eqref{eq:var-lower-bound-general} presented in the theorem.
    \item \textbf{Proving a theoretical guarantee on the sub-region} using the generic variance bound as well as the assumption on a non-vanishing second derivative.
\end{enumerate}

\subsection{Proofs of analytical results}\label{appendix:proof-analytical-results}

This sub-section contains the full detailed proofs of our theoretical statements including
\begin{itemize}
    \item Theorem~\ref{th:var_formal} in Appendix~\ref{appendix:warm_up_proof_th1}
    \item Theorem~\ref{th:var_formal_cor} in Appendix~\ref{appendix:lower_bound_correlated}
    \item Corollary~\ref{cor:var_minimum_app} in Appendix~\ref{subsec:var_minimum_app}.
\end{itemize}

\subsubsection{Proof of Theorem~\ref{th:var_formal}: the  uncorrelated/spatial correlated part of Theorem~\ref{th:var}}\label{appendix:warm_up_proof_th1}

In this section, we provide the exact formula for the variance lower bound and for a generic quantum circuit with uncorrelated/spatial correlated parameters and the scaling of the region of attraction around different points in the parameter hyperspace. 

\begin{proof}
We recall that the loss function in Eq.~\eqref{eq:loss} is of the form 
    \begin{equation}
        \mathcal{L}(\thv)= \Tr[\rho U^{\dagger}(\thv) O U(\thv)] \;.
    \end{equation}
   
In this theorem, we consider the uncorrelated/spatial correlated parameters where the parametrized quantum circuit in Eq.~\eqref{eq:circuit} has $\nparams$ distinct parameters and $\nHam=\nparams$ generators with a one-to-one trivial map $\SC$ between generators and the variational parameters ($m$ = $M$ and $\SC(l) = l$)
    \begin{equation}
        U(\thv) = \prod_{l=1}^\nparams U_l(\theta_l)=\prod_{l=1}^\nparams V_l e^{-i\theta_l H_l}\;.
    \end{equation}

Crucially, we emphasize that our convention for labeling the gate sequence in the proof may differ from the traditional convention. That is, we use the convention that the circuit gate $U_{\nparams}$ is the first gate acting on the initial state whereas $U_1$ is the gate closest to the observable. Indeed, we adopt this convention because in the following proof we will mainly focus on studying the back propagated observable in the Heisenberg picture $U(\vtheta)^\dagger O U(\vtheta) = U_\nparams^\dagger \dots U_1^\dagger O U_1 \dots U_\nparams$. Hence, it is more natural to start indexing the unitaries/generators with reference to the observable. 

\medskip

To study the landscape of the loss $\mathcal{L}(\boldsymbol{\theta})$ around a point of interest $\vec{\phi}$, it is convenient to re-express the parameterized gates as the perturbation around the point $\vphi$ i.e., $\theta_l = \phi_l + \delta \theta_l$ for all $l$, and absorb $e^{-i \phi_l H_l}$ within a non parameterized unitary $V_l$. Note that for simplicity in notation we will also redefine the perturbation variable $\delta \theta_i \rightarrow \theta_i$. In particular, we have
\begin{align}
    U(\thv) & = \prod_{l=1}^\nparams V_l e^{- i (\phi_l + \delta \theta_l) H_l } \\
    & = \prod_{l=1}^\nparams \left( V_l e^{- i  \phi_i H_l}\right) e^{- i \delta \theta_i H_l} \\
    & = \prod_{l=1}^\nparams \tilde{ V}_l(\phi_l) e^{-i \theta_l H_l} \;.
\end{align}
In addition, without loss of generality, the loss variance can also be re-expressed in terms of the perturbation around $\boldsymbol{\phi}= \vec{0}$. 
\begin{equation}
   \Var_{\thv \sim \uni(\vec{\phi},r)}\left[\mathcal{L}(\thv)\right] =  \Var_{\thv \sim \uni(\vec{0},r)}\left[\mathcal{L}(\thv + \vec{\phi})\right] \;.
\end{equation}
For convenience, we will simply write $\LC(\thv) := \LC(\thv + \vec{\phi})$ from here. But we stress again that $\thv$ is now a perturbation around $\vphi$ and the information about the point $\vphi$ is implicitly encoded in $\LC(\thv)$ with $\{ \tilde{ V}_l(\phi_l) \}_{l}$. In addition, we will replace $\tilde{ V}_l(\phi_l)$ with simply $V_l$ and again we emphasize that with this simplified notation $V_l$ depends on $\phi_l, H_l$ and the original fixed gate $V_l$.

\medskip

\bigskip

We have now set the scene for the proof. Before moving forwards, the readers are highly encouraged to go through Appendix~\ref{appx:main-thm-overview} for the five key proof steps. Then, brace yourself and let us now delve into the proof.

\bigskip

\paragraph*{\underline{1.~Expressing the loss variance as a sum of single parameter variance contributions.}} Since the parameters $\{\theta_{l}\}_{l=1}^\nparams$ are independently sampled, one can directly apply the result from Proposition~\ref{prop:var_decomp} and express the loss function variance as a sum of single parameter variance contributions with an average over the remaining parameters.
\begin{equation}\label{eq:VarM}
\Var_{\thv \sim \uni(\vec{0},r)}\left[\mathcal{L}(\boldsymbol{\theta})\right]  =\sum_{l=1}^{\nparams} \Ebb_{\overline{l}}[\Var_l[\Ebb_{1,\dots,l-1}[\mathcal{L}(\thv)]]]
\end{equation}
where we introduce the shorthand $\overline{l}:= l+1,\dots,\nparams$. Moreover, we simply use the shorthands $\Var_{l} [\cdot] := \Var_{\theta_l \sim \uni(0,r)} [\cdot]$ and $ \Ebb_{l}[\cdot] = \Ebb_{\theta_l \sim \uni(0,r)}[\cdot]$, and we will keep using these notations in the remainder of the proof.

Next, we rewrite each term in the sum using the following compact notations
\begin{align}
    \rho_{\overline{l}} &:=  U_{l+1}(\theta_{l+1}) \dots U_{\nparams}(\theta_{\nparams} )  \rho_0 U^{\dagger}_{\nparams}(\theta_{\nparams}) \dots U_{l+1}^{\dagger}(\theta_{l+1}) \label{eq:rho-l-bar}\\
    \UC_{l,\th_l}(A)&:=e^{iH_l\th_l}Ae^{-iH_l\th_l}\\
    \UC_{l,\phi_l}^{(k)}(A)&:= \left.\frac{d^k \UC_{l,\th_l}(A)}{d\th_l^k}\right|_{\th_l=\phi_l}   
\end{align}
where $\rho_{\overline{l}}$ denotes the evolved state from the gate $\nparams$ to the gate $l+1$ and implicitly depends on $\{ \theta_{l+1}, \theta_{l+2}, \dots, \theta_\nparams\}$, and $\UC_{l,\th_l}(\cdot)$ denotes the unitary superoperator at layer $l$ with the parameter $\theta_l$.
We also introduce the recursive expectation of the back propagated observable up to layer $\mindex$, i.e. $A_{0,\mindex}$: 
\begin{equation}\label{eq:Am_def_recursive}
\begin{aligned}
    A_{0,\mindex}&:=  \Ebb_{\mindex}[U_{\mindex}^{\dagger}(\theta_{\mindex}) A_{0,\mindex-1}U_{\mindex}(\theta_{\mindex})]= \Ebb_{\mindex}[e^{i\theta_{\mindex} H_{\mindex}}  V_{\mindex}^{\dagger} A_{0,\mindex-1} V_{\mindex} e^{-i\theta_{\mindex} H_{\mindex}}] \quad , \forall \mindex \geq 1\\
    A_{0,0} &:= O
    \end{aligned}
\end{equation}
and rewrite the each term in the sum $\Ebb_{\overline{l}} \Var_l[\Ebb_{1,\dots,l-1}[\mathcal{L}(\thv)]]$ in Eq.~\eqref{eq:VarM} 
as follows
\begin{equation}\label{eq:var_lEl}
    \Ebb_{\overline{l}}\left[\Var_l[\Ebb_{1,\dots,l-1}[\mathcal{L}(\thv)]]\right] = \Ebb_{\overline{l}}\left[\Var_l\left[\Tr[\rho_{\overline{l}}\UC_{l,\th_l}\left(V_l^\dagger A_{0,l-1}V_l\right)]\right]\right] \;.
\end{equation}

In what follows, we focus on lower bounding each individual term $\Ebb_{\overline{l}}[\Var_l[\Ebb_{1,\dots,l-1}[\mathcal{L}(\theta)]]],  \forall l \in \{1,\dots,\nparams\}$ in Eq.~\eqref{eq:VarM}. We first lower bound the variance $\Var_l[\Ebb_{1,\dots,l-1}[\mathcal{L}(\theta)]]$ in the next step and then the whole term $\Ebb_{\overline{l}}[\Var_l[\Ebb_{1,\dots,l-1}[\mathcal{L}(\theta)]]]$ in the step after.

\bigskip

\paragraph*{\underline{2. Lower bounding each individual single parameter variance $\Var_l[\Ebb_{1,\dots,l-1}[\mathcal{L}(\thv)]]$.}}
From Eq.~\eqref{eq:var_lEl}, we have each individual variance term expressed as
\begin{align}
    \Var_l[\Ebb_{1,\dots,l-1}[\mathcal{L}(\thv)]] = \Var_l\left[\Tr[\rho_{\overline{l}}\UC_{l,\th_l}\left(V_l^\dagger A_{0,l-1}V_l\right)]\right] \;.
\end{align}
This variance is only with respect to a single variable $\theta_l$ and hence we can invoke Corollary~\ref{cor:var_LB_1param_1layer} with $\th \rightarrow \th_l, \rho \rightarrow \rho_{\overline{l}}$ and $ O  \rightarrow A_{0,l-1}$. 

That is, if $r$ obeys
\begin{equation}\label{eq:cond_r_remainder}
    r \leq \frac{3}{4 \omega^{(\rm max)}_l} \;,
\end{equation}
with $\omega_{l}^{(\rm max)}=\lambda_{\max}(H_l) - \lambda_{\min}(H_l)$ as defined in Eq.~\eqref{eq:omega-max-corr-def}, then we have the lower bound as
\begin{align}
    \Var_l[\Ebb_{1,\dots,l-1}[\mathcal{L}(\theta)]]  
    \label{eq:proof-theorem-uncorr-varl-double-bracket}
    & \geq \frac{r^4}{45}\Tr[\rho_{\overline{l}}\UC_{l,0}^{(2)}(V_l^\dagger A_{0,l-1} V_l)]^2 -\frac{2 \left[\omega_{l}^{\rm (max) }\right]^2  \norm{[H_l,[H_l,(V_l^\dagger A_{0,l-1} V_l)]]}^2}{135} r^6 \\
    & \geq \frac{r^4}{45}\Tr[\rho_{\overline{l}}\UC_{l,0}^{(2)}(V_l^\dagger A_{0,l-1} V_l)]^2 -\frac{32\left[ \omega_{l}^{\rm (max) }\right]^{6}  \norm{O}^2}{135} r^6
\end{align}
where in the first inequality we apply Corollary~\ref{cor:var_LB_1param_1layer} to get the lower bound~\footnote{When expanding around zero, we can further tighten this bound by using Corollary~\ref{cor:var_LB_1param_1layer_local}.} and in the second inequality, we applied Lemma~\ref{lemma:bound-nested-commutator} and upper bound $\norm{A_{0,l-1}}$ by using its recursive definition in Eq.~\eqref{eq:Am_def_recursive}. Indeed, for $\forall \mindex \in \{1,\dots,\nparams\}$ we have
\begin{align}
    \norm{A_{0,\mindex}} &:= \norm{\Ebb_{\mindex}[U_{\mindex}^{\dagger} A_{0,\mindex-1}U_{\mindex}]}\\
    & \leq \Ebb_{\mindex}\left[\norm{U_{\mindex}^{\dagger} A_{0,\mindex-1}U_{\mindex}}\right]\\
    & = \Ebb_{\mindex}\left[\norm{ A_{0,\mindex-1}}\right]\\
    & = \norm{A_{0,\mindex-1}}\\
    & \leq \norm{A_{0,0}} \\
    & = \norm{O}\;,
\end{align}
where we used Jensen's inequality in the first inequality, the second equality is due to unitary invariance of the norm and the last inequality is obtained by induction.

Notice that in the case where $l=1$, we can tighten this bound by noticing that the second term on the right-hand side in Eq.~\eqref{eq:proof-theorem-uncorr-varl-double-bracket} can be expressed as the infinity norm of the second derivative with respect to $\theta_1$ which in turn can be written in the form of $\omega_{1}^{\rm (eff)}(\phi_1)$ in Eq.~\eqref{eq:omega-eff-1-def}. This leads to the expression
\begin{equation}
  \Var_1[\mathcal{L}(\theta)]  \geq \frac{r^4}{45}\Tr[\rho_{\overline{1}}\UC_{1,0}^{(2)}(V_1^\dagger O V_1)]^2 -\frac{2 \left[\omega^{(\rm max)}_{1}\right]^2 \left[\omega^{\rm (eff)}_{1}(\phi_1)\right]^4 }{135} r^6\;.
\end{equation}

Hence, the lower bound can be expressed more compactly as
\begin{align}
    \Var_1[\mathcal{L}(\theta)]  \geq \frac{r^4}{45}\Tr[\rho_{\overline{l}}\UC_{l,0}^{(2)}(V_l^\dagger A_{0,l-1} V_l)]^2 -\frac{\beta_l}{45} r^6 \;, \label{eq:varl_LB}
\end{align}
with the notation $\beta_l$ defined in Eq.~\eqref{eq:beta_l}, that is $\beta_l = \frac{32( \omega_{l}^{\rm (max) })^{6}  \norm{O}^2}{3}$ for $l>1$, and $\beta_l = \frac{2 \left[\omega^{(\rm max)}_{1}\right]^2  (\omega^{\rm (eff)}_{1}(\phi_1))^4}{3}$ for $l=1$.

\bigskip

\paragraph*{\underline{3.~Lower bounding $\Ebb_{\overline{l}}[\Var_l[\Ebb_{1,\dots,l-1}[\LC(\thv)]]]$.}} 
Now, with $\Var_l[\Ebb_{1,\dots,l-1}[\LC(\thv)]]$ expressed in Eq.~\eqref{eq:varl_LB}, 
let us include the expectation over parameters $\{\theta_{l+1},\cdots, \theta_m\}$ (indicated with the index $\overline{l}$) 
to get a lower bound for $\Ebb_{\overline{l}}[\Var_l[\Ebb_{1,\dots,l-1}[\LC(\thv)]]]$ which leads to
\begin{align}
    \Ebb_{\overline{l}}[\Var_l[\Ebb_{1,\dots,l-1}[\LC(\thv)]]] &\geq \Ebb_{\overline{l}}\left[ \frac{r^4}{45}\Tr[\rho_{\overline{l}}\UC_{l,0}^{(2)}(V_l^\dagger A_{0,l-1} V_l)]^2 -\frac{\beta_l}{45} r^6 \right]
    \\
    & = \frac{r^4}{45}\left(\Ebb_{\overline{l}}\left[\Tr[\rho_{\overline{l}}\UC_{l,0}^{(2)}(V_l^\dagger A_{0,l-1} V_l)]^2\right] -\beta_l r^2\right) \\
    & \geq \frac{r^4}{45}\left(\left[\Ebb_{\overline{l}}\Tr[\rho_{\overline{l}}\UC_{l,0}^{(2)}(V_l^\dagger A_{0,l-1} V_l)]\right]^2 -\beta_l r^2\right) \label{eq:lowerbound_ElVarl}
\end{align}
where the second inequality is due to Jensen's inequality.

Now, we express the lower bound in a form that the average over other parameters (except the parameter $\theta_l$) is completely presented as only one operator. To do so, we introduce an index $\mindex$ such that $l\leq \mindex \leq \nparams$ and introduce an \textit{average operator} as
\begin{equation}\label{eq:Aml_def}
    A_{l, \mindex}:=\Ebb_{l+1,\dots,\mindex}\left[U_{\mindex}^\dagger \dots U_{l+1}^\dagger [H_l,[H_l,V_l^\dagger A_{0,l-1} V_l]]U_{l+1} \dots U_{\mindex}\right]    \;.
\end{equation}
We emphasize again the operator $A_{l, \nparams}$ contains averages over all parameters except $\th_l$; explicitly expressed there for $\theta_{l+1},\dots, \theta_{m}$ and implicitly in $A_{0,l-1}$ for  $\theta_1, \dots, \theta_{l-1}$.
  
Hence, the lower bound in Eq.~\eqref{eq:lowerbound_ElVarl} can be expressed in terms of $A_{l, \nparams}$ as
\begin{equation}\label{eq:end_res_stepb}
    \Ebb_{\overline{l}}[\Var_l[\Ebb_{1,\dots,l-1}[\LC(\thv)]]] \geq \frac{r^4}{45} \left( \Tr[\rho A_{l,\nparams}]^2 - \beta_l r^2 \right) \;,
\end{equation}
where we recall that $\UC_{l,0}^{(2)}(A) = - [H_l,[H_l,A]]$ as well as the definition of $\rho_{\overline{l}}$ in Eq.~\eqref{eq:rho-l-bar}.

It is worth noting that the average operator $A_{l, m}$ is closely related to the average second derivative of the loss. In particular, we have
\begin{align}
\Tr\left[ \rho  A_{l, m}\right] =\Ebb_{\thv}\left[\left.\left(\frac{d^2\mathcal{L}(\thv)}{d\theta_l^2}\right)\right|_{\theta_l=0 }\right] \;,
\end{align}
which is the next quantity of our interest.

\bigskip

\paragraph*{\underline{4.~Lower bounding the average double derivative term $\Tr[\rho \; A_{l,m}]^2$.}}
We decompose the average operator $A_{l,\mu}$ into a constant component $D_{l,\mu}$ which solely depends on the fixed point on the landscape $\vphi$ and a perturbation dependent component $T_{l,\mu}$. In particular, for the index $\mu$ such that $l \leq \mu \leq \nparams$, we denote
\begin{align}
     D_{l,\mindex} &:=V_{\mindex}^\dagger \dots V_{l+1}^\dagger [H_l,[H_l,D_{0,l}]]V_{l+1} \dots V_{\mindex} \label{eq:Dml_def} \;, \\
      T_{l,\mindex} &:=A_{l,\mindex}-D_{l,\mindex} \label{eq:Tml_def} \;,
\end{align}
where $D_{0,l}:= V_{l}^\dagger \dots V_{1}^\dagger O V_{1} \dots V_{l}$. In particular, for $l=0$ we have $D_{0,0}=O$. 

We then lower bound the term in Eq.~\eqref{eq:end_res_stepb} as follows.
\begin{align}
    \Tr[\rho \; A_{l,m}]^2  &=\left(\Tr[\rho D_{l,\nparams}]+\Tr[\rho T_{l,\nparams}]\right)^2 \\
    &\geq  \left(\left|\Tr[\rho D_{l,\nparams}]\right|-\left|\Tr[\rho T_{l,\nparams}]\right|\right)^2  \;, \label{eq:LB_tilde}
\end{align}
where the inequality is due to the property $|a+b|\geq ||a|-|b||$ for any reals $a$ and $b$.

One can notice that $c_l(\vec{\phi}) :=\left|\Tr[\rho D_{l,\nparams}]\right|=\left|\left.\left(\frac{\partial^2\mathcal{L}(\thv)}{\partial\th_l^2}\right)\right|_{\thv=\vec{\phi}}\right|$ is nothing but the curvature of the loss function with respect to the parameter $\th_l$ evaluated at $\vphi$ (without the average) while $\left|\Tr[\rho T_{l,\nparams}]\right|$ contains higher order contributions in $r$, which justify why it is called the perturbation term.

To further lower bound Eq.~\eqref{eq:LB_tilde}, the idea is to restrict $r$ such that the perturbation is sufficiently small. In particular, we consider a condition for $r$ such that
\begin{equation}\label{eq:cond1}
c_l(\vec{\phi})  \geq \left|\Tr[\rho T_{l,\nparams}]\right| \;.
\end{equation}

In order to find the restriction on $r$ for which Eq.~\eqref{eq:cond1} is satisfied, it is sufficient to fulfill the two following conditions: (i).~find some function $\alpha_l(\vphi)$ such that
$ \alpha_l(\vec{\phi}) r^2 \geq \left|\Tr[\rho T_{l,\nparams}]\right|$ and (ii).~choose $r$ such that 
\begin{equation}\label{eq:cond_r_inside}
    c_l(\vec{\phi})  \geq \alpha_l(\vec{\phi})  r^2\;,
\end{equation}
which ensures that $c_l(\vec{\phi})  \geq \alpha_l(\vec{\phi})  r^2 \geq \left|\Tr[\rho T_{l,\nparams}]\right|$ and hence Eq.~\eqref{eq:cond1} is hold. Note that as a reminder there is another condition on $r$ stated earlier Eq.~\eqref{eq:cond_r_remainder} which has to be simultaneously fulfilled. 

If we were to assume that all these conditions are met, by plugging in $ \alpha_l(\vec{\phi}) r^2 \geq \left|\Tr[\rho T_{l,\nparams}]\right|$ to Eq.~\eqref{eq:LB_tilde} we obtain
\begin{equation}\label{eq:sstepc_result}
    \Tr[\rho A_{l,m}]^2 \geq (c_l(\vec{\phi}) - \alpha_l(\vec{\phi})r^2)^2 \;.
\end{equation}
By substituting Eq.~\eqref{eq:sstepc_result} into Eq.~\eqref{eq:end_res_stepb} together with Eq.~\eqref{eq:VarM}, the variance lower bound can be expressed as 
\begin{align}\label{eq:var-alpha-function}
    \Var[\LC(\thv)]  \geq  \frac{r^4}{45}\left(\left(c_l(\vec{\phi}) - \alpha_l(\vec{\phi})  r^2 \right)^2-\beta_l r^2\right).
\end{align}
As one can see the form in Eq.~\eqref{eq:var-alpha-function} now looks somewhat similar to the variance lower bound stated in the theorem in Eq.~\eqref{eq:var-lower-bound-general}. It now hinges on (i).~identifying $\alpha_l(\vphi)$ and (ii).~finding the restriction on $r$ which will satisfy both conditions in Eq.~\eqref{eq:cond_r_remainder} and Eq.~\eqref{eq:cond_r_inside} at the same time.

\medskip

\textit{\underline{4.1~Identifying $\alpha_l(\vphi)$.}} 

We focus on determining $\alpha_l(\vphi)$ which can be analyzed by upper bounding the term $\left|\Tr[\rho T_{l,\nparams}]\right|$. First, from H\"{o}lder inequality, we have
\begin{align}
    \left|\Tr[\rho T_{l,\nparams}]\right| &\leq \lVert \rho \rVert_1\norm{T_{l,\nparams}} \\
    &=\norm{T_{l,\nparams}}\;,
\end{align}
where we used $\lVert \rho \rVert_1=1$ due to the positivity and normalization of a density operator.

Our next step is to upper bound $\norm{T_{l,\nparams}}$ which can be derived using a recursive relation for each parameter $\mindex\geq l+1$ as follows
\begin{align}
    \norm{T_{l,\mindex}} &= \norm{A_{l,\mindex}-D_{l,\mindex}} \\
    &\leq \norm{A_{l,\mindex}-\Ebb_{\th_{\mindex}}\UC_{\theta_{\mindex}}\left(D_{l,\mindex}\right)} + \norm{\Ebb_{\th_{\mindex}}\UC_{\theta_{\mindex}}\left(D_{l,\mindex}\right)-D_{l,\mindex}} \\
    &\leq \norm{A_{l,\mindex}-\Ebb_{\th_{\mindex}}\UC_{\theta_{\mindex}}\left(D_{l,\mindex}\right)}+ \frac{r^2}{6}\norm{[H_{\mindex},[H_{\mindex},D_{l,\mindex}]]}  \\ &=\norm{\Ebb_{\th_{\mindex}}\UC_{\theta_{\mindex}}\left(V_\mindex^{\dagger}A_{l,\mindex-1}V_\mindex -D_{l,\mindex}\right)} + \frac{r^2}{6}\norm{[H_{\mindex},[H_{\mindex},D_{l,\mindex}]]}  \\
    &\leq \norm{V_\mindex^{\dagger} A_{l,\mindex-1}V_\mindex -D_{l,\mindex}} + \frac{r^2}{6}\norm{[H_{\mindex},[H_{\mindex},D_{l,\mindex}]]}  \\
    &=\norm{A_{l,\mindex-1} -D_{l,\mindex-1}} + \frac{r^2}{6}\norm{[H_{\mindex},[H_{\mindex},D_{l,\mindex}]]}  \\
    &=\norm{T_{l,\mindex-1}} + \frac{r^2}{6}\norm{[H_{\mindex},[H_{\mindex},D_{l,\mindex}]]}  \label{eq:trash_bound_1} \;,
\end{align}
where the triangle inequality is used in the first inequality, the second inequality is obtained by applying Proposition~\ref{prop:expectation_taylor_product}, the third inequality is due to the Jensen inequality together with unitarily invariance of the norm, and the third equality is due to unitarily invariance of the norm.

Next, we can proceed to upper bound $\norm{T_{l,\nparams}}$ by writing the term as the telescopic sum
\begin{align}
    \norm{T_{l,\nparams}}  & = \norm{T_{l,l}} + \sum_{\mindex=l+1}^\nparams \left( \norm{T_{l,\mindex}} - \norm{T_{l,\mindex-1}}\right) \\
    & \leq \norm{T_{l,l}} + \frac{r^2}{6} \sum_{\mindex=l+1}^\nparams \norm{[H_{\mindex},[H_{\mindex},D_{l,\mindex}]]}  \;, \label{eq:trash_UB_part0}
\end{align}

Then, the term $\norm{T_{l,l}}$ can be expressed as 
\begin{align}
     \norm{T_{l,l}} & = \norm{A_{l,l} - D_{l,l}} \\
     & = \norm{ [H_l, [H_l, V_l^\dagger A_{0,l-1}V_l]] - [H_l, [H_l, V_l^\dagger D_{0,l-1} V_l]]} \\
     & = \norm{ [H_l, [H_l, V_l^\dagger T_{0,l-1}V_l]] } \\
     & \leq 4 \left[\omega^{(\rm max)}_l\right]^2 \norm{T_{0,l-1}} \label{eq:trash_UB_part1} \;,
\end{align}
where the inequality is by using Lemma~\ref{lemma:bound-nested-commutator}.

Similarly, we can recursively upper bound $\norm{T_{0,l-1}}$ in the same procedure as above and remark that  $T_{0,0} = 0$ by definition leading to 
\begin{align}
    \norm{T_{0,l-1}} \leq \frac{r^2}{6} \sum_{\mindex=1}^{l-1} \norm{[H_{\mindex},[H_{\mindex},D_{0,\mindex}]]} \label{eq:trash_UB_part2}
\end{align}

Finally, by grouping Eq.~\eqref{eq:trash_UB_part0}, Eq.~\eqref{eq:trash_UB_part1} and Eq.~\eqref{eq:trash_UB_part2}, we obtain the following upper bound on $ \norm{T_{l,\nparams}}$ and can identify $\alpha_l(\vphi)$ 
\begin{align}\label{eq:UB_trash}
    \norm{T_{l,\nparams}} &\leq \frac{r^2}{6}\left(4 \left[\omega_{l}^{\rm (max) }\right]^2\sum_{\mindex=1}^{l-1}\norm{[H_{\mindex},[H_{\mindex},D_{0,\mindex}]]}  + \sum_{\mindex=l+1}^{\nparams}\norm{[H_{\mindex},[H_{\mindex},D_{l,\mindex}]]}  \right) \\
    &:= \frac{r^2}{6}\left(4 \left[\omega_{l}^{\rm (max) }\right]^2\sum_{\mindex=1}^{l-1} (\omega_{\mu}^{\rm (eff) }(\vec{\phi}))^2  + \sum_{\mindex=l+1}^{\nparams}\left[\widetilde{\omega}_{l, \mu }^{(\rm eff)}(\vec{\phi})\right]^2  \right)\\
    &:= \alpha_l(\vec{\phi}) r^2 
\end{align}
where we have
\begin{align}\label{eq:alpha-function-proof}
    \alpha_l(\vec{\phi}) &:= \frac{1}{6} \left(4 \left[\omega_{l}^{\rm (max) } \right]^2\sum_{\mindex=1}^{l-1} \left[\omega_{\mu}^{\rm (eff) }(\vec{\phi}) \right]^2 + \sum_{\mindex=l+1}^{\nparams}\left[\widetilde{\omega}_{l, \mu }^{(\rm eff)}(\vec{\phi})\right]^2  \right) \;,
\end{align}
with
\begin{align}
    \omega_{\mu}^{\rm (eff) }(\vec{\phi}) &:= \sqrt{\norm{[H_{\mindex},[H_{\mindex},D_{0,\mindex}]]}}\\
    \widetilde{\omega}_{l, \mu }^{(\rm eff)}(\vec{\phi}) &:= \sqrt{\norm{[H_{\mindex},[H_{\mindex},D_{l,\mindex}]]}}\;.
\end{align}

Here, we note that the effective frequencies $\omega_{\mu}^{\rm (eff) }(\vec{\phi})$ and $\widetilde{\omega}_{l, \mu }^{(\rm eff)}(\vec{\phi})$ are nothing but norms of partial derivatives of the backpropagated observable evaluated at $\vec{\phi}$. Specifically, we have 
\begin{align}
    \omega_{\mu}^{\rm (eff) }(\vec{\phi}) &= \sqrt{\norm{ \left.\frac{\partial^2 [U(\thv)^\dagger O U(\thv)]}{\partial \th_\mu^2}\right|_{\thv= \vec{\phi}}}}\;,\\
    \widetilde{\omega}_{l, \mu }^{(\rm eff)}(\vec{\phi}) &= \sqrt{\norm{ \left.\frac{\partial^4 [U(\thv)^\dagger O U(\thv)]}{\partial \th_\mu^2 \partial \th_l^2}\right|_{\thv= \vec{\phi}}}}\;.
\end{align}

Upon substituting $\alpha_l(\vphi)$ in Eq.~\eqref{eq:alpha-function-proof} into Eq.~\eqref{eq:var-alpha-function}, we achieve the variance lower bound in Eq.~\eqref{eq:var-lower-bound-general} presented in the theorem.

\medskip

\textit{\underline{4.2~.Determining the restriction on $r$.}} 
Crucially for the variance lower bound to hold, the width $2r$ of the region has to simultaneously satisfy Eq.~\eqref{eq:cond_r_remainder} and Eq.~\eqref{eq:cond_r_inside}. Additionally, $r$ has to be sufficiently small such that the variance lower bound in Eq.~\eqref{eq:var-alpha-function} is non-negative, that is 
\begin{align} \label{eq:cond_r_non_zero_var}
\left(c_l(\vec{\phi}) - \alpha_l(\vec{\phi})  r^2 \right)^2-\beta_l r^2 > 0 \;.
\end{align}

We now focus on determining the sufficient condition on $r$ that can achieve three constrains Eq.~\eqref{eq:cond_r_remainder}, Eq.~\eqref{eq:cond_r_inside} and Eq.~\eqref{eq:cond_r_non_zero_var} all at the same time. Taken an inspiration from the form in Eq.~\eqref{eq:cond_r_non_zero_var},
we start by considering
\begin{align}
  \left(c_l(\vec{\phi}) - \alpha_l(\vec{\phi})  r^2 \right)^2- \beta_l r^2  & = c_l(\vec{\phi}) ^2+(\alpha_l(\vec{\phi})  r^2)^2-2c_l\alpha_l r^2-\beta_l r^2 \\
  &\geq c_l(\vec{\phi})^2-(2c_l(\vec{\phi}) \alpha_l(\vec{\phi}) +\beta_l)r^2 \;, \label{eq:inspire-r-cond}
\end{align}
where we lower bound by removing the positive term $(\alpha_l(\vec{\phi}) 
 r^2)^2$. 

One can notice that Eq.~\eqref{eq:cond_r_non_zero_var} is satisfied if we enforce the right-hand of Eq.~\eqref{eq:inspire-r-cond} to follow
\begin{align}\label{eq:proof-thm3-near-the-end}
    c_l(\vec{\phi})^2-(2c_l(\vec{\phi}) \alpha_l(\vec{\phi}) +\beta_l)r^2 \geq c_l(\vphi)^2 (1- \Delta) \;,
\end{align}
for some $\Delta$ such that $0<\Delta <1$. Upon rearranging, we have the restriction on $r$ as follows
\begin{align}
   r^2 &  \leq  \frac{\Delta c_l(\vec{\phi})^2}{2 c_l(\vec{\phi}) \alpha_l(\vec{\phi}) +  \beta_l}\label{eq:cond3final}\;.
\end{align}

Next, we show that by choosing $r$ which satisfies Eq.~\eqref{eq:cond3final}, then the condition in Eq.~\eqref{eq:cond_r_inside} is also satisfied since
\begin{equation}
   r \leq \frac{\Delta c_l(\vec{\phi})^2}{2 c_l(\vec{\phi}) \alpha_l(\vec{\phi}) +  \beta_l} \leq \frac{\Delta}{2}\frac{c_l(\vec{\phi})}{\alpha_l(\vec{\phi})} \leq \frac{c_l(\vec{\phi})}{\alpha_l(\vec{\phi})}\;,
\end{equation}
where we first use $\beta_l\geq 0$, and then $\Delta<1$. 

Lastly, one can also verify that Eq.~\eqref{eq:cond_r_remainder} is satisfied under Eq.~\eqref{eq:cond3final} for some choice of the constant $\Delta$. In particular, we have
\begin{equation}\label{eq:r_final_to_corr}
    \frac{\Delta c_l(\vec{\phi})^2}{2 c_l(\vec{\phi}) \alpha_l(\vec{\phi}) +  \beta_l}\leq \frac{\Delta c_l(\vec{\phi})^2}{  \beta_l} \leq \frac{3\Delta}{2\left[\omega_{l}^{\rm (max) }\right]^2}\;,
\end{equation}
where the first inequality is due to $c_l(\vec{\phi}) \alpha_l(\vec{\phi})\geq 0$. In the second inequality, we recall $\beta_l=\frac{32\left[ \omega_{l}^{\rm (max) }\right]^{6}  \norm{O}^2}{3}$ and using $c_l(\vec{\phi})\leq \left[\omega^{\rm (eff)}_{l}(\vec{\phi})\right]^2\leq 4\left[\omega_{l}^{\rm (max) }\right]^2\norm{O}$ (from  Lemma~\ref{lemma:bouded_unitary_product}).
 Therefore, by setting $\Delta=3/8$, we recover the condition in Eq.~\eqref{eq:cond_r_remainder}.

Consequently, for any layer $l$ such that the perturbation $r$ obeys 
 \begin{equation}\label{eq:gorge_l}
   \begin{aligned}
     r^2 &\leq \frac{3c_l(\vec{\phi})^2}{8(2c_l(\vec{\phi})\alpha_l(\vec{\phi}) + \beta_l)} \\
     &= \frac{9c_l(\vec{\phi})^2}{8\left(c_l(\vec{\phi})\left(4 \left[\omega_{l}^{\rm (max) }\right]^2 \sum_{\mindex=1}^{l-1} \left[\omega_{\mu}^{\rm (eff) }(\vec{\phi})\right]^2  + \sum_{\mindex=l+1}^{\nparams}\left[\widetilde{\omega}_{l, \mu }^{(\rm eff)}(\vec{\phi})\right]^2  \right) + 3\beta_l \right)} \\
     & :=r^2_{{\rm patch},l}(\vec{\phi})
     \end{aligned}
 \end{equation}

 we have the following lower bound
 \begin{align}
       \Ebb_{\overline{l}}[\Var_l[\Ebb_{1,\dots,l-1}[\mathcal{L}(\thv)]]] &\geq \frac{r^4}{45}\left(\left(c_l(\vec{\phi})- \alpha_l(\vec{\phi}) r^2 \right)^2- \beta_l r^2\right)\\
    & \geq \frac{1}{72}    c_l(\vec{\phi})^2  r^4 \;,\label{eq:var_LB_l}
 \end{align}
where to reach the second inequality we use Eq.~\eqref{eq:proof-thm3-near-the-end} with $\Delta = 3/8$.

 We recall here that the above lower bound is also a lower bound of the loss function variance corresponding to a single term contribution in the variance decomposition in Eq.~\eqref{eq:VarM}. Thus, obtaining a lower bound including the contribution from any subset of parameters $\Lambda \subset \{1,\dots,\nparams\}$ amounts to restricting more the region with guarantees.
Indeed, if the perturbation $r$ satisfies
\begin{equation}
    r^2 \leq \min_{l \in \Lambda} r^2_{{\rm patch},l}(\vec{\phi}) := r^2_{{\rm patch}}(\vec{\phi}) 
\end{equation}
 then the loss function variance is lower bounded as
 
 \begin{align}
    \Var_{\vtheta \sim \uni(\vec{0},r)}[L(\thv)] &\geq \sum_{l \in \Lambda} \frac{r^4}{45}\left(\left(c_l(\vec{\phi})- \alpha_l(\vec{\phi}) r^2 \right)^2- \beta_l r^2\right)\\
    & \geq \frac{1}{72}   \left(\sum_{l \in \Lambda} c_l(\vec{\phi})^2 \right) r^4 \;.
 \end{align}

 Moreover, for $r = r_{\rm patch}(\vphi)$, we obtain
 \begin{equation}
     \Var_{\vtheta \sim \uni(\vec{0},r_{\rm patch})}[\LC(\thv)] \geq \frac{1}{72}   \left(\sum_{l \in \Lambda} c_l(\vec{\phi})^2 \right) r^4_{\rm patch}(\vphi)
 \end{equation}

 \bigskip

\paragraph*{\underline{5.~Proving a theoretical guarantee on the region of attraction.}}
    Going back to the loss function variance lower bound derived from the single contribution of a term $\Ebb_{\overline{l}}[\Var_l[\Ebb_{1,\dots,l-1}[\mathcal{L}(\thv ]]]$, we showed that for a perturbation $r$ obeying the condition 
    \begin{equation}\label{eq:informal_cond}
        r^2 \leq r^2_{{\rm patch},l}(\vec{\phi})
    \end{equation}
    where $r_{{\rm patch},l}(\vec{\phi})$ is defined in Eq.~\eqref{eq:gorge_l}, the variance is lower bounded as (See Eq.~\eqref{eq:var_LB_l}).
    \begin{equation}\label{eq:informal_LB}
         \Var_{\vtheta \sim \uni(\vec{0},r)}[\LC(\thv)] \geq \frac{1}{72} c_l(\vec{\phi})^2 r^4 
    \end{equation}

    We can further obtain a more generic  condition than the one in Eq.~\eqref{eq:informal_cond} by upper bounding the denominator of $r_{\rm patch,l}^2$ under the assumption that $c_l(\vec{\phi})$ is at most polynomially vanishing (i.e. $c_l(\vec{\phi}) \in \Omega\left(\frac{1}{{\rm poly}(n)}\right)$).

    First we recall that $r_{\rm patch,l}^2(\vphi) = \frac{3c_l(\vec{\phi})^2}{8(2c_l(\vec{\phi})\alpha_l(\vec{\phi}) + \beta_l)}$. 
    Hence,  Eq.~\eqref{eq:informal_cond} still holds if we replace $\alpha_l(\vec{\phi})$ by a more generic upper bound.
    Specifically,  we can use the property $\norm{[A,[A,B]]} \leq 4 \norm{A}^2 \norm{B}$ and obtain
\begin{align}
    \alpha_l(\vphi) &\leq \frac{2}{3} \left(  \left[\omega_{l}^{\rm (max) }\right]^2 \sum_{\mindex=1}^{l-1} \left[\omega_{\mu}^{\rm (eff) }(\vec{\phi})\right]^2  +  \sum_{\mindex=l+1}^{\nparams} \left[\omega^{(\rm max)}_{\mu}\right]^2 \norm{D_{l,\mindex}} \right) \label{eq:proof-roa-1}\\
    &= \frac{2}{3} \left(  \left[\omega_{l}^{\rm (max) }\right]^2 \sum_{\mindex=1}^{l-1} \left[\omega_{\mu}^{\rm (eff) }(\vec{\phi})\right]^2  +  \left[\omega^{\rm (eff)}_{l}(\vec{\phi})\right]^2 
 \sum_{\mindex=l+1}^{\nparams} \left[\omega^{(\rm max)}_{\mu}\right]^2 \right) \\
    & \leq \frac{8 }{3} \norm{O}\left(  \left[\omega_{l}^{\rm (max) }\right]^2 \sum_{\mindex=1}^{l-1} \left[\omega^{(\rm max)}_{\mu}\right]^2  + \left[\omega_{l}^{\rm (max) }\right]^2 \sum_{\mindex=l+1}^{\nparams} \left[\omega^{(\rm max)}_{\mu}\right]^2  \right)\\
    &= \frac{8 }{3} \norm{O} \left[\omega_{l}^{\rm (max) }\right]^2 \left(\sum_{\substack{\mu=1\\ \mu \neq l}}^\nparams  \left[\omega^{(\rm max)}_{\mu}\right]^2 \right)\\
    &\in \mathcal{O}\left(  \nparams \cdot \poly(n)\right)
\end{align}
where we mainly used the inequalities $\left[\widetilde{\omega}_{l, \mu }^{(\rm eff)}(\vphi)\right]^2 \leq 4 \left[\omega^{(\rm max)}_{\mu} \omega^{\rm (eff)}_{l}(\vphi)\right]^2$ and $\left[\omega_{\mu}^{\rm (eff) }(\vphi)\right]^2 \leq 4 \left[\omega^{(\rm max)}_{\mu}\right]^2 \norm{O}$ and the assumption that $\norm{O},\norm{H_l} \in \poly(n)$.

Moreover, the curvature $c_l(\vec{\phi})$ can be upper bounded as 
\begin{equation} 
    \begin{aligned}
         c_l(\vec{\phi}) &\leq 4 \norm{O} \left[\omega_{l}^{\rm (max) }\right]^2\\
         &\in \mathcal{O}(\poly(n))
    \end{aligned}
\end{equation}

Therefore, given that the curvature $c_l(\vec{\phi})$ satisfies
\begin{equation}\label{eq:proof_th1_nonvanish}
    c_l(\vec{\phi}) \in \Omega\left(\frac{1}{\poly(n)}\right)
\end{equation}
for a perturbation $r$ scaling as
\begin{equation}\label{eq:proof_th1_cond}
    r^2 \in \Omega\left(\frac{1}{\poly(n) \cdot \nparams}\right)
\end{equation}

the variance is lower bounded as 
\begin{equation}\label{eq:proof_th1_LB}
    \begin{aligned}
               \Var_{\vtheta \sim \uni(\vec{0},r)}[\LC(\thv)] &\in \Omega\left( c_l(\vec{\phi})^2 \frac{1}{\poly(n) \cdot m^2} \right)\\
               &\in \Omega\left(  \frac{1}{\poly(n) \cdot m^2} \right)
    \end{aligned}
\end{equation}
which concludes the proof.

\bigskip

\bigskip
      
\end{proof}

\subsubsection{Proof of Theorem~\ref{th:var_formal_cor}: The arbitrary correlated part of Theorem~\ref{th:var}}\label{appendix:lower_bound_correlated}

In this section, we provide a proof for Theorem~\ref{th:var_formal_cor} which is a generalization of Theorem \ref{th:var_formal}. In particualr, the theorem generalizes the uncorrelated/spatial correlated parameter setting with arbitrary generators to further account for temporal correlations (See Fig \ref{fig:corr_schematic}). However, this generalization comes at the cost of having more restricted regions with guaranteed substantial variance.

\begin{proof}
The main proof strategy here is identical to the proof strategy of Theorem~\ref{th:var_formal} (see Appendix~\ref{appx:main-thm-overview}) and indeed many step-by-step calculations are very similar. Nevertheless, the extra difficulty of course arises from temporally separate generators sharing the same parameter. Namely, this setting makes the expression of the loss partial derivatives more intricate, requiring a trickier tracking of the set of generators sharing the same parameter. 

We start by setting up some notations to match the general forms in the Lemmas and Corollaries from Appendix \ref{appendix:proof-analytical-results}. Then, we reiterate the same steps in the proof of Theorem \ref{th:var_formal} according to the newly introduced notation. For each proof step, we give a brief comment on how the generalized proof differs from the more restricted setting of Theorem \ref{th:var_formal}. Though not necessary, the readers are encouraged to go through the proof of Theorem~\ref{th:var_formal} beforehand. Otherwise, for those adventurous ones feel free to just delve in. 

\medskip

\paragraph*{\underline{0.~Setting some initial notations.}} We consider a loss function $\LC(\thv)$ of the form in Eq.~\eqref{eq:loss} and a parametrized circuit of the form in Eq.~\eqref{eq:circuit} with a set of $M$ generators $\{ H_l\}_{l=1}^\nHam$ and a set of $m$ independent parameters $\thv = \{\theta_k \}_{k=1}^\nparams$ such that $\nHam \geq \nparams$
\begin{align}\label{eq:circuit_th4}
    U(\vec{\th}) & = \prod_{l=1}^M V_l \, U_l\bigl(\theta_{\SC(l)}\bigr)\,,\\
    & = \prod_{\layerindex=1}^\nHam V_\layerindex e^{-i \th_{\mathcal{S}(\layerindex)} H_\layerindex} \;,
\end{align}
where a surjective map $\mathcal{S}:\{1\dots,\nHam\} \rightarrow \{1\dots,\nparams\}$ assigns each generator index $\layerindex$ (and hence generator $H_l$) to its associated parameter index $\paramindex = \SC(\layerindex)$, such that at the $\layerindex^{\rm th}$ gate we have the parametrized gate $e^{- i \th_\paramindex H_\layerindex}$. This allows different generators to share the same parameter, i.e. allows for \emph{correlated parameters} as sketched in Fig.~\ref{fig:corr_schematic}. 

Moreover, we denote the inverse map $\SC^{-1}$ such that $\mathcal{S}^{-1}(\paramindex)$ is a vector of size $K_p$ whose components are the gate indices associated with the parameter $\th_\paramindex$. For convenience, we arrange the components of the vector $\mathcal{S}^{-1}(\paramindex)$ in increasing order such that
    \begin{align}\label{eq:S_vec}
        [\mathcal{S}^{-1}(\paramindex)]_1 < [\mathcal{S}^{-1}(\paramindex)]_2 < \dots <  [\mathcal{S}^{-1}(\paramindex)]_{K_p}
    \end{align}
Here, we emphasize that we adopted the perhaps unusual convention of indexing the circuit gates with respect to the observable $O$ such that $U_1$ is the first gate acting on $O$. Hence, the layer index $[\mathcal{S}^{-1}(\paramindex)]_1$ corresponds to the first gate acting on the observable $O$ which has the parameter $\th_p$. 

\medskip

Similarly to the proof of Theorem~\ref{th:var_formal}, to study the patch of the loss landscape $\mathcal{L}(\boldsymbol{\theta})$ around a point $\vec{\phi}$, one can re-write the parametrized gates in form of the perturbation around the fixed point $\theta_l = \phi_l + \delta \theta_l$ and absorb $e^{-i \phi_{\mathcal{S}(\layerindex)} H_\layerindex}$ into the non parameterized unitaries $
V_\layerindex$. Note that for the ease of notations in the following long proof, we will also redefine the perturbation variable as $\delta \th_{S(l)} \rightarrow \th_{S(l)}$ in the last line below (and for the rest of the proof). 
\begin{align}
U_{\layerindex}(\theta_{S(l)} = \phi_{\mathcal{S}(\layerindex)} + \delta \th_{\mathcal{S}(\layerindex)}) & = V_l e^{-i (\phi_{\mathcal{S}(\layerindex)} + \delta \th_{\mathcal{S}(\layerindex)})H_l}\\ 
&= \left[V_l e^{-i \phi_{\mathcal{S}(\layerindex)} H_\layerindex}\right] e^{-i \delta \th_{\mathcal{S}(\layerindex)} H_\layerindex}\\
&= \widetilde{V}_\layerindex (\phi_{\mathcal{S}(\layerindex)}) e^{-i  \th_{\mathcal{S}(\layerindex)} H_\layerindex} \;,
\end{align}
where we used the shorthand $ \widetilde{V}_\layerindex (\phi_{\mathcal{S}(\layerindex)})  := V_l e^{-i \phi_{\mathcal{S}(\layerindex)} H_\layerindex}$. 

In addition, without loss of generality, the loss variance can also be re-expressed in terms of the perturbation around $\boldsymbol{\phi}= \vec{0}$. 
\begin{equation}
   \Var_{\thv \sim \uni(\vec{\phi},r)}\left[\mathcal{L}(\thv)\right] =  \Var_{\thv \sim \uni(\vec{0},r)}\left[\mathcal{L}(\thv + \vec{\phi})\right] \;.
\end{equation}
For convenience, we will simply write $\LC(\thv) := \LC(\thv + \vec{\phi})$ from here. But we stress again that $\thv$ is now a perturbation around $\vphi$ and the information about the point $\vphi$ is implicitly encoded in $\LC(\thv)$ with $\{ \tilde{ V}_l(\phi_l) \}_{l}$.

\medskip
    
For convenience, we also express the loss function in terms of a composition of the unitary channels 
\begin{align}
    \UC_{\th_{\mathcal{S}(l)},l}(\cdot)&:= e^{i \th_{\mathcal{S}(\layerindex)} H_\layerindex} (\cdot) e^{-i \th_{\mathcal{S}(\layerindex)} H_\layerindex} \\
    \mathcal{V}_l(\cdot) &:= \left[\widetilde{V}_\layerindex (\phi_{\mathcal{S}(\layerindex)})\right]^\dagger (\cdot) \left[\widetilde{V}_\layerindex (\phi_{\mathcal{S}(\layerindex)})\right] \;,
\end{align}
where we note that $\mathcal{V}_l(\cdot)$ implicitly depends on the component $\phi_l$ of the fixed point. Then, the loss can be re-expressed in terms of the channels as

\begin{align}
    \mathcal{L}(\vec{\th}) &= \Tr\left[\rho \; U^\dagger(\vec{\th}) O U(\vec{\th})\right]\\
    &= \Tr\left[\rho \; \left(\UC_{\th_{\mathcal{S}(\nHam),\nHam}} \circ \mathcal{V}_{\nHam}\right) \circ\dots  \circ\left(\UC_{\th_{\mathcal{S}(1),1}} \circ \mathcal{V}_{1}\right)[O] \right]\\
    &=  \Tr\left[\rho \; \channel_{\thv}(O) \right] \label{eq:channel_def}
\end{align}
In the second inequality, we express the back-propagated observable $U^\dagger(\vec{\th}) O U(\vec{\th})$ as a composition of unitary channels acting on the observable $O$ and in 
 the last equality, we simply introduce the shorthand
 \begin{align}\label{eq:channel-whole-evolution}
     \channel_{\thv}[O]:= \bigcirc_{l=M}^1 (\UC_{\th_{\mathcal{S}(l),l}} \circ \mathcal{V}_l)[O] =\left(\UC_{\th_{\mathcal{S}(\nHam),\nHam}} \circ \mathcal{V}_{\nHam}\right) \circ\dots  \circ\left(\UC_{\th_{\mathcal{S}(1),1}} \circ \mathcal{V}_{1}\right)[O] \;.
 \end{align}

Now, that we have recalled the general settings of the Theorem, we will start diving into the core steps of the proof. Concretely, we will be going through the same steps from the proof of Theorem \ref{th:var_formal} but with trickier notations. 

\bigskip
    
\paragraph*{\underline{1.~Expressing the loss variance as a sum of single parameter variance contributions.}}
    
    Since the parameters $\{\theta_{k}\}_{k=1}^\nparams$ are  independently sampled, one can directly use Proposition~\ref{prop:var_decomp} in order to express the variance of the loss function $\mathcal{L}[\boldsymbol{\theta}]$  as 
    \begin{equation}\label{eq:Var_decomp_cor}
    \Var_{\thv \sim \uni(\vec{0},r)}\left[\mathcal{L}(\thv )\right]  =\sum_{k=1}^\nparams \Ebb_{\pi(\nparams),\dots,\pi(k+1)}[\Var_{\pi(k)}[\Ebb_{\pi(k-1),\dots,\pi(1)}[\mathcal{L}(\vec{\th})]]]
    \end{equation}
    for any permutation $\pi: \{1,\dots,\nparams\} \rightarrow \{1,\dots,\nparams\}$.

    Let us now fix some parameter index $\paramindex_0 \in \{1,\dots,\nparams\}$ and choose the permutation such that $\pi(1) = p_0$. Then, the first term in the sum in Eq.~\eqref{eq:Var_decomp_cor} becomes $\Ebb_{\overline{p_0}}[\Var_{p_0}[\mathcal{L}(\thv)]]$ with $\overline{p_0}:= \{1,\dots,p_0-1,p_0+1,\dots,\nparams\}$. Note that this \textit{bar} convention is different from the one used in the proof of Theorem~\ref{th:var_formal}. 
    
    We note here that all the terms in the variance decomposition in Eq.~\eqref{eq:Var_decomp_cor} are positive. Hence, we can lower bound the loss function variance with the single parameter $\th_{p_0}$ variance contribution as
    \begin{equation}\label{eq:proof_th4_step1}
    \begin{aligned}
         \Var_{\thv \sim \uni(\vec{0},r)}[\mathcal{L}(\thv)] &\geq \Ebb_{\overline{p_0}}[\Var_{p_0}[\mathcal{L}(\thv)]]\;.
    \end{aligned}
    \end{equation}
    Here, note that we chose to lower bound the loss variance with the single parameter variance contribution and 
 permute an associated parameter $\th_{p_0}$ to the first parameter to be taken variance (i.e. no expectations are taken before applying the variance). This step is different from earlier in the proof of Theorem~\ref{th:var_formal} where all the single parameter variance terms are kept and is the consequence of the temporal correlation which makes much more technical challenging to keep the other terms here.

    In the following step, we focus on further lower bounding the right hand side in Eq.~\eqref{eq:proof_th4_step1}.

    \bigskip

\paragraph*{\underline{2.~ Lower bounding the single parameter variance $\Var_{p_0}[\mathcal{L}(\thv)]$.}} The variance term on the right-hand side is computed only over the parameter $\theta_{p_0}$. The strategy here is to express the channel $\EC_{\thv}(O)$ and consequentially $\LC(\thv)$ in the form such that we can justify the application of Corollary~\ref{cor:var_LB_1param_layers} to lower bound $\Var_{p_0}[\Tr[\rho \mathcal{E}_{\thv}(O)]]$. That is, we argue that the channel $\EC_{\thv}$ in Eq.~\eqref{eq:channel-whole-evolution} can be written in the form of 
\begin{align}\label{eq:channel_regrouped}
     \channel_{\thv}[O] = \widetilde{\VC}_{K_{p_0} +1} \circ \left[\bigcirc_{l=K_{p_0}}^1 (\UC_{\theta_p, [S^{-1}(p)]_{l}} \circ \widetilde{\mathcal{V}}_l)\right][O] \;,
\end{align}
where the set of newly introduced shorthands i.e., $ \{ \widetilde{\VC}_l \}_{l=1}^{K_{p_0}+1}$ implicitly depends on the other parameters $\thv_{\overline{p_0}}$ and does not contain dependence on $\theta_{p_0}$. While the exact forms of these tilde unitaries are presented below, the main message here is that the form of $\channel_{\thv}[O]$ in Eq.~\eqref{eq:channel_regrouped} justifies the application of Corollary~\ref{cor:var_LB_1param_layers}. 

\medskip

To see how the whole parametrized channel $\EC_{\thv}$ can be re-written in Eq.~\eqref{eq:channel_regrouped}, let us consider a toy parametrized circuit with $5$ gate generators and $3$ independent parameters of the form
\begin{align}
     \mathcal{E}_{\thv}^{({\rm Toy})}
     &= \left(\UC_{\th_{3},5} \circ \mathcal{V}_{5}\right) \circ \left(\UC_{\th_{2},4} \circ \mathcal{V}_{4}\right) \circ \left(\UC_{\th_{3},3} \circ \mathcal{V}_{3}\right) \circ \left(\UC_{\th_{2},2} \circ \mathcal{V}_{2}\right) \circ \left(\UC_{\th_{1},1} \circ \mathcal{V}_{1}\right) \;.
\end{align}
As a reminder, for a parametrized channel of the form $\UC_{\theta_i,l}$, the first subscript refers the parameter (index) and the second subscript refers to the gate index. For this toy example, we have the following map $\SC$ which maps the gate index to a parameter index as
\begin{align}
    \SC(1) & =  1 \;, \\
    \SC(2) & = \SC(4) = 2 \;, \\
    \SC(3) & = \SC(5) = 3  \;, 
\end{align}
as well as, the inverse map $\SC^{-1}$ which maps a parameter index to a vector whose components are gate indices
\begin{align}
    \SC^{-1}(1) & = (1) \;,\\
    \SC^{-1}(2) & = (2, 4) \;, \label{eq:proof-inverse-map-toy} \\
    \SC^{-1}(3) & = (3,5) \;.
\end{align}

Let us further say that we are interested in $\theta_{p_0} = \theta_2$. Then by grouping the channels that do not depend on $\theta_2$ together, and using the inverse map in Eq.~\eqref{eq:proof-inverse-map-toy} (i.e., $[\SC^{-1}(2)]_1 = 2, [\SC^{-1}(2)]_2 = 4$ and $K_2 = 2$), one can see that the toy channel can be re-expressed as
\begin{align}
    \mathcal{E}_{\thv}^{({\rm Toy})}
     &= \left(\UC_{\th_{3},5} \circ \mathcal{V}_{5}\right) \circ \UC_{\th_{2},4} \circ \left( \mathcal{V}_{4}\circ \UC_{\th_{3},3} \circ \mathcal{V}_{3}\right) \circ \UC_{\th_{2},2} \circ \left(\mathcal{V}_{2}\circ \UC_{\th_{1},1} \circ \mathcal{V}_{1}\right) \\
     & = \widetilde{\VC}_3 \circ \UC_{\theta_2, [\SC^{-1}(2)]_2} \circ \widetilde{\VC}_{2} \circ \UC_{\th_{2},[\SC^{-1}(2)]_1} \circ \widetilde{\VC}_{1} \;,
\end{align}
which now matches the form in Eq.~\eqref{eq:channel_regrouped} for this toy example. 
It is then straight forward to generalize this to an arbitrary parametrized circuit and obtain the form in Eq.~\eqref{eq:channel_regrouped}. Crucially, while an exact form of an individual $\widetilde{\VC}_l$ could be messy and is not of importance to our proof, all terms $\{ \widetilde{\VC}_l \}_{l=1}^{K_{p_0} + 1}$ together depends on all the rest of parameters except $\theta_{p_0}$. 


Before moving forwards, we introduce some compact notations for parameters and second derivative of channels. First, $\thv_{\overline{p_0}}$ is denoted as a collection of parameters \textit{except} $\theta_{p_0}$. Next, since we will need to specify which parameters are still free parameters or already evaluated at zero for the second derivative (with respect to $\theta_{p_0}$) in the upcoming proof steps, it is much more convenient to simply omit those parameters that are already evaluated in the subscripts. For example, if $\theta_{p_0}$ is already evaluated at zero, the second derivative channel can be expressed as
\begin{align}
    \channel^{(2)}_{\thv_{\overline{p_0}}} := \left. \left(\frac{\partial^2 \channel_{\thv}}{\partial \theta_{p_0}^2}\right)\right|_{\th_{p_0} = 0} \;.
\end{align}
More generally, if only $\{\th_1, \th_2, ..., \th_k \}$ with some $k < m$ are left as free parameters, the second derivative with respect to $\th_{p_0}$ can be written as
\begin{align}
    \channel^{(2)}_{\th_1, \th_2,\dots, \th_k} := \left. \left(\frac{\partial^2 \channel_{\thv}}{\partial \theta_{p_0}^2}\right)\right|_{\th_{k+1}= \th_{k+2} = \dots=\th_\nparams = 0} \;.
\end{align}
We remark that the parameter $\theta_{p_0}$ which the second derivative is with respect to is implicitly hidden in this compact notation.


We now apply Corollary~\ref{cor:var_LB_1param_layers} to lower bound the loss variance $\LC(\thv)$ with respect to a single parameter $\theta_{p_0}$ in Eq.~\eqref{eq:proof_th4_step1}. In particular, with  $K \rightarrow K_{p_0}, \mathcal{V}_k \rightarrow \widetilde{\mathcal{V}}_{l}$ and $H_k \rightarrow H_{[\mathcal{S}^{-1}(\paramindex_0)]_k}$, we obtain
\begin{align}
\Var_{\paramindex_0}\left[\Tr\left[ \rho \;\mathcal{E}_{\thv}(O)\right]\right] &\geq \frac{r^4}{45} \Tr\left[\rho \; \mathcal{E}_{\vec{\th}_{\overline{{\paramindex_0}}}}^{(2)}(O) \right]^2 - \frac{32 \left(\sum_{i \in \mathcal{S}^{-1}(\paramindex_0)} \omega^{(\rm max)}(H_i)\right)^6 \norm{O}^2}{135} r^6\;\\
&=\frac{r^4}{45} \left(\Tr\left[\rho \; \mathcal{E}_{\vec{\th}_{\overline{{\paramindex_0}}}}^{(2)}(O) \right]^2 - \frac{32 \left( \omega^{(\rm max)}_{p_0}\right)^6 \norm{O}^2}{3} r^2\right)\;, \label{eq:final_LB_step1_th4}
\end{align}
provided that the perturbation $r$ satisfies
\begin{equation}\label{eq:cond1_NEW0}
    r^2 \leq \frac{9}{16 \;\left[\omega^{(\rm max)}_{p_0}\right]^2}\;,
\end{equation}
where we denote $\omega^{(\rm max)}_{p_0}= \sum_{i \in \mathcal{S}^{-1}(\paramindex_0)} \omega^{(\rm max)}(H_i)$.

\bigskip

\paragraph*{\underline{3.~Lower bounding $\Ebb_{\overline{{\paramindex_0}}}\left[\Var_{p_0}[\mathcal{L}(\thv)]\right]$.}} 
By applying the expectation over the remaining parameters $\th_{\overline{p_0}}$ in Eq.~\eqref{eq:final_LB_step1_th4} and further denoting $\beta_{p_0} =\frac{32}{3} \left[\omega^{(\rm max)}_{p_0}\right]^6 \norm{O}^2$, we get 
\begin{align}
    \Var\left[\mathcal{L}(\boldsymbol{\theta})\right]  &\geq\Ebb_{\overline{{\paramindex_0}}}[\Var_{\paramindex_0}[\mathcal{L}(\theta)]]\\
    &\geq \frac{r^4}{45}\left( \Ebb_{\overline{{\paramindex_0}}}\left[\Tr[\rho \; \mathcal{E}_{\vec{\th}_{\overline{{\paramindex_0}}}}^{(2)}(O) ]^2\right]  - \beta_{p_0} r^2\right)\\
    &\geq \frac{r^4}{45} \left(\left(\Ebb_{\overline{{\paramindex_0}}}\left[\Tr\left[\rho \; \mathcal{E}_{\vec{\th}_{\overline{{\paramindex_0}}}}^{(2)}(O) \right]\right]\right)^2  - \beta_{p_0} r^2\right)\\
     &= \frac{r^4}{45} \left(\Tr[\rho \; \Ebb_{\overline{{\paramindex_0}}}\left[\mathcal{E}_{\vec{\th}_{\overline{{\paramindex_0}}}}^{(2)}(O) \right]]^2  - \beta_{p_0} r^2\right),\label{eq:var_lb_remainder}
\end{align}
where in the third inequality we used Jensen's inequality and in the last equality we used the linearity of the trace. 

\bigskip

\paragraph*{\underline{4. Lower bound $\Tr[\rho \; \Ebb_{\overline{{\paramindex_0}}}\left[\mathcal{E}_{\vec{\th}_{\overline{{\paramindex_0}}}}^{(2)}(O) \right]]^2$. }} 
Similarly to the fourth key proof step of Theorem \ref{th:var_formal}, the core idea to lower bound the term $\Tr[\rho \; \Ebb_{\overline{{\paramindex_0}}}\left[\mathcal{E}_{\vec{\th}_{\overline{{\paramindex_0}}}}^{(2)}(O) \right]]^2$ is to decompose  
$\Ebb_{\overline{{\paramindex_0}}}\left[\mathcal{E}_{\vec{\th}_{\overline{{\paramindex_0}}}}^{(2)}(O) \right]$ into a constant and a perturbation dependent component.
Specifically, we recall here that $\mathcal{E}_{\vec{\th}_{\overline{{\paramindex_0}}}}^{(2)}(O)$ still depends on the parameters $\vec{\th}_{\overline{{\paramindex_0}}}$ and we can decompose its expectation with respect to $\vec{\th}_{\overline{{\paramindex_0}}}$  into  a constant term $\channel^{(2)}(O)$ (i.e., the second derivative of $\channel_{\thv}(O)$ with respect to $\th_{p_0}$ evaluated at $\thv = \vec{0}$) and a perturbation dependent term $\mathcal{T}_{p_0}(O)$ of the form 
\begin{align}
 \mathcal{T}_{p_0}(O) = \Ebb_{\vec{\th}_{\overline{{\paramindex_0}}}} \left[\mathcal{E}_{\vec{\th}_{\overline{{\paramindex_0}}}}^{(2)}(O) \right] - \mathcal{E}^{(2)}(O) \label{eq:decomp} \;.
\end{align}

By plugging this decomposition in Eq.~\eqref{eq:decomp} in the variance lower bound in Eq.~\eqref{eq:var_lb_remainder}, we obtain
\begin{align}
     \Var\left[\mathcal{L}[\boldsymbol{\theta}]\right]  &\geq \frac{r^4}{45} \left(\Tr[\rho \; \left(\mathcal{E}^{(2)}(O) + \mathcal{T}_{p_0}(O)\right)]^2  - \beta_{p_0} r^2 \right)\\
     & \geq \frac{r^4}{45} \left(\left(\left|\Tr[\rho \;\mathcal{E}^{(2)}(O)]\right|  -\left|\Tr[\rho \; \mathcal{T}_{p_0}(O)]\right|\right)^2 - \beta_{p_0} r^2 \right) \;, \label{eq:LB_sep}
\end{align}
where the last inequality is obtained using the reverse triangle inequality.

\medskip

To further lower bound Eq.~\eqref{eq:LB_sep}, our strategy is to choose $r$ such that the perturbation is sufficiently small. In particular, we are interested in a condition for $r$ such that
\begin{align}\label{eq:cond1_NEW}
    \left|\Tr[\rho \;\mathcal{E}^{(2)}(O)]\right|  \geq \left|\Tr[\rho \; \mathcal{T}_{p_0}(O)]\right| \;.
\end{align}

In order to find the restriction on the perturbation $r$ for which  Eq.~\eqref{eq:cond1_NEW} is satisfied, it is sufficient to fulfill the following two conditions: (i).~identify the parameter-independent function $ \gamma_{\paramindex_0}$ such that $\left|\Tr[\rho \; \mathcal{T}_{p_0}(O)]\right| \leq \gamma_{\paramindex_0} r^2$ and (ii).~choose $r$ such that
\begin{equation}\label{eq:cond_r_inside_general}
    \left|\Tr[\rho \;\mathcal{E}^{(2)}(O)]\right|  \geq \gamma_{p_0} r^2 \;,
\end{equation}
which leads to $\left|\Tr[\rho \;\mathcal{E}^{(2)}(O)]\right|  \geq \gamma_{p_0} r^2 \geq \left|\Tr[\rho \; \mathcal{T}_{p_0}(O)]\right|$ that in turn ensures the validation of Eq.~\eqref{eq:cond1_NEW}. As a friendly reminder, there is the other condition on $r$ also required to be satisfied as stated earlier in Eq.~\eqref{eq:cond1_NEW0}.

Similarly to the proof of Theorem~\ref{th:var_formal}, given that all conditions are satisfied, by using $\gamma_{p_0} r^2 \geq \left|\Tr[\rho \; \mathcal{T}_{p_0}(O)]\right|$ in Eq.~\eqref{eq:LB_sep} we obtain the following loss variance lower bound 
\begin{align}\label{eq:var-gamma-function}
    \Var_{\thv}\left[\mathcal{L}[\boldsymbol{\theta}]\right]   \geq \frac{r^4}{45} \left(\left(c_{p_0}(\vphi)  -\gamma_{p_{0}}r^2\right)^2 - \beta_{p_0} r^2 \right) \;,
\end{align}
where we notice that $c_{p_0}(\vphi) =  |\Tr[\rho \;\mathcal{E}^{(2)}(O)]| = \left| \left. \left(\frac{\partial \LC(\thv)}{\partial\th_{p_0}} \right) \right|_{\thv = \vphi}\right|$. We remark again that our compact notation for $\channel^{(2)}$ implicitly contains the dependence on $\vphi$ and $\theta_{p_0}$ (which is the parameter that got derivative with respect to). 

Note that the lower bound in Eq.~\eqref{eq:var-gamma-function} is already lookalike to the bound stated in the theorem in Eq.~\eqref{eq:thm-correlated-variance}. We now proceed by (i).~identifying $\gamma_{p_0}$ and (ii).~determining the restriction on $r$.

\medskip

\noindent\underline{\textit{4.1 Identifying } $\gamma_{\paramindex_0}$.}
The task in this setting is more intricate than one in Theorem \ref{th:var_formal}. This technical difficulty mainly arises due to non-consecutive gates being able to share the same parameter. Nonetheless, we will show that we do obtain similar results in both settings. 

First, from H\"{o}lder inequality, we have
\begin{align}
    \left|\Tr[\rho \; \mathcal{T}_{p_0}(O)]\right| &\leq \lVert \rho \rVert_1\norm{\mathcal{T}_{p_0}(O)} \\
    &=\norm{\mathcal{T}_{p_0}(O)}
\end{align}
where we used $\lVert \rho \rVert_1=1$ due to the positivity and normalization of a density operator.
Now, we focus on upper bounding $\norm{\mathcal{T}_{p_0}(O)} \leq \gamma_{p_0}$ such that $\gamma_{p_0}$ have no dependence on the parameters $\thv_{\overline{p_0}}$. Indeed, the end result will be the following
\begin{equation}\label{eq:gamma_p0}
    \gamma_{p_0} = \frac{8}{3}r^2\norm{O}\left(\omega^{(\rm max)}_{p_0}\right)^2\sum_{j=1}^{m-1}\left(\omega^{(\rm max)}_{p_{j}}\right)^2\;.
\end{equation}

To prove Eq.~\eqref{eq:gamma_p0}, let us first clarify about something that we have been being elusive on purpose, which is the permutation $\pi$. This permutation indeed gives us another map which tells us which parameters are being averaged first in Eq.~\eqref{eq:proof_th4_step1}. In other words, since we consider the single parameter $\theta_{p_0}$ for the variance, we have freedom due to the choice of the permutation map 
 $\pi$ over the order of other parameters that are averaged. In particular, $\pi: \{1, 2, \dots, \nparams\} \rightarrow \{p_0, p_1,\dots, p_{m-1} \}$ where $\pi$ respects the permutation. Crucially, the order of $\{\th_{p_0}, \th_{p_1}, \dots, \th_{p_{m-1}}\}$ does not at all have to follow the gate sequence. From here, to highlight this freedom, we will work with the permutation parameter indices.

Next, we consider the following cute quantity
\begin{equation}\label{eq:def-tj}
    t_j := \norm{\Ebb_{\overline{p_0,p_1,...,p_j}} \left[\mathcal{E}_{\overline{p_0}}^{(2)}(O) \right] - \mathcal{E}_{p_0,p_1,...,p_j}^{(2)}(O)}\;,
\end{equation}
for $j\in\{0,1,...,m-1\}$. Here it could be a good stopping point to remind some notations. First, a bar index means excluding that parameter i.e., $\overline{p_0,p_1,...,p_j}$ means the set of \textit{other} parameters \textit{excluding} $\{\th_{p_0}, \th_{p_1}, .., \th_{p_j}\}$. For the average $\Ebb$, the subscript indices refer to those parameters that are averaged over. Hence, $\Ebb_{\overline{p_0,p_1,...,p_j}} [\cdot]$ means we average over other parameters that are \textit{NOT} $\{\th_{p_0}, \th_{p_1}, .., \th_{p_j}\}$. On the other hand, the second derivative of the channel contains free parameters indicated as subscript and parameters that are already evaluated which are omitted. In particular, $\mathcal{E}_{p_0,p_1,...,p_j}^{(2)}$ has free parameters $\{\th_{p_0}, \th_{p_1}, .., \th_{p_j}\}$ and other parameters already evaluated (at the fixed point). From the definition in Eq.~\eqref{eq:def-tj}, we can see that $t_0=\norm{\mathcal{T}_{p_0}(O)}$ and $t_{m-1}=0$. 

Let us derive a relation between $t_j$ and $t_{j+1}$ to upper bound $t_0$. First, we will add and remove a term in the norm and use triangle inequality to split it into two terms as follows
\begin{align}
    t_j&=\norm{\Ebb_{\overline{p_0,p_1,...,p_j}} \left[\mathcal{E}_{\overline{p_0}}^{(2)}(O) \right] - \mathcal{E}_{p_0,p_1,...,p_j}^{(2)}(O)} \\
    &=\norm{\Ebb_{\overline{p_0,p_1,...,p_j}} \left[\mathcal{E}_{\overline{p_0}}^{(2)}(O) \right]-\Ebb_{p_{j+1}}\left[\mathcal{E}_{p_0,p_1,...,p_j,p_{j+1}}^{(2)}(O)\right] + \Ebb_{p_{j+1}}\left[\mathcal{E}_{p_0,p_1,...,p_j,p_{j+1}}^{(2)}(O)\right] - \mathcal{E}_{p_0,p_1,...,p_j}^{(2)}(O)}\\
    &\leq \norm{\Ebb_{\overline{p_0,p_1,...,p_j}} \left[\mathcal{E}_{\overline{p_0}}^{(2)}(O) \right]-\Ebb_{p_{j+1}}\left[\mathcal{E}_{p_0,p_1,...,p_j,p_{j+1}}^{(2)}(O)\right]} + \norm{\Ebb_{p_{j+1}}\left[\mathcal{E}_{p_0,p_1,...,p_j,p_{j+1}}^{(2)}(O)\right] - \mathcal{E}_{p_0,p_1,...,p_j}^{(2)}(O)}\;. \label{eq:proof-trash-induction-step-two-term-thm4}
\end{align}
The first term can be upper bounded by $\Ebb_{p_{j+1}}[t_{j+1}]$ as follows
\begin{align}
\norm{\Ebb_{\overline{p_0,\dots,p_j}} \left[\mathcal{E}_{\overline{p_0}}^{(2)}(O) \right]-\Ebb_{p_{j+1}}\left[\mathcal{E}_{p_0,\dots,p_j,p_{j+1}}^{(2)}(O)\right]} &=\norm{\Ebb_{p_{j+1}}\left[\Ebb_{\overline{p_0,\dots,p_j,p_{j+1}}} \left[\mathcal{E}_{\overline{p_0}}^{(2)} (O)\right]-\mathcal{E}_{p_0,\dots,p_j,p_{j+1}}^{(2)}(O)\right]} \\
&\leq \Ebb_{p_{j+1}}\left[\norm{\Ebb_{\overline{p_0,p_1,...,p_j,p_{j+1}}} \left[\mathcal{E}_{\overline{p_0}}^{(2)}(O) \right]-\mathcal{E}_{p_0,p_1,...,p_j,p_{j+1}}^{(2)}(O)}\right]\\
&=\Ebb_{p_{j+1}}\left[t_{j+1}\right]\;,
\end{align}
where the inequality is due to the convexity of the norm together with Jensen's inequality. For the readers who are exhausted, please hang in there. One of the authors often found watching some cute cat videos on the internet a big help for refreshing.

Now, the second term in Eq.~\eqref{eq:proof-trash-induction-step-two-term-thm4} can be upper bounded using Proposition~\ref{prop:expectation_taylor_product} with respect to the variable $\theta_{j+1}$  identifying $\Lambda_{\theta_{j+1}}(O) \rightarrow \mathcal{E}_{p_0,p_1,...,p_j,p_{j+1}}^{(2)}(O)$ and $\Lambda_{0}(O) \rightarrow \mathcal{E}_{p_0,p_1,...,p_j}^{(2)}(O)$. 
Moreover, the upper bound on the second derivative of $\mathcal{E}_{p_0,p_1,...,p_j,p_{j+1}}^{(2)}(O)$ with respect to $\theta_{p_{j+1}}$ (i.e. the constant $\gamma$ in the statement of Proposition~\ref{prop:expectation_taylor_product}) can be directly obtained using Lemma~\ref{lemma:double_deriv_bound} such that 
\begin{align}
   \norm{\left.\left(\frac{\partial^2}{\partial\th_{p_{j+1}}^2} [\channel^{(2)}_{p_1,\dots,p_j,p_{j+1}}(O)]\right)\right|_{\th_{p_{j+1}}= \nu} } &= \norm{\left.\left(\frac{\partial^4 \mathcal{E}_{p_1,\dots,p_{j+1} }(O)}{\partial \th_p^2 \; \partial  \th_{p_{j+1}}^2}\right)\right|_{\th_{p_0}=0,\th_{p_{j+1}}=\nu}} \\
   &\leq \left(\sum_{\layerindex \in \mathcal{S}^{-1}(p_0)} 2\omega^{(\rm max)}(H_l)\right)^2 \left(\sum_{\layerindex \in \mathcal{S}^{-1}(p_{j+1})} 2\omega^{(\rm max)}(H_l)\right)^2 \norm{O}\\
   &=16\left(\omega^{(\rm max)}_{p_0}\right)^2\left(\omega^{(\rm max)}_{p_{j+1}}\right)^2\norm{O}
\end{align}
Combining it with  Proposition~\ref{prop:expectation_taylor_product} ensures that
\begin{align}
    \norm{\Ebb_{p_{j+1}}\left[\mathcal{E}_{p_0,p_1,...,p_j,p_{j+1}}^{(2)}(O)\right] - \mathcal{E}_{p_0,p_1,...,p_j}^{(2)}(O)} &\leq \frac{8}{3}r^2\left(\omega^{(\rm max)}_{p_0}\right)^2\left(\omega^{(\rm max)}_{p_{j+1}}\right)^2\norm{O} \\ 
    &:=\alpha_{j+1}\;.
\end{align}
Therefore, we have shown 
\begin{equation}
    t_j\leq \Ebb_{p_{j+1}}[t_{j+1}]+\alpha_{j+1}\;.
\end{equation}
Given that each $\alpha_j$ does not depend on the variational parameters, we have, by induction, that
\begin{align}
    \norm{\mathcal{T}_{p_0}(O)} &=t_0\\
    &\leq \Ebb_{p_{1}}[t_{1}] + \alpha_{1}\\
    & \leq \Ebb_{p_{1},p_2}[t_2] + \alpha_1 + \alpha_2\\
    &\;\; \vdots\\
    &\leq \Ebb_{p_1,\dots,p_{m-1}}[t_{m-1}]+ \sum_{j=1}^{m-1} \alpha_j \\
    &=\sum_{j=1}^{m-1} \alpha_j\;,
\end{align}
where we recall that $t_{m-1}=0$ in the last equality. So, we finally end up with 
\begin{align}
    \norm{\mathcal{T}_{p_0}(O)}& \leq \frac{8}{3}r^2\norm{O}\left(\omega^{(\rm max)}_{p_0}\right)^2\sum_{j=1}^{m-1}\left(\omega^{(\rm max)}_{p_{j}}\right)^2 \\
    & := \gamma_{p_0} r^2\;,
\end{align}
which proves the claim in Eq.~\eqref{eq:gamma_p0}. Upon specifying $\gamma_{p_0}$ in Eq.~\eqref{eq:gamma_p0} to the lower bound in Eq.~\eqref{eq:var-gamma-function}, we achieve the loss variance bound promised in Eq.~\eqref{eq:thm-correlated-variance} in the main theorem. 

\medskip

\noindent\underline{\textit{4.2 Determining the restriction on } $r$.} 
The obtained lower bound does not work for any arbitrary $r$ but rather subjected to some conditions. Here we focus on precisely identifying this restriction. In particular, for the variance bound to hold, it is sufficient to simultaneously satisfy Eq.~\eqref{eq:cond1_NEW0} and Eq.~\eqref{eq:cond_r_inside_general} together with one additional condition directly obtained form the lower bound. The latest one is basically the condition that the lower bound in Eq.~\eqref{eq:var-gamma-function} is non-zero i.e.,
\begin{align}
    \left(c_{p_0}(\vphi)  -\gamma_{p_{0}}r^2\right)^2 - \beta_{p_0} r^2  \geq 0 \;.
\end{align}

Crucially, note that this setting to determine the restriction on $r$ is identical to the one in the proof of Theorem~\ref{th:var_formal} (with $\alpha_{l}(\vphi)$ replaced by $\gamma_{p_0}$ and with $\beta_l$ replaced by $\beta_{p_0}$). By following the same steps from Eq.~\eqref{eq:cond_r_non_zero_var} to Eq.~\eqref{eq:r_final_to_corr}, the following unified condition on $r$ is obtained

\begin{align}
    r^2 & \leq \frac{3c^2_{p_0}(\vec{\phi})}{8(2c_{p_0}(\vec{\phi})\gamma_{p_0} + \beta_{p_0})} \; \\
    & = \frac{9c^2_{p_0}(\vec{\phi})}{8\left(16 c_{p_0}(\vec{\phi}) \norm{O} \left[\omega^{(\rm max)}_{p_0}\right]^2\sum_{j=1}^{m-1}\left[\omega^{(\rm max)}_{p_{j}}\right]^2 + 32 \norm{O}^2 \left[\omega^{(\rm max)}_{p_0}\right]^6 \right) }\\
    & := r^2_{\rm patch}(\vphi) \label{eq:cond_merged_cor}\;.
\end{align}

Hence, given that $r \leq r_{\rm patch}(\vphi)$, we have the following variance lower bound
\begin{align}
    \Var_{\thv}\left[\mathcal{L}[\boldsymbol{\theta}]\right]   & \geq \frac{r^4}{45} \left(\left(c_{p_0}(\vphi)  -\gamma_{p_{0}}r^2\right)^2 - \beta_{p_0} r^2 \right) \\
    & \geq \frac{1}{72}c_{p_0}(\vphi)r^4 \;,
\end{align}
where the same approach as in Eq.~\eqref{eq:var_LB_l} is used to reach the final line.

Moreover, by using $r = r_{\rm patch}$, we have
\begin{equation}
    \Var_{\thv}[\LC(\thv)]  \geq \frac{1}{72}    c_{p_0}(\vec{\phi})^2   r_{\rm patch}^4(\vphi) \;.
\end{equation}

\bigskip

\paragraph*{\underline{5.~Theoretical guarantee on the substantial gradient region.}} By assuming that $\norm{O}, \norm{H_k} \in \mathcal{O}(\poly(n))$ and $c_{p_0}(\vec{\phi}) \in \Omega\left(\frac{1}{\poly(n)}\right)$, we can prove the guarantee on the region of attraction by following the similar proof steps from Eq.~\eqref{eq:proof-roa-1} to Eq.~\eqref{eq:proof_th1_nonvanish} in the proof of Theorem~\ref{th:var_formal}, which lead to the polynomial variance lower bound as 
\begin{align}
     \Var_{\theta \sim \uni(\vec{\phi}, r)}[\LC(\thv)] \in \Omega\left(\frac{1}{\poly(n) m^2}\right) \;.
\end{align}
if $r$ is chosen such that 
\begin{equation}
    r \in \Theta\left(\frac{1}{ \poly(n) \cdot \sqrt{m }}\right) \;.
\end{equation}

\end{proof}

\subsubsection{Proof of Corollary~\ref{cor:var_minimum_app}: Theoretical guarantees of the region of attraction with the polynomial width around the global minimum }\label{subsec:var_minimum_app}

\begin{proof}
The proof of this Corollary mainly applies  Theorem~\ref{th:var_formal_cor} to a sufficiently good minimum $\thv^*$ based on two core assumptions concerning the observable, and the interaction between the circuit and the initial state:
\begin{itemize}
    \item Assumption~1: The gap $\gap$ between the ground state and first excited state energies of $O$ vanishes at worst polynomially with the number of qubits. The ground state is also assumed to be non-degenerate. In particular, with the eigen decomposition of $O = \sum_{i=1}^{2^n} \lambda_i| \lambda_i \rangle\langle\lambda_i|$ where $\lambda_i$ is an eigen energy with an associated eigenstate $|\lambda_i\rangle$ labeled in increasing order i.e.,  $\lambda_1 < \lambda_2 \leq \dots \leq \lambda_{2^n}$, the gap satisfies
    \begin{align}
        \gap = \lambda_2 - \lambda_1 \in \Omega\left( \frac{1}{\poly(n)}\right) \;.
    \end{align}
    \item Assumption~2: The interaction between the gate closest to the initial state and the initial state itself has to be non-trivial. More specifically, given the initial state $\rho$ and the gate generator (closest to the initial state) $H_M$, we require
    \begin{align}
        \Var_\rho(H_M) = \Tr[\rho H_M^2] - \Tr[\rho H_M]^2\in\Omega(1/{\rm poly}(n)) \;.
    \end{align}
    Additionally, the parameter associated with the gate generator $H_M$ is required to be non-temporally correlated with other parameters. Remark that all other parameters can be arbitrarily correlated among themselves. 
\end{itemize}

Before proceeding, we want to emphasize the following: since it is equivalent to put the circuit $U(\thv^*)$ forward propagated on the initial state $\rho$ or back-propagated on the observable $O$
\begin{align}
    \Tr[U(\thv^*)\rho U^\dagger(\thv^*) O] = \Tr[\rho^* O] = \Tr[\rho O^*] \;,
\end{align}
with the evolved state $\rho^* = U(\thv^*)\rho U^\dagger(\thv^*)$ and the effective back-propagated observable $O^* =  U^\dagger (\thv^*) O U(\thv^*)$, the overlap of the initial state $\rho$ with the ground state of $O^*$ is the same as the overlap of the evolved state $\rho^*$ with the ground state of $O$. In this proof, we choose to work with the basis of $O^*$, that is, in the back-propagated observable.

Now, we aim to apply Theorem~\ref{th:var_formal_cor} at the fixed point $\thv^*$ under the assumptions stated above. We choose the parameter of interest to be the one closest to the initial state i.e., $\theta_{p_0} = \theta_m$ corresponding to the gate generator $H_M$. For ease of notation, we drop the subscript $\nHam$ in $H_\nHam$.

We then proceed to analyze the second derivative of the loss $c_\nparams\left(\thv^*\right)$ which is the core ingredient to compute the variance lower bound. 

\begin{align}
    c_\nparams\left(\thv^*\right) &= \left|\Tr[\rho [H,[H, U^\dagger\left(\thv^*\right)OU\left(\thv^*\right)]]]\right|\\
     &= \left|\Tr[\rho [H,[H, O^*]]]\right|\\
    &= \left|\Tr[\rho (H^2 O^* + O^* H^2 -2 H O^* H)]\right|\\
    \label{eq:eq_bound_derivative_min0}
    &= \left|\Tr[O^*\rho H^2  + \rho O^* H^2 -2 \rho H O^* H]\right| \;,
\end{align}
where in the second equality we have simply used the notation mentioned above $O^* = U^\dagger\left(\thv^*\right)OU\left(\thv^*\right)$. In the third equality, we have explicitly expanded the commutator, and finally, in the last one we have used the cyclic properties of the trace to rearrange the terms in a more comfortable manner. 

Let us denote the eigen decomposition of the back-propagated observable $O^* = \sum_{i=1}^{2^n} \lambda_i\ketbra{\lambda_i^*}$ with $\ket{\lambda_i^*} = U^\dagger (\thv^*) \ket{\lambda_i}$ and note that $O$ and $O^*$ share the same eigen values. We then write the initial state as $\rho = \ketbra{\psi}$, where 
\begin{align}
    \ket{\psi} = k_1 \ket{\lambda_1^*} + \epsilon \ket{\lambda_1^{* \perp}} \;,
\end{align}
with  $k_1,\epsilon\in\mathbb{C}$, $|k_1| = \sqrt{1-|\epsilon|^2}$, and $\braket{\lambda_1^*}{\lambda_1^{* \perp}} = 0$. We can now leverage that the minimum that we are in is ``good enough'' such that $|\epsilon|\leq|k_1|$. Later on, we will provide a more specific condition (upperbound) on $|\epsilon|$. 
    
By using that $O^*\ket{\lambda_1^*}  = \lambda_1 \ket{\lambda_1^*}$ we can simplify Eq.~\eqref{eq:eq_bound_derivative_min0} and find 
\begin{align}
    c_\nparams(\thv^*) &= \left|\Tr[O^*\rho H^2  + \rho O^* H^2 -2 \rho H O^* H]\right|\\
    \label{eq:eq_bound_derivative_min1}
    &= 2 |k_1|^2 \left| \tau_0+ \tilde{\tau}_\epsilon \right| \\
    &\geq 2|k_1|^2 \left( |\tau_0| - |\tilde{\tau}_\epsilon|\right) \label{eq:proof-coro-min-main-inequa} \;,
\end{align}
where we have 
\begin{align}
     \tau_0 =& \lambda_1 \expval{H^2}{\lambda_1^*}  - \sum_{i=1}^{2^n} \lambda_i \expval{H\ketbra{\lambda_i^*}H}{\lambda_1^*}  \label{eq:bound_other_terms_c_intovariance} \;, \\
 \tilde{\tau}_\epsilon = & \frac{1}{2|k_1|^2}\left( \Tr\left[ O^*\sigma H^2 + \sigma O^* H^2 -2\sigma HO^*H \right] + |\epsilon|^2  \Tr\left[ O^*\ketbra{\lambda_1^{* \perp}} H^2 \right.\right. \\ 
 & \left.\left. \;\;\;\;\;\;\;\;\;\;\;\;\;\;\;+ \ketbra{\lambda_1^{* \perp}} O^* H^2 -2\ketbra{\lambda_1^{* \perp}} HO^*H \right]  \right) \;, \label{eq:tau_1}
\end{align}
with $\sigma = \epsilon k_1^*\ketbra{\lambda_1^{* \perp}}{{\lambda_1^*}} + \epsilon^*k_1\ketbra{{\lambda_1^*}}{\lambda_1^{* \perp}}$. The inequality in Eq.~\eqref{eq:proof-coro-min-main-inequa} comes from the reverse triangle inequality and is valid when
\begin{align} \label{eq:cond-tau}
    |\tau_0| \geq |\tilde{\tau}_\epsilon| \;.
\end{align}
Before moving forward, we make a few remarks. $\tau_0$ mainly represents the contribution from $\rho$ that aligns with the ground state. On the other hand, $\tilde{\tau}_\epsilon$ represents the contribution from the perpendicular component and disappears if $\epsilon = 0$. Hence, in the scenario where $|\epsilon| \ll 1$, it can be interpreted as a perturbation which in turn gives some intuition behind what condition would make Eq.~\eqref{eq:cond-tau} hold. Lastly, we notice that for $\sigma$ in $\tilde{\tau}_\epsilon$, it has only two non-zero eigenvalues which are $\pm|\epsilon k_1|$. 

Our strategy is to individually bound $|\tau_0|$ and $|\tilde{\tau}_\epsilon|$, and identify the scenario when Eq.~\eqref{eq:cond-tau} holds. 

\bigskip 

\paragraph*{\underline{1.~Bounding the perturbation $\tilde{\tau}_\epsilon$.}} 
 
First, we use the triangle inequality (i.e. $\|A + B\|\leq \|A\| + \|B\|$) to bound the first trace term in Eq.~\eqref{eq:tau_1} 
\begin{align}
    \Tr\left[ O^*\sigma H^2 + \sigma O^* H^2 -2\sigma HO^*H \right]\leq& \|O^*\sigma H^2 + \sigma O^* H^2 -2\sigma HO^*H\|_1\\
    \leq & \|O^*\sigma H^2\|_1 + \| \sigma O^* H^2 \|_1 + \|2\sigma HO^*H\|_1 \\
    \leq& 4\|O\|_{\infty}\|H\|_{\infty}^2\|\sigma\|_1\\ 
=&8|\epsilon k_1|\|O\|_{\infty}\|H\|_{\infty}^2, \label{eq:tau_bound_trace1}
\end{align}
where we simplified the norm of a sum of matrices, with the sum of the different norms. The third inequality is by using Holder's inequality to bound the $1$-norm of products of matrices by a product of $1$-norm and infinity-norms together with $\|\sigma\|_1 = 2|\epsilon k_1|$.

Similarly, for the second trace in Eq.~\eqref{eq:tau_1} we can bound it using triangular and Holder's inequality to obtain\begin{align}\label{eq:tau_bound_trace2}
    \Tr\left[ O^*\ketbra{\lambda_1^{* \perp}} H^2 + \ketbra{\lambda_1^{* \perp}} O^* H^2 -2\ketbra{\lambda_1^{* \perp}} HO^*H \right]\leq 4\|O\|_{\infty}\|H\|_{\infty}^2
\end{align}
Together, we can upper bound $\tilde{\tau}_\epsilon$ in Eq.~\eqref{eq:tau_1} by using the results obtained in Eq.~\eqref{eq:tau_bound_trace1} and Eq.~\eqref{eq:tau_bound_trace2} to obtain
\begin{equation}\label{eq:final_bound_tau}
    |\tilde{\tau}_\epsilon|\leq \frac{2|\epsilon|}{|k_1|^2}(2|k_1| + |\epsilon|)\|O\|_{\infty}\|H\|_{\infty}^2.
\end{equation}

\bigskip

\paragraph*{\underline{2.~Bounding the term $\tau_0$}}
Now we focus on the term $\tau_0$ in Eq.~\eqref{eq:bound_other_terms_c_intovariance}.

To proceed, we re-write the term slightly
\begin{align}
  |\tau_0| & = \left|\lambda_1 \expval{H^2}{\lambda_1^*} - \lambda_1\expval{H}{\lambda_1^*}^2  - \sum_{i=2}^{2^n} \lambda_i \expval{H\ketbra{\lambda_i^*}H}{\lambda_1^*}\right|  \\
   & = \left|\lambda_1 \Var_{\lambda_1^*}(H)  - \sum_{i=2}^{2^n} \lambda_i \expval{H\ketbra{\lambda_i^*}H}{\lambda_1^*} \right| \;, 
\end{align}
where ${\rm Var}_{\lambda_1^*}(H)=\expval{H^2}{\lambda_1^*}-\expval{H}{\lambda_1^*}^2$ is the variance of an generator $H$ with respect to $\ket{\lambda_1^*}$.

Next, we focus on the remaining terms in the sum. We can readily prove that the aforementioned sum is larger than $\lambda_1\Var_{\lambda_1^*}(H)$. Indeed, by leveraging that $\lambda_2\leq \lambda_i\, \forall \, i \neq1$
\begin{align}\sum_{i=2}^{2^n}\lambda_i\expval{H\ketbra{\lambda_i^*}H}{\lambda_1^*} &\geq \lambda_2  \sum_{i=2}^{2^n}\expval{H\ketbra{\lambda_i^*}H}{\lambda_1^*}\\
     & = \lambda_2 \expval{H\left(\1-\ketbra{\lambda_1^*}\right)H}{\lambda_1^*}\\ 
     & = \lambda_2 \Var_{\lambda_1^*}(H) \label{eq:eq_bound_derivative_min30}\\
     & >\lambda_1 \Var_{\lambda_1^*}(H)\;,\label{eq:eq_bound_derivative_min3}
\end{align}
where in the first equality we used that $\sum_{i=1}^{2^n} \ketbra{\lambda_i^*} = \1$, and in the last equality we used the fact that 
\begin{align}
    \expval{H\left(\1-\ketbra{\lambda_1^*}{\lambda_1^*}\right)H}{\lambda_1^*} =\expval{H^2}{\lambda_1^*} - \expval{H}{\lambda_1^*}^2 = \Var_{\lambda_1^*}(H)\;.
\end{align}

Next, we can lower bound $|\tau_0|$ as follows
\begin{align}
     |\tau_0| & =\left|\lambda_1 \Var_{\lambda_1^*}(H)  - \sum_{i=2}^{2^n} \lambda_i \expval{H\ketbra{\lambda_i^*}H}{\lambda_1^*} \right|\\
     & \geq   \sum_{i=2}^{2^n}\lambda_i\expval{H\ketbra{\lambda_i^*}{\lambda_i^*}H}{\lambda_1^*} -\lambda_1 {\rm Var}_{\lambda_1^*}(H) \\
    & \geq (\lambda_2-\lambda_1)\Var_{\lambda_1^*}(H)\\
    &=  \gap \Var_{\lambda_1^*}(H) \;. \label{eq:sec_deriv_minimum_LB}
\end{align}
where the first inequality and the second inequality comes from Eq.~\eqref{eq:eq_bound_derivative_min3} and Eq.~\eqref{eq:eq_bound_derivative_min30}, respectively.

\bigskip

\paragraph*{\underline{3.~Identify the condition when $|\tau_0| \geq |\tilde{\tau}_\epsilon|$.}} 

Given the scenario where  $|\tau_0| \geq |\tilde{\tau}_\epsilon|$ in Eq.~\eqref{eq:cond-tau} holds,
we lower bound $c_\nparams(\thv^*)$ in Eq.~\eqref{eq:proof-coro-min-main-inequa} leading to \begin{equation}\label{eq:lowerbound_c_prevar}
    c_\nparams(\thv^*)\geq 2 |k_1|^2 \gap \Var_{\lambda_1^*}(H) - 4|\epsilon|(2|k_1| + |\epsilon|)\norm{O}\norm{H}^2\;.
\end{equation}

Now, to be in this scenario $|\tau_0| \geq |\tilde{\tau}_\epsilon|$, it is sufficient to have
\begin{equation}\label{eq:condition_lb_aroundmin}
    \gap \Var_{\lambda_1^*}(H)>  \frac{2|\epsilon|}{|k_1|^2}(2|k_1| + |\epsilon|)\norm{O}\norm{H}^2\;.
\end{equation}

This condition in Eq.~\eqref{eq:condition_lb_aroundmin} is not trivial to obtain. Indeed, computing $\Var_{\lambda_1^*}(H)$ exactly is difficult to perform in general. However, computing $\Var_{\rho}(H)$ can be done. That is because $\rho$ is the initial state, and $H$ is the first generator, so in most cases this quantity can be classically computed. Therefore, to finish this, we try to find a lower bond of $\Var_{\lambda_1^*}(H)$ with respect to $\Var_{\rho}(H)$. 
Recalling $\ket{\psi} = k_1 \ket{\lambda_1^*} + \epsilon \ket{\lambda_1^{*\perp}}$ and $\sigma = \epsilon k_1^*\ketbra{\lambda_1^{* \perp}}{{\lambda_1^*}} + \epsilon^*k_1\ketbra{{\lambda_1^*}}{\lambda_1^{* \perp}}$, we can write $\rho=|k_1|^2\ketbra{\lambda_1^*}{\lambda_1^*}+|\epsilon|^2\ketbra{\lambda_1^{* \perp}}{\lambda_1^{* \perp}}+\sigma$. Thus we expand $\Var_{\rho}(H)$ as follows
    \begin{align}
     \Var_{\rho}(H) =&   |k_1|^2\expval{H^2}{\lambda_1^*}+|\epsilon|^2\expval{H^2}{\lambda_1^{* \perp}}+\Tr[\sigma H^2]\nonumber\\ 
     &-\left(|k_1|^2\expval{H}{\lambda_1^*}+|\epsilon|^2\expval{H}{\lambda_1^{* \perp}}+\Tr[\sigma H]\right)^2 \\
     =& |k_1|^2\Var_{\lambda_1^*}(H) + |\epsilon|^2\Var_{\lambda_1^{* \perp}}(H) + \Tr[\sigma H^2] - \Tr[\sigma H]^2\nonumber\\
     &+|\epsilon|^2|k_1|^2\left(\expval{H}{\lambda_1^{* \perp}}-\expval{H}{\lambda_1^*}\right)^2-2\Tr[\sigma H]\left(|\epsilon|^2\expval{H}{\lambda_1^{* \perp}}+|k_1|^2\expval{H}{\lambda_1^*}\right)\;,
    \end{align}
    where the last equality is obtained using $|k_1|^2+|\epsilon|^2=1$ (that is, to rewrite $|k_1|^4=|k_1|^2-|k_1|^2|\epsilon|^2$). Therefore, we have
    \begin{align}
        |k_1|^2\Var_{\lambda_1^*}(H) &= \Var_{\rho}(H)-\zeta\;,
    \end{align}
    where 
    \begin{align}
        \zeta:=& |\epsilon|^2\Var_{\lambda_1^{* \perp}}(H) + \Tr[\sigma H^2] - \Tr[\sigma H]^2+|\epsilon|^2|k_1|^2\left(\expval{H}{\lambda_1^{* \perp}}-\expval{H}{\lambda_1^*}\right)^2 \nonumber \\
        & -2\Tr[\sigma H]\left(|\epsilon|^2\expval{H}{\lambda_1^{* \perp}}+|k_1|^2\expval{H}{\lambda_1^*}\right)\;.
    \end{align}
    We can lower bound this quantity by using Holder's inequality, as we did in Eq.~\eqref{eq:tau_bound_trace1}. Let us upper bound each relevant term in the definition of $\zeta$ separately as follows. First, we have
    \begin{align}
        \Var_{\lambda_1^{* \perp}}(H)&\leq \expval{H^2}{\lambda_1^{* \perp}}\leq \norm{H}^2\;.
    \end{align}
    Then, as $\sigma$ is Hermitian, we have
    \begin{equation}
         \Tr[\sigma H^2] - \Tr[\sigma H]^2 \leq \Tr[\sigma H^2] \leq \|\sigma\|_1\norm{H}^2=2|\epsilon k_1|\norm{H}^2\;. 
    \end{equation}
Similarly, we have
\begin{align}
    \Tr[\sigma H]\left(|\epsilon|^2\expval{H}{\lambda_1^{* \perp}}+|k_1|^2\expval{H}{\lambda_1^*}\right)&\leq \|\sigma\|_1\|H\| \left| |\epsilon|^2\expval{H}{\lambda_1^{* \perp}}+|k_1|^2\expval{H}{\lambda_1^*}\right|\\
    &\leq \|\sigma\|_1\norm{H}^2 \left( |\epsilon|^2+|k_1|^2\right)\\
    &=2|\epsilon k_1|\norm{H}^2 \left( |\epsilon|^2+|k_1|^2\right)=2|\epsilon k_1|\norm{H}^2\;,
\end{align}
where we used $|\epsilon|^2+|k_1|^2=1$ in the last equality.
Finally, for the remaining term we can use 
\begin{equation}
    \left(\expval{H}{\lambda_1^{* \perp}}-\expval{H}{\lambda_1^*}\right)^2\leq 4\norm{H}^2\;.
\end{equation}
Therefore, we can upper bound $\zeta$ using triangle inequality and previous bounds for each relevant terms to get
\begin{align}
  |\zeta| &\leq |\epsilon|^2\norm{H}^2+2|\epsilon k_1|\norm{H}^2+ 4|\epsilon|^2|k_1|^2\norm{H}^2+4|\epsilon k_1|\norm{H}^2 \\
  &=|\epsilon|\left(6| k_1|+|\epsilon|(1+4|k_1|^2)\right)\norm{H}^2\\
  &\leq 11|\epsilon|\norm{H}^2\;,
\end{align}
where we used $|\epsilon|\leq|k_1|\leq 1$ in the last inequality. Therefore, we finally have the bound on $\Var_{\lambda_1^*}(H)$ in terms of $\Var_{\rho}(H)$ to be
\begin{equation}\label{eq:bound-var-lamb-var-rho}
    |k_1|^2\Var_{\lambda_1^*}(H) \geq \Var_{\rho}(H)-11|\epsilon| \norm{H}^2\;.
\end{equation}

By inserting Eq.~\eqref{eq:bound-var-lamb-var-rho} into Eq.~\eqref{eq:condition_lb_aroundmin} and Eq.~\eqref{eq:lowerbound_c_prevar}, we can ensure that
\begin{equation}
\label{eq:lowerbound_c_aftervar}
c_\nparams(\thv^*)\geq 2  \gap \left(\Var_{\rho}(H)-11|\epsilon| \norm{H}^2\right) - 4|\epsilon|(2|k_1| + |\epsilon|)\norm{O}\norm{H}^2
\end{equation}
if the following condition (referring to Eq.~\eqref{eq:condition_lb_aroundmin}) is satisfied 
 \begin{equation} \label{eq:condition_lb_aroundmin_aftervar}
        \gap (\Var_{\rho}(H)-11|\epsilon| \norm{H}^2)>  2|\epsilon|(2|k_1| + |\epsilon|)\norm{O}\norm{H}^2\;.
    \end{equation}

    For now, we focus on simplifying the condition in Eq.~\eqref{eq:condition_lb_aroundmin_aftervar}. Particularly, we start by using, again, that $|\epsilon|\leq|k_1|\leq 1$, and thus $(2|k_1| + |\epsilon|) < 3$. Indeed, with this we can rewrite the condition as
    \begin{equation}\label{eq:condition_lb_aroundmin_aftervar2}
       \gap \Var_{\rho}(H) > |\epsilon|(6\norm{O} + 11\gap)\norm{H}^2 \;.
     \end{equation}  
Upon rearranging terms, we have the following condition
\begin{equation}\label{eq:perturbation-cond-coro1}
   |\epsilon|< \frac{\gap \Var_{\rho}(H)}{(6\norm{O} + 11\gap)\norm{H}^2 } \;.
\end{equation}

    Similarly, Eq.~\eqref{eq:lowerbound_c_aftervar} can be simplified in the same manner to obtain the following bound on the second derivative.
\begin{equation}\label{eq:lowerbound_c_aftervar_2}
        c_\nparams(\thv^*)\geq 2  \gap \Var_{\rho}(H) - 2|\epsilon|\|H\|_{\infty}^2 (11\gap + 6\|O\|_{\infty}) 
    \end{equation}

\bigskip

\paragraph*{\underline{4.~Theoretical guarantee for the region of attraction.}} 
With the bound on the second derivative in Eq.~\eqref{eq:lowerbound_c_aftervar_2} subjected to the small $|\epsilon|$ infidelity in Eq.~\eqref{eq:perturbation-cond-coro1}, we are ready to invoke Theorem~\ref{th:var_formal_cor} around $\thv^*$. We also choose the parameter to be taken the variance in the theorem to be the one closest to the initial state i.e., $\theta_p = \theta_m$. From the theorem, for any perturbation $r$ follows
\begin{align}
     r^2 \leq r^2_{\rm patch} := \frac{9c^2_{m}(\thv^*)}{8\left(16 c_{m}(\thv^*) \norm{O} \left[\omega^{(\rm max)}_{m}\right]^2\sum_{j=1}^{m-1}\left[\omega^{(\rm max)}_{j}\right]^2 + 32 \norm{O}^2 \left[\omega^{(\rm max)}_{m}\right]^6 \right) } \;,
\end{align}
we have that the variance lower bound scales as
\begin{align}
   \Var_{\sim \uni(\thv^*, r)}[\LC(\thv)] \geq  \frac{1}{72}    c_m(\thv^*)^2   r^4 \;.
\end{align}

With the bound on the second derivative in Eq.~\eqref{eq:lowerbound_c_aftervar_2}, we now identify the new patch's size $r_{\rm patch}^*$ such that
\begin{align}
    r_{\rm patch}^* < r_{\rm patch} \;,
\end{align}
and hence the variance lower bound holds. To do so, we keep performing a series of lower bounds on the $ r_{\rm patch}$ as follows
\begin{align}\label{eq:lowerbound_on_r_cor1}
    r^2_{\rm patch} & = \frac{9c^2_{m}(\vec{\phi})}{8\left(16 c_{m}(\vec{\phi}) \norm{O} \left[\omega^{(\rm max)}_{m}\right]^2\sum_{j=1}^{m-1}\left[\omega^{(\rm max)}_{j}\right]^2 + 32 \norm{O}^2 \left[\omega^{(\rm max)}_{m}\right]^6 \right) } \\
    & \geq \frac{9 \left[  \gap \Var_{\rho}(H) - |\epsilon|\|H\|_{\infty}^2 (11\gap + 6\|O\|_{\infty}) \right]^2}{2\left(16 c_{m}(\vec{\phi}) \norm{O} \left[\omega^{(\rm max)}_{m}\right]^2\sum_{j=1}^{m-1}\left[\omega^{(\rm max)}_{j}\right]^2 + 32 \norm{O}^2 \left[\omega^{(\rm max)}_{m}\right]^6 \right) } \\
    & \geq  \frac{9 [ \gap \Var_{\rho}(H) - |\epsilon|\|H\|_{\infty}^2 (11\gap + 6\|O\|_{\infty}) ]^2 }{128 \norm{O}^2 \left[ \omega^{\rm (max)}_{m} \right]^4 \sum_{j=1}^{\nparams-1} \left[ \omega^{\rm (max)}_{j} \right]^2 } \\
    & := r_{\rm patch}^{*2} \;,
\end{align}
where the first inequality is from the lower bound in Eq.~\eqref{eq:lowerbound_c_aftervar_2}, and the second inequality is by upper bounding the second derivative of the loss $c_{m}(\thv^*)$ with Lemma~\ref{lemma:bounded_unitary} and grouping the terms together. 

Within the new patch with $r^*_{\rm patch}$, the lower bound of the loss variance is
\begin{align}
    \Var_{\thv \sim \uni(\thv^*,r_{\rm patch}^*)}[\mathcal{L}(\thv)] \geq \frac{[ \gap \Var_{\rho}(H) - |\epsilon|\|H\|_{\infty}^2 (11\gap + 6\|O\|_{\infty}) ]^2}{18}  (r_{\rm patch}^*)^4 \;,
\end{align}
subjected to the small $|\epsilon|$ condition in Eq.~\eqref{eq:perturbation-cond-coro1}. 

Lastly, we can also obtain a  generic scaling  of the patch with guaranteed polynomially vanishing variance around the minimum $\thv^*$ by directly re-invoking the condition in Eq.~\eqref{eq:proof_th1_cond} and the associated variance lower bound in Eq.~\eqref{eq:proof_th1_LB} and substituting the condition on the second derivative in Eq.~\eqref{eq:proof_th1_nonvanish} by 
\begin{equation}
    \gap \Var_{\rho}(H_M) \in \Omega\left( \frac{1}{{\rm poly}(n)}\right) \;,
\end{equation}
which concludes the proof.

\end{proof}

\section{Applications to different specific circuit architectures}\label{appendix:architectures}

In this section, we provide more technical details and further discussion, in addition to Sec.~\ref{sec:architectures_main} in the main text, on the applications of the main theorems to specific circuit architectures. These include analytical expressions for scalings of the patch sizes and loss variances together with their derivations and further discussions on numerical results.

\bigskip

\paragraph*{\underline{The recipe for computing theoretical guarantees on substantial gradient regions for specific circuit architectures.}} Here, we provide a step-by-step guide on how to analyze the variance lower bound and the patch's size given a specific circuit.
\begin{enumerate}
    \item Given a circuit description in the form of Eq.~\eqref{eq:circuit} including sets of generators $\{H_i \}_{i=1}^\nHam$, sets of parameters $\{ \theta_l\}_{l=1}^\nparams$ and a point of interest $\vphi$, choose which theorem to use based on how parameters are correlated. In particular, we apply:
    \begin{itemize}
        \item  Theorem~\ref{th:var_formal} if the parameters are uncorrelated or spatial correlated.
        \item Theorem~\ref{th:var_formal_cor} if the parameters are arbitrarily correlated.
    \end{itemize} 
    \item For each parameter $\theta_l$, compute a second derivative of the loss with respect to the point $\vphi$ 
    \begin{align}
        c_l(\vphi) = \left|\left.\left(\frac{\partial^2\LC(\thv)}{\partial\th_l^2}\right)\right|_{\thv=\vec{\phi}}\right| \;.
    \end{align}
    \item Identify relevant parameters:
    \begin{itemize}
        \item For Theorem~\ref{th:var_formal}, choose the largest subset of parameter indices $\Lambda \subset \{1,...,m\}$ such that all the second derivatives in the subset do not exponentially vanish i.e., $\min_{l \in \Lambda} c_l(\vphi) \notin \OC(b^{-n})$ for any $b>1$. 
        \item For Theorem~\ref{th:var_formal_cor}, choose a parameter corresponding to the largest second derivative i.e., $p = \argmax_{l \in \{1,...,m\}} c_l(\vphi)$. 
    \end{itemize}
    \item Compute relevant frequencies:
    \begin{itemize}
        \item For Theorem~\ref{th:var_formal}, compute the maximal frequencies
    \begin{align}\label{eq:max_freqs_E}
        \omega_{\mu}^{(\rm max)}&:=\omega^{(\max)}(H_{\mu})=\lambda_{\max}(H_\mu) - \lambda_{\min}(H_\mu) \;,
    \end{align}
        where $\lambda_{\max}(H)$ and $\lambda_{\min}(H)$ are largest and smallest eigenvalues of $H$, as well as compute the effective frequencies including
        \begin{align}
         \omega_{\mu}^{\rm (eff) }(\vec{\phi})&= \sqrt{\norm{\left.\frac{\partial^2 [U(\thv)^\dagger O U(\thv)]}{\partial \th_\mu^2}\right|_{\thv= \vec{\phi}}}} \;,\label{eq:eff_freqs_E}\\ 
    \widetilde{\omega}_{l, \mu }^{(\rm eff)}(\vec{\phi}) &= \sqrt{\norm{ \left.\frac{\partial^4 [U(\thv)^\dagger O U(\thv)]}{\partial \th_\mu^2 \partial \th_l^2}\right|_{\thv= \vec{\phi}}}} \;,\label{eq:eff_freqs_tilde_E}
    \end{align}
    \item For Theorem~\ref{th:var_formal_cor}, compute the maximal frequencies $\omega^{(\rm max)}_q$ for all parameters
    \begin{align}
        \omega^{\rm (max)}_{q}  & =\sum_{l \in \mathcal{S}^{-1}(q)} \omega^{(\rm max)}(H_l) \;,
    \end{align}
    where $\mathcal{S}:\{1\dots,\nHam\} \rightarrow \{1\dots,\nparams\}$  maps every layer index $\layerindex$ and hence generator $H_l$ to its corresponding parameter index $\paramindex_0$, such that at layer $\layerindex$ and $\SC^{-1}$ is an inverse map of $\SC$.
    \end{itemize}
    \item Compute the patch's size $r_{\rm patch}$
    \begin{itemize}
        \item For Theorem~\ref{th:var_formal}, we have 
    \begin{align}\label{eq:rpatch_th3_recipe}
        r_{{\rm patch}}^{2} = \min_{l \in \Lambda} \frac{9c_l(\vec{\phi})^2}{8 c_l(\vec{\phi}) \mathcal{A}_{l}(\vphi) + 24\beta_l(\vphi)} 
    \end{align}
    where $\mathcal{A}_{l}(\vphi)$ and $\beta_l(\vphi)$ depends on which parameter we consider i.e.,
   
    \begin{align}
    \mathcal{A}_{l}(\vphi) &= 4 \left[ \omega^{(\rm max)}_l\right]^2 \sum_{\mu=1}^{l-1} \left[ \omega^{(\rm eff)}_\mu(\vphi) \right]^2 + \sum_{\mu=l+1}^{m} \left[ \widetilde{\omega}^{(\rm eff)}_{l,\mu}(\vphi) \right]^2 \label{eq:recipe_A}\\
            \beta_l(\vphi) &=\begin{cases}
       \frac{32( \omega_{l}^{\rm (max) })^{6}  \norm{O}^2}{3}\; {\rm if }\; l>1 \;,\\
       \frac{2 \left[\omega^{(\rm max)}_{1}\right]^2  \left[\omega^{\rm (eff)}_{1}(\phi_1)\right]^4}{3} \; {\rm if }\; l=1\;.
    \end{cases}  \label{eq:beta_recipe}
        \end{align}

        \item For Theorem~\ref{th:var_formal_cor}, we have
        
        \begin{align}\label{eq:rpatch_th4_recipe}
            r_{\rm patch}^2 = \frac{3c^2_{p}(\vec{\phi})}{8(2c_{p}(\vec{\phi})\gamma_{p} + \widetilde{\beta}_{p})} \;,
        \end{align}
        where $\widetilde{\beta}_{p}$ and $\gamma_{p}$ are given by
        \begin{align}
           \gamma_{p}&= \frac{8}{3}\norm{O}\left(\omega^{(\rm max)}_{p}\right)^2\sum_{\substack{q=1 \\ q\neq p}}^{m}\left(\omega^{(\rm max)}_{q}\right)^2 \label{eq:gamma_recipe} \\ 
           \widetilde{\beta}_p&=\frac{32}{3} \left[\omega^{(\rm max)}_{p}\right]^6 \norm{O}^2\;. \label{eq:beta_tilde_recipe}
        \end{align}
    \end{itemize}
    \item Compute the loss variance lower bound:
    \begin{itemize}
        \item For Theorem~\ref{th:var_formal}, the lower bound can be obtained as 
        \begin{align}\label{eq:Var_recipe_th3}
                \Var_{\thv \sim \uni(\vec{\phi}, r)}[\LC(\thv)]  \geq  \frac{1}{72}    \left( \sum_{l \in \Lambda} c_l(\vec{\phi})^2 \right)   r_{\rm patch}^4 
        \end{align}
        \item For Theorem~\ref{th:var_formal_cor}, the lower bound can be obtained as
        \begin{align}\label{eq:Var_recipe_th4}
        \Var_{\thv \sim \uni(\vec{\phi}, r)}[\LC(\thv)]  \geq \frac{1}{72}    c_p(\vec{\phi})^2   r_{\rm patch}^4
        \end{align}
    \end{itemize}
\end{enumerate}

\bigskip

\paragraph*{\underline{Outline for the rest of the section.}}

In the following subsections, we use this general recipe for different circuit architectures. For each circuit family, we present theoretical scaling of a patch's size, discussion about the implications to prior literature as well as a comparison with the numerical simulation. Note that in the numerics, we investigate the scaling of the patch's size corresponding to the maximum variance $r_{\rm max}$ and while the overall trends between analytics and numerics align, the mismatch in the exact scalings comes with no surprise. Lastly, all the proofs of theoretical results are presented at the end of the section.  

The rest of the section is structured as follows
\begin{itemize}
    \item Results and discussion for specific architectures.
    \begin{itemize}
        \item In Appendix~\ref{app:toy_worst_case}, we have Tensor product ansatz.
        \item In Appendix~\ref{subsec:HEA}, we have Hardware Efficient Ansatz. 
        \item In Appendix~\ref{subsec:HVA}, we have Hamiltonian Variational Ansatz.
        \item In Appendix~\ref{subsec:UCC}, we have Unitary Coupled Cluster Ansatz.
    \end{itemize}
    \item The proofs of theoretical results.
    \begin{itemize}
        \item In Appendix~\ref{subsubsec:proof_toy}, we have Tensor product ansatz.
        \item In Appendix~\ref{subsubsec:HEA_proof}, we have Hardware Efficient Ansatz.
        \item In Appendix~\ref{subsubsec:HVA_proof}, we have Hamiltonian Variational Ansatz.
        \item In Appendix~\ref{subsubsec:UCC_proof}, we have Unitary Coupled Cluster Ansatz.
    \end{itemize}
\end{itemize}

\subsection{Tensor product ansatz}\label{app:toy_worst_case}
\subsubsection{Circuit description}\label{subsubsec:circuit_toy}

We consider two parameterized circuits \( U(\thv_{\rm uncorrelated}) \) and \( U(\thv_{\rm correlated}) \), both having the same circuit architecture which consists of a single layer of \( R_X(\th) \), \( R_Z(\th) \) and \( R_X(\th) \) rotations applied to each qubit and differing only in their parameter correlation structure. In particular, we have
\begin{align}
    U(\vtheta_{\rm uncorrelated})&=  \left(\bigotimes_{i=1}^n e^{-i\theta_{i,x_1}\sigma_x^{(i)}}\right) \left(\bigotimes_{i=1}^n e^{-i\theta_{i,z}\sigma_z^{(i)}}\right)
\left(\bigotimes_{i=1}^n e^{-i\theta_{i,x_2}\sigma_x^{(i)}}\right)\label{eq:toy_uncorr_circuit} \;, \\
    U(\vtheta_{\rm correlated})
    &= \left(\bigotimes_{i=1}^n e^{-i\theta \sigma_x^{(i)}}\right) \left(\bigotimes_{i=1}^n e^{-i\theta \sigma_z^{(i)}}\right)
\left(\bigotimes_{i=1}^n e^{-i\theta \sigma_x^{(i)}}\right)\\
    &=  e^{-i\theta \sum_i \sigma_x^{(i)}} e^{-i\theta \sum_i \sigma_z^{(i)}}
     e^{-i\theta \sum_i \sigma_x^{(i)}} \\
     & :=  
     e^{-i\theta H_2} e^{-i\theta H_1}
     e^{-i\theta H_2}\, ,\label{eq:toy_cor_circuit}
\end{align}
where $\sigma_x^{(i)}$ and $\sigma_z^{(i)}$ are single-qubit Pauli-X and Pauli-Z on the $i^{\rm th}$ qubit respectively, and we introduce the shorthands $H_1 = \sum_i \sigma_z^{(i)}$ and $H_2 = \sum_i \sigma_x^{(i)}$ in the last equality.

We note that both circuits share the same set of \( \nHam = 3n \) generators, \( \{\sigma_{x}^{(i)}, \sigma_z^{(i)},\sigma_{x}^{(i)}\}_{i=1}^n \), but differ in the size of their parameter vectors (due to the correlation). The uncorrelated parameter vector \( \vtheta_{\rm uncorrelated} \) with components \( \{\th_{i,k}\}_{i=1,k \in \{z,x_1,x_2\}}^{n} \) has a total of \( \nparams_{\rm uncorrelated} = 3n \) independent components such that each generator has a unique parameter. In contrast, the fully correlated parameter vector \( \vtheta_{\rm correlated} = \th \) has \( \nparams_{\rm correlated} = 1 \), with a single parameter \( \th \) shared across all generators.

We also associate the circuits $U(\thv_{\rm uncorrelated})$ and  $U(\thv_{\rm correlated})$ to the loss functions $\mathcal{L}(\thv_{\rm uncorrelated})$ and $\mathcal{L}(\thv_{\rm correlated})$ of the form in Eq.~\eqref{eq:loss}   with the following initial state and observable,
\begin{align}
    \rho &= \frac{\left(|+\rangle^{\otimes n} + |-\rangle^{\otimes n}\right) \left(\langle-|^{\otimes n} + \langle+|^{\otimes n}\right)}{2}\label{eq:state_init_toy} \;, \\
    O &= \sigma_z^{\otimes n}\;. \label{eq:obs_toy}
\end{align}

\subsubsection{Analytical and numerical results}
We study the size of the region  with guaranteed gradients centered around zero for each of the loss functions.  
By adapting Theorem \ref{th:var_formal} (for the uncorrelated case) and Theorem \ref{th:var_formal_cor} (for the correlated case) to the loss functions $\mathcal{L}(\thv_{\rm uncorrelated})$ and $\mathcal{L}(\thv_{\rm correlated})$ around $\vphi = \vec{0}$, 
we prove the following scaling of the patch size centered around zero with guaranteed substantial loss variance with the proof details presented in Appendix~\ref{subsubsec:proof_toy}. Furthermore, we retrieve the numerical scalings from Fig.~\ref{fig:scalings}. We summarize the numerical and analytical scaling results for comparison in Table~\ref{table:tensor_product_ansatz_results}. We see a similar scaling behavior in $r_{\rm patch},\, r_{\max}$ when comparing the correlated and uncorrelated cases. In particular, we see that both $r_{\rm patch}$ and $ r_{\max}$ are larger when the parameters are uncorrelated.

\begin{table}[ht!]
\begin{tabular}{ll|l|l|}
\cline{3-4}
                                                  &                    & Uncorrelated & Correlated \\ \hline
\multicolumn{1}{|l|}{\multirow{2}{*}{Analytical}} & $r_{\rm patch}$    & $\Theta( M^{-0.5})$            & $\Theta( M^{-1})$         \\ \cline{2-4} 
\multicolumn{1}{|l|}{}                            & $\Var_{\rm patch}$ & $\Omega(M^{-1})$               & $\Omega(1)$             \\ \hline
\multicolumn{1}{|l|}{\multirow{2}{*}{Numerical}}  & $r_{\rm max}$      & $\Theta( M^{-0.53})$           & $\Theta( M^{-1.05})$          \\ \cline{2-4} 
\multicolumn{1}{|l|}{}                            & $\Var_{\rm max}$   & $\Theta( M^{0.22})$            & $\Theta( M^{-0.03})$         \\ \hline
\end{tabular}
\caption{\textbf{Tensor product ansatz:} Summary of the scalings of the analytical $r_{\rm patch},\, \Var_{\rm patch}$, defined as the region in which we can guarantee non-exponential vanishing variance, and the numerical results $r_{\rm max},\, \Var_{\rm max}$, defined in Eq.~\eqref{eq:varmax} and Eq.~\eqref{eq:rmax} as the point $r$ such that the variance is maximized, and the value of such variance.}
\label{table:tensor_product_ansatz_results}
\end{table}

Finally, we highlight that this does not contradict previous results that relate correlating parameters to potentially better gradients~\cite{holmes2021connecting, volkoff2021large}. Certainly, as can be seen in Fig.~\ref{fig:corr_vs_uncorr}, the fact that the maximal value of the variance appears first for the correlated case, does not mean that the uncorrelated has a bigger variance. The figure shows how the variance of uncorrelated parameters decays faster than the variance of the correlated case when the hypercube is increased beyond $r_{\rm max}$.

\subsection{Hardware Efficient Ansatz}\label{subsec:HEA}
\subsubsection{Circuit description}\label{subsubsec:HEA_circuit}
We consider the Hardware Efficient Ansatz which has $L$ layers; each consists of $R_y$ and $R_z$ rotations on all qubits followed by a 1D ladder of CZ operations with periodic boundary conditions (i.e. CZ$(n,n+1):=$ CZ$(n,1)$).
\begin{equation}\label{eq:HEA_circuit}
\begin{aligned}
    U(\thv)&= \prod_{l=1}^L \left(\prod_{i=1}^n e^{-i\th_{l,y_i} \sigma_y^{(i)}} e^{-i\th_{l,z_i} \sigma_z^{(i)}} \right)V_l\;,
    \end{aligned}
\end{equation}
where the non parametrized gate is a layer of controlled Z gates $V_l = \prod_i {\rm CZ}(i,i+1)$. Here all the parameters in the circuit are independent. Indeed, we have that the total number of layers is  $\nHam = 2\cdot L \cdot n$ and the number of independent parameters is $\nparams = \nHam$. 
We also consider two loss functions  both of the form in Eq.~\eqref{eq:loss} with the initial state as $\rho = |0\rangle\langle0|^{\otimes n}$ and the parametrized circuit described in Eq.~\eqref{eq:HEA_circuit}.
We denote these loss functions by $\mathcal{L}_G(\thv)$ and $\mathcal{L}_L(\thv)$, which correspond to the use of a global and local observables respectively. More precisely, we consider the observables and the initial state.
\begin{align}
    O_G &:=  \sigma_z^{\otimes n}\label{eq:obs_G} \\
    O_L &:= \sigma_z \otimes  \sigma_z \otimes \1^{\otimes(n-2)} \label{eq:obs_L}
\end{align}

\subsubsection{Analytical and numerical results}\label{subsubsec:HEA_results}
With this specific architecture, we retrieve similar scalings of the patch's size $r_{\rm patch}$ with guaranteed gradients around zero as in Ref.~\cite{wang2023trainability}. In addition, while our result here is based on uniform distribution, extending it to other distributions, such as Gaussian distribution in Ref.~\cite{shi2024avoiding}, is straightforward.

On another note, we show the effect of the observable locality on the scaling of $r_{\rm patch}$ through the lens of effective frequencies (see Eq.~\eqref{eq:eff_freqs_E}). In particular, the scaling of the patch's size with the local observable is better than the other one with the global observable. This can be understood from the light cone argument (as will be detailed in Sec.~\ref{subsubsec:HEA_proof}), which results in the local observable having more zero effective frequencies.  

Table~\ref{table:hea_results} presents the analytical and numerical scaling results around $\vphi=\vec{0}$ of the architecture for $L = n$. The proof of analytical results as well as general expressions of the scalings for an arbitrary circuit depth are presented in Appendix~\ref{subsubsec:HEA_proof}. The expected trend from the effect of the observable locality is observed not only in the analytical result but also in the numerical result where $r_{\rm max}$ is considered.

\begin{table}[ht!]
\begin{tabular}{ll|l|l|}
\cline{3-4}
                                                  &                    & Global & Local \\ \hline
\multicolumn{1}{|l|}{\multirow{2}{*}{Analytical}} & $r_{\rm patch}$    & $\Theta( M^{-0.5})$            & $\Theta( M^{-0.25})$         \\ \cline{2-4} 
\multicolumn{1}{|l|}{}                            & $\Var_{\rm patch}$ & $\Omega(M^{-1})$               & $\Omega(M^{-0.5})$             \\ \hline
\multicolumn{1}{|l|}{\multirow{2}{*}{Numerical}}  & $r_{\rm max}$      & $\Theta( M^{-0.68})$           & $\Theta( M^{-0.34})$          \\ \cline{2-4} 
\multicolumn{1}{|l|}{}                            & $\Var_{\rm max}$   & $\Theta( M^{-0.53})$            & $\Theta( M^{-0.01})$         \\ \hline
\end{tabular}
\caption{\textbf{Harware Efficient Ansatz (HEA):} Summary of the scaling of the analytical $r_{\rm patch},\, \Var_{\rm patch}$, defined as the region in which we can guarantee non-exponential vanishing variance, and the numerical results $r_{\rm max},\, \Var_{\rm max}$, defined in Eqs.~(\ref{eq:varmax},\ref{eq:rmax}) as the point $r$ such that the variance is maximized, and the value of such variance.}
\label{table:hea_results}
\end{table}

\subsection{Hamiltonian Variational Ansatz}\label{subsec:HVA}

\subsubsection{Circuit description}\label{subsubsec:circuit_HVA}

We consider two variants of the Hamiltonian Variational Ansatz where the first one implements an $L$-step trotter evolution of a Hamiltonian $H= \sum_{k=1}^K H_k$ and the second one is a relaxation of the former where the parameters between each trotter step are varied independently.
Precisely, the parametrized circuits are of the form
\begin{align}
    U(\thv_{\rm Trotter}) &= \prod_{l=1}^L \prod_{k=1}^K e^{-i \th_{k}H_k} \label{eq:HVA_trotter}\\
    U(\thv_{\rm Relaxed}) &= \prod_{l=1}^L \prod_{k=1}^K e^{-i \th_{k,l}H_k}\label{eq:HVA_relaxed}
\end{align}

Here the parameter vector $\thv_{\rm Trotter}=\{\th_k\}_{k=1}^K$ is of size $\nparams_{\rm Trotter} = K$  while the parameter vector $\thv_{\rm Relaxed}=\{\th_{k,l}\}_{k,l=1}^{K,L}$is of size $\nparams_{\rm Relaxed} = KL$, where we recall that the components of each parameter vector are independent.

We further assume that $H= \sum_{k=1}^K H_k$ is \textit{geometrically} $\kappa$-local for a constant $\kappa$, i.e. each term $H_k$ acts on at most $\kappa \in \Theta(1)$ \textit{geometrically nearby} qubits in a given constant-dimensional lattice structure. 
Precisely, we have
\begin{align}
H&= \sum_{k=1}^K H_k \text{ with } K \in \Theta(1)\label{eq:loc_1}\\
   H_k &= \sum_{j=1}^{N_k} h_j^{(k)} \; ,\forall 1 \leq k\leq K \;, \label{eq:loc_2}
\end{align}
 where $h_j^{(k)}$ are commuting $\kappa$-local Pauli strings and  $N_k \in \mathcal{O}({\rm poly}(n))$ is the number of terms in the Pauli decomposition of each Hamiltonian $H_k$.

Under this general setting, we also do consider the specific example of the Heisenberg Hamiltonian with periodic boundary conditions, i.e.
\begin{align}\label{eq:heisenberg}
   \widetilde{H} & = \sum_{i=1}^n \left( \sigma_z^{(i)}\otimes\sigma_z^{(i+1)} + \sigma_y^{(i)}\otimes\sigma_y^{(i+1)} +  \sigma_x^{(i)}\otimes\sigma_x^{(i+1)} \right)  \\
   & = \widetilde{H}_1 + \widetilde{H}_2 + \widetilde{H}_3 \;,
\end{align}
where we denote the interaction terms in the Hamiltonian $\widetilde{H}$ by
\begin{align}\label{eq:heisenberg_terms}
    \widetilde{H}_1 = \sum_{i=1}^n \sigma_z^{(i)}\otimes\sigma_z^{(i+1)}, \quad
    \widetilde{H}_2 = \sum_{i=1}^n \sigma_y^{(i)}\otimes\sigma_y^{(i+1)}, \quad
    \widetilde{H}_3 = \sum_{i=1}^n \sigma_x^{(i)}\otimes\sigma_x^{(i+1)}\;.
\end{align}

Here, we clearly see that $\widetilde{H}$ satisfies the locality assumptions in  Eq.~\eqref{eq:loc_1} and Eq.~\eqref{eq:loc_2} with $K=3$ and each $H_k$ being $2$-local with $ N_k =n \;, \forall k\in \{1,2,3\}$.

\subsubsection{Analytical and numerical results}\label{subsusbsec:HVA_results}

\paragraph*{\underline{Generic Relaxed HVA with some geometrically local Hamiltonian $H$.}}\label{para:gen_hva_setting}

We begin by characterizing the region centered around zero with guaranteed substantial gradients for the loss function $\mathcal{L}(\thv_{\rm Relaxed})$ of the form in Eq.~\eqref{eq:loss}, with a circuit of the form in Eq.~\eqref{eq:HVA_relaxed}, i.e.
\begin{equation}\label{eq:hva_res_1}
    \mathcal{L}(\thv_{\rm Relaxed}) = \Tr[\rho U^\dagger(\thv_{\rm Relaxed}) O U(\thv_{\rm Relaxed})]\,.
\end{equation}
 Furthermore, we assume that the Hamiltonian \(H\) used to construct the circuit \(U(\thv_{\rm Relaxed})\) satisfies the conditions in Eq. \eqref{eq:loc_1} and Eq. \eqref{eq:loc_2}, meaning that \(H\) is a sum of \(\kappa\)-local Hamiltonian terms \(H_k\), where each term is a sum of commuting \(\kappa\)-local Pauli strings. We also assume that the observable $O$ is $\kappa_{O}$-local with  $\kappa_O \in  \mathcal{O}(1)$, i.e. it is composed of Pauli strings acting non trivially on at most $\kappa_O$ neighboring qubits, i.e. 
 \begin{equation}\label{eq:local_obs_hva}
     O = \sum_{i=1}^{N_O} P_i \;.
 \end{equation}
 
 Furthermore, we assume that the curvature (i.e. the second derivative of the loss function) with respect to the parameter $\th_{(1,1)}$ evaluated at zero and denoted by $c_{(1,1)}(\vec{0})$, scales as , i.e. 
 \begin{align}\label{eq:hva_cond_curv}
     c_{(1,1)}(\vec{0}) \in \Theta\left( N_O  \right)\;,
 \end{align}
 where we recall that the parameter $\th_{(1,1)}$ associated to the Hamiltonian term $H_1$ in the first trotter layer is the closest to the observable.

 Under this setting, we show that the size of the patch  with substantial loss variance scales as 
 \begin{align}\label{eq:hva_res_gen_patch}
    r_{\rm patch} \in \Theta\left( \frac{1}{\sqrt{M}}\right) \;,
\end{align}
where we recall that for this circuit $M =  K L$. The variance within this patch is lower bounded as\begin{equation}\label{eq:hva_res_gen_var}
     \Var_{\thv \sim \uni(\vec{0},r_{\rm patch})}[\mathcal{L}(\thv)] \in \Omega\left(\frac{N_O^2}{M^2}\right)\;.
 \end{equation}
 The proof of these scalings as well as other HVA results presented below is detailed in Appendix~\ref{subsubsec:HVA_proof}.

These results corroborate the findings in Ref.~\cite{park2023hamiltonian}. Here the authors demonstrated that if the loss function's  gradients scale as a constant in the system size at $\vphi = 0$, then the loss gradients will not vanish exponentially in a region $r \sim \frac{1}{KL \norm{O}n}$ with $K$ the number of Hamiltonian terms (see Eq.~\eqref{eq:loc_1}) and $L$ is the number of trotter layers. Moreover, note that our results extend beyond their assumption of non-exponentially vanishing initial gradients, as we can guarantee the presence of regions with substantial loss variance that are centered around a point with zero gradients.

\medskip

\paragraph*{\underline{Generic Trotter HVA with some geometrically local Hamiltonian $H$.}}\label{para:gen_hva_setting_trotter}

Next, we focus on characterizing the region centered around zero with guaranteed substantial gradients for the loss function $\mathcal{L}(\thv_{\rm Trotter})$ of the form in Eq.~\eqref{eq:loss}, i.e.
\begin{equation}\label{eq:hva_res_2}
    \mathcal{L}(\thv_{\rm Trotter}) = \Tr[\rho U^\dagger(\thv_{\rm Trotter}) O U(\thv_{\rm Trotter})]
\end{equation}
 using the Relaxed version of the HVA ansatz 
 in Eq.~\eqref{eq:HVA_trotter} for some local Hamiltonian $H$ verifying the assumptions in Eq.~\eqref{eq:loc_1} and Eq.~\eqref{eq:loc_2}. We also assume that the observable $O$ is local and can be decomposed into a sum of local Pauli terms as in Eq.~\eqref{eq:local_obs_hva}.

 Moreover, we assume that the loss second derivative w.r.t to the parameter $\th_1$ evaluated at zero and denoted by $c_1(\vec{0})$, scales as , i.e.
 \begin{align}\label{eq:hva_cond_curv_trotter}
     c_1(\vec{0}) \in \Theta\left( N_O \cdot L^2 \right)\;.
 \end{align}
 where we recall that $L$ is the number of Trotter layers and $N_O$ is the number of terms in the Pauli decomposition of the observable $O$. We also introduce the parameter $N = \max_k N_k$, where we recall that $N_k$ is the number of terms in the Pauli decomposition of each Hamiltonian $H_k$ (see Eq.~\eqref{eq:loc_2}).

 Under this setting, we show that the size of the patch  with substantial loss variance scales as 
 \begin{align}\label{eq:hva_res_gen_patch_trotter}
    r_{\rm patch} \in \Theta\left( \frac{1}{LN^3}\right) \underset{N , L\sim n}{=} \Theta\left( \frac{1}{M^4}\right)  \;,
\end{align}
where we recall that in this setting $M = KL \in \Theta(L)$ since we assume that $K \in \Theta(1)$.  The variance within this patch is lower bounded as
 \begin{equation}\label{eq:hva_res_gen_var_trotter}
     \Var_{\thv \sim \uni(\vec{0},r_{\rm patch})}[\mathcal{L}(\thv)] \in \Omega\left(\frac{N_O^2}{N^{12}}\right)  \underset{N , N_O, L\sim n}{=} \Omega\left( \frac{1}{M^{10}}\right)\;.
 \end{equation}

\paragraph*{\underline{Heisenberg model.}}

Next, we analytically and numerically study the HVA ansatz for the Heisenberg Hamiltonian $\widetilde{H}$ in Eq.~\eqref{eq:heisenberg}  using both the Trotter and the Relaxed variants, or in other words, with and without time correlations. Precisely, we define the loss functions $\widetilde{\mathcal{L}}(\thv_{\rm Relaxed})$ and $\widetilde{\mathcal{L}}(\thv_{\rm Trotter})$ as,
\begin{align}
    \widetilde{\mathcal{L}}(\thv_{\rm Relaxed}) &= \Tr[|\psi\rangle\langle\psi| \widetilde{U}^\dagger(\thv_{\rm Relaxed}) \widetilde{H}\widetilde{U}(\thv_{\rm Relaxed}) ] \label{eq:loss_relax_heisenberg}\\
    \widetilde{\mathcal{L}}(\thv_{\rm Trotter}) &= \Tr[|\psi\rangle\langle\psi| \widetilde{U}^\dagger(\thv_{\rm Trotter}) \widetilde{H}\widetilde{U}(\thv_{\rm Trotter}) ] \label{eq:loss_trotter_heisenberg}
\end{align}
where $\widetilde{U}(\thv_{\rm Relaxed})$ and $\widetilde{U}(\thv_{\rm Trotter})$ are of the form in Eq.~\eqref{eq:HVA_relaxed} and Eq.~\eqref{eq:HVA_trotter} associated with the Heisenberg Hamiltonian $\widetilde{H}$ in Eq.~\eqref{eq:heisenberg}. We also consider $|\psi\rangle$ to be the N\'eel state defined as $|\psi\rangle = \frac{|01\rangle^{\otimes n/2} + |10\rangle^{\otimes n/2}}{\sqrt{2}}$. Finally, we have $L = 8n$. Crucially, we verify that the assumption in Eq~\eqref{eq:hva_cond_curv_trotter} is held for this specific setting.

We summarize the analytical results proven in Section~\ref{subsubsec:HVA_proof} and the numerical results presented in Fig.~\ref{fig:scalings}, all condensed in Table~\ref{table:hva_results}. Note that we do the analysis in a hypercube centered at 0, i.e. $\vol(\vec{0},r)$. Here we see clearly, how both the numerical analysis and the analytic results show that correlating parameters reduces both $r_{\rm patch}$ and $r_{\rm max}$. We see that the scaling of numerical results of the relaxed version align well with those from the analytical ones. However, there is a big discrepancy in for the Trotter case.

\begin{table}[ht!]
\begin{tabular}{ll|l|l|}
\cline{3-4}
                                                  &                    & Relaxed & Trotter \\ \hline
\multicolumn{1}{|l|}{\multirow{2}{*}{Analytical}} & $r_{\rm patch}$    & $\Theta( M^{-0.5})$            & $\Theta( M^{-4})$         \\ \cline{2-4} 
\multicolumn{1}{|l|}{}                            & $\Var_{\rm patch}$ & $\Omega(M^{-1})$               & $\Omega(M^{-10})$             \\ \hline
\multicolumn{1}{|l|}{\multirow{2}{*}{Numerical}}  & $r_{\rm max}$      & $\Theta( M^{-0.48})$           & $\Theta( M^{-0.88})$          \\ \cline{2-4} 
\multicolumn{1}{|l|}{}                            & $\Var_{\rm max}$   & $\Theta( M^{1.86})$            & $\Theta( M^{1.74})$         \\ \hline
\end{tabular}
\caption{\textbf{Hamiltonian Variational Ansatz (HVA):} Summary of the scalings of the analytical $r_{\rm patch},\, \Var_{\rm patch}$, defined as the region in which we can guarantee non-exponential vanishing variance, and the numerical results $r_{\rm max},\, \Var_{\rm max}$, defined in Eqs.~(\ref{eq:varmax},\ref{eq:rmax}) as the point $r$ such that the variance is maximized, and the value of such variance.}
\label{table:hva_results}
\end{table}

\subsection{Unitary Coupled Cluster Ansatz}\label{subsec:UCC}

\subsubsection{Circuit description}\label{subsubsec:UCC_circuit}
In this section, we study two variants of the Unitary Coupled Cluster (UCC) ansatz which systematically performs single and double electron excitations known as Unitary Coupled Cluster Single and Double excitations (UCCSD),  together with a Hartree Fock state as an initial state. The first variant consists in using the $L$-steps Trotter approximation of the UCCSD ansatz \cite{mao2023barren} described by the parametrized unitary evolution 
\begin{equation}\label{eq:UCCSD_Trotter}
    U(\thv_{\rm Trotter}) = \prod_{l=1}^L \prod_{k=1}^K e^{\th_k (\hat{\tau}_k - \hat{\tau}_k^\dagger)} \; .
\end{equation}
The second variant additionally decouples the variational parameters such that 
\begin{equation}\label{eq:UCCSD_relax}
    U(\thv_{\rm Relaxed}) = \prod_{l=1}^L \prod_{k=1}^K e^{\th_{k,l} (\hat{\tau}_k - \hat{\tau}_k^\dagger)} \; .
\end{equation}
Here $\hat{\tau}_k \in \{\hat{a}_p^\dagger\hat{a}_q,\hat{a}_p^\dagger \hat{a}_q^\dagger \hat{a}_r \hat{a}_s\}$ are single and double excitation operators i.e., $\hat{a}_p^\dagger\hat{a}_q$ excites a single electron from the orbital $q$ to $p$, and $\hat{a}_p^\dagger \hat{a}_q^\dagger \hat{a}_r \hat{a}_s$ excites double electrons from the orbitals $r$ and $s$ to the orbitals $p$ and $q$, and $K$ is the total number of single and double excitation operators. These operators can be mapped using the Jordan-Wigner transformation to a ``qubit'' version  $ \{Q_p^\dagger Q_q,Q_p^\dagger Q_q^\dagger Q_r Q_s\}$ with  $Q_p =\frac{ \sigma_x^{(p)} + i\sigma_y^{(p)}}{2}$. Similarly to the analysis performed in Ref.~\cite{mao2023barren}, we consider all single excitations $Q_p^\dagger Q_q \; \forall p>q$ and all double excitations $Q_p^\dagger Q_q^\dagger Q_r Q_s \; \forall p>q>r>s$. 
Thus the total number of generators for the trotter and relaxed settings is $\nHam = KL $ with $K \sim n^4$ whereas the total number of distinct parameters is $\nparams_{\rm Trotter}= K$ in the trotter setting and $\nparams_{\rm Relaxed}= KL$ in its relaxed version.

For convenience, we further express these anti-Hermitian single and double excitations in the Pauli basis as follows
\begin{align}
   Q_p^\dagger Q_q := -i H_{pq} &= \frac{(\sigma_x^{(p)}-i\sigma_y^{(p)})(\sigma_x^{(q)}+i\sigma_y^{(q)})}{4} - h.c. \\
    &= -i \frac{\sigma_y^{(p)}\sigma_x^{(q)} - \sigma_x^{(p)}\sigma_y^{(q)}}{2} \label{eq;H_pq} \;, \\
    Q_p^\dagger Q_q^\dagger Q_r Q_s :=-i H_{pqrs} =& \frac{(\sigma_x^{(p)}-i\sigma_y^{(p)}) (\sigma_x^{(q)}-i\sigma_y^{(q)})(\sigma_x^{(r)}+i\sigma_y^{(r)}) (\sigma_x^{(s)}+i\sigma_y^{(s)})}{16} - h.c. \\
    =& -i \frac{\sigma_y^{(p)}\sigma_x^{(q)}\sigma_x^{(r)}\sigma_x^{(s)} + \sigma_x^{(p)}\sigma_y^{(q)}\sigma_x^{(r)}\sigma_x^{(s)} + \sigma_y^{(p)}\sigma_y^{(q)}\sigma_x^{(r)}\sigma_y^{(s)} + \sigma_y^{(p)}\sigma_y^{(q)}\sigma_y^{(r)}\sigma_x^{(s)}}{8}\\
    & +i \frac{\sigma_x^{(p)}\sigma_x^{(q)}\sigma_y^{(r)}\sigma_x^{(s)} + \sigma_x^{(p)}\sigma_x^{(q)}\sigma_x^{(r)}\sigma_y^{(s)} + \sigma_x^{(p)}\sigma_y^{(q)}\sigma_y^{(r)}\sigma_y^{(s)} + \sigma_y^{(p)}\sigma_x^{(q)}\sigma_y^{(r)}\sigma_y^{(s)}}{8}\;. \label{eq:H_pqrs}
\end{align}
Here we notice that the generators of the form $H_{pq} \;, p>q$ are 2-local  and those of the form $H_{pqrs} \;, p>q>r>s$ are 4-local, consisting each of a sum of a constant number of commuting Pauli terms.

Furthermore, we do precise the order of application of the parametrized unitaries  in the Relaxed and Trotter circuits (See Eq.~\eqref{eq:UCCSD_relax} and Eq.~\eqref{eq:UCCSD_Trotter}) by using the newly introduced Hamiltonians $H_{pq}$ and $H_{pqrs}$ in Eq.~\eqref{eq;H_pq} and Eq.~\eqref{eq:H_pqrs}. Similarly to the circuits adapted in Ref \cite{mao2023barren}, we first apply all the single excitations and then all the double excitations in each layer as follows 
\begin{align}
    U(\thv_{\rm Trotter}) &= \prod_{l=1}^L \left(\prod_{p>q} e^{-i\th_{(p,q)} H_{pq}} \prod_{p>q>r>s} e^{-i\th_{(p,q,r,s)} H_{pqrs}} \right) \;, \label{eq:trotter_ucc_ham}\\
      U(\thv_{\rm Relaxed}) &=\prod_{l=1}^L \left(\prod_{p>q} e^{-i\th_{(p,q),l} H_{pq}} \prod_{p>q>r>s} e^{-i\th_{(p,q,r,s),l} H_{pqrs}} \right) \label{eq:relax_ucc_ham}
\end{align}

For ease of notation, we introduce the sets $\Gamma_1$ and $\Gamma_2$ defined as
\begin{align}
    \Gamma_1 &:= \{(p,q)\;, p>q\} \label{eq:gamma_1}\\
    \Gamma_2 &:= \{(p,q,r,s)\;, p>q>r>s\} \;,\label{eq:gamma_2} 
\end{align}

where we use the subscript $\mu_1 \in \Gamma_1 $  to index the set of generators corresponding to first excitations acting non trivially on qubits $\mu_1 = p,q$ and the subscript  $\mu_2 \in \Gamma_2 $ to index the set of generators corresponding to double excitations acting non trivially on qubits $\mu_2 = p,q,r,s$.  For convenience, we also consider that the elements of the sets $\Gamma_1$ and $\Gamma_2$ are mapped to consecutive integers according to the position of their  associated gate in the circuit in Eq.~\eqref{eq:relax_ucc_ham}. Thus, the elements of $\Gamma_1$ goes from $1$ to $|\Gamma_1|$ and the elements of $\Gamma_2$ goes from $|\Gamma_1|+1$ to $K=|\Gamma_1|+|\Gamma_2|$. Note $|\Gamma_1| = \frac{1}{2} n (n+1)$, $|\Gamma_2| = \frac{1}{24} n (n+1) \left(n^2+9 n+26\right)$, thus $ K = \frac{1}{24} n (n+1) (n (n+9)+38)$.

\subsubsection{Analytical and numerical results}\label{subsubsec:UCC_results}

\paragraph*{\underline{UCCSD ansatz with an arbitrary  observable.}}\label{para:UCCSD_arbitrary_setting}
We consider two loss functions $\mathcal{L}(\thv_{\rm Trotter})$ and $\mathcal{L}(\thv_{\rm Relaxed})$ of the form in Eq.~\eqref{eq:loss} using the Trotter and the Relaxed version of the UCCSD ansatz introduced in Eq.~\eqref{eq:trotter_ucc_ham} and Eq.~\eqref{eq:relax_ucc_ham} respectively. We also consider an \textit{arbitrary} observable $O$ and an initial state $\rho$ such that the loss's  second derivative at zero w.r.t a subset of parameters $\Lambda \subset \Gamma_1 \cup \Gamma_2$ (See Eq.~\eqref{eq:gamma_1} and Eq.~\eqref{eq:gamma_2}) scale as a constant in the number of qubits. Formally, we have
\begin{align}
    &\left|\left.\left(\frac{\partial^2 \mathcal{L}(\thv_{\rm Relaxed})}{\partial \th_{\mu,l}^2}\right)\right|_{\thv_{\rm Relaxed} = \vec{0}}\right| \in \Theta(1) \;,\forall \mu \in \Lambda \text{ and } l \in \{1,\dots,L\}\;,\label{eq:cond_curv_relax_ucc}\\
    &\left|\left.\left(\frac{\partial^2 \mathcal{L}(\thv_{\rm Trotter})}{\partial \th_{\bar{\mu}}^2}\right)\right|_{\thv_{\rm Trotter} = \vec{0}}\right| \in \Theta(L^2) \;,\text{ for some } \bar{\mu} \in \Lambda \;.\label{eq:cond_curv_trotter_ucc}
\end{align}
Here we note that in the Relaxed setting,  the loss second derivative evaluated at zero with respect to some parameter $\th_{\mu,l}$   associated with the generator $H_\mu$ will be independent on $l \in \{1,\dots,L\}$, i.e. the second derivative with respect to the parameter $\theta_{\mu,l}$ will be the same as the derivative with respect to $\theta_{\mu,l'}$. This is because there are no parametrized layers and we are computing the derivative around $\thv = \vec{0}$. Therefore, all the gates evaluated at $\vec{0}$ are just identity. 

Under this setting and these conditions, we show that the size of the region with guaranteed substantial loss variance around zero, denoted by  $r_{\rm patch}^{\rm Relaxed}$ and $r_{\rm patch}^{\rm Trotter}$ in the relaxed and trotter version , scales as 
\begin{align}
    r_{\rm patch}^{\rm Relaxed} &\in \Theta\left( \frac{1}{\sqrt{KL} \norm{O}} \right) \label{eq:ucc_relax_patch} \;,\\
    r_{\rm patch}^{\rm Trotter} &\in \Theta\left(\frac{1}{L \sqrt{\norm{O}(K+ \norm{O})}}\right) \label{eq:ucc_trotter_patch} \;,
\end{align}
and for that region, the variance is lower bounded as
\begin{align}
   \Var_{\thv_{\rm Relaxed} \sim \uni(\vec{0},r_{\rm patch}^{\rm Relaxed})}[\mathcal{L}(\thv_{\rm Relaxed})] & \in \Omega\left(\frac{L |\Lambda|}{M^2 \norm{O}^4}\right)\;,   \label{eq:ucc_relax_var} \\
    \Var_{\thv_{\rm Trotter} \sim \uni(\vec{0},r_{\rm patch}^{Trotter})}[\mathcal{L}(\thv_{\rm Trotter})] &\in  \Omega\left(\frac{1}{\norm{O}^2 (K+ \norm{O})^2}\right) \;.\label{eq:ucc_trotter_var}
\end{align}
The proof is detailed in Appendix~\ref{subsubsec:UCC_proof}.

Crucially, we note that this theoretical guarantee of the substantial gradient patch is complementary to Ref.~\cite{mao2023barren} which shows the exponential concentration of the loss with respect to the number of electrons with random initialization over the entire landscape.

\bigskip

\paragraph*{\underline{Toy example.}} We further specify the scalings of the patch's size in Eq.~\eqref{eq:ucc_relax_patch} and Eq.~\eqref{eq:ucc_trotter_patch}, and of the loss variance in Eq.~\eqref{eq:ucc_relax_var} and Eq.~\eqref{eq:ucc_trotter_var}, by considering the setting where an observable is of the form  $O= \sum_{i=1}^{n}\sigma_z^{(i)}\otimes\sigma_z^{(i+1)}$ (with periodic boundary conditions) with an intial state $\ket{\psi} = \ket{1}^{\otimes \frac{n}{2}} \otimes \ket{0}^{\otimes \frac{n}{2}}$.

Crucially, we verify the assumptions in Eq.~\eqref{eq:cond_curv_relax_ucc} and Eq.~\eqref{eq:cond_curv_trotter_ucc} under this specific setting which leads to 
the worst case the width of the patch around identity with guaranteed polynomial gradients scales as
\begin{align}
    r_{\rm patch}^{\rm Relaxed}&\in\Theta\left(\frac{1}{\sqrt{KL + n^2}}\right)\label{eq:r_uncor_ucc}\\
    r_{\rm patch}^{\rm Trotter}&\in\Theta\left(\frac{1}{L\sqrt{n(K+n)}}\right)\label{eq:r_corr_ucc}
\end{align}
and for that region, the variance is lower bounded as
\begin{align}\label{eq:var_ucc}
    \Var_{\thv_{\rm Relaxed} \sim \uni(\vec{0},r_{\rm patch}^{\rm Relaxed})}[\mathcal{L}(\thv_{\rm Relaxed})]  &\in \Omega\left(\frac{KL}{(KL+ n^2)^2}\right)\\
   \Var_{\thv_{\rm Trotter} \sim \uni(\vec{0},r_{\rm patch}^{Trotter})}[\mathcal{L}(\thv_{\rm Trotter})]&\in \Omega\left( \frac{1}{n^2(K+n)^2}\right) \;. 
\end{align}

Note that this results illustrate the effects of correlating parameters in time that we have highlighted in the previous sections. Indeed, here we see that the regions decrease when correlating parameters. 
Furthermore in our numerics, we use $L=2$ and $K\sim n^4$. With this, Table~\ref{table:ucc_results} presents specific analytical scalings together with numerical scalings obtained from Fig.~\ref{fig:scalings}.

\begin{table}[ht!]
\begin{tabular}{ll|l|l|}
\cline{3-4}
                                                  &                    & Relaxed & Trotter \\ \hline
\multicolumn{1}{|l|}{\multirow{2}{*}{Analytical}} & $r_{\rm patch}$    & $\Theta( M^{-0.5})$            & $\Theta( M^{-0.625})$         \\ \cline{2-4} 
\multicolumn{1}{|l|}{}                            & $\Var_{\rm patch}$ & $\Omega(M^{-1})$               & $\Omega(M^{-2.5})$             \\ \hline
\multicolumn{1}{|l|}{\multirow{2}{*}{Numerical}}  & $r_{\rm max}$      & $\Theta( M^{-0.55})$           & $\Theta( M^{-0.61})$          \\ \cline{2-4} 
\multicolumn{1}{|l|}{}                            & $\Var_{\rm max}$   & $\Theta( M^{-0.77})$            & $\Theta( M^{-0.93})$         \\ \hline
\end{tabular}
\caption{\textbf{Unitary Coupled Cluster (UCC):} Summary of the scalings of the analytical $r_{\rm patch},\, \Var_{\rm patch}$, defined as the region in which we can guarantee non-exponential vanishing variance, and the numerical results $r_{\rm max},\, \Var_{\rm max}$, defined in Eqs.~(\ref{eq:varmax},\ref{eq:rmax}) as the point $r$ such that the variance is maximized, and the value of such variance.}
\label{table:ucc_results}
\end{table}

Clearly, we see that the scaling of the analytical patch size aligns with the size of the maximal region with gradients, both showcasing that the Trotter version yields a smaller region with substantial gradients. This also aligns with the intuition given by the effective/maximal frequencies. Indeed, the Trotter (time-correlated) case requires an analysis based on maximal frequencies, and the relaxed version allows for effective frequencies. Moreover, we also see in that the $\Var_{\max}$ is also smaller in the Trotter case compared to the Relaxed case. This aligns with the lower-bounds obtained for $r_{\rm patch}$. However, one discrepancy which is on the variance scaling in the Trotter case could be explained as follows. We recall that in Theorem~\ref{th:var_formal_cor}, we only kept the variance contribution of a single parameter as opposed to Theorem~\ref{th:var_formal} which includes all parameters contribution. So, we could have suspected that an additional factor $M$ multiplying the variance would be closer to the true scaling of the variance.

\subsection{Proofs of analytical results for specific circuit families}
\subsubsection{Tensor product ansatz: Proof of analytical results}\label{subsubsec:proof_toy}

Following the general recipe detailed in the introduction of Appendix \ref{appendix:architectures},  we begin by identifying the type of correlations in the circuits described in Eq.~\eqref{eq:toy_uncorr_circuit} and Eq.~\eqref{eq:toy_cor_circuit} in order to characterize the patch with guaranteed gradients, i.e. $r_{\rm patch}$, centered around $\vphi = \vec{0}$. 
Precisely, we apply Theorem~\ref{th:var_formal} to analyze $\mathcal{L}(\thv_{\rm uncorrelated})$ and Theorem \ref{th:var_formal_cor} for $\mathcal{L}(\thv_{\rm correlated})$ exhibiting both spatial (between qubits) and temporal ($R_X$ and $R_Z$ rotations) correlations. To ease the notation, we will use $\thv_{\rm uncorr}, \thv_{\rm corr}$ instead of $\thv_{\rm uncorrelated}, \thv_{\rm correlated}$.

\paragraph*{\underline{Characterization of the region with guaranteed substantial gradients for $\mathcal{L}(\thv_{\rm uncorr})$.}}

Let us start by preparing the key ingredients necessary to adapt Theorem \ref{th:var_formal} to the loss function  $\mathcal{L}(\thv_{\rm uncorr})$ associated to the circuit in Eq.~\eqref{eq:toy_uncorr_circuit}. Namely, we compute the loss function second derivatives, as well as the associated maximal and effective frequencies at $\vphi=\vec{0}$.  

Since we are looking at the region with gradients around zero, the loss second derivatives with respect to a parameter $\th_{i,k} $ for $ k \in \{z,x_1,x_2\}$ denoted by $c_{i,k}(\vec{0})$ simplifies $\forall i,j \in \{1,\dots,n\}, \forall k \in \{z,x_1,x_2\}$ to

\begin{align}
    c_{i,k}(\vec{0}) &= \left|  \left.\left(\frac{\partial^2 \mathcal{L}(\thv_{\rm uncorr})}{\partial\th_{i,k}^2} \right)\right|_{\thv= \vec{0}} \right|\\
    &= \left| \Tr\left[\rho \left.\left(\frac{\partial^2 [e^{i\th_{i,k}\sigma_k^{(i)}}Oe^{-i\th_{i,k}\sigma_k^{(i)}}]}{\partial\th_{i,k}^2}\right)\right|_{\th_{i,k}=0}\right]  \right| \\
    &= \left| \Tr[\rho [\sigma_k^{(i)},[\sigma_k^{(i)}, O]]] \right|\;.
\end{align}
where all the parameters except $\th_{i,k}$ are set to zero (resulting in the associated unitaries being identities) in the second equality and  we use Eq.~\eqref{eq:p-order-grad-A} from Lemma \ref{lemma:bounded_unitary} in the last equality.

Hence, one can easily see that the loss function second derivatives with respect to parameters associated with a $\sigma_z$ generator are simply vanishing ,
\begin{align}
    c_{i,z}(\vec{0}) = \left|\Tr[\rho [\sigma_z^{(i)},[\sigma_z^{(i)},\sigma_z^{\otimes n}]]]\right| =0 \;.
\end{align}

On the other hand, the loss second derivatives with respect to the remaining parameters of the form $\th_{i,x}$ associated with generators $\sigma_x^{(i)}$ scale as constants in the number of qubits. Indeed, we show that
\begin{align}\label{eq:toy_curv}
    c_{i,x_l}(\vec{0}) &=  \left|\Tr[\rho [\sigma_{x}^{(i)},[\sigma_{x}^{(i)},\sigma_z^{\otimes n}]]]\right| = 4\;, l \in \{1,2\}\;,
\end{align}
where we used the fact that the double commutator gives $[\sigma_{x}^{(i)},[\sigma_{x}^{(i)},Z^{\otimes n}]] = 4 \sigma_z^{\otimes n}$ and that the initial state $\rho$ defined in Eq.~\eqref{eq:state_init_toy} is an eigenvector of $\sigma_z^{\otimes n}$ associated with eigenvalue $1$.

Consequently, we choose to apply Theorem \ref{th:var_formal} with the contribution of the parameters in the subset $\Lambda=\{(i,x_1),(i,x_2)\}_{i=1}^n$ as explained in the third step of  the general recipe. 

Next step form the recipe is to compute relevant frequencies. Here the effective frequencies denoted by $\omega^{(\rm eff)}_{(i,k)}(\vec{0})$ and $\widetilde{\omega}^{(\rm eff)}_{(i,k),(j,k')}(\vec{0})$ as defined in Eq.~\eqref{eq:eff_freqs_E} simplifies to 
\begin{align}
    \left[\omega^{(\rm eff)}_{(i,k)}(\vec{0})\right]^2&= \norm{\left.\left(\frac{\partial^2 [U(\thv_{\rm uncorr})^\dagger O U(\thv_{\rm uncorr})]}{\partial \th_{i,k}^2}\right)\right|_{\thv_{\rm uncorr= \vec{0}}} }\\
    &= \norm{\left.\left(\frac{\partial^2 [e^{i\th_{i,k}\sigma_k^{(i)}}Oe^{-i\th_{i,k}\sigma_k^{(i)}}]}{\partial \th_{i,k}^2}\right)\right|_{\th_{i,k}=0} }\\
    &= \norm{ [\sigma_k^{(i)},[\sigma_k^{(i)}, O]]} \label{eq:toy_eff_freqs} \;,\\
    \left[\widetilde{\omega}^{(\rm eff)}_{(i,k),(j,k')}(\vec{0})\right]^2 &= \norm{\left.\left(\frac{\partial^4 [U(\thv_{\rm uncorr})^\dagger O U(\thv_{\rm uncorr})]}{\partial \th_{i,k}^2 \partial \th_{j,k'}^2}\right)\right|_{\thv_{\rm uncorr= \vec{0}}} }\\
    &= \norm{\left.\left(\frac{\partial^2 [e^{i\th_{j,k'}\sigma_{k'}^{(j)}}e^{i\th_{i,k}\sigma_k^{(i)}}Oe^{-i\th_{i,k}\sigma_k^{(i)}}e^{-i\th_{j,k'}\sigma_{k'}^{(j)}}]}{\partial \th_{i,k}^2 \partial \th_{j,k'}^2}\right)\right|_{\th_{i,k},\th_{j,k'}=0} }\\
    &= \norm{ [\sigma_{k'}^{(j)},[\sigma_{k'}^{(j)},[\sigma_k^{(i)},[\sigma_k^{(i)}, O]]]]} \; , \label{eq:toy_eff_tilde_freqs}
\end{align}
where we used again Eq.~\eqref{eq:p-order-grad-A} from Lemma \ref{lemma:bounded_unitary} applied consecutively for the second derivatives with respect to $\th_{i,k}$ and $\th_{j,k'}$ in the last equality.

We then show that the effective frequencies are either zero or constant in the system size by simply computing the commutators between $\sigma_z$ and $\sigma_x$ as detailed in Eq.~\eqref{eq:toy_eff_freqs} and Eq.~\eqref{eq:toy_eff_tilde_freqs}. 
\begin{align}
    &\omega^{(\rm eff)}_{(i,z)}(\vec{0}) =0\\
    &\omega^{(\rm eff)}_{(i,x_l)}(\vec{0})= 2 \;, l \in \{1,2\}\label{eq:non_null_eff_toy}\\ 
    &\widetilde{\omega}^{(\rm eff)}_{(j,k'),(i,k)}(\vec{0})= \begin{cases}
        4 &\;, \forall  k,k' \in \{x_1,x_2\} \label{eq:non_null_eff_tilde_toy}\\
        0 &\;, \text{otherwise}.
    \end{cases}
\end{align}
Besides, the maximal frequencies, as defined in Eq.~\eqref{eq:max_freqs_E} associated with a $\sigma_z$ or $\sigma_x$ generators are simply   
\begin{align}
    \omega^{(\rm max)}_{(i,z)} &= \lambda_{\max}(\sigma_z^{(i)})-\lambda_{\min}(\sigma_z^{(i)})= 2 \;, \\
    \omega^{(\rm max)}_{(i,x_l)} &=\lambda_{\max}(\sigma_{x}^{(i)})-\lambda_{\min}(\sigma_{x}^{(i)})= 2\; , l \in \{1,2\}. \label{eq:max_toy}
\end{align}

Finally that we have gone through the first 4 steps of the general recipe, we proceed to substituting the second derivative, the maximal and effective frequencies in Eq.~\eqref{eq:rpatch_th3_recipe} and Eq.~\eqref{eq:Var_recipe_th3} to get the scaling of $r_{\rm patch}^{\rm uncorr}$ and the associated variance lower bound respectively. 
Specifically, we map the index $l$  in  Eq.~\eqref{eq:rpatch_th3_recipe} to the index $(i,x_l)$. Hence, the summation over the index $\mu$ in the term $\mathcal{A}_{(i,x_l)}(\vec{\phi})$ defined in Eq.~\eqref{eq:recipe_A} becomes a double summation over the $n$ qubits and the 2 layers indexed by $\nu \in \{1,2\}$ having $\sigma_{x_\nu}$ generators thus, enabling us to properly index the non zero effective frequencies computed in Eq.~\eqref{eq:non_null_eff_toy} and Eq.~\eqref{eq:non_null_eff_tilde_toy}. Precisely, we obtain 
\begin{align}
   \mathcal{A}_{(i,x_l)}(\vec{0}) &=  4 \left[\omega_{(i,x_l)}^{\rm (max)}\right]^2 \left(\sum_{\nu=1}^{l-1} \sum_{j=1}^{n} \left[\omega_{(j,x_\nu)}^{\rm (eff) }(\vec{0})\right]^2 +  \sum_{j=1}^{i-1} \left[\omega_{(j,x_l)}^{\rm (eff) }(\vec{0})\right]^2 \right) +  \sum_{j=i+1}^{n} \left[ \widetilde{\omega}^{(\rm eff)}_{(i,x_l),(j,x_l)}(\vec{0}) \right]^2 +  \sum_{\nu=l+1}^2 \sum_{j=1}^{n} \left[ \widetilde{\omega}^{(\rm eff)}_{(i,x_l),(j,x_\nu)}(\vec{0}) \right]^2\\
   &= \begin{cases}
         4 \left[\omega_{(i,x_1)}^{\rm (max)}\right]^2  \sum_{j=1}^{i-1} \left[\omega_{(j,x_1)}^{\rm (eff) }(\vec{0})\right]^2 + \sum_{j=i+1}^{n} \left[ \widetilde{\omega}^{(\rm eff)}_{(i,x_1),(j,x_1)}(\vec{0}) \right]^2+  \sum_{j=1}^{n} \left[ \widetilde{\omega}^{(\rm eff)}_{(i,x_1),(j,x_2)}(\vec{0}) \right]^2&\;, l=1 \\
         4 \left[\omega_{(i,x_2)}^{\rm (max)}\right]^2  \left(\sum_{j =1}^n \left[\omega_{(j,x_1)}^{\rm (eff) }(\vec{0})\right]^2 + \sum_{j =1}^{i-1} \left[\omega_{(j,x_2)}^{\rm (eff) }(\vec{0})\right]^2\right) + \sum_{j=i+1}^{n} \left[ \widetilde{\omega}^{(\rm eff)}_{(i,x_2),(j,x_2)}(\vec{0}) \right]^2  &\;, l=2 
   \end{cases}\\
   &= \begin{cases}
       16(2n+3i-4) &\;, l=1 \\
      16 (5n+3i-4) &\;, l=2 \\
   \end{cases} \label{eq:final_A_toy}
\end{align}
where we emphasize that the $i$ appearing at the end result refers to the index $i$, not to a complex valued result.

Consequently, by re-invoking Eq.~\eqref{eq:rpatch_th3_recipe} with the appropriate parameter indexing, we obtain the following scaling of $r_{\rm patch}^{\rm uncorr}$, i.e. the region with guaranteed substantial gradients centered around zero for the loss function $\mathcal{L}(\thv_{\rm uncorr})$
\begin{align}\label{eq:r_inter_toy}
     (r_{\rm patch}^{\rm uncorr})^2 = \min_{\substack{i\in \{1,\dots,n\}\\ l \in \{1,2\}}} \frac{9c_{i,x_l}(\vec{0})^2}{8 c_{i,x_l}(\vec{\phi}) \mathcal{A}_{(i,x_l)} + 24\beta_{(i,x_l)}}
\end{align}

We can easily check from Eq.~\eqref{eq:final_A_toy} that the index pair $(i,x_l)$ that will maximize $\mathcal{A}_{(i,x_l)}$ is $i = n,\, l = 2$. We can also recall from Eq.~\eqref{eq:toy_curv} that $c_{i,x_l}$ is independent on this indexes. Finally, we can readily check that $\beta_{(i,x_l)}$ is also maximized for $i = n,\, l=2$. Let us recall Eq.~\eqref{eq:beta_recipe}. We have shown that in this example that $\omega^{(\max)}_{(i,x_l)} = \omega^{(\rm eff)}_{(i,x_l)}$ (see Eqs.~(\ref{eq:max_toy},\ref{eq:non_null_eff_toy})), and because $\|\sigma_z^{\otimes n}\|_{\infty} = 1$, then we have the following 
    \begin{align}\label{eq:beta_toy}
            \beta_{(i,x_l)}(\vec{0}) =\begin{cases}
        \frac{2 \left[\omega^{(\rm max)}_{(1,x_1)}\right]^2  \left[\omega^{\rm (eff)}_{(1,x_1)}(0)\right]^4}{3} \;  {\rm if }\; i=1,\,l=1 \;,\\
       {\tiny \,}\\
       \frac{32\left[ \omega_{(i,x_l)}^{\rm (max) }\right]^{6}  \norm{O}^2}{3}\; {\rm otherwise}\;.
    \end{cases}  
    \Rightarrow \;\;\;\;\;
     \beta_{(i,x_l)}(\vec{0}) =\begin{cases}
     \frac{2^7}{3} \;  {\rm if }\; i=1,\,l=1\;,\\
       \;\\
       \frac{2^{11}}{3}\;  {\rm otherwise} \;.
      \end{cases}  
\end{align}

Thus, it is clear that the indices that maximize this quantity are $i = n,\, l = 2$. Hence, by plugging Eq.~\eqref{eq:final_A_toy} and the larger version of Eq.~\eqref{eq:beta_toy} in Eq.~\eqref{eq:r_inter_toy}, we finally get
\begin{align}
    (r_{\rm patch}^{\rm uncorr})^2 &= \frac{9}{2^7(2n+7)} \in \Theta\left(\frac{1}{n}\right) =  \Theta\left(\frac{1}{M}\right) \;.
\end{align}
where we recall that $M = 3n$

 Moreover, we have from Eq.~\eqref{eq:Var_recipe_th3} that the variance of the loss function at $r_{\rm patch}^{\rm uncorr}$ centered around zero is lower bounded as 
\begin{align}
    \Var_{\thv \sim \uni\left(\vec{0},r_{\rm patch}^{\rm uncorr}\right)}[\mathcal{L}(\vtheta_{\rm uncorr})] &\in \Omega\left( \sum_{l=1}^2 \sum_{i=1}^n c_{i,x_l}(\vec{0})^2(r_{\rm patch}^{\rm uncorr})^4\right)= \Omega\left( \frac{1}{n} \right) =  \Omega\left( \frac{1}{M} \right) 
\end{align}
where we recall that $M=3n$.

\paragraph*{\underline{Characterization of the region with guaranteed substantial gradients for $\mathcal{L}(\thv_{\rm corr})$.}}

We repeat the same steps for the fully correlated loss function $\mathcal{L}(\thv_{\rm corr})$ associated with the circuit in Eq.~\eqref{eq:toy_cor_circuit}. As we mentioned in the beginning of the proof, we will be using Theorem \ref{th:var_formal_cor} to characterize the region with substantial gradients centered around $\vphi = \vec{0}$ since the circuit exhibits temporal correlation (two non commuting gates on the same qubit sharing the same parameter). We also recall that in this setting we have $M =3n$ generators but $m_{\rm corr}=1$ parameter shared across all of the circuit generators.
Hence, the loss function depends on a single parameter $\th$ and adapting Theorem \ref{th:var_formal} boils down to computing $c(\vec{0})$, i.e. the second derivative of the loss $\mathcal{L}(\th)$ and the maximal frequency $\omega^{(\rm max)}$ associated to that single parameter.

We start by computing compute $c(\vec{0})$. To do so we use Eq.~\eqref{eq:pth_deriv_corr} from Lemma \ref{lemma:bouded_unitary_product} and the definitions $H_1 = \sum_{i=1}^{n}\sigma_z^{(i)},\, H_2 = \sum_{i=1}^{n}\sigma_x^{(i)}$ to obtain
\begin{align}
    c(\vec{0}) &= \left| \Tr\left[\rho\left.\left(\frac{d^2 [U^\dagger(\th) O U(\th)]}{d\th^2} \right)\right|_{\th =0}\right]\right|\\
    &= \left|\Tr\left[\rho \left( [H_1,[H_1,O]] + 2 [H_2,[H_2,O]] + 2 [H_1,[H_2,O]] + 2 [H_2,[H_1,O]] + 2 [H_2,[H_2,O]]\right) \right]\right|\\
    &= 4 \left|\Tr\left[\rho \ [H_2,[H_2,O]] \right]\right| \;.\label{eq:c0_toy}
\end{align}
where we used that the observable is $O = \sigma_z^{\otimes n}$, in order to obtain $[H_1,O] = 0$. Furthermore, we used that $\Tr[\rho[H_1,[H_2,O]]] = 0 $. This can be shown as follows: first we use the well known identity $[A,[B,C]] + [C,[A,B]] + [B,[C,A]] = 0$ as follows
\begin{align}
    [H_1,[H_2,O]]] + [O,[H_1,H_2]]]+ [H_2,[O,H_1]]] = 0 \Rightarrow  [H_1,[H_2,O]]] = - [O,[H_1,H_2]]]
\end{align}
where we used that $[H_1,O] = 0$. With this, we can use the cyclicity of the trace, particularly the fact that $\Tr[A,[B,C]] = \Tr[[A,B]C]$, to show
\begin{align}\label{eq:toy_trace_equal_zero}
    \Tr[\rho[H_1,[H_2,O]]] =& - \Tr[\rho[O,[H_1,H_2]]]\\
    =& - \Tr[[\rho,O]\cdot[H_1,H_2]]] = 0
\end{align}
where we used that $[\rho,O]=0$, as $\rho$ is an eigenstate of $O$. Indeed, recall from Eq.~\eqref{eq:state_init_toy} that $\rho = \ketbra{\psi},\, \ket{\psi} = \frac{1}{\sqrt{2}}(\ket{+}^{\otimes n}+ \ket{-}^{\otimes n})$. And because $\sigma_z\ket{+} = \ket{-},\, \sigma_z\ket{-} = \ket{+}$, it is trivial that $\sigma_z^{\otimes n }\ket{\psi} = \ket{\psi}$.

\medskip

After all these simplifications, we can recover Eq~\eqref{eq:c0_toy} and are now ready to evaluate the remaining term. To evaluate this term, we will show $[H_2,[H_2,\sigma_z^{\otimes n}]] = 4H_2^2 \sigma_z^{\otimes n}$, where we will write $O=\sigma_z^{\otimes n}$ explicitly. To prove this we start with the trivial observation that $\sigma_x^{i}\sigma_z^{\otimes n} = -\sigma_z^{\otimes n}\sigma_x^{i}$ as the Pauli matrices anti-commute. With this observation we can show that 
\begin{align}
    [H_2,\sigma_z^{\otimes n}] =& \sum_{i=1}^n[\sigma_x^{(i)},\sigma_z^{\otimes n}]\\
    =&\sum_{i=1}^n(\sigma_x^{(i)}\sigma_z^{\otimes n} - \sigma_z^{\otimes n}\sigma_x^{(i)})\\
    =&2\sum_{i=1}^n \sigma_x^{(i)}\sigma_z^{\otimes n} = 2 H_2\sigma_z^{\otimes n}
\end{align}
where in the first equality we expanded $H_2 = \sum_{i=1}^n \sigma_x^{(i)}$ and used the linearity of the commutator with the sum, to take the $\sum_{i=1}^n$ outside of the commutator. In the second equality we expanded the commutator, and in the third inequality we used that $\sigma_x^{(i)},\sigma_z^{\otimes n}$ anti-commute as explained before. With this, we can trivially show that $[H_2,[H_2,\sigma_z^{\otimes n}]] = 4H_2^2\sigma_z^{\otimes n}$.

With this result we can recover Eq.~\eqref{eq:c0_toy} to simplify the final term. Indeed, now we can write
\begin{align}
     c(\vec{0}) =& 16 \left|\Tr\left[\rho \ H_2^2 \sigma_z^{\otimes n} \right]\right| \\
     =& 16 \left|\Tr\left[\rho \ H_2^2  \right]\right|\\
     \label{eq:second_deriv_toy_cor}
     =& 16 n^2
\end{align}
where in the second inequality we used that the trace is cyclic and that $\sigma_z^{\otimes n}\rho = \rho$ as shown right after Eq.~\eqref{eq:toy_trace_equal_zero}. In the last equality we used that $\Tr[H_2^2\rho]=n^2$. Indeed, this can be shown by recovering the initial state $\rho$ in Eq.~\eqref{eq:state_init_toy}, then 
\begin{align}
    H_2^2\frac{1}{\sqrt{2}}\left(\ket{+}^{\otimes n} + \ket{-}^{\otimes n} \right) = 
 H_2 \, n\frac{1}{\sqrt{2}}\left(\ket{+}^{\otimes n} - \ket{-}^{\otimes n} \right) = n^2\frac{1}{\sqrt{2}}\left(\ket{+}^{\otimes n} + \ket{-}^{\otimes n} \right)
\end{align}
where we used that $\sigma_x\ket{\pm} = \pm\ket{\pm}$.

Furthermore, the largest and lowest eigenvalues of both $H_1$ and $H_2$ are respectively $n$  and $-n$ (sum of commuting Pauli matrices), so we have
\begin{align}\label{eq:omega_max_toy_cor}
\omega^{\rm (max) } &=\omega^{(\rm max)}(H_1)+2\omega^{(\rm max)}(H_2)= 6n\;. 
\end{align}
where we used that we have 2 sets of $R_X$ and one set of $R_Z$.

Thus, by plugging in Eq.~\eqref{eq:second_deriv_toy_cor} and Eq.~\eqref{eq:omega_max_toy_cor} in Eq.~\eqref{eq:rpatch_th4_recipe} (and using Eqs.~(\ref{eq:gamma_recipe},\ref{eq:beta_tilde_recipe})), we get that within the patch defined by $r_{\rm patch}^{\rm corr}$ and centered at zero, 
\begin{align}
    (r_{\rm patch}^{\rm corr})^2 &= \frac{9 c_0^2(\vec{0})}{256 \norm{O}^2 \left[\omega^{\rm (max) }\right]^6 } = \frac{1 }{5184 n^2}\in\Theta\left(\frac{1}{M^2}\right) \;,
\end{align}
where we recall that $M = 3n$. 

The variance of the loss function is lower bounded as 
\begin{align}
    \Var_{\thv \sim \uni\left(\vec{0},r_{\rm patch}^{\rm corr}\right)}[\mathcal{L}(\vtheta_{\rm corr})] &\in \Omega\left(\frac{c(\vec{0})^2}{M^4}\right) \in \Omega(1) \;.
\end{align}

\subsubsection{Hardware Efficient Ansatz: Proof of analytical results}\label{subsubsec:HEA_proof}

We consider the setting where all the parameters in the circuit defined in Eq.~\eqref{eq:HEA_circuit} are independent. Hence, as mentioned in the first step of the general recipe (see the beginning of this Appendix \ref{appendix:architectures}), we apply Theorem~\ref{th:var_formal} to characterize the region with gradients centered around $\vphi=\vec{0}$ for each of the loss functions $\mathcal{L}_G(\thv)= \Tr[| 0 \rangle\langle 0 |^{\otimes n} U^\dagger(\thv) O_G U(\thv)]$ and $\mathcal{L}_L(\thv) = \Tr[| 0 \rangle\langle 0 |^{\otimes n} U^\dagger(\thv) O_L U(\thv)]$ associated to the observables $O_G$ and $O_L$ introduced in Eq.~\eqref{eq:obs_G} and Eq.~\eqref{eq:obs_L} respectively.

We now proceed to the steps $2,3 \text{ and } 4$ of the general recipe mainly consisting of computing the second derivatives of the loss functions w.r.t all the parameters, as well as the maximal and effective frequencies. Specifically, we develop these steps for a loss function of the form $\mathcal{L}(\thv)= \Tr[|0\rangle\langle0|^{\otimes n} U^\dagger(\thv)OU(\thv)]$ using the HEA ansatz $U(\thv)$ described in Eq.~\eqref{eq:HEA_circuit} for any observable $O$. Then, we tailor the final expressions to the loss functions of interest $\mathcal{L}_G(\thv)$ and $\mathcal{L}_L(\thv)$ described above.

Since we are looking at the region with gradients around zero, the computations of the second derivatives and the effective frequencies boils down to evaluating simple doubly nested commutators. Specifically, we can express $c_{l,k_i}(\vec{0})$, i.e. the second derivative of the loss function $\mathcal{L}(\thv)$ w.r.t the parameter $\th_{l,k_i}$ evaluated at zero for $ k \in\{y,z\}$ as 
\begin{align}
    c_{l,k_i}(\vec{0}) &= \left| \Tr\left[|0\rangle\langle0|^{\otimes n} \left.\left(\frac{\partial^2 [U^\dagger(\thv)OU(\thv)]}{\partial\th_{l,k_i}^2}\right)\right|_{\thv=\vec{0}}\right] \right|\\
    &= \left| \Tr\left[|0\rangle\langle0|^{\otimes n} V_L^\dagger \dots V^\dagger_{l+1} \left.\left(\frac{\partial^2 [e^{i \th_{l,k_i}\sigma_{k_i}^{(i)}}( V^\dagger_l \dots V^\dagger_1O V_1 \dots V_l )e^{-i \th_{l,k_i} \sigma_{k_i}^{(i)}}]}{\partial\th_{l,k_i}^2}\right)\right|_{\th_{l,k_i}=0} V_{l+1} \dots V_L\right] \right|\\
    &= \left| \Tr\left[|0\rangle\langle0|^{\otimes n} V_L^\dagger \dots V^\dagger_{l+1} [\sigma_{k_i}^{(i)},[\sigma_{k_i}^{(i)},V^\dagger_l \dots V^\dagger_1O V_1 \dots V_l]]  V_{l+1} \dots V_L\right] \right|\\
    &= \left| \Tr\left[|0\rangle\langle0|^{\otimes n}  [\sigma_{k_i}^{(i)},[\sigma_{k_i}^{(i)},V^\dagger_l \dots V^\dagger_1O V_1 \dots V_l]] \right] \right| \label{eq:curv_HEA}
\end{align}
where we used Eq.~\eqref{eq:p-order-grad-A} from Lemma \ref{lemma:bounded_unitary} in the third equality and in the last equality we used the fact that $V_i = CZ(i,i+1)$ acts trivially on the initial state $|0\rangle\langle0|^{\otimes n}$.

Similarly, we can use the same arguments to express the effective frequencies $\omega^{(\rm eff)}_{(l,k_i)}(\vec{0})$ associated to the parameter $\th_{l,k_i}$ and defined in Eq.~\eqref{eq:eff_freqs_E}  as 
\begin{align}
    \left[\omega^{(\rm eff)}_{(l,k_i)}(\vec{0})\right]^2 &= \norm{\left.\frac{\partial^2 [U(\thv)^\dagger O U(\thv)]}{\partial \th_{l,k_i}^2}\right|_{\thv = \vec{0}}} \\
    &= \norm{ V_L^\dagger \dots V^\dagger_{l+1} [\sigma_{k_i}^{(i)},[\sigma_{k_i}^{(i)},V^\dagger_l \dots V^\dagger_1O V_1 \dots V_l]]  V_{l+1} \dots V_L}\\
    &= \norm{[\sigma_{k_i}^{(i)},[\sigma_{k_i}^{(i)},V^\dagger_l \dots V^\dagger_1O V_1 \dots V_l]]} \label{eq:eff_HEA}
\end{align}
where we used that the infinity norm is invariant under unitary operations, i.e. $\|UAU^\dagger\|_{\infty} = \|A\|_{\infty}$.

Moreover, the effective frequencies $\widetilde{\omega}^{(\rm eff)}_{(l,k_i),(l',k'_j)}(\vec{0})$ associated to the parameters $\th_{(l,k_i)}$ and $\th_{(l',k'_j)}$ such that $\th_{(l,k_i)}$ is closer to the observable than $\th_{(l',k'_j)}$, as  introduced in Eq.~\eqref{eq:eff_freqs_tilde_E}, can be expressed as 
\begin{align}
    \left[\widetilde{\omega}^{(\rm eff)}_{(l,k_i),(l',k'_j)}(\vec{0})\right]^2 &= \norm{ \left.\left(\frac{\partial^4 [U(\thv)^\dagger O U(\thv)]}{ \partial \th_{(l',k'_j)}^2 \partial \th_{(l,k_i)}^2}\right)\right|_{\thv= \vec{0}}}\\
    &=  \norm{ \left.\frac{\partial^4 [e^{i \th_{l',k'_j} \sigma^{(j)}_{k'_j}} V_{l'}^\dagger \dots V^\dagger_{l+1} e^{i \th_{l,k_i} \sigma^{(i)}_{k_i}} V_l^\dagger \dots V_1^\dagger O V_1 \dots V_l e^{-i \th_{l,k_i} \sigma^{(i)}_{k_i}} V_{l+1} \dots V_{l'} e^{-i \th_{l',k'_j} \sigma^{(j)}_{k'_j}}]}{ \partial \th_{(l',k'_j)}^2 \partial \th_{(l,k_i)}^2}\right|_{\th_{l,k_i},\th_{l',k'_k}=0 }}\\
    &= \norm{[\sigma^{(j)}_{k'_j},[\sigma^{(j)}_{k'_j},V_{l'}^\dagger \dots V^\dagger_{l+1}[\sigma^{(i)}_{k_i},[\sigma^{(i)}_{k_i}, V_l^\dagger \dots V_1^\dagger O V_1 \dots V_l]]V_{l+1} \dots V_{l'}]]} \label{eq:eff_tilde_HEA}
\end{align}
where we used Eq.~\eqref{eq:p-order-grad-A} from Lemma \ref{lemma:bounded_unitary} twice in the last equality.

Finally, since all of our generators are Pauli strings,  the maximal frequencies defined in Eq.~\eqref{eq:max_freqs_E} are simply all ones.
\begin{align}\label{eq:max_HEA}
    \omega^{(\rm max)}_{(l,k_i)} = 2 \quad,\; \forall 1\leq l \leq L, 1\leq i \leq n \text{ and } k \in \{y,z\} \;.
\end{align}

Let us now further develop Eq.~\eqref{eq:curv_HEA}, Eq.~\eqref{eq:eff_HEA} and Eq.~\eqref{eq:eff_tilde_HEA} for the observables $O_G$ and $O_L$ in order to finally obtain the expression of $r_{\rm patch}^{\rm G}$ and $r_{\rm patch}^{\rm L}$, i.e. the regions with guaranteed substantial gradients around zero for the loss functions $\mathcal{L}_G(\thv)$   and $\mathcal{L}_L(\thv)$ respectively according to Eq.~\eqref{eq:rpatch_th3_recipe}. From Eq.~\eqref{eq:Var_recipe_th3}, we also get the associated variance lower bounds.

\paragraph*{\underline{Characterization of the region with gradients $r_{\rm patch}^{G}$ centered around zero for the loss function $\mathcal{L}_G(\thv)$.}}
     Here, we consider the global observable $O_G = Z^{\otimes n}$. Thus, any $V_i:= CZ(i,i+1)$ acts trivially on the observable $O_G$.
     Hence, the second derivative $c_{l,k_i}(\vec{0})$ in Eq.~\eqref{eq:curv_HEA} and the effective frequencies in Eq.~\eqref{eq:eff_HEA} and Eq.~\eqref{eq:eff_tilde_HEA} simplifies to
     \begin{align}
         c_{l,k_i}(\vec{0})
         &= \left| \Tr\left[|0\rangle\langle0|^{\otimes n}  [\sigma_{k_i}^{(i)},[\sigma_{k_i}^{(i)}, \sigma_z^{\otimes n}]] \right] \right| \label{eq:curv_global}\\
         \omega^{(\rm eff)}_{(l,k_i)}(\vec{0}) &= \sqrt{\norm{[\sigma_{k_i}^{(i)},[\sigma_{k_i}^{(i)},\sigma_z^{\otimes n}]]}} \label{eq:eff_global}\\
         \widetilde{\omega}^{(\rm eff)}_{(l,k_i),(l',k'_j)}(\vec{0}) &=\sqrt{\norm{[\sigma^{(j)}_{k'_j},[\sigma^{(j)}_{k'_j},V_{l'}^\dagger \dots V^\dagger_{l+1}[\sigma^{(i)}_{k_i},[\sigma^{(i)}_{k_i}, \sigma_z^{\otimes n}]]V_{l+1} \dots V_{l'}]]}}\label{eq:eff_tilde_global}
     \end{align}
     where we used the fact that $\sigma_z^{\otimes n}$ commutes with $V_l := \prod_i CZ(i,i+1) \;, \forall 1\leq i \leq n$.

     Clearly, one can see that $ c_{l,z_i}(\vec{0}) = 0$ given that the commutator between the generators and the observable is zero. The curvature with respect to the $R_Y$ (deppending on $\th_{l,y_i}$) gates is non zero, but are constants in the system size. We can trivially show this by using that the Pauli matrices anticommute, i.e. $\sigma_y^{(i)}\sigma_z^{(i)} =-\sigma_z^{(i)}\sigma_y^{(i)}$, and with this we find
     \begin{align}\label{eq:computation_c0y_hea}
         c_{l,y_i}(\vec{0}) &= \left| \Tr\left[|0\rangle\langle0|^{\otimes n}  [\sigma_{y}^{(i)},[\sigma_{y}^{(i)}, \sigma_z^{\otimes n}]] \right] \right|\\
         &= 2  \left| \Tr\left[|0\rangle\langle0|^{\otimes n}   \sigma_y^{(i)}[\sigma_y^{(i)},\sigma_z^{\otimes n}] \right] \right|\\
         &= 4 \;.
     \end{align}
     where in the second equality we used the anti-commutation rules for the first time and took all the constants outside of the absolute value. In the last inequality we used the same approach to simplify the second commutator and $\sigma_y^2 = \1$ to further simplify the result.  Finally, we use that $\sigma_z\ket{0}=\ket{0}$.

For the effective frequencies in Eq.~\eqref{eq:eff_global}  and Eq.~\eqref{eq:eff_tilde_global}, we also obtain a similar scaling. We start by computing Eq.~\eqref{eq:eff_global}
\begin{align}
    \left[\omega^{(\rm eff)}_{(l,k_i)}(\vec{0})\right]^2 &= \norm{[\sigma_{k_i}^{(i)},[\sigma_{k_i}^{(i)},\sigma_z^{\otimes n}]]} = 
    \begin{cases}
        4 \, {\rm if} \, k=y\\
        0 \, {\rm if} \, k=z\\
    \end{cases}
\end{align}
where we used the anti-commutation rules of the Pauli matrices, similar to Eq.~\eqref{eq:computation_c0y_hea}.

Similarly, we compute $\widetilde{\omega}^{(\rm eff)}_{(l,k_i),(l',k'_i)}(\vec{0})$. We can start by noting that $k_i = y_i$ as otherwise the commutator will be zero. 
\begin{align}
    \left[\widetilde{\omega}^{(\rm eff)}_{(l,y_i),(l',k'_j)}(\vec{0})\right]^2 &= \| [\sigma^{(j)}_{k'_j},[\sigma^{(j)}_{k'_j},V_{l'}^\dagger \dots V^\dagger_{l+1}[\sigma^{(i)}_{y_i},[\sigma^{(i)}_{y_i}, \sigma_z^{\otimes n}]]V_{l+1} \dots V_{l'}]] \|_{\infty}\\
    &= 4 \| [\sigma^{(j)}_{k'_j},[\sigma^{(j)}_{k'_j}, \sigma_z^{\otimes n}]] \|_{\infty}
\end{align}
where in the first equality we imposed the condition on $k_i$ such that the commutator is not zero. In the second equality we used the anti-commutation rules to simplify one commutator. We can finally simplify this further by noting that, again, we need $k'_j=y_j$ in order for the commutator not to vanish and thus we can obtain
\begin{align}
    \left[\widetilde{\omega}^{(\rm eff)}_{(l,y_i),(l',y_j)}(\vec{0})\right]^2 =4 \| [\sigma^{(j)}_{y'_j},[\sigma^{(j)}_{k'_j}, \sigma_z^{\otimes n}]] \|_{\infty} = 16
\end{align}

All in all, can be summarized as follows
\begin{align}
    \omega^{(\rm eff)}_{(l,k_i)} &= \begin{cases}
        0 &\; \text{if } k=z\;,\\
        2  &\; \text{if } k=y\;.
    \end{cases}\label{eq:eff_freqs_values_global1} \\
    \widetilde{\omega}^{(\rm eff)}_{(l,k_i),(l',k'_j)} 
    &= \begin{cases}\label{eq:eff_freqs_values_global2}
        4 &\;   \text{if } k=k'= y\;,\\
        0  &\; \text{otherwise} \;.
    \end{cases}
\end{align}
Consequently, we apply Theorem \ref{th:var_formal} with the subset of indices $\Lambda = \{(l,y_i)\}_{l=1,i=1}^{L,n}$ as detailed in the third step of the general recipe.

Now, we can re-invoke the expression of $r_{\rm patch}^{\rm G}$ from Eq.~\eqref{eq:rpatch_th3_recipe} with the adequate parameter indices.
\begin{align}\label{eq:global_patch_HEA}
     (r_{\rm patch}^{\rm G})^2 = \min_{\substack{l\in \{1,\dots,L\}\\i\in \{1,\dots,n\}}} \frac{9c_{l,y_i}(\vec{0})^2}{8 c_{l,y_i}(\vec{0}) \mathcal{A}_{(l,y_i)}(\vec{0}) + 24\beta_{(l,y_i)}}
\end{align}

We have concluded in Eq~\eqref{eq:computation_c0y_hea} that $c_{l,y_i}(\vec{0})$ is independent of the indices. We then recover Eqs.~(\ref{eq:recipe_A},\ref{eq:beta_recipe}) to see which one will minimize Eq.~\eqref{eq:global_patch_HEA}. We start with the term $\mathcal{A}_{(l,y_i)}(\vec{0})$ defined in Eq.~\eqref{eq:recipe_A} with the appropriate parameter indices of the form $(l,y_i)$ as follows. To compute this quantity we need to recall the values of the different frequencies in Eqs.~(\ref{eq:max_HEA},\ref{eq:eff_freqs_values_global1},\ref{eq:eff_freqs_values_global2})
\begin{align}\label{eq:denom_HEA}
    \AC_{(l,y_i)}(\vec{0}) &= 4 \left[\omega_{(l,y_i)}^{\rm (max)}\right]^2 \left(\sum_{\nu=1}^{l-1} \sum_{j=1}^{n} \left[\omega_{(\nu,y_j)}^{\rm (eff) }(\vec{0})\right]^2 +  \sum_{j=1}^{i-1} \left[\omega_{(l,y_j)}^{\rm (eff) }(\vec{0})\right]^2 \right) +  \sum_{j=i+1}^{n} \left[ \widetilde{\omega}^{(\rm eff)}_{(l,y_i),(l,y_j)}(\vec{0}) \right]^2 +  \sum_{\nu=l+1}^L \sum_{j=1}^{n} \left[ \widetilde{\omega}^{(\rm eff)}_{(l,y_i),(\nu,y_j)}(\vec{0}) \right]^2\\
    &= 16\left[n(L+3l-3) +3i -4\right]\;, 
\end{align}
where we only keep the contribution of the non zero effective frequencies in the first equality and use the exact values of the effective and maximal frequencies in the final equality. We see that this is maximized for $i=n, l=L$, which gives
\begin{equation}\label{eq:denom_HEA_final}
     \AC_{(L,y_n)}(\vec{0}) = 64\left[nL -1\right] \, .
\end{equation}

Similarly to in the case of the Tensor product ansatz, specifically in Eq.~\eqref{eq:beta_toy}, we see that the term in $\beta_{(l,y_i)}$ that will be largest is when $l>1, i>1$. We use Eq.~\eqref{eq:max_HEA} i.e. the following
\begin{align}\label{eq:beta_HEA}
    \beta_{(l,y_i)}  =\frac{32\left[  \omega^{(\rm max)}_{(l,y_i)}\right]^{6}  \norm{\sigma_z^{\otimes n}}^2}{3} = \frac{2^{11}}{3}
\end{align}

Consequently, we can take $l = L, i = n$ as the indices that minimize Eq.~\eqref{eq:global_patch_HEA}. Then by substituting Eq.~\eqref{eq:denom_HEA_final} and Eq.~\eqref{eq:beta_HEA} in Eq.~\eqref{eq:global_patch_HEA}, we obtain
\begin{align}
     (r_{\rm patch}^{\rm G})^2 &=  \frac{9}{128[nL+7]} \in \Theta\left(\frac{1}{M}\right)
\end{align}
where we recall that $M =  2 L n$

From Eq.~\eqref{eq:Var_recipe_th3}, we can also get the associated variance lower bound (for $\Lambda=\{(l,y_i)\}_{l=1,i=1}^{L,n}$),
\begin{align}
    \Var_{\thv \sim \uni(\vec{0},r_{\rm patch}^{\rm G})}[\mathcal{L}(\thv)] &\geq \frac{1}{72} \sum_{l=1}^L \sum_{i=1}^n c^2_{l,y_i}(\vec{0}) (r_{\rm patch}^{\rm G})^4\\
    &\in \Omega\left(\frac{1}{M}\right)\;.
\end{align}

\paragraph*{\underline{Characterization of the region with gradients $r_{\rm patch}^{L}$ centered around zero for the loss function $\mathcal{L}_L(\thv)$.}}
Here we consider the local observable $O_L:= \sigma_z \otimes  \sigma_z \otimes \1^{\otimes(n-2)}$. Similarly, to the setting with the global observable $O_G$, we have that any unitary $V_i:= CZ(i,i+1)$ acts trivially on the observable $O_L$.
Hence, the expressions of the second derivatives and the effective frequencies in Eq.~\eqref{eq:curv_global}, Eq.~\eqref{eq:eff_global} and Eq.~\eqref{eq:eff_tilde_global} carry on to the local observable case. Precisely, we have
 \begin{align}
         c_{l,k_i}(\vec{0})
         &= \left| \Tr\left[|0\rangle\langle0|^{\otimes n}  [\sigma_{k_i}^{(i)},[\sigma_{k_i}^{(i)}, \sigma_z \otimes  \sigma_z \otimes \1^{\otimes(n-2)}]] \right] \right| \label{eq:c_hea_local}\\
         \omega^{(\rm eff)}_{(l,k_i)}(\vec{0}) &= \sqrt{\norm{[\sigma_{k_i}^{(i)},[\sigma_{k_i}^{(i)},\sigma_z \otimes  \sigma_z \otimes \1^{\otimes(n-2)}]]}} \label{eq:w_hea_local}\\
         \widetilde{\omega}^{(\rm eff)}_{(l,k_i),(l',k'_i)}(\vec{0}) &=\sqrt{\norm{[\sigma^{(j)}_{k'_j},[\sigma^{(j)}_{k'_j},V_{l'}^\dagger \dots V^\dagger_{l+1}[\sigma^{(i)}_{k_i},[\sigma^{(i)}_{k_i}, \sigma_z \otimes  \sigma_z \otimes \1^{\otimes(n-2)}]]V_{l+1} \dots V_{l'}]]}}\label{eq:tildew_hea_local}
     \end{align}

Here, we can see that the only non zero second derivatives and effective frequencies are the ones corresponding to $k=y$ as in the case of the global observable (recall Eqs~(\ref{eq:computation_c0y_hea},\ref{eq:eff_freqs_values_global1},\ref{eq:eff_freqs_values_global2})) and $i \in \{1,2\}$ and for these parameter indices, we get $\forall 1 \leq l \leq L$ and  $i,j \in \{1,2\}$, as otherwise the commutators will be between the generators and identity (and thus automatically zero)
\begin{align}
         c_{l,y_i}(\vec{0})
         &= 4 \\
         \omega^{(\rm eff)}_{(l,y_i)}(\vec{0}) &= 2 \\
         \widetilde{\omega}^{(\rm eff)}_{(l,y_i),(l',y_j)}(\vec{0}) &=4 \;.
\end{align}
indeed, even in the case of $\widetilde{\omega}^{(\rm eff)}_{(l,y_i),(l',y_j)}$, both $i,j$ need to be either 2 or 1.

Consequently, we apply Theorem \ref{th:var_formal} with the subset of indices $\Lambda=\{(l,y_i)\}_{l=1,i\in\{1,2\}}^L$ and obtain from Eq.~\eqref{eq:rpatch_th3_recipe} that $r_{\rm patch}^{\rm L}$ is defined as
\begin{align}\label{eq:r_local_hva}
    (r_{\rm patch}^{\rm L})^2 = \min_{\substack{l \in \{1,\dots,L\}\\ i \in \{1,2\}}} \frac{9c_{l,y_i}(\vec{0})^2}{8 c_{l,y_i}(\vec{0}) \mathcal{A}_{(l,y_i)}(\vec{0}) + 24\beta_{(l,y_i)}}
\end{align}

Similarly to what we did before, we analyze which coefficients will minimize the previous equation. We start by computing $\mathcal{A}_{(l,y_i)}(\vec{0})$ defined in Eq.~\eqref{eq:recipe_A} by only keeping the contribution of the non zero frequencies $\forall i,j\in \{1,2\}$ and $l \in \{1,\dots,L\}$ (defined in Eqs.(\ref{eq:c_hea_local}\ref{eq:w_hea_local},\ref{eq:tildew_hea_local})) , i.e.
\begin{align}\label{eq:a_local_HEA}
    \AC_{(l,y_i)}(\vec{0}) &= 4 \left[\omega_{(l,y_i)}^{\rm (max)}\right]^2 \left(\sum_{\nu=1}^{l-1} \sum_{j =1}^{2} \left[\omega_{(\nu,y_j)}^{\rm (eff) }(\vec{0})\right]^2 +  \sum_{j=1}^{i-1} \left[\omega_{(l,y_j)}^{\rm (eff) }(\vec{0})\right]^2 \right) +  \sum_{j=i+1}^{2} \left[ \widetilde{\omega}^{(\rm eff)}_{(l,y_i),(l,y_j)}(\vec{0}) \right]^2 +  \sum_{\nu=l+1}^L \sum_{j=1}^{2} \left[ \widetilde{\omega}^{(\rm eff)}_{(l,y_i),(\nu,y_j)}(\vec{0}) \right]^2\\
    &= 16(2L + 6l-10 +3i)\;, 
\end{align}
where in the first equality we imposed the condition $\forall i,j\in \{1,2\}$. This is maximized when $l = L$ and $i =2 $, where we find
\begin{align}
     \AC_{(L,y_2)}(\vec{0}) = 64(2L - 1)
\end{align}

Here,  we consider the following definition of $\beta_{(l,y_i)}$ $,\forall i\in \{1,2\}$ and $l \in \{1,\dots,L\}$, as we did for the global case in Eq.~\eqref{eq:beta_HEA}
\begin{align}\label{eq:beta_local_HEA}
    \beta_{(l,y_i)}  =\frac{32\left[  \omega^{(\rm max)}_{(l,y_i)}\right]^{6}  \norm{ \sigma_z \otimes  \sigma_z \otimes \1^{\otimes(n-2)}}^2}{3} = \frac{2^{11}}{3}
\end{align} 

Hence, we finally obtain the following scaling of $r_{\rm patch}^{\rm L}$, by recovering Eqs.~(\ref{eq:beta_local_HEA},\ref{eq:a_local_HEA}) and pluging them in Eq.~\eqref{eq:r_local_hva} i.e.
\begin{align}
    (r_{\rm patch}^{\rm L})^2 =  \frac{9}{128(2L+7)} \in \Theta\left(\frac{1}{L}\right) =\Theta\left(\frac{1}{\sqrt{M}}\right)
\end{align}
and the associated variance lower bound defined in Eq.~\eqref{eq:Var_recipe_th3} is given by
\begin{align}
    \Var_{\thv \sim \uni(\vec{0},r_{\rm patch}^{\rm L})}[\mathcal{L}(\thv)] &\geq \frac{1}{72} \left(\sum_{l=1}^L \sum_{i=1}^2 c^2_{l,y_i}(\vec{0})\right) (r_{\rm patch}^{\rm G})^4\\
    &\in \Omega\left(\frac{1}{L}\right) = \Omega\left(\frac{1}{\sqrt{M}}\right)\;.
\end{align}
where we recall that $M = 2nL$, and that $n \in \Theta(L)$.

\subsubsection{Hamiltonian Variational Ansatz: Proof of analytical results}\label{subsubsec:HVA_proof}

\paragraph*{\underline{Generic Relaxed HVA for a geometrically local Hamiltonian:}}\label{para:generic_relax_hva}

Here we consider the generic relaxed HVA loss function $\mathcal{L}(\thv_{\rm Relaxed})$ for a geometrically local Hamiltonian described in Eq.~\eqref{eq:hva_res_1} and apply the general recipe detailed in the introduction of Appendix \ref{appendix:architectures} to characterize the size of the patch centered around zero with guaranteed substantial loss variance, which we denote by $r_{\rm patch}^{\rm Relaxed}$, as well as the associated variance lower bound. 
Since all the $\nparams_{\rm Relaxed}= KL$ parameters in the Relaxed circuit $U(\thv_{\rm Relaxed})$ are independent (See Eq.~\eqref{eq:HVA_relaxed}), we henceforth apply Theorem~\ref{th:var_formal} to derive the final results given in Eq.~\eqref{eq:hva_res_gen_patch} and Eq.~\eqref{eq:hva_res_gen_var}.

For the second and third steps of the general recipe, we can already determine the scaling of the loss's second derivative and identify the subset of parameters contributing to the variance lower bound, based on the assumption in Eq. \eqref{eq:hva_cond_curv}. Specifically, we choose to focus on the parameter $\th_{(1,1)}$ and we will assume that this parameter fulfills the aforementioned condition in Eq.~\eqref{eq:hva_cond_curv}. With this, we now shift our focus to compute the relevant frequencies associated with the parameter \( \th_{(1,1)} \), as outlined in the fourth step of the general recipe. In particular, we will fisrt demonstrate that the effective frequencies of the form  \( \widetilde{\omega}^{(\rm eff)}_{(1,1),(l,k)} \) for all 
$k>1,\,l>1$ ,as defined in  Eq. \eqref{eq:eff_freqs_tilde_E}, can be upper-bounded by  \( N_O \), the number of Pauli terms in the observable decomposition in Eq.~\eqref{eq:local_obs_hva}, up to a multiplicative constant, i.e.
\begin{align}\label{eq:eff_freqs_hva_bound}
    \widetilde{\omega}^{(\rm eff)}_{(1,1),(l,k)} \leq c N_O \;, \forall (l,k) > (1,1) \;,
\end{align}
 where $c$ is a constant that depends on characteristics of the Hamiltonian terms $H_k$ and $H_1$.

Let us now proceed to prove it and identify the constant $c$ in Eq.~\eqref{eq:eff_freqs_hva_bound}.

First, since we are studying the region with gradients around zero,
the effective frequencies simplifies to
\begin{align}
    \left[\widetilde{\omega}^{\rm (eff)}_{(1,1),(l,k)}(\vec{0})\right]^2 &= \norm{ \left.\frac{\partial^4 [U^\dagger(\thv_{\rm Relaxed})OU(\thv_{\rm Relaxed})]}{\partial\th_{(l,k)}^2 \partial\th_{(1,1)}^2}\right|_{\thv_{\rm Relaxed} = \vec{0}} }\\
    &= \norm{ \left.\frac{\partial^4 [e^{i \th_{(l,k)} H_k} e^{i \th_{(1,1)} H_1}Oe^{-i \th_{(1,1)} H_1} e^{-i \th_{(l,k)} H_k}]}{\partial\th_{(l,k)}^2 \partial\th_{(1,1)}^2}\right|_{\th_{(1,1)},\th_{(l,k)}=0} }\\
    &= \norm{ [H_k,[H_k,[H_1,[H_1,O]]]]}  \label{eq:eff_freq_simple_hva}
\end{align}
where we invoked Eq.~\eqref{eq:p-order-grad-A} from Lemma \ref{lemma:bounded_unitary} twice in the final equality, i.e. first for the second derivative with respect to $\th_{(1,1)}$ and second for the second derivative with respect to $\th_{(l,k)}$.
Hence, our task boils down to bounding the infinity norm of nested commutators.

To move forward, let us introduce the set $\mathcal{C}_k(P)$ 
corresponding to the Hamiltonian term $H_k$ and enumerating its Pauli components defined in Eq.~\eqref{eq:loc_1} which anti commute with a given local Pauli string $P$, i.e.
\begin{equation}
    \mathcal{C}_k(P) = \{1 \leq j \leq N_k \;, \{h_j^{(k)},P\}=0\}\;.
\end{equation}
Clearly, the size of this set can be upper bounded by a constant $s_k$  for any local Pauli string acting non trivially on neighboring qubits  due to the locality of the $h_j^{(k)}$ components, i.e. $|\mathcal{C}_k(P)| \leq s_k$.

Back to the effective frequencies in Eq.~\eqref{eq:eff_freq_simple_hva}, we invoke twice Eq.~\eqref{eq:comm_local_lemma} from Lemma \ref{lemma:bound_nested_com_local} with $p=2$  first for $H_1$ then $H_k$ to further simplify it. Specifically, we can first reduce the nested commutator $[H_1,[H_1,O]]$ using Eq.~\eqref{eq:comm_local_lemma} as follows:
\begin{align}
    [H_1,[H_1,O]] &= \sum_{i=1}^{N_O} [H_1,[H_1,P_i]]\\
    &= \sum_{i=1}^{N_O} [\widetilde{H}_1(P),[\widetilde{H}_1(P),P_i]] \label{eq:htilde_com1}
\end{align}
where we substitute the observable $O$ with its Pauli decomposition given in Eq.~\eqref{eq:local_obs_hva} in the first equality and applied Eq.~\eqref{eq:comm_local_lemma} in the second equality by introducing $\widetilde{H}_1(P) = \sum_{j \in \mathcal{C}_k(P_i)} h_j^{(1)}$.
Now, we can further apply the double commutator with the Hamiltonian term $H_k$ on top of Eq.~\eqref{eq:htilde_com1} and obtain
\begin{align}
    [H_k,[H_k,[H_1,[H_1,O]]]] &= \sum_{i=1}^{N_O} 
    [H_k,[H_k,[\widetilde{H}_1(P),[\widetilde{H}_1(P),P_i]]]]\\
    &= \sum_{i=1}^{N_O} \sum_{j,j' \in \mathcal{C}_1(P_i)} 
    [H_k,[H_k,[h_{j'}^{(1)},[h_{j}^{(1)},P_i]]]]\\
     &= \sum_{i=1}^{N_O} \sum_{j,j' \in \mathcal{C}_1(P_i)}\left[H_k,\left[H_k, P_i h_j^{(1)}h_{j'}^{(1)}+ h_j^{(1)}h_{j'}^{(1)} P_i -  h_j^{(1)}P_ih_{j'}^{(1)}- h_{j'}^{(1)}P_ih_{j}^{(1)}\right]\right] \label{eq:double_com_h1_reduce}
\end{align}

By applying the infinity norm in Eq.~\eqref{eq:double_com_h1_reduce}, we retrieve the expression of the effective frequencies from Eq.~\eqref{eq:eff_freq_simple_hva} and we can further upper bound it by re-invoking Eq.~\eqref{eq:comm_reduce_norm_lemma} from Lemma \ref{lemma:bound_nested_com_local}. Precisely, we have
\begin{align}
    \left[\widetilde{\omega}^{\rm (eff)}_{(1,1),(l,k)}(\vec{0})\right]^2 =& \norm{[H_k,[H_k,[H_1,[H_1,O]]]]}\\
    =&  \norm{\sum_{i=1}^{N_O} \sum_{j,j' \in \mathcal{C}_1(P_i)}\left[H_k,\left[H_k, P_i h_j^{(1)}h_{j'}^{(1)}+ h_j^{(1)}h_{j'}^{(1)} P_i -  h_j^{(1)}P_ih_{j'}^{(1)}- h_{j'}^{(1)}P_ih_{j}^{(1)}\right]\right]} \\
     \leq &\sum_{i=1}^{N_O}   \sum_{j,j' \in \mathcal{C}_1(P_i)} \bigg( 2\norm{[H_k,[H_k,  P_i h_j^{(1)}h_{j'}^{(1)}]]} + \norm{[H_k,[H_k,  h_j^{(1)}h_{j'}^{(1)}P_i]]} \nonumber\\
     &+ \norm{[H_k,[H_k, h_j^{(1)}P_ih_{j'}^{(1)}]]} + \norm{[H_k,[H_k, h_{j'}^{(1)}P_ih_{j}^{(1)}]]} \bigg) \label{eq:com_aaaa}
\end{align}
where we used the triangular inequality to get Eq.~\eqref{eq:com_aaaa}.

Here, the Pauli strings $P_ih_j^{(1)}h_{j'}^{(1)}, \, h_j^{(1)}h_{j'}^{(1)}P_i,\, h_{j'}^{(1)}P_ih_{j}^{(1)}$ and $h_j^{(1)}P_ih_{j'}^{(1)}$ in Eq.~\eqref{eq:com_aaaa} are at most acting non trivially on $\kappa_O + 2 \kappa$  qubits where we recall that $P_i$ is $\kappa_O$-local and $h_j^{(1)}$ is $\kappa$-local. Hence, we invoke  Lemma~\ref{lemma:bound_nested_com_local}, precisely the result in Eq.~\eqref{eq:lemma_bound_com_local} for $p=2$ to upper bound the infinity norm terms in Eq.~\eqref{eq:com_aaaa} as follows
\begin{align}
    \norm{[H_k,[H_k,  P_ih_j^{(1)}h_{j'}^{(1)}]]} & \leq 4 s_k^2( P_ih_j^{(1)}h_{j'}^{(1)})\\
    \norm{[H_k,[H_k,  h_j^{(1)}h_{j'}^{(1)}P_i]]} & \leq 4 s_k^2( h_j^{(1)}h_{j'}^{(1)}P_i)\\
    \norm{[H_k,[H_k, h_j^{(1)}P_ih_{j'}^{(1)}]]} &\leq 4 s_k^2( h_j^{(1)}P_ih_{j'}^{(1)})\\
    \norm{[H_k,[H_k, h_{j'}^{(1)}P_ih_{j}^{(1)}]]} &\leq 4 s_k^2( h_{j'}^{(1)}P_ih_{j}^{(1)})
\end{align}
where we recall that $s_k(X)$, introduced in Lemma~\ref{lemma:bound_nested_com_local} is the number of Pauli terms in $H_k$ that do Not commute with $X$. Furthermore, $s_k(P_i h_j^{(1)}h_{j'}^{(1)}),\,s_k( h_j^{(1)}h_{j'}^{(1)}P_i)$, $s_k( h_j^{(1)}P_ih_{j'}^{(1)})$ and $s_k( h_{j'}^{(1)}P_ih_{j}^{(1)})$ are all constant in the system size according to Lemma \ref{lemma:bound_nested_com_local}. Hence, we can simply upper bound the terms above simultaneously by maximizing over $j,j'  \in \mathcal{C}_1(P_i)$ and $P_i \;, \forall 1 \leq i \leq N_O$ such that $s_k$ defined as,
\begin{align}
    s_k = \max_{ 1 \leq i \leq N_O} \max_{j,j'  \in \mathcal{C}_1(P_i)} \max\left[s_k(P_i h_j^{(1)}h_{j'}^{(1)}),s_k( h_j^{(1)}h_{j'}^{(1)}P_i),s_k( h_j^{(1)}P_ih_{j'}^{(1)}),s_k( h_{j'}^{(1)}P_ih_{j}^{(1)})\right]\;,
\end{align}
is still a constant in the system size.

Therefore, the upper bound on the effective frequencies in Eq.~\eqref{eq:com_aaaa} becomes
\begin{align}
    \left[\widetilde{\omega}^{\rm (eff)}_{(1,1),(l,k)}(\vec{0})\right]^2 &\leq  \sum_{i=1}^{N_O}   \sum_{j,j' \in \mathcal{C}_1(P_i)} 16 s_k^2\\
    &\leq 16 s_k^2 \sum_{i=1}^{N_O} s_1^2(P_i) \\
    &\leq 16 s_k^2 s_1(O)^2 N_O \label{eq:final_eff_freq_hva_bound}
\end{align}
where we introduced $s_1(O) = \max_{ 1 \leq i \leq N_O} s_1(P_i)$ which is constant in the system size since all $s_1(P_i)$ are constants.

Hence, we proved the result in Eq.~\eqref{eq:eff_freqs_hva_bound} with the constant $c$ defined as $c= 16 s_k^2 s_1^2(O)$.

Now, that we have a final upper bound on the effective frequencies associated with the parameter $\th_{(1,1)}$  in Eq.~\eqref{eq:final_eff_freq_hva_bound}, we can invoke the patch size expression $r_{\rm patch}^{\rm Relaxed}$ with guaranteed gradients around zero from Eq.~\eqref{eq:rpatch_th3_recipe} (with $\Lambda=\{(1,1)\}$),

 \begin{align}\label{eq:patch_hva}
     (r_{\rm patch}^{\rm Relaxed})^2 =  \frac{9c^2_{1,1}(\vec{0})}{8 c_{1,1}(\vec{0}) \mathcal{A}_{(1,1)}(\vec{0}) + 24 \beta_{(1,1)}(\vec{0})}
 \end{align}
where $\mathcal{A}_{(1,1)}(\vec{0})$ is defined in Eq.~\eqref{eq:recipe_A}  as
\begin{align}
     \mathcal{A}_{(1,1)}(\vec{0}) &= \sum_{k=2}^{K}  (\widetilde{\omega}^{\rm (eff)}_{(1,1),(1,k)}(\vec{0}))^2+ \sum_{l=2}^L \sum_{k=1}^{K}  (\widetilde{\omega}^{\rm (eff)}_{(1,1),(l,k)}(\vec{0}))^2 \;.
\end{align}

Using the upper bound on the effective frequencies in Eq.~\eqref{eq:final_eff_freq_hva_bound}, we can obtain the following upper bound on $\mathcal{A}_{(1,1)}(\vec{0})$. Indeed, because we are using the parameter index $(1,1)$, the first sum in Eq.~\eqref{eq:recipe_A} is zero, and thus we are left with the following
\begin{align}\label{eq:final_A_local}
    \mathcal{A}_{(1,1)}(\vec{0}) &= \sum_{k=2}^{K} (\widetilde{\omega}^{\rm (eff)}_{(1,1),(1,k)}(\vec{0}))^2+ \sum_{l=2}^L \sum_{k=1}^{K} (\widetilde{\omega}^{\rm (eff)}_{(1,1),(l,k)}(\vec{0}))^2 \\
    &\leq 16 s_1^2(O) N_O\left(-s_1^2+L\sum_{k=1}^Ks_k^2\right)\in \mathcal{O}(M \cdot N_O)
\end{align}
where we recall that $s_1(O)$ and $s_k$ are constants and  $M = KL$ is the number of distinct parameters in the parametrized circuit.

Moreover, the expression of $\beta_{(1,1)}(\vec{0})$ defined in Eq.~\eqref{eq:beta_recipe} can be made tighter (smaller) under the generators and observable locality by  slightly modifying  the first step in the proof of Theorem~\ref{th:var_formal}. Specifically, instead of using Corollary~\ref{cor:var_LB_1param_1layer} to obtain Eq.~\eqref{eq:varl_LB}, we rather use Corollary~\ref{cor:var_LB_1param_1layer_local} to get 
 \begin{equation}\label{eq:final_beta_local}
     \beta_{(1,1)}(\vec{0}) = \frac{2^5 N_O^2 s_1(O)^6 }{3}\in\Theta(N_O^2)\;.
 \end{equation}
 Note that this is possible due to the locality assumptions on both the observable and the generators of the circuit. This is indeed a very strong assumption in general, but that fits perfectly into this case. That is why we present this bound separately.

Note, that similarly to what we did in the proof of Corollary~\ref{cor:var_minimum}, particularly in Eq.~\eqref{eq:lowerbound_on_r_cor1}, we can still obtain the results for a lower-bound on $r_{\rm patch}$. Hence, by combining the new expression of $\beta_{(1,1)}(\vec{0})$ in Eq.~\eqref{eq:final_beta_local}  and the upper bound on $\mathcal{A}_{(1,1)}(\vec{0})$ in Eq.~\eqref{eq:final_A_local},  the final scaling of $r_{\rm patch}^{\rm Relaxed}$ is 
 \begin{align}
     (r_{\rm patch}^{\rm Relaxed})^2 \in \Theta\left( \frac{c^2_{1,1}(\vec{0})}{ c_{1,1}(\vec{0}) M \cdot N_O + N_O^2} \right)
 \end{align}

Finally, by plugging in the assumption in Eq.~\eqref{eq:hva_cond_curv} on the loss second derivative $c_{1,1}(\vec{0})$, we get
 \begin{align}\label{eq:patch_hva_relaxed_final}
     (r_{\rm patch}^{\rm Relaxed})^2 \in \Theta\left( \frac{1}{ M} \right)
 \end{align}
 
 Moreover, according to Eq.~\eqref{eq:Var_recipe_th3}, the corresponding variance lower bound  scales as 
 \begin{equation}\label{eq:var_hva_relaxed_final}
     \Var_{\thv \sim \uni(\vec{0},r_{\rm patch}^{\rm Relaxed})}[\mathcal{L}(\thv)] \in \Omega\left(c^2_{1,1}(\vec{0}) (r_{\rm patch}^{\rm Relaxed})^4\right) = \Omega\left(\frac{N_O^2}{M^2}\right)
 \end{equation}

 \medskip

\paragraph*{\underline{Bounds for Relaxed HVA with the Heisenberg model :}}

The region centered around zero, where the loss variance for the loss function \(\widetilde{\mathcal{L}}(\thv_{\rm relaxed})\) associated with the relaxed version of the HVA for the Heisenberg Hamiltonian \(\widetilde{H}\) (defined in Eq. \eqref{eq:loss_relax_heisenberg}) is at most polynomially vanishing, can be characterized directly from the generic Relaxed HVA guarantees for a local Hamiltonian developed in the previous section \ref{para:generic_relax_hva}.

Hence, it suffices to verify that the Heisenberg Hamiltonian $\widetilde{H}$ in Eq.~\eqref{eq:heisenberg} fulfills the locality assumptions considered in section \ref{para:gen_hva_setting} for a generic HVA loss with a local Hamiltonian  and that the loss function  \(\widetilde{\mathcal{L}}(\thv_{\rm relaxed})\) second derivatives w.r.t $\th_{(1,1)}$ obeys the condition in Eq.~\eqref{eq:hva_cond_curv}.

First, as outlined in Section \ref{subsubsec:circuit_HVA},  the Heisenberg Hamiltonian $\widetilde{H}= \widetilde{H}_1 + \widetilde{H}_2 + \widetilde{H}_3 $ where the terms $\widetilde{H}_k \;, k \in \{1,2,3\}$ defined in Eq.~\eqref{eq:heisenberg_terms} are each 2-geometrically local. Moreover, we have that $N_{\widetilde{H}}$, i.e. the number of Pauli terms in $\widetilde{H}$ is $N_{\widetilde{H}} = 3n$ where $n$ is the number of qubits.

Now, we show that the loss function second derivative w.r.t $\th_{(1,1)}$ evaluated at zero, denoted by $c_{(1,1)}(\vec{0})$  scales as $N_{\widetilde{H}} = 3n$, satisfying the condition  in Eq.~\eqref{eq:hva_cond_curv}.

First, the expression of $c_{(1,1)}(\vec{0})$ can be written as 
\begin{align}
    c_{1,1}(\vec{0}) &= \left| \Tr\left[|\psi\rangle\langle\psi| \left.\frac{\partial^2 [\widetilde{U}^\dagger(\thv_{\rm Relaxed}) \widetilde{H}\widetilde{U}(\thv_{\rm Relaxed})]}{\partial\th_{(1,1)}^2}\right|_{\thv_{\rm Relaxed}= \vec{0}}\right] \right|\\
    &= \left| \Tr\left[|\psi\rangle\langle\psi| \left.\frac{\partial^2 [e^{i \th_{(1,1)} \widetilde{H}_1} \widetilde{H} e^{-i \th_{(1,1)} \widetilde{H}_1}]}{\partial\th_{(1,1)}^2}\right|_{\th_{(1,1)}= 0}\right] \right|\\
    &= \left| \Tr\left[|\psi\rangle\langle\psi| [\widetilde{H}_1,[\widetilde{H}_1,\widetilde{H}]]\right] \right| \label{eq:com_heisenberg}
\end{align}
where we invoked Eq.~\eqref{eq:p-order-grad-A} from Lemma \ref{lemma:bounded_unitary} for $p=2$ in the last equality.

To further develop the expression of $c_{(1,1)}(\vec{0})$ in Eq.~\eqref{eq:com_heisenberg}, we focus in a first step on computing the nested commutator $[\widetilde{H}_1,[\widetilde{H}_1,\widetilde{H}]]$ where we recall for completeness the definition of the Hamiltonian terms  $\widetilde{H}_k \;, k \in \{1,2,3\}$ from Eq.~\eqref{eq:heisenberg_terms}.
\begin{align}
    & \widetilde{H} = \widetilde{H}_1 + \widetilde{H}_2 + \widetilde{H}_3 \;, \\
     &\widetilde{H}_3 = \sum_{i=1}^n \sigma_z^{(i)}\otimes\sigma_z^{(i+1)}, \quad
    \widetilde{H}_2 = \sum_{i=1}^n \sigma_y^{(i)}\otimes\sigma_y^{(i+1)}, \quad
    \widetilde{H}_1 = \sum_{i=1}^n \sigma_x^{(i)}\otimes\sigma_x^{(i+1)}\;.
\end{align}

Hence, we start by computing the nested commutator $[\widetilde{H}_1,\widetilde{H}_k]$. Trivially $[\widetilde{H}_1,\widetilde{H}_1] = 0$, and thus we only need to focus on the other two. We will do this computation as follows: we assign $k$ to be a variable that can take values either $k = \{2,3\}$ and to ease the notation we will use $\sigma_2 = \sigma_y,\,\sigma_3 = \sigma_z$. With this we can compute the commutators in a very compact form
\begin{align}\label{eq:idontlikethis_commutators}
    [\widetilde{H}_1,\widetilde{H}_k] &= \sum_{i=1}^n \sum_{j=1}^n [\sigma_x^{(i)}\otimes\sigma_x^{(i+1)},\sigma_k^{(j)}\otimes\sigma_k^{(j+1)}] \\
    &= \sum_{j=1}^{n} \left(\sigma_x^{(j-1)}\otimes[\sigma_x^{(j)},\sigma_k^{(j)}]\otimes\sigma_k^{(j+1)} + \sigma_k^{(j-1)}\otimes[\sigma_k^{(j)},\sigma_x^{(j)}]\otimes\sigma_x^{(j+1)}\right)\\
    &= 2i\epsilon_{x,k,l}\sum_{j=1}^{n} \left(\sigma_x^{(j-1)}\otimes\sigma^{(j)}_l\otimes\sigma_k^{(j+1)} - \sigma_k^{(j-1)}\otimes\sigma^{(j)}_l\otimes\sigma_x^{(j+1)}\right)\label{eq:commutatin_paulis_forH}
\end{align}
where in the first equality we used that if $[\sigma_l^{(i)}\otimes\sigma_l^{(i+1)}, \sigma_k^{(i)}\otimes\sigma_k^{(i+1)}] = 0\, \forall \, l,k$ and that if the Pauli matrices act of different qubits the commutator is also zero. Furthermore, we are abusing notation and defining $\sigma^{(0)} = \sigma^{(n)}, \sigma^{(n+1)}=\sigma^{(1)}$. In third equality we used the well known equality for the commutator of Pauli matrices $[\sigma_a,\sigma_b] = 2i\epsilon_{a,b,c}\sigma_c$, where $\epsilon_{a,b,c}$ is the Levi-Civita symbol. Therefore, we can use this to find
\begin{align}
     [\widetilde{H}_1,\widetilde{H}_2] = &2i\sum_{j=1}^{n} \left(\sigma_x^{(j-1)}\otimes\sigma^{(j)}_z\otimes\sigma_y^{(j+1)} - \sigma_y^{(j-1)}\otimes\sigma^{(j)}_z\otimes\sigma_x^{(j+1)}\right)\\
      [\widetilde{H}_1,\widetilde{H}_3] = & 2i\sum_{j=1}^{n} \left(\sigma_z^{(j-1)}\otimes\sigma^{(j)}_y\otimes\sigma_x^{(j+1)} - \sigma_x^{(j-1)}\otimes\sigma^{(j)}_y\otimes\sigma_z^{(j+1)}\right)
\end{align}

If we recover Eq.~\eqref{eq:idontlikethis_commutators}, we can now compute $ [\widetilde{H}_1,[\widetilde{H}_1,\widetilde{H}_k]]$. We proceed as in the previous case
\begin{align}
    [\widetilde{H}_1,[\widetilde{H}_1,\widetilde{H}_k]] = & 2i\epsilon_{z,k,l}\sum_{i=1}^n \sum_{j=1}^{n}[ \sigma_x^{(i)}\otimes\sigma_x^{(i+1)}, \sigma_x^{(j-1)}\otimes\sigma^{(j)}_l\otimes\sigma_k^{(j+1)} - \sigma_k^{(j-1)}\otimes\sigma^{(j)}_l\otimes\sigma_x^{(j+1)} ]\\
    =& 2i\epsilon_{z,k,l}\sum_{j=1}^n \bigg( \left(\sigma_x^{(j-1)}\right)^2\otimes[\sigma_x^{(j)},\sigma_l^{(j)}]\otimes\sigma_k^{(j+1)} - \sigma_k^{(j)}\otimes [\sigma_l^{(j)},\sigma_x^{(j)}]\otimes \left(\sigma_x^{(j+1)}\right)^2 \\
    &+\sigma_x^{(j-1)}\otimes\sigma^{(j)}_l\otimes[\sigma_x^{(j+1)},\sigma_k^{(j+1)}]\otimes \sigma_x^{(j+2)} - \sigma_x^{(j-2)}\otimes[\sigma_x^{(j-1)},\sigma_k^{(j-1)}]\otimes\sigma^{(j)}_l\otimes\sigma_x^{(j+1)}\bigg)\\
    =&-8\epsilon_{z,k,l}\sum_{j=1}^n \bigg( \epsilon_{z,l,k}\sigma_k^{(j)}\otimes\sigma_k^{(j+1)} +\epsilon_{z,k,l}\sigma_x^{(j-1)}\otimes\sigma^{(j)}_l\otimes\sigma_l^{(j+1)}\otimes \sigma_x^{(j+2)} \bigg)\\
    =& 8\sum_{j=1}^n \bigg( \sigma_k^{(j)}\otimes\sigma_k^{(j+1)} -\sigma_x^{(j-1)}\otimes\sigma^{(j)}_l\otimes\sigma_l^{(j+1)}\otimes \sigma_x^{(j+2)} \bigg)
\end{align}
where in the second equality we used that $[\sigma_x\otimes\sigma_x, \sigma_a,\sigma_b] = 0$ if $a,b\neq x$. In the second to last equality we used that $\sigma_x^2 = \1$ and applied the same commutation rules for the Pauli matrices as in Eq.~\eqref{eq:commutatin_paulis_forH}. We also used the trivial identity $[A,B] = -[B,A]$. Finally, in the last equality we use $\epsilon_{a,b,c} = -\epsilon_{a,c,b}$ and $\epsilon_{a,b,c}^2 = 1$ as long as the indices $\{a,b,c\}$ are different (zero otherwise). Therefore, from here onward we assume that $k\neq l\neq z$. With this we can finally compute $ [\widetilde{H}_1,[\widetilde{H}_1, H]]$. Indeed, with the previous equation we find
\begin{align}
     [\widetilde{H}_1,[\widetilde{H}_1,\widetilde{H}_2]] =& 8\sum_{j=1}^n \bigg( \sigma_y^{(j)}\otimes\sigma_y^{(j+1)} -\sigma_x^{(j-1)}\otimes\sigma^{(j)}_z\otimes\sigma_z^{(j+1)}\otimes \sigma_x^{(j+2)} \bigg)\\
     [\widetilde{H}_1,[\widetilde{H}_1,\widetilde{H}_3]] =& 8\sum_{j=1}^n \bigg( \sigma_z^{(j)}\otimes\sigma_z^{(j+1)} -\sigma_x^{(j-1)}\otimes\sigma^{(j)}_y\otimes\sigma_y^{(j+1)}\otimes \sigma_x^{(j+2)} \bigg)
\end{align}
and by adding these two we find that $ [\widetilde{H}_1,[\widetilde{H}_1, H]]$ is
\begin{align}\label{eq:final_com_ricard}
     [\widetilde{H}_1,[\widetilde{H}_1,H]]  =& 8\sum_{j=1}^n \left[\sigma_z^{(j)}\otimes\sigma_z^{(j+1)}  + \sigma^{(j)}_y\otimes\sigma_y^{(j+1)} -\sigma_x^{(j-1)}\otimes\left( \sigma_z^{(j)}\otimes\sigma_z^{(j+1)}  + \sigma^{(j)}_y\otimes\sigma_y^{(j+1)} \right)\otimes \sigma_x^{(j+2)} \right]
\end{align}

Now that we have computed the nested commutator $[\widetilde{H}_1,[\widetilde{H}_1,\widetilde{H}]]$, we can recover Eq.~\eqref{eq:com_heisenberg}, and recalling that the state $|\psi\rangle = \frac{|01\rangle^{\otimes n/2} + |10\rangle^{\otimes n/2}}{\sqrt{2}}$ we can easily compute $c_{(1,1)}(\vec{0})$.
\begin{align}
    c_{(1,1)}(\vec{0}) =& \frac{1}{2}\Bigg| \left(\bra{01}^{\otimes n/2}+\bra{10}^{\otimes n/2}\right) 8\sum_{j=1}^n \bigg[\sigma_z^{(j)}\otimes\sigma_z^{(j+1)}  + \sigma^{(j)}_y\otimes\sigma_y^{(j+1)} \\
    &-\sigma_x^{(j-1)}\otimes\left( \sigma_z^{(j)}\otimes\sigma_z^{(j+1)}  + \sigma^{(j)}_y\otimes\sigma_y^{(j+1)} \bigg)\otimes \sigma_x^{(j+2)} \right]\left(\ket{01}^{\otimes n/2}+\ket{10}^{\otimes n/2}\right) \Bigg| \\
    =& 8 \frac{1}{2} \Bigg| \left(\bra{01}^{\otimes n/2}+\bra{10}^{\otimes n/2}\right)\sum_{j=1}^n \sigma_z^{(j)}\otimes\sigma_z^{(j+1)}  \left(\ket{01}^{\otimes n/2}+\ket{10}^{\otimes n/2}\right) \Bigg|\\
    =&  8 \Bigg|-\sum_{j=1}^n 1 \Bigg| = 8n\label{eq:deriv_ricard}
\end{align}
where in the second equality we have assumed the number of qubits to be larger than 4, and thus the expected value of all the Pauli matrices that appear and are not $\sigma_z$ are zero. In the second to last equality we used that $\sigma_z\otimes\sigma_z\ket{01} = -\ket{01}$. Hence we see that $c_1(\vec{0}) = 8n$, and thus fulfills the condition  in Eq.~\eqref{eq:hva_cond_curv}.

After verifying that all the conditions adopted in the previous section for a geometrically local Hamiltonian hold for the Heisenberg loss function $\widetilde{\mathcal{L}}(\thv_{\rm Relaxed})$ , the characterization of its region with guaranteed substantial loss variance  follows directly from the generic one obtained in Eq.~\eqref{eq:patch_hva_relaxed_final}.
Similarly, the variance lower bound within this region follows from Eq.~\eqref{eq:var_hva_relaxed_final}.

\paragraph*{\underline{Generic Trotter HVA for a local Hamiltonian:}}\label{para:generic_trotter_hva}

Here we consider the generic Trotter HVA loss function $\mathcal{L}(\thv_{\rm Trotter})$ for a geometrically local Hamiltonian described in the previous section \ref{subsusbsec:HVA_results} in Eq.~\eqref{eq:hva_res_2} and apply the general recipe detailed in the introduction of Appendix \ref{appendix:architectures} to characterize the size of the patch centered around zero with guaranteed substantial loss variance, which we denote by $r_{\rm patch}^{\rm Trotter}$, as well as the associated variance lower bound. 
In this setting, we have only $m_{\rm Trotter}= K$ independent parameters repeated over the $L$ Trotter layers (See Eq.~\eqref{eq:HVA_trotter}).
Hence, we invoke Theorem \ref{th:var_formal_cor} to derive the final results given in Eq.~\eqref{eq:hva_res_gen_patch_trotter} and Eq.~\eqref{eq:hva_res_gen_var_trotter}.

Having identified the theorem we will adapt, we now proceed to the second and third steps of the general approach, choosing to apply Theorem \ref{th:var_formal_cor} with the parameter contribution \(\th_1\), for which the scaling of the loss function's second derivative with respect to it is already given in Eq. \eqref{eq:hva_cond_curv_trotter}.

Now, we move to the fourth step consisting in computing the maximal frequencies $\omega^{(\rm max)}_k$ associated with the parameters $\th_k\;, k \in \{1,\dots,K\}$ as defined in Eq.~\eqref{eq:max_freqs_E}, i.e.
\begin{equation}\label{eq:omega_max_trotter_1}
    \omega^{(\rm max)}_k = \sum_{l=1}^L \omega^{(\rm max)}(H_k) = L \cdot \omega^{(\rm max)}(H_k)\;.
\end{equation}
In addition, since we have that each $H_k$ can be decomposed as a sum of commuting Pauli strings, as outlined in Section \ref{subsubsec:circuit_HVA}, the maximal frequency associated to a Hamiltonian term $H_k$ is nothing but $\omega^{(\rm max)}(H_k) = 2N_k$, where $N_k$ is the number of terms in the Pauli decomposition of $H_k$ (See Eq.~\eqref{eq:loc_2}).
Therefore, Eq.~\eqref{eq:omega_max_trotter_1} becomes
\begin{equation}\label{eq:omega_max_trotter}
    \omega^{(\rm max)}_k = 2LN_k\;.
\end{equation}

Consequently, we can now invoke Eq.~\eqref{eq:rpatch_th4_recipe} to obtain the scaling of $r_{\rm patch}^{\rm Trotter}$, i.e. the size of the region with guaranteed substantial gradients centered around zero for the loss function $\mathcal{L}(\thv_{\rm Trotter})$ defined in Eq.~\eqref{eq:hva_res_2}. Precisely, by plugging the expression of the effective frequencies in Eq.~\eqref{eq:omega_max_trotter} and the condition on the second derivative in Eq.~\eqref{eq:hva_cond_curv_trotter} in Eq.~\eqref{eq:rpatch_th4_recipe}  , we get

 \begin{align}\label{eq:rpatch_trotter_hva}
            (r_{\rm patch}^{\rm Trotter})^2 = \frac{3c^2_{1}(\vec{0})}{8(2c_{1}(\vec{0})\gamma_{1} + \widetilde{\beta}_{1})} \;,
        \end{align}
        where $\widetilde{\beta}_{1}$ and $\gamma_{1}$ are respectively defined in Eq.~\eqref{eq:gamma_recipe} and Eq.~\eqref{eq:beta_tilde_recipe}.
        By plugging the expression of the effective frequencies in Eq.~\eqref{eq:omega_max_trotter} and the condition on the second derivative in Eq.~\eqref{eq:hva_cond_curv_trotter} , we get
        \begin{align}
           \gamma_{1}&= \frac{8}{3}\norm{O}\left(\omega^{(\rm max)}_{1}\right)^2\sum_{k=2}^{K}\left(\omega^{(\rm max)}_{k}\right)^2 \\ 
           &= \frac{2^7}{3}\norm{O} L^4 N_1^2 \left(\sum_{k=2}^K N_k^2\right) \\
           \widetilde{\beta}_1&=\frac{32}{3} \left[\omega^{(\rm max)}_{1}\right]^6 \norm{O}^2\\
           &= \frac{2^{11}}{3} L^6 N_1^6 \norm{O}^2 \;. 
        \end{align}
Hence, Eq.~\eqref{eq:rpatch_trotter_hva} simplifies to
\begin{align}
    (r_{\rm patch}^{\rm Trotter})^2 &= \frac{9  c_1^2(\vec{0})}{2^{11} c_1(\vec{0}) \norm{O} L^4 N_1^2 \left(\sum_{k=2}^K N_k^2\right) + 2^{14} L^6 N_1^6 \norm{O}^2 }\\
    &= \frac{ 9 b^2 L^4 N_O^2}{2^{11} b L^6 N_O \norm{O}   N_1^2 \left(\sum_{k=2}^K N_k^2\right) + 2^{14} L^6 N_1^6 \norm{O}^2 }\\
    & \geq \frac{ 9 b^2 L^4 N_O^2}{2^{11} b L^6 N_O^2  N_1^2\left(\sum_{k=2}^K N_k^2\right) + 2^{14} L^6 N_1^6 N_O^2 }\\ \label{eq:lower-bound-rpatch-HVA-trotter-general}
    & = \frac{ 9 b^2 }{2^{11} b L^2   N_1^2 \left(\sum_{k=2}^K N_k^2\right) + 2^{14} L^2 N_1^6  }\;,
\end{align}
where in the second equality we plugged in the condition in Eq.~\eqref{eq:hva_cond_curv_trotter} by making the substitution $c_1(\vec{0}) = b L^2 N_O$ for some constant $b$. In the first inequality, we used the fact that $\norm{O} \leq N_O$, which can be obtained by simply using the triangular inequality in Eq.~\eqref{eq:local_obs_hva}.

As mentioned for the Relaxed case, we can still obtain the same results for a lower-bound on $r_{\rm patch}$ (as we did in the proof of Corollary~\ref{cor:var_minimum}, particularly in Eq.~\eqref{eq:lowerbound_on_r_cor1}). 
Therefore, we redefine the lower-bound in Eq.~\eqref{eq:lower-bound-rpatch-HVA-trotter-general} as $r_{\rm patch}^{\rm Trotter}$ and give its final scaling by assuming that $N_k \leq N \;, \forall 1 \leq k\leq K$ and that $K \in \Theta(1)$. 
\begin{align}
        \left(r_{\rm patch}^{\rm Trotter}\right)^2& \in \mathcal{O}\left(\frac{1}{L^2 N_1^2  \left(N_1^4 + \sum_{k=2}^K N_k^2\right)}\right)\\
    &= \mathcal{O}\left(\frac{1}{L^2N^6 }\right)\\
     & \underset{L,N \sim n}{=} \mathcal{O}\left(\frac{1}{M^8}\right)\label{eq:final_trotter_hva_patch}
\end{align}
Consequently, the associated variance lower bound for $r_{\rm patch}^{\rm Trotter} \in \Theta\left(\frac{1}{LN^3}\right)$ can be derived from Eq.~\eqref{eq:Var_recipe_th4} as follows
\begin{equation}\label{eq:final_trotter_hva_var}
    \Var_{\thv_{\rm Trotter} \sim \uni(\vec{0},r_{\rm patch}^{\rm Trotter})}[\mathcal{L}(\thv_{\rm Trotter})] \in \Omega\left(\frac{L^4N_O^2}{L^4 N^{12}}\right) \underset{L,N,N_O \sim n}{=} \Omega\left(\frac{1}{M^{10}}\right)
\end{equation}

\paragraph*{\underline{Bounds for Trotterized HVA with the Heisenberg model :}}

The region centered around zero, where the loss variance for the loss function \(\widetilde{\mathcal{L}}(\thv_{\rm Trotter})\) associated with the Tottter version of the HVA for the Heisenberg Hamiltonian \(\widetilde{H}\) (defined in Eq. \eqref{eq:loss_trotter_heisenberg}) is at most polynomially vanishing, can be characterized directly from the generic Trotter HVA guarantees for a local Hamiltonian developed in the previous section \ref{para:generic_trotter_hva}.

Hence, it suffices to verify that  the loss function  \(\widetilde{\mathcal{L}}(\thv_{\rm relaxed})\) second derivatives w.r.t $\th_{1}$ obeys the condition in Eq.~\eqref{eq:hva_cond_curv_trotter}.
Let us begin by evaluating the loss curvature at zero w.r.t $\th_{1}$ using Lemma \ref{lemma:bouded_unitary_product}. To match the channel form used in Lemma \ref{lemma:bouded_unitary_product}, we express the backpropagated observable $\widetilde{U}^\dagger(\thv_{\rm Trotter})\widetilde{H}\widetilde{U}^\dagger(\thv_{\rm Trotter})$  as a composition of unitary channels applied to $\widetilde{H}$ as follows
\begin{align}
    \channel_{\th_1,0_2,0_3}(\widetilde{H}) &:= \left(\prod_{l=1}^L e^{-i \th_1 \widetilde{H}_1}\right)^\dagger \widetilde{H} \left(\prod_{l=1}^L e^{-i \th_1 \widetilde{H}_1}\right)\\
    &= \UC_{\th_1,L} \circ \dots \circ  \UC_{\th_1,1}(\widetilde{H})
\end{align}
where we recall that $\UC_{\th_1,l}(\cdot) := e^{i \th_1 \widetilde{H}_1} (\cdot)  e^{-i \th_1 \widetilde{H}_1}$.

Hence, the curvature w.r.t $\th_1$ can be written as
\begin{align}\label{eq:curv_heisenberg_cor_exp}
    c_1(\vec{0}) = \left|\Tr[|\psi\rangle\langle\psi| \left.\frac{\partial^2}{\partial\th_1^2} [\channel_{\th_1,0_2,0_3}(\widetilde{H})]\right|_{\th_1=0}]\right|\;.
\end{align}

Now, we apply Lemma \ref{lemma:bouded_unitary_product} to compute the second order derivative of $\channel_{\th_1,0_2,0_3}(\widetilde{H})$. Precisely, we have
\begin{align}
    \left.\frac{\partial^2}{\partial\th_1^2} [\channel_{\th_1,0_2,0_3}(\widetilde{H})]\right|_{\th_1=0} &= \sum_{\substack{\vec{a}=(a_1,\dots,a_L) \label{eq:trotter_curv_HVA1}\\ a_1+ \dots+a_L = 2}} \binom{2}{\vec{a}} \UC_{0_1,L}^{(a_L)} \circ \dots \circ \UC_{0_1,1}^{(a_1)}(\widetilde{H})\\
    &= \sum_{l=1}^L \UC_{0_1,L}^{(0)} \circ \dots \circ \UC_{0_1,l}^{(2)} \circ \dots \circ \UC_{0_1,1}^{(0)}(\widetilde{H}) + \sum_{l=1}^L 2 \sum_{t > l} \UC_{0_1,L}^{(0)} \circ \dots \circ \UC_{0_1,t}^{(1)} \circ \dots \circ \UC_{0_1,l}^{(1)} \circ \dots \circ \UC_{0_1}^{(0)}(\widetilde{H})\\
    &= \sum_{l=1}^L \UC_{0_1,l}^{(2)}(\widetilde{H}) +\sum_{l=1}^L 2 \sum_{t > l} \UC_{0_1,t}^{(1)}  \circ \UC_{0_1,l}^{(1)} (\widetilde{H})\\
    &= - \sum_{l=1}^L [\widetilde{H}_1,[\widetilde{H}_1,\widetilde{H}]] -\sum_{l=1}^L 2 \sum_{t > l} [\widetilde{H}_1,[\widetilde{H}_1,\widetilde{H}]]\\
    &= - L^2  [\widetilde{H}_1,[\widetilde{H}_1,\widetilde{H}]] \label{eq:trotter_curv_HVA2}
\end{align}
where in the first equality we invoked Eq.~\eqref{eq:pth_deriv_corr} from Lemma \ref{lemma:bouded_unitary_product} and in the second equality we expanded the sum over all the possible realizations of the vector $\boldsymbol{a}=(a_1,\dots,a_L)$ such that $a_1 + \dots + a_L =2$. In the third equality, we use the identity $\UC_{0_l,l}^{(0)} = Id$ and the fourth equality we invoke Eq.~\eqref{eq:p-order-grad-A} from Lemma \ref{lemma:bounded_unitary}. 

Therefore, by plugging in the result from Eq.~\eqref{eq:trotter_curv_HVA2} in Eq.~\eqref{eq:curv_heisenberg_cor_exp} and using the result already derived in Eq.~\eqref{eq:deriv_ricard},  the second derivative expression simplifies to  
\begin{align}
    c_1(\vec{0}) &=  L^2 \left|\Tr[|\psi\rangle\langle\psi| [\widetilde{H}_1,[\widetilde{H}_1,\widetilde{H}]] ]\right|\\
    &= 8L^2 n\;,
\end{align}
which satisfies the assumption on the second derivative in Eq.~\eqref{eq:hva_cond_curv_trotter} where we recall that $N_{\widetilde{H}} = 3n$.

Consequently, the scaling of the region with guaranteed gradients and the associated variance lower bound will be of the form in Eq.~\eqref{eq:final_trotter_hva_patch} and Eq.~\eqref{eq:final_trotter_hva_var} respectively. 

\subsubsection{Unitary Coupled Cluster Ansatz: Proof of analytical results}\label{subsubsec:UCC_proof}

\paragraph*{\underline{Relaxed UCCSD ansatz with an arbitrary local observable.}}\label{para:ucc_relax_proof}

In this section, we study the region with guaranteed substantial loss variance around zero for the loss function $\mathcal{L}(\thv_{\rm Relaxed})$ based on the relaxed version of the UCCSD anstaz in Eq.~\eqref{eq:relax_ucc_ham} as described in section \ref{para:UCCSD_arbitrary_setting}. 
We also recall that the main assumption covered in section \ref{para:UCCSD_arbitrary_setting} is the constant scaling of second derivatives in Eq.~\eqref{eq:cond_curv_relax_ucc}. 

Hence, we follow the guidelines of the general recipe outlined in the introduction of Appendix~\ref{appendix:architectures}, which leads to the final scalings in Eq.~\eqref{eq:ucc_relax_patch} and Eq.~\eqref{eq:ucc_relax_var}. We begin by recognizing that we will be adapting Theorem \ref{th:var_formal}, as the Relaxed UCCSD circuit in Eq.~\eqref{eq:relax_ucc_ham} involves only spatial correlations.

For the second and third steps of the general recipe,  the scaling of the loss's second derivative and hence the identification of the parameter subset contributing to the variance lower bound are already covered through the assumption in Eq. \eqref{eq:cond_curv_relax_ucc}. Specifically, we choose to focus on the parameters $\th_{\mu,l} \;, \forall \mu \in \Lambda\;, l \in \{1,\dots,L\}$, where we recall that $\Lambda$ is the set of parameters with constant second derivatives. With this, we now shift our focus to computing the relevant frequencies associated with the parameters of the form $\th_{\mu,l}$, as outlined in the fourth step of the general recipe. In particular, we will demonstrate that the effective frequencies, as defined in Eq.~\eqref{eq:eff_freqs_E} and  Eq. \eqref{eq:eff_freqs_tilde_E}, are upper bounded by the observable infinity norm up to a multiplicative constant.

First, we consider the effective frequencies of the form $\omega^{(\rm eff)}_{\mu}(\vec{0})\;, \mu \in \Lambda\subset \Gamma_1 \cup \Gamma_2$, where $\Gamma_{1,2}$ are defined in Eqs.~(\ref{eq:gamma_1},\ref{eq:gamma_2}), associated to a parameter $\th_{\mu_1,l}$ for any $l \in \{1,\dots,L\}$ as defined in Eq.~\eqref{eq:eff_freqs_E} and show that they can be written as infinity norms of nested commutators. Precisely, we have 
\begin{align}
    \left[\omega^{(\rm eff)}_{\mu}(\vec{0})\right]^2 &= \norm{\left.\frac{\partial^2 [U(\thv_{\rm Relaxed})^\dagger O U(\thv_{\rm Relaxed})]}{\partial \th_{\mu,l}^2}\right|_{\thv = \vec{0}}} \\
    &= \norm{\left.\frac{\partial^2 [e^{i \th_{\mu,l} H_{\mu}} Oe^{-i \th_{\mu,l} H_{\mu}}]}{\partial \th_{\mu,l}^2}\right|_{\th_{\mu,l} = 0}}\\
    &= \norm{[H_{\mu},[H_{\mu},O]]}\label{eq:eff_freq_Hpq}
\end{align}
where we invoke Eq.~\eqref{eq:p-order-grad-A} from Lemma \ref{lemma:bounded_unitary} in the last equality.

Similarly, the effective frequencies, as defined in Eq.~\eqref{eq:eff_freqs_tilde_E}, of the form $\widetilde{\omega}^{(\rm eff)}_{\mu,\mu'}(\vec{0})$ for $\mu,\mu' \in \Gamma_1 \cup \Gamma_2 $  associated to the parameters $\th_{\mu,l}$ and  $\th_{\mu,l'}$ for any $l,l' \in \{1,\dots,L\}$ such that the generator $H_{\mu}$ is closer to the observable than the generator $H_{\mu'}$   can be written as
\begin{align}
    \left[\widetilde{\omega}^{(\rm eff)}_{\mu,\mu'}(\vec{0})\right]^2 &= \norm{\left.\frac{\partial^2 [U(\thv_{\rm Relaxed})^\dagger O U(\thv_{\rm Relaxed})]}{\partial \th_{\mu',l'}^2 \partial \th_{\mu,l}^2}\right|_{\thv = \vec{0}}} \\
    &= \norm{\left.\frac{\partial^2 [e^{i \th_{\mu',l'} H_{\mu'}} e^{i \th_{\mu,l} H_{\mu}} Oe^{-i \th_{\mu,l} H_{\mu}} e^{-i \th_{\mu',l'} H_{\mu'}}]}{\partial \th_{\mu',l'}^2 \partial \th_{\mu,l}^2}\right|_{\th_{\mu',l'},\th_{\mu,l} = 0}}\\
    &= \norm{[H_{\mu'},[H_{\mu'},[H_{\mu},[H_{\mu},O]]]]}\label{eq:eff_freq_Hmu_Hmu'}
\end{align}
where we invoke Eq.~\eqref{eq:p-order-grad-A} from Lemma \ref{lemma:bounded_unitary} in the last equality twice, i.e. first for the partial second derivative w.r.t $\th_{\mu,l}$ and second for the partial second derivative w.r.t $\th_{\mu',l'}$. 

Here we recall, as outlined in section \ref{subsubsec:UCC_circuit}, that any Hamiltonian of the form  $H_{\mu_{k}} \;,\mu_k \in \Gamma_k \;, k\in \{1,2\}$  is the average of commuting Pauli strings  according to Eq.~\eqref{eq;H_pq} and Eq.~\eqref{eq:H_pqrs}.
This implies that 
\begin{equation}\label{eq:bounded_norm_gen_ucc}
    \norm{H_{\mu_k}} \leq 1 \;, \forall \mu_k \in \Gamma_k \;, k\in \{1,2\}\;.
\end{equation}
Hence,  the effective frequencies in Eq.~\eqref{eq:eff_freq_Hpq} and Eq.~\eqref{eq:eff_freq_Hmu_Hmu'} can be upper bounded by the observable infinity norm up to a multiplicative constant. Specifically, we have $\forall \mu \in \Gamma_1 \cup \Gamma_2 $
\begin{align}
    \left[\omega^{(\rm eff)}_{\mu}(\vec{0})\right]^2 &\leq 4 \norm{H_\mu}^2 \norm{O}\label{eq:eff_freq_inter} \\
     &\leq 4 \norm{O}  \;.\label{eq:eef_freqs_1_ucc}
\end{align}
Moreover, $\forall \mu,\mu' \in \Gamma_1 \cup \Gamma_2$ such that the generator $H_{\mu}$ is closer to the observable than the generator $H_{\mu'}$ according to Eq.~\eqref{eq:relax_ucc_ham}, we similarly get
\begin{align}
     \left[\widetilde{\omega}^{(\rm eff)}_{\mu,\mu'}(\vec{0})\right]^2 & \leq 16\norm{H_{\mu}}^2 \norm{H_{\mu'}}^2 \norm{O}\label{eq:eff_freq_tilde_inter} \\
     &\leq 16 \norm{O} \;, \label{eq:eef_freqs_2_ucc}
\end{align}
where we used the property $\norm{[A,[A,B]]} \leq 4 \norm{A}^2 \norm{B}$ in Eq.~\eqref{eq:eff_freq_inter} and Eq.~\eqref{eq:eff_freq_tilde_inter} and invoked in Eq.~\eqref{eq:eef_freqs_1_ucc} and Eq.~\eqref{eq:eef_freqs_2_ucc} the bounded norm of the generators in Eq.~\eqref{eq:bounded_norm_gen_ucc}.

Now, we focus on computing the maximal frequencies defined in Eq.~\eqref{eq:max_freqs_E}.
Indeed, in the subspace of the qubits $p$ and $q$, the generator $H_{pq}$ can be expressed as
\begin{align}
    H_{pq}=i|10\rangle\langle 01|-i|01\rangle\langle 10|\;,
\end{align}
which has eigenvalues $\pm 1$ (and $0$) and acts trivially on the remaining qubits. Therefore, we have 
\begin{equation}\label{eq:max_freq_1_ucc}
    \omega^{\rm (max)}(H_{\mu_1})=2\;, \mu_1 \in \Gamma_1.
\end{equation}
Similarly, in the subspace of the qubits $p$, $q$, $r$ and $s$, we have
\begin{equation}
    H_{pqrs}=i|1100\rangle\langle 0011|-i|0011\rangle\langle 1100|\;,
\end{equation}
which also has eigenvalues $\pm 1$ (and $0$), so the maximum frequency is also
\begin{equation}\label{eq:max_freq_2_ucc}
    \omega^{\rm (max)}(H_{\mu_2})=2\;, \mu_2 \in \Gamma_2.
\end{equation}

Now that we have computed the maximal frequencies of the circuit generators in Eq.~\eqref{eq:max_freq_1_ucc} and Eq.~\eqref{eq:max_freq_2_ucc} and provided an upper bound on the effective frequencies in Eq.~\eqref{eq:eef_freqs_1_ucc} and Eq.~\eqref{eq:eef_freqs_2_ucc}, we move forward to evaluating the quantities of interest that will determine $r_{\rm patch}^{\rm Relaxed}$, i.e. the size of the region around zero with guaranteed substantial loss variance $\mathcal{L}(\thv_{\rm Relaxed})$.
For completeness, we recall here the expression of $r_{\rm patch}^{\rm Relaxed}$ from Eq.~\eqref{eq:rpatch_th3_recipe} using the generator indices $\mu \in \Gamma_1 \cup \Gamma_2 $ .

\begin{align}\label{eq:patch_ucc_relax}
       (r_{\rm patch}^{\rm Relaxed})^2 = \min_{\substack{\mu \in \Lambda\\ l \in \{1,\dots,L\}}} \frac{9c_{\mu,l}(\vec{0})^2}{8 c_{\mu,l}(\vec{0}) \mathcal{A}_{\mu,l}(\vec{0}) + 24\beta_{\mu,l}(\vec{0})} 
    \end{align}
    where we recall that $\Lambda \subset \Gamma_1 \cup \Gamma_2$ is the subset of parameter indices which satisfies the assumption in Eq.~\eqref{eq:cond_curv_relax_ucc} and that the elements of $\Gamma_1$ are integers ranging from $1$ to $|\Gamma_1|$ and the elements of $\Gamma_2$ goes from $|\Gamma_1|+1$ to $K=|\Gamma_1|+|\Gamma_2|$ as mentioned in section \ref{subsubsec:UCC_circuit}.
    Using this indexing of generators and hence the associated parameters, the terms $\mathcal{A}_{\mu,l}(\vec{0})$ and $\beta_{\mu,l}(\vec{0})$ introduced in Eqs.~(\ref{eq:recipe_A},\ref{eq:beta_recipe}) can be written as follows $\forall 1 \leq l \leq L$ and $\forall 1 \leq \mu \leq K$,
   
    \begin{align}
    \mathcal{A}_{\mu,l}(\vec{0}) &= 4 \left[ \omega^{(\rm max)}_{\mu} \right]^2 \left( \sum_{\nu=1}^{l-1} \sum_{k=1}^{K} \left[ \omega^{(\rm eff)}_{k}(\vec{0}) \right]^2 + \sum_{k=1}^{\mu -1} \left[ \omega^{(\rm eff)}_{k}(\vec{0}) \right]^2 \right) \\
    &\quad + \sum_{k=\mu + 1}^{K} \left[ \widetilde{\omega}^{(\rm eff)}_{\mu,k}(\vphi) \right]^2  + \sum_{\nu=l+1}^L \sum_{k=1}^{K} \left[ \widetilde{\omega}^{(\rm eff)}_{\mu,k}(\vphi) \right]^2 \;, \label{eq:_ucc_gen}\\
     \beta_{\mu,l}(\vec{0}) &=\begin{cases}
       \frac{2 \left[\omega^{(\rm max)}_{1,1}\right]^2  \left[\omega^{\rm (eff)}_{1}(0)\right]^4}{3} \; {\rm if }\; \mu,l=1,1 \;,\\ \frac{32( \omega_{\mu}^{\rm (max) })^{6}  \norm{O}^2}{3}\; {\rm otherwise }\; \;.
    \end{cases}  
        \end{align}

Using the value of the maximal frequencies  in Eq.~\eqref{eq:max_freq_1_ucc} and Eq.~\eqref{eq:max_freq_2_ucc} and  the effective frequencies upper bounds in Eq.~\eqref{eq:eef_freqs_1_ucc} and Eq.~\eqref{eq:eef_freqs_2_ucc}, we can also upper bounded the terms $\mathcal{A}_{\mu,l}(\vec{0})$ and $\beta_{\mu,l}(\vec{0})$ $\forall 1 \leq l \leq L$ and $\forall 1 \leq \mu \leq K$ as 
\begin{align}
     \mathcal{A}_{\mu,l}(\vec{0}) & \leq 16 \norm{O} \left[ 4(l-1)K + 4\mu -4+K(L-l) + K - \mu\right]\\
     & = 16 \norm{O} \left[ KL + 3K(l-1) +3\mu - 4\right]\\
     & \leq 2^6 \norm{O} KL \label{eq:A_ucc_bound}\\
     \beta_{\mu,l}(\vec{0}) &\leq \frac{2^{11}}{3} \norm{O}^2
\end{align}
where in the first inequality we substitute the upper-bounds of the effective frequencies, and in the second one we group the terms and in Eq.~\eqref{eq:A_ucc_bound} we upper-bound all the remaining negative terms with zero to make the analysis clearer further down the line. The upper-bound on $\beta_{\mu,l}(\vec{0})$ is straightforward.

Moreover, the loss second derivative at zero w.r.t the parameter $\th_{\mu,l}$ denoted by $c_{\mu,l}(\vec{0})$ can be upper bounded as
\begin{align}
    c_{\mu,l}(\vec{0}) &= \left|\Tr\left[\rho \left.\frac{\partial^2 [U(\thv_{\rm Relaxed})^\dagger O U(\thv_{\rm Relaxed})]}{\partial \th_{\mu,l}^2}\right|_{\thv = \vec{0}} \right]\right|\\
    &\leq \|\rho\|_1 \norm{\left.\frac{\partial^2 [U(\thv_{\rm Relaxed})^\dagger O U(\thv_{\rm Relaxed})]}{\partial \th_{\mu,l}^2}\right|_{\thv = \vec{0}}}\\
    & \leq 4 \norm{O}
\end{align}
where we used the Hölder inequality in the first inequality and the result from Eq.~\eqref{eq:eff_freq_Hpq} in the last inequality.

Hence, we can redefine a smaller patch size than the one in Eq.~\eqref{eq:patch_ucc_relax} with the same guarantees of substantial loss variance.
\begin{align}\label{eq:final_patch_relax_gen_ucc}
     (r_{\rm patch}^{\rm Relaxed})^2 &= \min_{\substack{\mu \in \Lambda\\ l \in \{1,\dots,L\}}} \frac{9c_{\mu,l}(\vec{0})^2}{2^{11} \norm{O}^2  KL + 2^{14}\norm{O}^2} \\
     & \in \Theta\left( \frac{1}{M \norm{O}^2} \right)
\end{align}
where we used the assumption in Eq.~\eqref{eq:cond_curv_relax_ucc}.

Moreover, the variance lower bound within the region of size $r_{\rm patch}^{\rm Relaxed}$ is given according to Eq.~\eqref{eq:Var_recipe_th3} by
\begin{align}\label{eq:final_var_relax_gen_ucc}
    \Var_{\thv_{\rm Relaxed} \sim \uni(\vec{0},r_{\rm patch}^{\rm Relaxed})}[\mathcal{L}(\thv_{\rm Relaxed})] \in \Omega\left((r_{\rm patch}^{\rm Relaxed})^4\sum_{l=1}^L\sum_{\mu \in \Lambda} c^2_{\mu,l}(\vec{0})\right) = \Omega\left(\frac{L |\Lambda|}{M^2 \norm{O}^4}\right)\;.
\end{align}

\paragraph*{\underline{Trotterized UCCSD ansatz with an arbitrary  observable.}}\label{para:ucc_trotter_proof}
Here, we study the region with guaranteed substantial loss variance around zero for the loss function $\mathcal{L}(\thv_{\rm Trotter})$ based on the Trotter version of the UCCSD anstaz in Eq.~\eqref{eq:trotter_ucc_ham} as described in section \ref{para:UCCSD_arbitrary_setting}. Hence, we henceforth adapt Theorem \ref{th:var_formal_cor} to prove the guarantees on the region with gradients centered around zero in Eq.~\eqref{eq:ucc_trotter_patch} and Eq.~\eqref{eq:ucc_trotter_var}.

The analysis for the Trotterized UCCSD ansatz follows from the general recipe in the introduction of Appendix \ref{appendix:architectures} adapting Theorem \ref{th:var_formal_cor}. Specifically, the set  of parameters for which the loss second derivative scale as a constant and thus will be contributing to the variance lower bound is already given in Eq.~\eqref{eq:cond_curv_trotter_ucc} by the single parameter index $\bar{\mu} \in \Gamma_1 \cup \Gamma_2$. Moreover, the maximal frequencies computed in the relaxed setting in Eq.~\eqref{eq:max_freq_1_ucc} and Eq.~\eqref{eq:max_freq_2_ucc} will be key in computing the maximal frequencies associated with a parameter $\th_\mu$ associated with a generator $H_\mu$ for $\mu \in \Gamma_1 \cup \Gamma_2$ repeated across all of the $L$ trotter layers (See Eq.~\eqref{eq:gamma_1} and Eq.~\eqref{eq:gamma_2} for the definition of $\Gamma_1$ and $\Gamma_2$). 
Indeed, the maximal frequency $\omega^{(\rm max)}_\mu$ associated to the parameter $\th_\mu\;, \forall \mu \in \Gamma_1 \cup \Gamma_2$ can be written according to the definition in Eq.~\eqref{eq:max_freqs_E} as
\begin{align}
    \omega^{(\rm max)}_\mu = \sum_{l=1}^L \omega^{(\rm max)}(H_\mu) = 2L \;,
\end{align}
where we substituted  $\omega^{(\rm max)}(H_\mu)$ by its value already computed in Eq.~\eqref{eq:max_freq_1_ucc} and Eq.~\eqref{eq:max_freq_2_ucc}. 

Hence, we now invoke from Eq.~\eqref{eq:rpatch_th4_recipe} the expression of $r_{\rm patch}^{\rm Trotter}$, i.e. the size of the region around zero with guaranteed substantial loss variance $\mathcal{L}(\thv_{\rm Trotter})$.
\begin{align}\label{eq:rpatch_trotter_ucc1}
            (r_{\rm patch}^{\rm Trotter})^2 = \frac{3c^2_{\bar{\mu}}(\vec{\vec{0}})}{8(2c_{\bar{\mu}}(\vec{0})\gamma_{\bar{\mu}} + \widetilde{\beta}_{\bar{\mu}})} \;,
        \end{align}
        where $\widetilde{\beta}_{\bar{\mu}}$ and $\gamma_{\bar{\mu}}$ are given by
        \begin{align}
           \gamma_{\bar{\mu}}&= \frac{8}{3}\norm{O}\left[\omega^{(\rm max)}_{\bar{\mu}}\right]^2\sum_{\substack{k=1 \\ k\neq \bar{\mu}}}^{K}\left[\omega^{(\rm max)}_{k}\right]^2 =\frac{2^7}{3}\norm{O} L^4(K-1) \label{eq:eq1}\\
           \widetilde{\beta}_{\bar{\mu}}&=\frac{2^{5}}{3} \left[\omega^{(\rm max)}_{\bar{\mu}}\right]^6 \norm{O}^2 = \frac{2^{11}}{3} L^6 \norm{O}^2\;. \label{eq:eq2}
        \end{align}
        where we just substituted the values of the maximal frequencies computes above.

        Consequently, by combining the scaling of $c_{\bar{\mu}}(\vec{0})$ from Eq.~\eqref{eq:cond_curv_trotter_ucc} and the expressions of $ \gamma_{\bar{\mu}}$ and $\widetilde{\beta}_{\bar{\mu}}$ from Eq.~\eqref{eq:eq1} and Eq.~\eqref{eq:eq2}  in Eq.~\eqref{eq:rpatch_trotter_ucc1}, we finally obtain
        \begin{align}\label{eq:final_gen_ucc_trotter_patch}
            (r_{\rm patch}^{\rm Trotter})^2 \in \Theta\left(\frac{1}{L^2 \norm{O} (K+ \norm{O})}\right)\:.
        \end{align}
     Moreover, the variance lower bound within the region of size $r_{\rm patch}^{\rm Trotter}$ is given according to Eq.~\eqref{eq:Var_recipe_th4} by
\begin{align}\label{eq:final_gen_ucc_trotter_var}
    \Var_{\thv_{\rm Trotter} \sim \uni(\vec{0},r_{\rm patch}^{\rm Trotter})}[\mathcal{L}(\thv_{\rm Trotter})] \in \Omega\left((r_{\rm patch}^{\rm Trotter})^4 c^2_{\bar{\mu},l}(\vec{0})\right) = \Omega\left(\frac{1}{\norm{O}^2 (K+ \norm{O})^2}\right)\;.
\end{align}   

\paragraph*{\underline{Toy example proofs:}}

In this section, we consider the observable
\begin{equation}\label{eq:obs_ucc_num}
    O=\sum_{i=1}^n \sigma_z^{(i)} \otimes  \sigma_z^{(i+1)} \;,
\end{equation}
 such that $\sigma_z^{(n)} \otimes  \sigma_z^{(n+1)} := \sigma_z^{(n)} \otimes  \sigma_z^{(1)}$ and apply the results from sections \ref{para:ucc_relax_proof} and \ref{para:ucc_trotter_proof} for the specific observable $O$ in Eq.~\eqref{eq:obs_ucc_num} and the initial state $\ket{\psi}$ given by
 \begin{equation}\label{eq:state_fock}
     \ket{\psi} = \ket{1}^{\otimes \frac{n}{2}} \otimes \ket{0}^{\otimes \frac{n}{2}}
\;.
  \end{equation}
Specifically, we mainly focus on verifying that the conditions in Eq.~\eqref{eq:cond_curv_relax_ucc} is satisfied from some set $\Lambda \subset \Gamma_1 \cup \Gamma_2$ that the condition in Eq.~\eqref{eq:cond_curv_trotter_ucc} is fulfilled by some $\bar{\mu} \in \Gamma_1 \cup \Gamma_2$. Moreover, we show that in the relaxed setting, we can get a bigger patch with guarantees that the one obtained for a generic observable in Eq.~\eqref{eq:final_patch_relax_gen_ucc} mainly due to the observable $O$ defined in  Eq.~\eqref{eq:obs_ucc_num} being geometrically local.

\paragraph*{In the Relaxed setting,} we derive a tighter upper bound on the effective frequencies than the one obtained for an arbitrary observable in Eq.~\eqref{eq:eef_freqs_1_ucc}
 and Eq.~\eqref{eq:eef_freqs_2_ucc}. Precisely, using the explicit expression of the observable $O$ given in Eq.~\eqref{eq:obs_ucc_num}, we can explicitly compute the nested commutators for the generator $H_{pq};, p>q$ appearing in the effective frequencies expression in Eq.~\eqref{eq:eff_freq_Hpq}. 
 Let us first start by evaluating the commutator of the form $[H_{pq},O]$, i.e.
 \begin{align}
[H_{pq},O] &= \left[\frac{\sigma_x^{(q)}\sigma_y^{(p)}-\sigma_y^{(q)}\sigma_x^{(p)}}{2},\sigma_z^{(q-1)}\sigma_z^{(q)}\right] + \left[\frac{\sigma_x^{(q)}\sigma_y^{(p)}-\sigma_y^{(q)}\sigma_x^{(p)}}{2},\sigma_z^{(p)}\sigma_z^{(p+1)}\right]\\
&= \frac{1}{2} \left( \sigma_z^{(q-1)}[\sigma_x^{(q)},\sigma_z^{(q)}] \sigma_y^{(p)} - \sigma_z^{(q-1)}[\sigma_y^{(q)},\sigma_z^{(q)}] \sigma_x^{(p)} +  \sigma_x^{(q)}[\sigma_y^{(p)},\sigma_z^{(p)}] \sigma_z^{(p+1)} - \sigma_y^{(q)}[\sigma_x^{(p)},\sigma_z^{(p)}] \sigma_z^{(p+1)}\right)\\
&= -i \left( \sigma_z^{(q-1)}\sigma_y^{(q)}\sigma_y^{(p)} + \sigma_z^{(q-1)}\sigma_x^{(q)}\sigma_x^{(p)} - \sigma_x^{(q)}\sigma_x^{(p)} \sigma_z^{(p+1)} -\sigma_y^{(q)}\sigma_y^{(p)} \sigma_z^{(p+1)}\right)\;.
\end{align}

Now, we apply again the commutator w.r.t the same generator $H_{pq}$ and obtain,

\begin{align}
    [H_{pq},[H_{pq},O] ] =& [H_{pq}, \sigma_z^{(q-1)}\sigma_y^{(q)}\sigma_y^{(p)} + \sigma_z^{(q-1)}\sigma_x^{(q)}\sigma_x^{(p)} - \sigma_x^{(q)}\sigma_x^{(p)} \sigma_z^{(p+1)} -\sigma_y^{(q)}\sigma_y^{(p)} \sigma_z^{(p+1)}]\\
    =& -\frac{i}{2} (\sigma_z^{(q-1)}[\sigma_x^{(q)},\sigma_y^{(q)}](\sigma_y^{(p)})^2 - \sigma_z^{(q-1)}(\sigma_y^{(q)})^2[\sigma_x^{(p)},\sigma_y^{(p)}])\\
    &-\frac{i}{2} (\sigma_z^{(q-1)}(\sigma_x^{(q)})^2[\sigma_y^{(p)},\sigma_x^{(p)}] - \sigma_z^{(q-1)}[\sigma_y^{(q)},\sigma_x^{(q)}](\sigma_y^{(p)})^2)\\
    & +\frac{i}{2} ((\sigma_x^{(q)})^2[\sigma_y^{(p)},\sigma_x^{(p)}]\sigma_z^{(p+1)}-[\sigma_y^{(q)},\sigma_x^{(q)}](\sigma_x^{(p)})^2 \sigma_z^{(p+1)}) \\
    &+\frac{i}{2} ([\sigma_x^{(q)},\sigma_y^{(q)}]\1 \sigma_z^{(p+1)}-\1[\sigma_x^{(p)},\sigma_y^{(p)}] \sigma_z^{(p+1)})\\
     =& (\sigma_z^{(q-1)}\sigma_z^{(q)} - \sigma_z^{(q-1)} \sigma_z^{(p)}) - (\sigma_z^{(q-1)}\sigma_z^{(p)} - \sigma_z^{(q-1)}\sigma_z^{(q)})\\
     &+ (\sigma_z^{(p)}\sigma_z^{(p+1)}-\sigma_z^{(q)} \sigma_z^{(p+1)}) -(\sigma_z^{(q)} \sigma_z^{(p+1)}-\sigma_z^{(p)} \sigma_z^{(p+1)})\\
     =& 2\sigma_z^{(q-1)} (\sigma_z^{(q)} - \sigma_z^{(p)}) + 2 (\sigma_z^{(p)} - \sigma_z^{(q)}) \sigma_z^{(p+1)}\;.\label{eq:com_ucc}
\end{align}
where in the first equality, we explicitly substituted $H_{pq}$ given in Eq.~\eqref{eq;H_pq}. In the second inequality we did the same thing, and expanded the expression by pulling out of the commutators all those terms that did not affect them. In the second to last equation we compute the commutators and use the fact that $\sigma_i^2 = \1$ to simplify the expression. Finally we group together the remaining terms.

Hence, the effective frequencies of the form $\omega^{(\rm eff)}_{\mu_1}(\vec{0})\;, \mu_1 \rightarrow (p,q) \text{ with } p>q$ are nothing but the infinity norm of 8 commuting Pauli strings, i.e.
\begin{align}\label{eq:custom_eff1}
    [\omega^{(\rm eff)}_{\mu_1}(\vec{0})]^2 = 8\;.
\end{align}

Similarly, the effective frequencies of the form $\omega^{(\rm eff)}_{\mu,\mu'} \;, \mu,\mu' \in \Gamma_1 \cup \Gamma_2$ defined in Eq.~\eqref{eq:eff_freq_Hmu_Hmu'} can be upper bounded by a constant that is independent of the observable norm by using the same argument. Broadly, since the generators $H_{\mu}$ acts non trivially on at most 4 qubits and the observable $O$ contains exactly 2 Pauli terms acting non trivially on a specific qubit  (e.g. for qubit $i$ we have $\sigma_z^{(i-1)}\otimes\sigma_z^{(i)}$ and $\sigma_z^{(i)}\otimes\sigma_z^{(i+1)}$), each rotation might affect at most $2\times 4=8$ Pauli terms in $O$ (i.e. there are at most $8$ Pauli terms in $O$ that might not commute with a generator $H_\mu$). Let us define $\tilde{O}_\mu$ as the sum of Pauli terms in $O$ that does not commute with $H_\mu$ such that $[H_\mu,O]=[H_\mu,\tilde{O}_\mu]$. Therefore, we have $\norm{\tilde{O}_\mu}\leq 8$ and the effective frequencies of the form $\omega^{(\rm eff)}_{\mu,\mu'} \;, \mu,\mu' \in \Gamma_1 \cup \Gamma_2$ are upper bounded as follows

\begin{align}
    \left[\widetilde{\omega}_{\mu,\mu'}^{\rm (eff)}(\vec{0})\right]^2&=\norm{[H_{\mu'},[H_{\mu'},[H_\mu,[H_\mu,O]]]]}\\
    &=\norm{[H_{\mu'},[H_{\mu'},[H_\mu,[H_\mu,\tilde{O}_{\mu}]]]]}\\
    &\leq (2\norm{H_{\mu'}})^2(2\norm{H_\mu})^2\norm{\tilde{O}_\mu}\\
    &\leq 128\;. \label{eq:custom_eff2}
\end{align}
where the first equality is obtained from Eq.~\eqref{eq:eff_freq_Hmu_Hmu'} and the first inequality is obtained by applying the property $\norm{[A,[A,B]]} \leq 4 \norm{A}^2 
 \norm{B}$. In the last equality we simply substitute the upper-bounds on $\norm{\tilde{O}_\mu}$ explained right before the equation, and $\norm{H_{\mu'}}$ in Eq.~\eqref{eq:bounded_norm_gen_ucc}.

 Consequently, we can obtain in this setting a tighter upper bound on the term $\mathcal{A}_{\mu,l}(\vec{0})$ defined in Eq.~\eqref{eq:_ucc_gen} than the bound in Eq.~\eqref{eq:A_ucc_bound}. Precisely, by combining the new effective frequencies bounds in 
 Eq.~\eqref{eq:custom_eff1} and Eq.~\eqref{eq:custom_eff2}, we obtain the following scaling of  $\mathcal{A}_{\mu,l}(\vec{0}) \;, \forall \mu \in \Gamma_1 \cup \Gamma_2 ,  1 \leq l \leq L$.
 \begin{align}\label{eq:A_UB_custom}
     \mathcal{A}_{\mu,l}(\vec{0}) \in \mathcal{O}(KL) \;.
 \end{align}

Now, we focus on computing the loss function second derivatives with respect to all the parameters $\th_{\mu_1,l} \;, \mu_1 \in \Gamma_1 , 1 \leq l \leq L$.

Let us first show that the loss second derivative evaluated at zero denoted by $ c_{\mu_1,l}(\vec{0})$ boils down to computing the overlap between the initial state given in Eq.~\eqref{eq:state_fock} and the nested commutators $[H_{\mu_1},[H_{\mu_1},O]]$ in Eq.~\eqref{eq:com_ucc}.
\begin{align}
     c_{\mu_1,l}(\vec{0}) &= \left| \Tr\left[|\psi\rangle\langle\psi| \left.\frac{\partial^2 [U^\dagger(\thv_{\rm Relaxed}) O U(\thv_{\rm Relaxed})]}{\partial\th_{(\mu_1,l)}^2}\right|_{\thv_{\rm Relaxed}= \vec{0}}\right] \right|\\
    &= \left| \Tr\left[|\psi\rangle\langle\psi| \left.\frac{\partial^2 [e^{i \th_{(\mu_1,l)} H_{\mu_1}} O e^{-i \th_{(\mu_1,l)} H_{\mu_1}}]}{\partial\th_{(\mu_1,l)}^2}\right|_{\th_{\mu_1,l}= 0}\right] \right|\\
    &= \left| \Tr\left[|\psi\rangle\langle\psi| [H_{\mu_1},[H_{\mu_1},O]]\right] \right|\;. \label{eq:curv_ucc_relax_num} 
\end{align}

Here we recall that the initial state is given by $ \ket{\psi} = \ket{1}^{\otimes \frac{n}{2}} \otimes \ket{0}^{\otimes \frac{n}{2}}$. Thus, computing the overlap with the state in Eq.~\eqref{eq:curv_ucc_relax_num} will depend on the relative position of $(p,q) \rightarrow \mu_1$ compared to $n/2$. Precisely, the loss second derivative $ c_{\mu_1,l}(\vec{0})$ is non zero only if $q \leq n/2$ and $p>n/2$ and will be in this case constant in the system size, i.e. 
\begin{align} 
     c_{\mu_1,l}(\vec{0}) &= \left| \Tr\left[|\psi\rangle\langle\psi| [H_{\mu_1},[H_{\mu_1},O]]\right] \right|\\
     &=2 \left| \Tr\left[|\psi\rangle\langle\psi| \left(\sigma_z^{(q-1)} (\sigma_z^{(q)} - \sigma_z^{(p)}) +  (\sigma_z^{(p)} - \sigma_z^{(q)}) \sigma_z^{(p+1)}\right)\right] \right|\\
     &= 8 \label{eq:aaaaa}
\end{align}
Therefore, we choose the subset $\Lambda$ defined in Eq.~\eqref{eq:cond_curv_relax_ucc} to be $\Lambda=\{(p,q)\;,q \leq n/2 < p\}$ of size $|\Lambda|= n^2/4$.

Finally, given that we identified the subset $\Lambda$ of size $|\Lambda| \in \Theta( n^2)$ verifying the assumption in Eq.~\eqref{eq:cond_curv_relax_ucc} and provided a tighter upper bound 
on $\mathcal{A}_{\mu,l}(\vec{0})$ in Eq.~\eqref{eq:A_UB_custom}, we can plug in these results in Eq.~\eqref{eq:final_patch_relax_gen_ucc} and Eq.~\eqref{eq:final_var_relax_gen_ucc} and obtain that the patch around identity with guaranteed gradients for the UCCSD relaxed version , the observable $O$ in Eq.~\eqref{eq:obs_ucc_num} and the initial state in Eq.~\eqref{eq:state_fock} scales as
\begin{align}
    r_{\rm patch}^{\rm Relaxed}&\in\Theta\left(\frac{1}{\sqrt{KL + n^2}}\right)
\end{align}
and for that region, the variance is lower bounded as
\begin{align}\label{eq:var_ucc_proof}
    \Var_{\thv_{\rm Relaxed} \sim \uni(\vec{0},r_{\rm patch}^{\rm Relaxed})}[\mathcal{L}(\thv_{\rm Relaxed})]  &\in \Omega\left(\frac{Ln^2}{(KL+ n^2)^2}\right) \;, 
\end{align}
where we used $\norm{O}=n$.

\paragraph*{In the Trotter setting,}

we simply show that $\bar{\mu} = 1$ corresponding to the generator $H_{(2,1)}$ acting non trivially on the first and second qubits, as defined in Eq.~\eqref{eq:eff_freq_Hpq}, fulfills the assumption in Eq.~\eqref{eq:cond_curv_trotter_ucc} for the loss function $\mathcal{L}(\thv_{\rm Trotter})$ using the Trotter UCCSD circuit in Eq.~\eqref{eq:trotter_ucc_ham}, the observable $O$ in Eq.~\eqref{eq:obs_ucc_num} and the initial state in Eq.~\eqref{eq:state_fock}.

Indeed , for the loss second derivative  w.r.t $\th_1$ evaluated at zero, we follow the same analysis as done for the Trotter HVA (See Eq.~\eqref{eq:trotter_curv_HVA1} - Eq.~\eqref{eq:trotter_curv_HVA2} ) which leads to
\begin{align}
    c_1(\vec{0}) = L^2  \left|\Tr[\rho [H_1,[H_1,O]] ] \right|  \in \Theta(L^2 )\;.
\end{align}
where we used Eq.~\eqref{eq:aaaaa} showing that $\left|\Tr[\rho [H_1,[H_1,O]] ] \right| \in \Theta(1)$. Consequently, we retrieve the results in Eq.~\eqref{eq:final_gen_ucc_trotter_patch} and Eq.~\eqref{eq:final_gen_ucc_trotter_var}.

\section{Fourier expansion of the loss function}\label{App:Fourier_expansion}

As established in the main text, the characteristic (maximal and effective) frequencies of a parameterized quantum circuit can tightly constrain the size of the parameter region over which the variance of a loss function remains non-negligible. Specifically, we have proven that the patch size, $r_{\rm patch}$, is inversely proportional to sums of these characteristic frequencies. Here, we make this connection explicit by expanding the loss function in a Fourier basis and clarifying how these Fourier frequencies contribute to the effective and maximal frequencies.

Throughout this appendix, we consider a loss function $\mathcal{L}(\boldsymbol{\theta})$ of the form introduced in Eq.~\eqref{eq:loss}, with $\nHam$ generators $\{H_l\}_{l=1}^\nHam$ and $\nparams$ parameters $\{\theta_l\}_{l=1}^\nparams$. We describe the Fourier expansion in a discrete Fourier basis determined by the eigenvalue spectra of the $H_l$'s \cite{schuld2021effect}, first for the case $\nparams = \nHam$, where each generator $H_l$ is associated with a unique parameter $\theta_l$. We then explain how these results generalize when parameters are shared across multiple generators.

We begin by decomposing each generator \(H_l\) in its own eigenbasis:
\begin{equation}
    H_l = \sum_{\lambda_i^{(l)} \in \mathrm{Spec}(H_l) } \lambda_i^{(l)} P_i^{(l)} , \quad P_i^{(l)} = \sum_{j=1}^{dim(E_{\lambda_i^{(l)}}(H_l))} |\lambda_i^{(l)}, j\rangle \langle \lambda_i^{(l)}, j|.
\end{equation}

Here, $\mathrm{Spec}(H_l)$ is the spectrum (eigenvalues) of $H_l$, and $P_i^{(l)}$ is the projector onto the corresponding eigenspace $E_{\lambda_i^{(l)}}(H_l)$, accounting for degeneracy. By considering all pairwise differences of the eigenvalues of $H_l$, we denote the set of \emph{distinct} frequencies
\begin{equation}
    \Omega_l := \{\lambda_j^{(l)} - \lambda_i^{(l)} \;\big|\; \lambda_i^{(l)}, \lambda_j^{(l)} \in\mathrm{Spec}(H_l) \}.
\end{equation}
Note that many pairs $(j,i)$ can yield the same frequency $\omega_{l}$, and to keep track of frequency redundancy we label these pairs separately by the set 
\begin{equation}
 R(\omega_{l}) := \{(j,i)  \;\big|\;\lambda_j^{(l)} - \lambda_i^{(l)} = \omega_{l}\} , \quad \forall \omega_{l} \in \Omega_l. 
\end{equation}

A straightforward calculation shows that the unitary time evolution of an operator \(A\) under \(H_l\) in the Heisenberg picture admits a discrete Fourier expansion:
 \begin{align}
        e^{i\theta_l H_l} A e^{-i\theta_l H_l} &= \sum_{i,j} e^{-i\theta_l (\lambda_j^{(l)} - \lambda_i^{(l)})} P_i^{(l)} A P_j^{(l)}\\
        &= \sum_{\omega_{l} \in \,\Omega_l}   e^{-i\theta_l \omega_{l}} 
        \Bigl(\sum_{(j,i) \in R(\omega_{l})}P_i^{(l)} A P_j^{(l)}\Bigr)\\
        &= \sum_{\omega_{l} \in \,\Omega_l}  e^{-i\theta_l \omega_{l}} P_{\omega_{l}}\left(A\right),\label{eq:Fourier_sandwich}
\end{align}
where we have introduced the superoperator $P_{\omega_{l}}(\cdot) := \sum_{(j,i) \in R(\omega_{l})} P_i^{(l)} (\cdot) P_j^{(l)}$.

We now apply Eq.~\eqref{eq:Fourier_sandwich} repeatedly to expand the loss function $\mathcal{L}(\thv)=\Tr\!\bigl[\rho_0\,U^\dagger(\thv)\,O\,U(\thv)\bigr]$
in a multi-dimensional Fourier series. Recall that $U(\boldsymbol{\theta}) = \prod_{l=1}^{\nHam} \Bigl(V_l\,e^{-i\theta_lH_l}\Bigr)$, and that  $\rho_{\overline{l}} :=  U_{l+1}(\theta_{l+1}) \dots U_{M}(\theta_{M} )  \rho_0 U^{\dagger}_{M}(\theta_{M}) \dots U_{l+1}^{\dagger}(\theta_{l+1})$, the notation introduced in~\eqref{eq:rho-l-bar}. 
The loss function can then be written as
\begin{align}
        \mathcal{L}(\thv) &= \Tr[\rho_0 U^{\dagger}(\thv) O U(\thv)]\label{eq:Fourier_decomp_start}\\
        &= \Tr[\rho_{\overline{1}} e^{i\theta_1 H_1} V_1^{\dagger}O V_1 e^{-i\theta_1 H_1} ]\\
        &= \Tr[\rho_{\overline{1}} \sum_{\omega_{1} \in \,\Omega_1} e^{-i\theta_1 \omega_{1}}  P_{\omega_{1}}(V_1^\dagger O V_1)]\\
      &= \sum_{\omega_{1} \in \,\Omega_1} e^{-i\theta_1 \omega_{1}}\Tr[\rho_{\overline{1}}   P_{\omega_{1}}(V_1^\dagger O V_1)]\\
        &\; \; \vdots\\
        &=  \sum_{\omega_{1} \in \,\Omega_1} \dots  \sum_{\omega_{M} \in \,\Omega_\nHam} e^{-i (\th_1 \omega_{1}+\dots + \th_\nHam \omega_{M})}   \Tr[\rho_0 P_{\omega_{M}}(V_M^\dagger \dots P_{\omega_{1}}(V_1^\dagger O V_1)\dots V_M)]\\
          &=  \sum_{\omega_{1} \in \,\Omega_1} \dots  \sum_{\omega_{M} \in \,\Omega_\nHam}  e^{-i (\th_1 \omega_{1}+\dots + \th_\nHam \omega_{M})}   \Tr[\rho_0 O_{\boldsymbol{\omega}}]\label{eq:Fourier_decomp_end}\\
        &= \sum_{\boldsymbol{\omega} \in \Omega_1 \times \dots \times\Omega_M} a_{\boldsymbol{\omega}}  e^{-i \thv^T \boldsymbol{\omega} }, \label{eq:cost_expansion}
\end{align}
where we introduced  $a_{\boldsymbol{\omega}} := \Tr[\rho_0 O_{\boldsymbol{\omega}}]$ and  $O_{\boldsymbol{\omega}} := P_{\omega_{M}}(V_M^\dagger \dots P_{\omega_{1}}(V_1^\dagger O V_1)\dots V_M)$. This is the Fourier expansion of the loss function when each generator is associated with a unique parameter.

In the case of circuits with correlations ($\nparams < \nHam$), the above derivation still holds after a suitable regrouping. 
Specifically, recall that $\mathcal{S}:\{1,\dots,\nHam\}\to\{1,\dots,\nparams\}$ maps each generator index $l$ to the parameter index $\mathcal{S}(l)$. 
Then the dot product $\thv^T\boldsymbol{\omega}$ in Eq.~\eqref{eq:cost_expansion} can be grouped by distinct parameters as 
\begin{equation}
    \boldsymbol{\theta}^T\,\boldsymbol{\omega} 
    \;=\;
    \sum_{h=1}^{\nparams} 
    \theta_h\,
    \Bigl(\sum_{\,l \,\in\, \mathcal{S}^{-1}(h)} \omega_l\Bigr)
    \;=:\;
    \sum_{h=1}^{\nparams} 
    \theta_h\,
    \omega_h,
\end{equation}
where $\omega_h := \sum_{l \in \mathcal{S}^{-1}(h)}\omega_{l}$.
Using this grouping of the frequencies shared among the same parameter, one can define the spectrum associated to the parameter $\th_h$, denoted by $\Omega_h$, as
\begin{equation}
    \Omega_h :=  \{ \omega_h = \sum_{l \in \mathcal{S}^{-1}(h)} \omega_{l}  \;\Big|\; \omega_l \in \Omega_l \}.
\end{equation}

Hence, the \emph{maximal frequency} $\omega^{(\rm max)}_{h}$ in $\Omega_h$ is given by
\begin{equation}\label{eq:max_freq_fourier}
    \omega^{(\rm max)}_{h} = \sum_{l \in \mathcal{S}^{-1}(h)} \omega_{l}^{\rm (max) } \; ,
\end{equation}
where $\omega_l^{\mathrm{(max)}}$ is the maximal frequency in the spectrum of the individual generator $H_l$. This expression is Eq.~\eqref{eq:max_freqs} from the main text, which states that the maximal frequency associated to a given parameter $\theta_h$ is simply the sum of the maximal frequencies of the associated (correlated) generators.

\noindent\underline{\textit{Connection to effective frequencies}}:

We now illustrate how the \emph{effective frequencies} from Theorem~\ref{th:var_formal} relate to the Fourier expansion frequencies of the loss function under the setting that $\nparams = \nHam$. 
Specifically, we show that the effective frequencies can be seen as the weighted sum of the underlying Fourier frequencies.

First, we consider the back-propagated observable in the Heisenberg picture evaluated at $\thv + \vphi$, and expand it in the same discrete Fourier basis as before (see Eqns.~\eqref{eq:Fourier_decomp_start}-\eqref{eq:Fourier_decomp_end}):
\begin{align}
    U^\dagger(\thv + \vphi)OU(\thv + \vphi)&=  \left(\prod_{l=1}^M 
 V_l e^{-i (\th_l+ \phi_l) H_l}\right)^\dagger O \left(\prod_{l=1}^M 
 V_l e^{-i (\th_l+ \phi_l) H_l}\right)\\
 &= \left(\prod_{l=1}^M 
 (V_l e^{-i \phi_l H_l}) e^{-i \th_l H_l}\right)^\dagger O \left(\prod_{l=1}^M 
 (V_le^{-i \phi_l H_l}) e^{-i \th_l H_l}\right)\\
 &=   \sum_{\omega_{1} \in \,\Omega_1} \dots  \sum_{\omega_{M} \in \,\Omega_\nHam}  e^{-i (\th_1 \omega_{1}+\dots + \th_\nHam \omega_{\nHam})  } \left(P_{\omega_{M}}(e^{i \phi_M H_M}V_M^\dagger \dots P_{\omega_{1}}(e^{i \phi_1 H_1}V_1^\dagger O V_1 e^{-i \phi_1 H_1})\dots V_M e^{-i \phi_M H_M})\right)\\
 &= \sum_{\boldsymbol{\omega} \in \Omega_1\times \dots \times \Omega_\nHam}   e^{-i \thv^T \boldsymbol{\omega}  }  O_{\boldsymbol{\omega}}(\vphi),\label{eq: backprop_fourier}
\end{align}
where we have introduced  
\begin{equation}
O_{\boldsymbol{\omega}}(\vphi) := P_{\omega_M}( e^{i \phi_M H_M}V_M^\dagger \dots P_{\omega_1}(e^{i \phi_1 H_1}V_1^\dagger O V_1 e^{-i \phi_1 H_1})\dots V_M e^{-i \phi_M H_M}).    
\end{equation}

Recall that the \emph{effective} frequencies $\omega_{\mu}^{(\mathrm{eff})}(\vec{\phi})$ and $\widetilde{\omega}_{(\mu,k)}^{(\mathrm{eff})}(\vec{\phi})$ introduced in the main text appear when taking second or fourth partial derivatives (with respect to $\theta_\mu$ and $\theta_k$) of this back-propagated observable for some fixed $\vphi$. Specifically,
\begin{align}
    (\omega_{\mu}^{\rm (eff) }(\vec{\phi}) )^2
  & =\norm{ \left.\frac{\partial^2 [U(\thv)^\dagger O U(\thv)]}{\partial \th_\mu^2}\right|_{\thv= \vec{\phi}}} = \norm{ \left.\frac{\partial^2 [U(\thv+\vphi)^\dagger O U(\thv+\vphi)]}{\partial \th_\mu^2}\right|_{\thv= \vec{0}}},\\
  (\widetilde{\omega}^{\rm (eff)}_{(\mu,k)}(\vec{\phi}))^2 
  &=\norm{ \left.\frac{\partial^4 [U(\thv)^\dagger O U(\thv)]}{\partial \th_\mu^2 \partial \th_k^2}\right|_{\thv= \vec{\phi}}} = \norm{ \left.\frac{\partial^4 [U(\thv+ \vphi)^\dagger O U(\thv + \vphi)]}{\partial \th_\mu^2 \partial \th_k^2}\right|_{\thv= \vec{0}}}.
\end{align}

Substituting in the Fourier decomposition of the back-propagated observable in Eq.~\eqref{eq: backprop_fourier}, we obtain

\begin{align}
    (\omega_{\mu}^{\rm (eff) }(\vec{\phi}) )^2
  &= \norm{\sum_{\boldsymbol{\omega} \in \Omega_1 \times \dots \times \Omega_M} \omega_\mu^2 O_{\boldsymbol{\omega}}(\vphi)},\\
  (\widetilde{\omega}^{\rm (eff)}_{(\mu,k)}(\vec{\phi}))^2 
    &= \norm{\sum_{\boldsymbol{\omega} \in \Omega_1 \times \dots \times \Omega_M} \omega_\mu^2 \omega_k^2 O_{\boldsymbol{\omega}}(\vphi)}.
\end{align}

Here, we want to explicitly highlight how the effective frequencies are, the infinity norm of a sum of the Fourier frequencies, weighted by the different back-propagated observables. Indeed this is clearly what we see in these last equations presented. If all these observables commute, the result is trivial. However, in general these observables do not commute (i.e. $[O_{\vec{\omega_i}},O_{\vec{\omega_j}}]\neq 0$), and thus we see that this $O_{\vec{\omega}}$ act as some sort of weights, and thus the effective frequencies roughly become a weighted sum of the Fourier frequencies, dictated by how the different back-propagated observables interact.

\section{Upper bound on the loss variance} \label{app:upperbound}

\setcounter{proposition}{0}
\begin{proposition}
[Upper bound on the variance]\label{th:upperbound}
Consider a generic loss $\LC(\thv)$ of the form in Eq.~\eqref{eq:loss}. 
Suppose that when $\thv$ is uniformly sampled from the full parameter space $\vol(\vec{\phi},r_{\rm full})$, the average of $\LC(\thv)$ is zero, and its variance over this full landscape is exponentially vanishing in the system size $n$,
\begin{align}
    \Var_{\vtheta \sim \uni(\vphi, r_{\rm full})} \left[ \LC (\vtheta)\right] \in \OC\left( \frac{1}{b^n}\right)  \quad\text{for some } b>1.
\end{align}
Then, for any hypercube $\vol(\vec{\phi}, r)$ with 
\begin{align}
    r \;>\;\frac{r_{\rm full}}{b^{m/n}},
\end{align} 
the variance of $\LC(\thv)$ over this hypercube will also exponentially vanish in $n$, i.e., 
\begin{align}
    \Var_{\vtheta \sim \uni(\vec{\phi}, r)} \left[ \LC (\vtheta)\right]\in\order{\frac{1}{\beta^n}}    
    \quad\text{for some } \beta>1.
\end{align}
Consequently, if the number of parameters $m$ scales linearly with $n$ as $m = c n$, then the variance on any hypercube $\vol(\vec{\phi}, r)$ will vanish exponentially in $n$ provided that  \begin{align}
    r \;>\;\frac{r_{\rm full}}{b^{1/c}}.
\end{align}
\end{proposition}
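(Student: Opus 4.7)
The plan is to bound the patch variance by the full-landscape variance via a direct volume-comparison argument: since $\vol(\vec{\phi}, r) \subseteq \vol(\vec{\phi}, r_{\rm full})$ for $r \le r_{\rm full}$, and since $\LC(\thv)^2$ is nonnegative, integrating $\LC^2$ over the smaller hypercube can only shrink the integral. This by itself will give the desired bound up to a volume-ratio factor, which will then be pitted against the exponential barren-plateau decay $b^{-n}$.

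First I would convert both variances into second moments. The assumption $\Ebb_{\uni(\vphi,r_{\rm full})}[\LC]=0$ lets me write
\begin{equation}
\Var_{\uni(\vphi,r_{\rm full})}[\LC(\thv)] \;=\; \frac{1}{(2r_{\rm full})^m}\int_{\vol(\vphi,r_{\rm full})} \LC(\thv)^2 \, d\thv \,,
\end{equation}
while for the patch I simply drop the (nonnegative) $\left(\Ebb_{\uni(\vphi,r)}[\LC]\right)^2$ term to obtain
\begin{equation}
\Var_{\uni(\vphi,r)}[\LC(\thv)] \;\leq\; \frac{1}{(2r)^m}\int_{\vol(\vphi,r)} \LC(\thv)^2 \, d\thv \,.
\end{equation}

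Next, the set inclusion $\vol(\vphi,r)\subseteq\vol(\vphi,r_{\rm full})$ combined with $\LC^2 \geq 0$ yields $\int_{\vol(\vphi,r)} \LC^2 \leq \int_{\vol(\vphi,r_{\rm full})} \LC^2$, so chaining the previous displays gives the clean ratio
\begin{equation}
\Var_{\uni(\vphi,r)}[\LC(\thv)] \;\leq\; \left(\frac{r_{\rm full}}{r}\right)^{\!m} \Var_{\uni(\vphi,r_{\rm full})}[\LC(\thv)] \;\leq\; C \left(\frac{r_{\rm full}}{r}\right)^{\!m} \frac{1}{b^n} \,,
\end{equation}
for some constant $C$ absorbing the $\OC(b^{-n})$ assumption. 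It remains to show that whenever $r$ satisfies the stated lower bound, the right-hand side is still exponentially small in $n$. I would enforce $(r_{\rm full}/r)^m/b^n \leq 1/\beta^n$ by extracting the $n$-th root, which rearranges to $r \geq r_{\rm full}(\beta/b)^{n/m}$. Any strict inequality $r > r_{\rm full}/b^{n/m}$ then admits an explicit $\beta = b\,(r/r_{\rm full})^{m/n} > 1$, giving the desired $\Var_{\uni(\vphi,r)}[\LC] \in \OC(\beta^{-n})$. The linear-scaling corollary $m = cn$ specializes $n/m$ to $1/c$ and recovers $r > r_{\rm full}/b^{1/c}$ exactly as in the main-text statement.

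There is no genuinely hard step here: the bound is a rescaling-plus-monotonicity observation. The only point requiring a bit of care is that one must use the strict inequality in the hypothesis on $r$ to guarantee $\beta>1$ strictly (rather than $\beta=1$), which is what upgrades the bound from ``not worse than the full-landscape variance'' to genuinely exponentially vanishing. I note that the Fourier-expansion machinery of Appendix~\ref{App:Fourier_expansion} is not required for this argument, though it might in principle permit a tighter bound exploiting cancellations among harmonics; I will not pursue that refinement here since the naive volume bound already suffices for the statement as written.
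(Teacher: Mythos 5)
Your proof is correct and follows essentially the same route as the paper: both bound the patch second moment by the full-landscape second moment using nonnegativity of $\LC^2$ together with the volume ratio $\left(r_{\rm full}/r\right)^m$ (the paper merely carries out the domain comparison one parameter at a time via nested expectations, which is the same argument). Note that you quietly use the exponent $n/m$ in the condition on $r$, which is the form consistent with the paper's $m=cn$ specialization $r>r_{\rm full}/b^{1/c}$ (the $m/n$ in the proposition statement appears to be a typo), and your explicit choice $\beta = b\,(r/r_{\rm full})^{m/n}>1$ makes the final step slightly cleaner than the paper's.
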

\begin{proof}
    We prove the statement by relating the second moments of $\LC(\thv)$ over different hypercubes of width $2r\leq 2r_{\rm full}$.
    First, consider the second moment evaluated from a single parameter $\theta_j$, while the other parameters are held fixed.  Let $\theta_j$ be uniformly sampled from $\bigl[\phi_j-r_{\rm full}, \phi_j+r_{\rm full}\bigr]$. Then the second moment of $\LC$ with respect to $\theta_j$ is
\begin{align}
    \Ebb_{\theta_j\sim \uni(\phi_j,\,r_{\rm full})}\bigl[\LC^2(\thv)\bigr]
    \;=\; 
    \frac{1}{2r_{\rm full}}\int_{\phi_j-r_{\rm full}}^{\phi_j+r_{\rm full}} \LC^2(\thv)\,d\theta_j.
\end{align}

\noindent Now decompose this integral into two regions:
an inner interval $\bigl[\phi_j-r,\phi_j+r\bigr]$ (with $r \leq r_{\rm full}$),
and the two outer intervals of total length \(2r_{\rm full}-2r\). 
Since $\LC^2(\thv)\ge0$, one obtains
\begin{align}
    \frac{1}{2r_{\rm full}}
    \int_{\phi_j-r_{\rm full}}^{\phi_j+r_{\rm full}} 
    \LC^2(\thv)\, d\theta_j
    \;\geq\; 
    \frac{1}{2r_{\rm full}}
    \int_{\phi_j-r}^{\phi_j+r}\LC^2(\thv)\,d\theta_j.
\end{align}

 \noindent Multiplying both sides by $\tfrac{r_{\rm full}}{r}$ yields 
\begin{align}
    \frac{r_{\rm full}}{r}\,
    \Ebb_{\theta_j\sim \uni(\phi_j,\,r_{\rm full})}
    \bigl[\LC^2(\thv)\bigr]
    \;\;\geq\;\;
    \Ebb_{\theta_j\sim \uni(\phi_j,\,r)}\bigl[\LC^2(\thv)\bigr].
    \label{eq:bound_onevar}
\end{align}

    We now extend this single-parameter bound on the second moment to a multi-parameter setting. Let $\thv=(\theta_1,\dots,\theta_m)$ be uniformly drawn from a hypercube of width $2r \leq 2 r_{\rm full}$ centered around $\vphi$, i.e., $\thv\sim\uni(\vphi,r)$.  We can write
    \begin{equation}\label{eq:rewrite_expval}
        \Ebb_{\thv\sim\uni(\vphi,r)}[\LC^2(\thv)] = \Ebb_{\th_1\sim\uni(\phi_1,r)}[... [\Ebb_{\th_m\sim\uni(\phi_m,r)}[\LC^2(\thv)]]].
    \end{equation}

\noindent We can apply the bound~\eqref{eq:bound_onevar} on each $\theta_j$. 
For each parameter, the factor $\Ebb_{\theta_j\sim \uni(\phi_j,r_{\rm full})}[\LC^2(\thv)]$ will be upper bounded by $\tfrac{r_{\rm full}}{r} \Ebb_{\theta_j\sim \uni(\phi_j,r)}[\LC^2(\thv)]$.
It then immediately follows that
    \begin{equation}
        \left(\frac{r_{\rm full}}{r}\right)^\nparams\Ebb_{\thv\sim\uni(\vec{\phi},r_{\rm full})}[\LC^2(\thv)]\geq \Ebb_{\thv\sim\uni(\vec{\phi},r)}[\LC^2(\thv)].
    \end{equation}

 \noindent Consequently, if $\Ebb_{\thv\sim\uni(\vec{\phi},r_{\rm full})}[\LC^2(\thv)]\in\order{1/b^n}$ for some $b>1$, then 
    \begin{equation}
          \Var_{\thv\sim\uni(\vec{\phi},r)}[\LC^2(\thv)]\in\order{\left(\frac{r_{\rm full}}{r}\right)^m\frac{1}{b^n}}.
    \end{equation}
   
Choosing
\begin{align}
  r \;>\;\frac{r_{\rm full}}{b^{\,m/n}}
\end{align}
makes
$\bigl(\tfrac{r_{\rm full}}{r}\bigr)^{m} < \tfrac{1}{b^n}$,
so the second moment (and hence the variance) decays as $\mathcal{O}\!\bigl(\tfrac{1}{\beta^n}\bigr)$ for some $\beta>1$.

Finally, if $m = cn$ for some constant $c>0$, then the condition 
\begin{align}
  r \;>\;\frac{r_{\rm full}}{b^{1/c}}
\end{align}
ensures that 
$\bigl(\tfrac{r_{\rm full}}{r}\bigr)^{\,c\,n} \le \tfrac{1}{b^n}$.
This completes the proof.
\end{proof}

\section{Counter example: Identity initialization can fail to have a large variance} \label{appx:iden-initialization}

\begin{proposition}
Consider a state learning task with a target state $\rho_{\rm target} = |1\rangle\langle 1|^{\otimes n}$, an initial state  $\rho_0 = |0\rangle\langle 0|^{\otimes n}$ and a tensor product ansatz $U(\thv) = \bigotimes_{i=1}^n U_i(\theta_i)$ where each $U_i(\theta_i)$ is a single-qubit rotation around the y-axis of the $i^{\rm th}$ qubit. Suppose that the parameters $\thv$ are initialized near $\mathbf{0}$ (the identity initialization) within a hypercube of width $2r$, with $r<1$. Then the variance of the loss function $\LC(\thv)$ in that region vanishes exponentially in $n$, i.e.,

\begin{align}
    \Var_{\thv \sim \uni(\vec{0},r)}[\LC(\thv)] \in \OC\left( \frac{1}{b^n}\right) 
    \quad\text{for some } b>1.
\end{align}

\end{proposition}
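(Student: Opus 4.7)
The key structural fact is that both the initial state and the target observable are product states, and the ansatz is a tensor product of single-qubit rotations. Consequently the loss factorizes across qubits, and I can convert the $n$-qubit variance estimate into a product of $n$ single-qubit integrals. Since $\LC(\vec{0})=0$ (identity maps $|0\rangle^{\otimes n}$ to itself, which is orthogonal to $|1\rangle^{\otimes n}$), each single-qubit factor is suppressed near $\theta_i=0$, and the product structure will convert this small-but-polynomial suppression per qubit into exponential decay overall.

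\textbf{Step 1: Write the loss as a product.} Taking $\LC(\thv)=\Tr[U(\thv)\rho_0 U^\dagger(\thv)\rho_{\mathrm{target}}]=|\langle 1|^{\otimes n}U(\thv)|0\rangle^{\otimes n}|^2$ and using $U(\thv)=\bigotimes_i U_i(\theta_i)$ with $U_i(\theta_i)=e^{-i\theta_i\sigma_y^{(i)}}$, I get
\begin{equation}
\LC(\thv)=\prod_{i=1}^{n}\sin^2(\theta_i),
\end{equation}
(or $\sin^2(\theta_i/2)$ for the half-angle convention; the argument is identical). This already makes the failure mechanism transparent: each factor vanishes quadratically at $\theta_i=0$.

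\textbf{Step 2: Factorize the moments.} Under $\thv\sim\uni(\vec{0},r)$ the components $\theta_i$ are i.i.d.\ uniform on $[-r,r]$, so
\begin{equation}
\Ebb[\LC^2(\thv)]=\bigl(\Ebb_{\theta\sim\uni(0,r)}[\sin^4\theta]\bigr)^{n},\qquad \Ebb[\LC(\thv)]^2=\bigl(\Ebb_{\theta\sim\uni(0,r)}[\sin^2\theta]\bigr)^{2n},
\end{equation}
and $\Var[\LC(\thv)]\le\Ebb[\LC^2(\thv)]$. I will therefore bound the single-variable moment $\alpha(r):=\Ebb_{\theta\sim\uni(0,r)}[\sin^4\theta]$ and conclude via $\Var[\LC]\le\alpha(r)^n$.

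\textbf{Step 3: Show $\alpha(r)<1$ for $r<1$.} Using $\sin^4\theta\le\sin^2\theta\le\theta^2$ pointwise, I get $\alpha(r)\le\frac{1}{2r}\int_{-r}^{r}\theta^2\,d\theta=\frac{r^2}{3}$. For $r<1$ this yields $\alpha(r)\le \tfrac{1}{3}<1$, hence $\Var[\LC(\thv)]\le(r^2/3)^n\in\OC(b^{-n})$ with $b=3/r^2>1$. (In fact the bound $\alpha(r)\le r^2/3$ stays below $1$ for all $r<\sqrt{3}$, so the hypothesis $r<1$ is comfortably sufficient.)

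\textbf{What is hard.} There is no genuine obstacle here; the whole point of the proposition is that product structure plus vanishing loss at identity trivially forces exponential suppression. The only thing to be mildly careful about is verifying that the single-qubit moment is strictly bounded below $1$ on the chosen window, which the pointwise inequality $\sin^2\theta\le\theta^2$ handles cleanly. The conceptual takeaway, which is really the reason the proposition is included, is that identity initialization gives no curvature ($c_l(\vec{0})=0$ since $[\sigma_y,[\sigma_y,|1\rangle\langle 1|^{\otimes n}]]$ has zero overlap with $|0\rangle\langle 0|^{\otimes n}$), so the hypothesis of Theorem~\ref{th:var_formal} fails and the small-angle guarantee cannot be invoked.
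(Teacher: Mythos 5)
Your proof is correct and follows essentially the same route as the paper: factorize the fidelity as $\prod_i\sin^2\theta_i$, bound the variance by the second moment, and reduce to a single-qubit integral raised to the $n$-th power. The only difference is cosmetic — the paper evaluates $\Ebb_{\theta\sim\uni(0,r)}[\sin^4\theta]$ exactly and Taylor-bounds it by $r^4/5$, while you use the pointwise bound $\sin^4\theta\le\theta^2$ to get the slightly looser but equally sufficient $(r^2/3)^n$.
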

\begin{proof}
Because $\rho_{0}$ and $\rho_{\mathrm{target}}$ are product states, 
the fidelity under the tensor-product ansatz factorizes.  Specifically, the loss function becomes
\begin{align}
    \LC(\thv) & = 1 - \Tr\left[ U(\thv) \rho_0 U^\dagger(\thv) \rho_{\rm target}\right] \\
    & = 1 - \prod_{i=1}^n \sin^2 (\theta_i) \;.
\end{align} 
Then, we can explicitly compute the variance to be
\begin{align}
    \Var_{\thv \sim \uni(\vec{0},r)}[\LC(\thv)] & = \Var_{\thv \sim \uni(\vec{0},r)}[1 - \LC(\thv)] \\
    & \leq \Ebb_{\thv\sim \uni(\vec{0},r)}[(1 - \LC(\thv))^2] \\ 
    & =\left[\frac{3}{8} - \frac{1}{2}\cdot \left(\frac{\sin(2r)}{2r}\right) + \frac{1}{8} \cdot\left(\frac{\sin(4r)}{4r} \right)\right]^n\\
    & \leq \left( \frac{r^4}{5}\right)^n \;,
\end{align}
where the second inequality is due to a direct Taylor expansion and holds for $r<1$. 
\end{proof}
\end{document}